\colorlet{b75}{black!75!white}
\colorlet{b50}{black!50!white}
\colorlet{b35}{black!35!white}
\colorlet{b20}{black!20!white}
\colorlet{b10}{black!10!white}
\colorlet{b5}{black!5!white}
\tikzset{>=stealth'}
\tikzset{inner/.style={draw=none,fill=black,circle,minimum width=4pt,
		inner sep=0pt},
		leaf/.style={draw,rectangle,fill=white,minimum width=3.5pt,minimum
	height=3.5pt,inner sep=0pt},
		hidden/.style={circle,minimum width=4pt,inner sep=0pt}}
\DeclareRobustCommand{\crefnosort}[1]{%
  \begingroup\@cref@sortfalse\cref{#1}\endgroup
}
\numberwithin{equation}{section}
\theoremstyle{plain}
\newtheorem{theorem}{Theorem}[section]
\newtheorem{lemma}[theorem]{Lemma}
\newtheorem{fact}[theorem]{Fact}
\newtheorem{proposition}[theorem]{Proposition}
\newtheorem{corollary}[theorem]{Corollary}
\theoremstyle{definition}
\newtheorem{definition}[theorem]{Definition}
\newtheorem*{construction}{Construction}
\theoremstyle{remark}
\newtheorem{remark}[theorem]{Remark}
\newtheorem{example}[theorem]{Example}
	\AtEndEnvironment{\environment}{\popQED}
\renewcommand{\thmcontinues}[1]{continued}
\DeclareSymbolFont{sfoperators}{OT1}{cmss}{m}{n}
\renewcommand{\operator@font}{\mathgroup\symsfoperators}
\let\phi\varphi
\let\epsilon\varepsilon
\let\cal\mathcal
\let\mathbb\varmathbb
\let\bb\mathbb
\let\fr\mathfrak
\let\sf\mathsf
\let\from\colon
\let\with\colon
\let\toto\rightrightarrows
\let\after\circ
\let\tup\vec
\newcommand*{\placeholder}{\bgroup\:\cdot\:\egroup}
\DeclareMathOperator{\dom}{dom}
\renewcommand*{\d}[1]{\mathop{d#1}}
\DeclareMathOperator{\graph}{graph}
\DeclareMathOperator{\diag}{diag}
\newcommand*{\sal}{\textnormal{sal}}
\newcommand*{\pp}{\bgroup\ensuremath{\mkern2mu\rotatebox{90}{$=$}}\egroup}
\DeclarePairedDelimiter{\set}{\{}{\}} %
\DeclarePairedDelimiterX{\dparens}[1]{\lparen}{\rparen}{%
	\nparen{\lparen}{#1}\delimsize\lparen\mathopen{}%
	#1%
	\mathclose{}\delimsize\rparen\nparen{\rparen}{#1}%
}
\DeclarePairedDelimiterX{\dbraces}[1]{\lbrace}{\rbrace}{%
	\nbrace{\lbrace}{#1}\delimsize\lbrace\mathopen{}%
	#1%
	\mathclose{}\delimsize\rbrace\nbrace{\rbrace}{#1}%
}
\newcommand{\dummydelim}[2]{$\left#1\vphantom{#2}\right.$}
\newcommand{\nparen}[2]{\sbox0{\dummydelim{#1}{#2}}\hspace{-0.5\wd0}}
\newcommand{\nbrace}[2]{\sbox0{\dummydelim{#1}{#2}}\hspace{\the\dimexpr -0.85\wd0 + 2pt\relax}}
\def\bag{\@ifstar\@bag\@@bag}
\def\@bag#1{\dbraces{\smash{#1}}}
\def\@@bag#1{\dbraces*{#1}}
\newcommand{\Bags}[3][\@nil]{%
	\def\tmp{#1}%
	\ifx\tmp\@@nil%
		\dparens*{\smash{\begin{smallmatrix}#2\\#3\end{smallmatrix}}}%
	\else%
		\dparens*{\begin{smallmatrix}#2\\#3\end{smallmatrix}}%
	\fi%
}
\newcommand*{\Var}{\bm{V\mkern-1muar}}
\DeclareMathOperator{\ar}{ar}
\DeclareMathOperator{\pardim}{pardim}
\newcommand*{\IN}{\mathsf{in}}
\newcommand*{\SET}{\mathsf{set}}
\renewcommand*{\c}{\sf{c}}
\renewcommand*{\restriction}{\mathord{\upharpoonright}}
\xdef \csname \x \endcsname			{ \noexpand\cal{\x} }%
\xdef \csname \x\x \endcsname			{ \noexpand\bb{\x} }%
\xdef \csname \x\x\x \endcsname		{ \noexpand\fr{\x} }%
\xdef \csname \x\x\x\x \endcsname	{ \noexpand\bm{\x} }%
\newcommand*{\powerset}{\mathcal P}
\newcommand*{\Borel}{\mathop{\fr{B}\mkern-.5mu\fr{o}\mkern-.5mu\fr{r}}} %
\newcommand*{\Params}{\Pi}
\DeclarePairedDelimiter{\params}{\langle}{\rangle}
\DeclarePairedDelimiter{\size}{\lvert}{\rvert}
\newcommand*{\GDL}{\ensuremath{\mathrm{GDatalog}}} 
\newcommand*{\GDLOf}[1]{\ensuremath{\GDL[#1]}}
\newcommand*{\EDL}{\ensuremath{\mathrm{Datalog}^{\mathord{\exists}}}}
\newcommand*{\chasestep}[3]{#1 \xlongrightarrow{#2} #3}
\newcommand*{\pchasestep}[3]{#1 \xLongrightarrow{#2} #3}
\DeclareMathOperator{\paths}{paths}
\DeclareMathOperator{\liminst}{lim}
\newcommand*{\Error}{\bot}
\DeclareMathOperator{\FD}{FD}
\DeclareMathOperator{\App}{App}
\DeclareMathOperator{\app}{app}
\DeclareMathOperator{\Ext}{Ext}
\DeclareMathOperator{\ext}{ext}
\newcommand*{\Schema}[1][S]{\cal #1}
\newcommand*{\pdb}{\Delta}
\newcommand*{\whenever}{\mathrel{\mkern4mu\gets\mkern4mu}}
\newcommand*{\stepE}[1][]{\mathrel{\vdash_{#1}}}
\newcommand*{\stepP}[1][]{\mathop{\kappa_{\vdash_{#1}}}}
\newcommand*{\StepE}{\mathrel{\Vdash}}
\newcommand*{\StepP}{\mathop{\kappa_{\Vdash}}}
\newcommand*{\RelName}[1]{\mathbf{#1}}
\newcommand*{\DeclareRelation}[1]{\expandafter\def\csname #1\endcsname{\RelName{#1}}}
\newcommand*{\Dist}[1][Dist]{\mathsf{#1}}
\newcommand*{\Flip}{\Dist[Flip]}
\newcommand*{\Poisson}{\Dist[Poisson]}
\newcommand*{\Binomial}{\Dist[Binomial]}
\newcommand*{\Gaussian}{\Dist[Gaussian]}
\title{Generative Datalog with Continuous Distributions}
\author[1]{{Martin Grohe}}
\author[2]{{Benjamin Lucien Kaminski}}
\author[1]{{Joost-Pieter Katoen}}
\author[1]{{\mbox{Peter Lindner}}}
\affil[1]{{\normalsize \textsf{$\{$grohe,katoen,lindner$\}$@informatik.rwth-aachen.de}}}
\affil[2]{{\normalsize \textsf{kaminski@cs.uni-saarland.de}}}
\affil[1]{{\normalsize RWTH Aachen University, Aachen, Germany}}
\affil[2]{{\normalsize Saarland University, Saarland Informatics Campus, Saarbrücken, Germany, and \mbox{University College London}, London, United Kingdom}}
\newenvironment{acks}{\subsection*{Acknowledgments}}{} %
\begin{document}

\maketitle %

\begin{abstract}
Arguing for the need to combine declarative and probabilistic programming,
Bárány et al.\ (TODS 2017) recently introduced a probabilistic extension of
Datalog as a \enquote{purely declarative probabilistic programming language.} 
We revisit this language and propose a more principled approach towards 
defining its semantics based on stochastic kernels and Markov processes\,---\,standard notions from probability theory. 
This allows us to extend the 
semantics to continuous probability distributions, thereby settling an open 
problem posed by Bárány et al.

We show that our semantics is fairly robust, allowing both parallel execution
and arbitrary chase orders when evaluating a program. We cast our semantics in
the framework of infinite probabilistic databases (Grohe and Lindner, ICDT 
2020), and show that the semantics remains meaningful even when the input of
a probabilistic Datalog program is an arbitrary probabilistic database.
\end{abstract}

\section{Introduction}

Augmenting programming languages with stochastic behavior such as
probabilistic choices or random sampling has a long tradition in computer
science \cite{Saheb-Djahromi1978,Kozen1981}. In recent years, a lot of effort
went into the development of dedicated probabilistic programming languages
(such as, for example, Anglican~\cite{Tolpin+2015}, Church~\cite{Goodman+2008},
Figaro~\cite{Pfeffer2009}, Pyro~\cite{Bingham+2019}, R2~\cite{Nori+2014}, and
Stan~\cite{Carpenter+2017}) that allow the specification and
\enquote{execution}, via probabilistic inference, of sophisticated
probabilistic models.  Such languages are nowadays important tools in a large
variety of applications in different fields like artificial intelligence,
computer vision, and cryptography to name a few~\cite{Goodman2013, Gordon+2014,
vandeMeent+2018}.

From a database perspective, it is desirable to have a declarative
probabilistic programming language that operates on a standard relational data
model. Such a language, \emph{Probabilistic Programming Datalog } (PPDL\footnote{Not to be confused with Kozen's \emph{P}robabilistic \emph{P}ropositional \emph{D}ynamic \emph{L}ogic~\cite{DBLP:conf/stoc/Kozen83}.}) has
recently been introduced by Bárány, ten Cate, Kimelfeld, Olteanu, and Vagena
\cite{Barany+2017}. This language is relational and declarative while still
employing main features of probabilistic programming languages such as random
sampling and conditioning on observations. A PPDL program has a generative and
a constraint part. The generative part is a Datalog program augmented by 
special tuple-generating rules that involve random sampling. The constraint 
part conditions the resulting probability space on, say, typical database 
constraints.

In this work, we focus on the generative part of such a program, which is
referred to as \emph{Generative Datalog}. In a nutshell, a Generative Datalog
program is a standard Datalog program with the addition that it may contain
rules of the shape
\[ 
	R( \tup x ) \gets S_1( \tup x_1 ), \dots, S_m( \tup x_n )
\]
where, as usual, $R,S_1,\dots,S_m$ are relation symbols and the $S_i( \tup x_i
)$ are atoms. The difference is that we allow the tuple $\tup x$ to contain not
only variables and constants, but also references to \emph{parameterized
distributions} as in 
\[
	R( x, y, \Gaussian\params{ x, y } ) \gets S_1(x), S_2(y)\text.
\]
The standard point of view for generative programs is that they are run in an
iterative fashion where in each step an random fact is added to the current
database instance, until no rule is applicable anymore (which is made precise
later). Intuitively, the head $R(x,y,\Gaussian\params{x,y})$ of the above rule
can be understood as a sampling instruction: If the body $S_1(x), S_2(y)$ of
the rule is satisfied for valuations of $a=\mu$ and $b=\sigma^2$ of $x$ and
$y$, then we add a fact $R(a,b,X)$ where $X \sim \Gaussian\params{a,b}$.

The language of Bárány et al. comes with the restriction that it only
allows sampling from \emph{discrete} probability distributions. Continuous
probability distributions are explicitly mentioned as a relevant extension to
the language in \cite{Barany+2017}, starting with the definition of its
semantics. Yet, acknowledging that this requires even a new, and possibly more
challenging definition of the probability spaces in question, their
introduction was left open in \cite{Barany+2017}. We provide such an extension
in this work. The main technical questions concern
\begin{enumerate*} 
	\item a rigorous definition of its semantics and the well-definedness of
		outputs,
	\item the effect of the order of rule execution on the outcome, and 
	\item algorithmic properties.
\end{enumerate*} 
In this paper, we lay the foundations of the extended language by focusing on
(1) and (2).

Probabilistic databases (PDBs) are a formal model for describing uncertainty in
data \cite{Aggarwal+2009,Suciu+2011,vandenBroeckSuciu2017}. Traditionally, they
were limited to probability spaces that consist of a finite number of possible
alternative database instances (called \emph{possible worlds}). Continuous
probability distributions, however, arise naturally in many application
scenarios that involve uncertain data like noisy sensor measurements
\cite{ChengPrabhakar2003,Cheng+2003}. Moreover, for example, a lot of real
world statistical phenomena, especially those that concern aspects of human
behavior, follow normal or lognormal distributions \cite{Limpert+2001}. In
computational security, continuous distributions like the Laplace-distribution
are common \cite{Dwork+2006}.

Unfortunately, generalizing from discrete to continuous distributions usually
comes with substantial mathematical overhead. While several systems
\cite{Agrawal+2009,Jampani+2011,Singh+2008} handle continuous probability
distributions, only recently~\cite{GroheLindner2019,GroheLindner2020}, Grohe
and Lindner proposed a general framework for rigorously dealing with
probabilistic databases over continuous domains. Moreover, they establish basic
properties such as the measurability of relational calculus and Datalog
queries, which in turn allows for formally specifying the semantics of queries
over continuous probabilistic databases. This framework and some of the basic
results, specifically the measurability of relational calculus queries, is also
the foundation for this work. We emphasize that even for \emph{infinite}
probabilistic databases as in \cite{GroheLindner2019,GroheLindner2020}, the
definition of database instances remains unchanged: every database instance is
still a finite set (or, depending on the context, bag) of facts. Instead,
\emph{the probability space may be infinite, meaning that there may be
infinitely many possible worlds}. We note that even though some computations of
Generative Datalog programs may run forever, we are always only interested in
the finite outcomes, and treat infinite computations as errors where no output
is produced.

\subsection{Contributions}\label{s:contrib}

Our main contribution is the introduction of a formal semantics for the
probabilistic Datalog language of Bárány et al. \cite{Barany+2017} that allows
sampling from continuous probability distributions. We focus on the generative
component of this language, called \emph{Generative Datalog} ($\GDL$). We 
define how a $\GDL$ program, on input a single database instance produces an
output \mbox{(sub-)}probability distribution over database instances. With our 
approach (and further extending \cite{Barany+2017}), we also allow 
probabilistic inputs.

Bárány et al. \cite{Barany+2017} introduce the probabilistic semantics of
$\GDL$ as follows.
\begin{itemize}
	\item To every $\GDL$ program, they associate an \emph{existential Datalog
		program} \cite{Cabibbo1998,Cali+2013} from which they construct a chase
		tree. This can be seen as the introduction of a nondeterministic
		semantics in order to model the possible worlds generated by a $\GDL$
		program. 
	\item This nondeterminism is resolved by weighting the paths of the chase 
		tree according to the distributions that symbolically appear in the
		$\GDL$ program.
\end{itemize}
In a nutshell, Bárány et al. construct a discrete time, discrete space Markov
process. We adapt this approach to the continuous setting, and construct a
discrete time, but continuous space Markov process that is rooted in the
framework of standard PDBs \cite{GroheLindner2020}. In order to ensure that the
so-obtained probability space is well-defined, we show that the probabilistic
transitions described by a $\GDL$ program satisfy certain technical conditions
(that is, that they are \emph{stochastic kernels}). 

The main technical result is that the semantics is independent of the choice of
the chase tree and that it is equivalent to the semantics obtained from
parallel execution of all applicable rules at any execution step of a
GDatalog program.

\bigskip

In addition to the extensions sketched before, we propose slight changes to
the semantics introduced by Bárány et al. \cite{Barany+2017} in order to
avoid the following two peculiarities exhibited by the original semantics.
Note that with our changes to the semantics however, we lose the feature of
FO-rewritability. This is addressed in detail in \cref{ssec:origsim} (see
\cref{rem:fo-eq}). 

\begin{example}[Semantic Continuity]\label{exa:behavior1}
	Consider the two $\GDL$ programs shown in \cref{fig:rule_continuity}, and an
	input instance $D_{\IN} = \set{ R( 0 ) }$. A $\GDL$ program has access to a
	set $\Psi$ of parameterized distributions. In this case, $\Psi = \set{ \Flip
	}$, where $\Flip\params{ p }$ denotes the Bernoulli distribution with 
	parameter $p$, i.\,e. the flip of a coin that is biased according to $p$.

	\begin{figure}[H]
		\hfill%
		\begin{tabular}{ l l }
			\toprule
			$\G_0 \colon$
			& $S\big( \Flip\params[\big]{ \frac12 } \big) \whenever R(0)$ \\[1ex]
			& $S\big( \Flip\params[\big]{ \frac12 } \big) \whenever R(0)$ \\
			\bottomrule
		\end{tabular}
		\hfill
		\begin{tabular}{ l l }
			\toprule
			$\G_\epsilon \colon$
			& $S\big( \Flip\params[\big]{ \frac12 } \big) \whenever R(0)$ \\[1ex]
			& $S\big( \Flip\params[\big]{ \frac12 + \epsilon } \big) \whenever 
				R(0)$ \\
			\bottomrule
		\end{tabular}
		\hfill\mbox{}
		\caption{Two $\GDL$ programs $\protect\G_0$ and $\protect\G_\epsilon$ 
		(where $0 < \epsilon < \frac12 $).}
		\label{fig:rule_continuity}
	\end{figure}

	The fact $R( 0 )$ in $D_{ \IN }$ is just a dummy fact that makes all the 
	rules applicable. Under the original semantics\footnote{In the examples of
	this section, we ignore any auxiliary relations that are introduced in the
	semantics. We will later see that this can always be done without
	introducing any problems (see \cref{rem:endproj} in \cref{sec:seqchase}).},
	the program $\G_0$, on input $D_{ \IN }$, generates the following
	probabilistic database:
	\begin{table}[H]
		\centering
		\caption{The probabilistic database generated by $\protect\G_0$.}
		\label{tab:G0}
		\begin{tabular}{ l c c }
			\toprule
			possible world 	
			& $\set{ R( 0 ), S( 0 ) }$		
			& $\set{ R( 0 ), S( 1 ) }$\\
			\midrule
			probability			
			& $\frac12$				
			& $\frac12$\\
			\bottomrule
		\end{tabular}
	\end{table}

	Yet, for any $\epsilon \in \big(0, \frac12\big)$, the program $\G_{\epsilon}$
	generates three possible worlds under the original semantics:
	\begin{table}[H]
		\centering%
		\caption{The probabilistic database generated by $\protect\G_{\epsilon}$.}
		\label{tab:Geps}
		\begin{tabular}{ l c c c }\toprule
			possible world 	
				& $\set{ R( 0 ), S( 0 ) }$ 	
				& $\set{ R( 0 ), S( 1 ) }$	
				& $\set{ R( 0 ), S( 0 ), S( 1 ) }$
				\\\midrule
			probability			
				& $\frac14 - \frac\epsilon2$	
				& $\frac14 + \frac\epsilon2$			
				& $\frac12$
				\\\bottomrule
		\end{tabular}
	\end{table}

	The probabilities intuitively arise because the rules \enquote{fire} 
	independently. In \cite{Barany+2017}, nondeterminism is introduced by 
	allocating new relations to store the outcomes of the random samplings
	associated with $\psi$-terms for $\psi \in \Psi$. These relations are called
	$\smash{\Result_n^{\psi}}$ where $\psi$ is the involved distribution (and 
	$n$ is an additional technical parameter related to the arity of the new
	relation).  The relations $\smash{\Result^{\psi}_n}$ however do not
	distinguish which rule of the original program lead to their introduction.
	Therefore they are, for example, oblivious of multiple occurrences of the
	same probabilistic rule.  Thus, in the probabilistic semantics, both rules
	of $\G_0$ collapse and we arrive at the probabilities shown in \cref{tab:G0}.
	Yet, for every $\epsilon > 0$, this distinction happens in the semantics of
	$\G_{\epsilon}$, yielding the probabilities from \cref{tab:Geps}.

	Contrary to this, we argue that it is more reasonable to have the 
	probabilistic database produced by $\G_\epsilon$ converging to the one
	produced by $\G_0$ as $\epsilon \to 0$. Under the semantics we propose 
	later, both $\G_0$ and $\G_{\epsilon}$ generate the three possible worlds
	$\set{ R( 0 ), S( 0 ) }$, $\set{ R( 0 ), S( 1 ) }$, and $\set{ R( 0 ), 
	S( 0 ), S( 1 ) }$. The probabilities of these worlds obtained from $\G$ are
	$\frac14$, $\frac14$, and $\frac12$, respectively. This coincides
	(pointwise) with the limits of the probabilities obtained from
	$\G_{\epsilon}$ as $\epsilon \to 0$ (see \cref{tab:Geps}).
\end{example}

\begin{example}[Independence from Symbolic Names]\label{exa:behavior2}
	As another example, consider the following program $\G_0'$. The difference
	over $\G_0$ is that in the second rule, $\Flip$ has been replaced with
	$\Flip'$ which is mathematically the same distribution, but with a different
	symbolic name.
	\begin{figure}[H]
		\centering%
		\begin{tabular}{ l l }
			\toprule
			$\G_0' \colon$
			& $S\big( \Flip\params[\big]{ \frac12 } \big) \whenever R(0)$ \\[1ex]
			& $S\big( \Flip'\params[\big]{ \frac12 } \big) \whenever R(0)$ \\
			\bottomrule
		\end{tabular}
		\caption{A $\GDL$ program with two identical, yet differently named
			distributions.}
	\end{figure}
	Under the semantics of \cite{Barany+2017} (and with removing auxiliary 
	relations), $\G_0'$ generates three possible worlds, $\set{ R( 0 ), 
	S( 0 ) }$, $\set{ R( 0 ), S( 1 ) }$, and $\set{ R( 0 ), S( 0 ), S( 1 ) }$, 
	with probabilities $\frac14$, $\frac14$, and $\frac12$. Recall that this
	differs from the result of $\G_0$. The reason is that in \cite{Barany+2017},
	the syntactic name of the parameterized distribution is hard-coded into the
	relations $\smash{\Result_n^{\psi}}$. Under the semantics we propose later,
	the results of $\G_0$ and $\G_0'$ coincide (after removing auxiliary
	relations).
\end{example}

The properties of the original semantics sketched above are not a fundamental
problem for the definition of the semantics, and do not raise any questions
about the correctness of \cite{Barany+2017}. Instead, we argue that these are
two examples, where the original semantics feels unnatural. Consequently, we 
adapt our semantics to resolve such effects.

\subsection{Paper Outline}

After concluding the introduction with a
short survey of related work, we present the most central mathematical 
definitions and background results from measure theory and probabilistic
databases in \cref{sec:preliminaries}. In \cref{sec:generalgdl}, we introduce
the syntax of \GDL{} programs together with the backbone of its semantics---the 
translation into an existential Datalog program. In \cref{sec:seqchase} we 
present our version of a probabilistic chase, generalizing the ideas of 
\cite{Barany+2017}. We show that this notion defines a Markov process over 
database instances. \Cref{sec:parchase} is devoted to establishing similar 
results when a parallel chase procedure is used (which is novel over 
\cite{Barany+2017}). In \cref{sec:semantics}, we discuss various properties of
the semantics. First, we show that no matter which kind of chase procedure is 
used, the probabilistic database that is described by its semantics turns 
out to be the same. We argue that (regarding finite outcomes) our semantics can
simulate the original semantics of \cite{Barany+2017} and discuss the
termination behavior of \GDL{} programs. We briefly revisit the full
\emph{Probabilistic Programming Datalog} (PPDL) language in \cref{sec:ppdl}. We
conclude the work and indicate topics for future research in
\cref{sec:conclusion}.

To ease the accessibility, the longer technical sections of the paper,
\cref{sec:generalgdl,sec:seqchase,sec:parchase}, open with a separate in-depth
outline. For obtaining a better understanding before reading the full paper in
detail, we advise the reader to check out these introductions, along with the
discussions of \cref{ssec:origsim}.

\subsection{Related Work}
Both the probabilistic programming and the probabilistic database community
have developed a variety of models and systems that allow to specify continuous
probability distributions over data. We briefly mention some of these models
and compare them to the scope of this paper. As the original introduction of
PPDL already features a broad survey of related work \cite[Section
7]{Barany+2017} that we leave as is, we just comment on related work regarding
the added support of continuous variables.

\medskip

Basically all probabilistic programming languages support 
continuous distributions, for instance Church~\cite{Goodman+2008},
Anglican~\cite{Tolpin+2015}, and Figaro~\cite{Pfeffer2009}. Such languages are
contrasted by the database-centric nature of PPDL. Conceptually closer to PPDL
are languages studied in \emph{statistical relational artificial intelligence
(StarAI)} \cite{DeRaedt+2016}. A prominent example of such a language is
\emph{ProbLog} \cite{DeRaedt+2007,Fierens+2015}, a probabilistic variant of
Prolog. In ProbLog, standard Prolog rules can be annotated with a probability
value (\enquote{rule-based} uncertainty). The extension Hybrid ProbLog
\cite{Gutmann+2011} allows continuous attribute-level uncertainty in 
rule-heads.

\emph{Markov Logic Networks (MLNs)} \cite{RichardsonDomingos2006} describe
joint distributions of variables based on weighted (\enquote{soft}) first-order
constraints.  Hybrid MLNs \cite{WangDomingos2008} introduce numeric terms and
properties to MLNs, although it is not easy to tell the relationship between
the resulting system and \enquote{pure} continuous attribute-level uncertainty
(see the discussion in \cite{Gutmann+2011}). Infinite MLNs
\cite{SinglaDomingos2007} allow for countably infinitely many variables with
countable domains. Similarly, \emph{Probabilistic Soft Logic
(PSL)}~\cite{Kimmig+2012} is a formalism for specifying joint distributions
with weighted rules, but also \enquote{soft} truth values. PSL rules are
restricted to conjunctive bodies, as encountered in plain Datalog.  As MLNs
build upon Markov networks, \emph{Bayesian Logic
(BLOG)}~\cite{Milch+2005,Milch2006} builds upon Bayesian networks. BLOG is a
programming language supporting continuous distributions and a random (say
Poisson-distributed) number of objects. Its continuous semantics is formally
treated in~\cite{Wu+2018}. In a nutshell, MLNs, PSL, and BLOG provide
first-order templates for specifying graphical
models~\cite{KollerFriedman2009}.  
\medskip

While all formalisms mentioned so far (and, additionally, those discussed in 
\cite{Barany+2017}) share individual features with PPDL, conjoining Datalog
with classical probabilistic programming was novel to \cite{Barany+2017}. Also,
its introduction of attribute-level uncertainty in rule heads differs from
previous probabilistic versions of Datalog that adhere to the rule-based 
uncertainty and possibly prior uncertainty for given ground facts (like the 
probabilistic Datalog languages of \cite{Fuhr1995,Fuhr2000} and 
\cite{Deutch+2010}). A version of \emph{Datalog\textsuperscript{$\pm$}}
\cite{Gottlob+2013,Gottlob+2014} that is used for specifying ontologies
consists of an MLN, and a Datalog program with (among others) tuple-generating
dependencies that may be annotated with events in the probability space. On the
contrary, the event annotations of \emph{JudgeD} \cite{Wanders+2016} use
distinguished variables solely used to introduce dependencies among the rules.
The connection between probabilistic Datalog and database-valued Markov
processes is already noted in \cite{Deutch+2010,Barany+2017}. The \emph{Monte
Carlo Database System (MCDB)} \cite{Jampani+2011} allows for the specification
and querying of probabilistic databases in an SQL-like syntax. Its successor
\emph{SimSQL}~\cite{Cai+2013}, in particular, is a framework that allows the
definition of Markov processes over database instances.

\section{Preliminaries}\label{sec:preliminaries}

\subsection{Foundations from Measure Theory}\label{ssec:mt}

Here we cover the background from measure theory needed for this paper. More
details can be found in standard textbooks on measure theory and Borel sets
\cite{Kallenberg2002,Srivastava1998}. Some well-known key results we use can
moreover be found in \cref{app:meas}.

\subsubsection{Measure Spaces}
A family $\XXX$ of subsets of a set~$\XX$ is called a \emph{$\sigma$-algebra} 
on $\XX$ if it contains $\XX$ and is closed under complements and countable 
unions. A pair $(\XX, \XXX)$ with $\XXX$ being a $\sigma$-algebra on~$\XX$ is
called a \emph{measurable space}. The elements of $\XXX$ are called
\emph{measurable sets} or \emph{events}. 
A function \mbox{$\mu \from \XXX \to \RR_{\geq 0}\cup\set{\infty}$} is a 
\emph{measure} on a measurable space $(\XX, \XXX)$ if $\mu( \emptyset ) = 0$ 
and $\mu(\bigcup_{i\in\NN} \XXXX_i) = \sum_{i\in\NN} \mu(\XXXX_i)$ for any
sequence of pairwise disjoint events $\XXXX_i\in\XXX$ ($i\in\NN$). The value
$\mu(\XX)$ is called the \emph{mass} of~$\mu$. Measures with $\mu(\XX) = 1$ are
called \emph{probability measures}, measures with $\mu(\XX) \leq 1$ are called
\emph{sub-probability measures}.

A triple $(\XX, \XXX, \mu)$ is called a \emph{measure space} if $(\XX, \XXX)$
is a measurable space and $\mu$ is a measure on $(\XX, \XXX)$. If $\mu$ is a
(sub-)probability measure, then $(\XX,\XXX,\mu)$ is called a 
\emph{(sub-)probability space}. The measure space $(\XX, \XXX, \mu)$ (or simply
$\mu$) is called $\sigma$-finite, if there exists a partition of $\XX$ into
countably many measurable sets of finite measure. All measures that
appear in this paper are $\sigma$-finite.

\medskip

In this paper, we need four standard constructions of measurable spaces.
\begin{enumerate}
	\item \emph{Generated $\sigma$-algebra.}
		If $\XX$ is a non-empty set and $\GGG \subseteq \powerset( \XX )$, then
		$\sigma(\GGG)$ denotes the (unique) inclusionwise smallest
			$\sigma$-algebra on $\XX$ containing $\GGG$. We say $\sigma(\GGG)$ is
			\emph{generated} by $\GGG$.
		\item \emph{Trace $\sigma$-algebra.}
			If $(\XX, \XXX)$ is a measurable space and $\XXXX\subseteq
			\XX$, then $\XXX \restriction_{\XXXX} \coloneqq \set{ \XXXX' \cap \XXXX
			\colon \XXXX' \in \XXX }$ is a $\sigma$-algebra on $\XX \cap \XXXX$,
			called the \emph{trace $\sigma$-algebra} of $\XXXX$. If $\XXXX \in \XXX$,
			then $\XXX\restriction_{\XXXX} \subseteq \XXX$.
		\item \emph{Disjoint union $\sigma$-algebra.}
			Let $(\XX,\XXX)$ and $(\YY,\YYY)$ be measurable spaces with $\XX\cap\YY =
			\emptyset$. Then the family $\XXX\oplus\YYY \coloneqq \set{ \ZZZZ \subseteq \XX 
			\cup \YY \with \ZZZZ \cap \XX \in \XXX \text{ and } \ZZZZ \cap \YY \in 
			\YYY }$ is a $\sigma$-algebra on~$\XX\uplus\YY$. This easily generalizes
			to $\bigoplus_{i\in I} \XXX_i$ for any finite index set $I$.
		\item 
			Let $(\XX_i,\XXX_i)$, with $i\in I$ for some index set $I$, be a
			collection of measurable spaces and let $\XX \coloneqq \prod_{i\in I}
			\XX_i$. The \emph{product $\sigma$-algebra} $\bigotimes_{i\in I} \XXX_i$
			is the coarsest $\sigma$-algebra on $\XX$ that makes all canonical
			projections $\pi_i \from \XX \to \XX_i \with (x_i)_{i\in I}\mapsto x_i$
			measurable. If $I$ is countable, then $\bigotimes_{i\in I} \XXX_i$ is
			generated by the family of \emph{measurable rectangles} $\prod_{i\in I}
			\XXXX_i$ with $\XXXX_i \in \XXX_i$. If $I = \set{1,\dots,n}$ we write
			$\bigotimes_{i=1}^n \XXX_i$ or $\XXX_1\otimes \dots\otimes\XXX_n$ for the
			product $\sigma$-algebra. If all $(\XX_i,\XXX_i)$ are equal, we write
			$\XXX^{\otimes n}$. If $I = \NN$, and all $(\XX_i,\XXX_i)$ are equal,
			we write $\XXX^{\otimes\omega}$ for $\bigotimes_{i = 0}^{\infty}
			\XXX_i$.
	\end{enumerate}

	\medskip

	Measure theory is closely tied to notions from topology. A \emph{topological
	space} is a pair~$(\XX,\TTT)$ where $\XX$ is a set and $\TTT$ is a family of
	subsets of $\XX$, called the \emph{open sets}, such that $\TTT$ contains both
	$\XX$ and $\emptyset$ and is closed under \emph{finite} intersections and
	\emph{arbitrary} unions.  The $\sigma$-algebra on a topological space
	$(\XX,\TTT)$ that is generated by the open sets is called the \emph{Borel
	$\sigma$-algebra} on~$(\XX,\TTT)$ (resp. on $\XX$ if $\TTT$ is understood from
	context).  We denote the Borel $\sigma$-algebra on $\XX$ by~$\Borel(\XX)$.
	Typical examples are $\Borel(\RR)$ and $\Borel[0,1] \coloneqq
	\Borel([0,1])$.

	In probability theory, one often works with the Borel $\sigma$-algebras
	generated from \emph{Polish} topological spaces, i.e.\ from completely
	metrizable spaces containing a countable dense set. The resulting measurable
	spaces are called \emph{standard Borel spaces}. We do not delve into the
	details here, as all measurable spaces appearing in this paper are standard
	Borel. For further information, especially in the context of probabilistic
	databases, see~\cite{GroheLindner2020}.

	\subsubsection{Measurable Functions and Kernels}

	Let $(\XX,\XXX)$ and $(\YY,\YYY)$ be measurable spaces. A function $f \from \XX
	\to \YY$ is called \emph{$(\XXX, \YYY)$-mea\-sur\-able} (or simply
	\emph{measurable}, if clear from context) if for all $\YYYY \in \YYY$ it holds 
	that $f^{-1}(\YYYY) \coloneqq \set{x \in \XX \colon f(x) \in \YYYY} \in \XXX$.
	The function $f$ is called \emph{bimeasurable} if additionally $f( \XXXX ) \in
	\YYY$ for all $\XXXX \in \XXX$.

	If $f$ is $(\XXX, \YYY)$-measurable and $\mu$ is a measure on $(\XX,\XXX)$,
	then $\mu\after f^{-1}$ is the so-called \emph{push-forward measure} of $\mu$
	along $f$ on~$(\YY,\YYY)$. If $\mu$ is a (sub-)probability measure, so is
	$\mu\after f^{-1}$.

	A function $\kappa \from \XX \times \YYY \to [0,1]$ is called a
	\emph{(sub-)stochastic kernel} from $(\XX, \XXX)$ to $(\YY, \YYY)$ if
	\begin{itemize}
		\item for all $X \in \XX$, the function $\kappa(X, \placeholder)\colon \YYY\to[0,1]$ is a 
			(sub-)probability measure on $(\YY,\YYY)$, and
		\item for all $\YYYY \in \YYY$, $\kappa(\placeholder, \YYYY)\colon
			\XX\to[0,1]$ is $(\XXX, \Borel[0,1])$-measurable.
	\end{itemize}
	For every measurable space $(\XX,\XXX)$, the function $\iota \from \XX \times
	\XXX \to [0,1]$ with $\iota(x,\XXXX) = 1$ if $x \in \XXXX$ and $\iota(x,\XXXX)
	= 0$ if $x \notin \XXXX$ is a stochastic kernel from $(\XX,\XXX)$ to itself,
	called the \emph{identity kernel} on $(\XX,\XXX)$.

	\subsubsection{Graphs, Sections and Product Measures}
	First we note that any countable product of standard Borel spaces (with the
	product $\sigma$-algebra) is standard Borel again \cite[Proposition
	3.1.23]{Srivastava1998}. If $f \from \XX \to \YY$ is measurable with
	$(\XX,\XXX)$ and $(\YY,\YYY)$ standard Borel, then the \emph{graph of $f$},
	defined by
	\begin{equation*}
		\graph(f) \coloneqq \set{ (x,f(x)) \with x \in \XX }\text,
	\end{equation*}
	is measurable in $\XXX\otimes\YYY$ \cite[Proposition 3.1.21 and 2.1.9]
	{Srivastava1998}.

	If $\ZZZZ \subseteq \XX\times\YY$ and $x\in\XX$, then $\ZZZZ_x \coloneqq 
	\set{ y \in \YY \with (x,y) \in \ZZZZ }$ is called the
	\emph{$x$-section} of $\ZZZZ$. If $\ZZZZ \in \XXX\otimes\YYY$, then $\ZZZZ_x
	\in \YYY$. Symmetrically, for any $y$-section $\ZZZZ_y$ of $\ZZZZ$, we have
	$\ZZZZ_y\in\XXX$.

	If $(\XX,\XXX,\mu)$ and $(\YY,\YYY,\nu)$ are measure spaces with $\mu$ and 
	$\nu$ $\sigma$-finite, then there exists a unique \emph{product measure}
	$\mu\otimes\nu$ of $\mu$ and $\nu$ on $(\XX\times\YY, \XXX\otimes\YYY)$ with
	the property that $(\mu\otimes\nu)(\XXXX\times\YYYY) = \mu(\XXXX) \cdot 
	\nu(\YYYY)$ for all $\XXXX\in\XXX$ and $\YYYY\in\YYY$. This can be extended to
	any finite (nonempty) product of measures \cite[cf. Theorem 1.27 and p.~15]
	{Kallenberg2002}. We use the notation $\bigotimes_{i=1}^n \mu_i$ and 
	$\mu^{\otimes n}$ analogous to the one for product $\sigma$-algebras.

	Fubini's Theorem intuitively states that integration in a product space can be
	carried out in an arbitrary order.
	\begin{fact}[Fubini's Theorem, cf. {\cite[Theorem 1.27]{Kallenberg2002}}]\label{fac:fubini}
		Let $(\XX,\XXX)$ and $(\YY,\YYY)$ be measurable spaces and $\mu$ be a
		$\sigma$-finite measure on $(\XX,\XXX)$. Then for all measurable $f \from 
		\XX \times \YY \to \RR_{\geq 0}$, it holds that
		\begin{equation*}
			\int_{\XX\times\YY} f \:\d{(\mu\otimes\nu)} = 
			\int_{\XX} \Big(\int_{\YY} f \:\d\nu\Big) \d\mu=
			\int_{\YY} \Big(\int_{\XX} f \:\d\mu\Big) \d\nu\text.\qedhere
		\end{equation*}
	\end{fact}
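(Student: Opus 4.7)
The plan is the classical measure-theoretic induction on the complexity of $f$, often called Tonelli's theorem in the nonnegative case: verify the identity first for $f = \mathbbm{1}_{\XXXX \times \YYYY}$ with a measurable rectangle, extend to $f = \mathbbm{1}_{\ZZZZ}$ for arbitrary $\ZZZZ \in \XXX \otimes \YYY$ by a monotone class argument, then to nonnegative simple $f$ by linearity, and finally to arbitrary nonnegative measurable $f$ by monotone convergence. For the three integrals to be jointly meaningful I would additionally require $\nu$ to be $\sigma$-finite, so that the product measure $\mu \otimes \nu$ from the preliminaries is uniquely determined.

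The rectangle case is direct: for $f = \mathbbm{1}_{\XXXX \times \YYYY}$, each of the three integrals reduces to $\mu(\XXXX) \cdot \nu(\YYYY)$, the product-space value coming from the defining property of $\mu \otimes \nu$ on rectangles.

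The key step is lifting this from rectangles to all $\ZZZZ \in \XXX \otimes \YYY$. From the preliminaries, every $x$-section $\ZZZZ_x$ lies in $\YYY$, so the inner integrand $x \mapsto \nu(\ZZZZ_x) = \int_\YY \mathbbm{1}_\ZZZZ(x,y)\,\d\nu$ is pointwise well-defined. What still needs to be shown is that this map is $\XXX$-measurable and that $\int_\XX \nu(\ZZZZ_x)\,\d\mu = (\mu \otimes \nu)(\ZZZZ)$. I would obtain both simultaneously by a Dynkin $\pi$--$\lambda$ argument: the family of $\ZZZZ$ for which measurability and the integral identity hold contains the $\pi$-system of measurable rectangles, is closed under countable disjoint unions by monotone convergence applied to the sections, and is closed under complements relative to sets of finite product measure. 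A countable exhaustion of $\XX \times \YY$ by rectangles of finite product measure, granted by $\sigma$-finiteness of $\mu$ and $\nu$, then yields the claim for all $\ZZZZ$.

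From the indicator case, extension to nonnegative simple $f = \sum_{i=1}^k c_i \mathbbm{1}_{\ZZZZ_i}$ is immediate by linearity. For arbitrary nonnegative measurable $f$, I would pick an increasing sequence of simple functions $f_n \uparrow f$ and apply monotone convergence in each of the three integrals, using along the way that pointwise suprema of measurable functions are measurable to preserve measurability of the inner integrand. The symmetric identity with $\mu$ and $\nu$ swapped follows by the same argument with the factors relabeled. The main obstacle is the measurability of $x \mapsto \nu(\ZZZZ_x)$ for general $\ZZZZ$, which is precisely where the $\pi$--$\lambda$ machinery and $\sigma$-finiteness are essential; without the latter the statement can fail outright.
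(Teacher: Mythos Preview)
Your sketch is the standard Tonelli argument and is correct (with your explicit addition of $\sigma$-finiteness of $\nu$, which the statement as written omits but the construction of $\mu\otimes\nu$ in the preliminaries already requires). However, the paper does not prove this statement at all: it is recorded as a \emph{Fact} with a citation to \cite[Theorem~1.27]{Kallenberg2002} and used as background without argument. So there is no paper proof to compare against; you have supplied one where the authors simply defer to the literature.
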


	\subsubsection{Multifunctions and Selections}\label{sssec:mf}
	Let $(\XX, \XXX)$ and $(\YY,\YYY)$ be measurable spaces where $(\YY,\YYY)$ is
	standard Borel (with fixed Polish topology $\TTT_{\YY}$). A function $M \from 
	\XX \to \powerset( \YY ) \setminus \emptyset$ is called a \emph{multifunction},
	and is denoted $M \from \XX \toto \YY$. A mul{\-}ti{\-}func{\-}tion $M \from
	\XX \toto \YY$ is called
	\begin{itemize}
		\item \emph{closed-valued}, if for every $x\in\XX$, $M(x)\subseteq\YY$ is
			closed w.\,r.\,t. $\TTT_{\YY}$, and
		\item $\XXX$-\emph{measurable}, if $M^{-1}(\YYYY) \coloneqq \set{x \in \XX 
			\colon M(x)	\cap \YYYY \neq \emptyset} \in \XXX$ for every open set 
			$\YYYY\in\TTT_{\YY}$.
	\end{itemize}
	Similarly to the corresponding statement for measurable functions, if $M \from
	\XX\toto \YY$ is a closed-valued measurable multifunction, then
	\begin{equation*}
		\graph(M) \coloneqq \set{ (x,y) \with y\in M(x) }
	\end{equation*}
	is a measurable set in $\XXX\otimes\YYY$.

	A \emph{selection} of a multifunction $M$ is a function $s \from \XX \to \YY$ 
	with $s(x) \in M(x)$ for all $x \in \XX$. A well-known result from Kuratowski
	and Ryll-Nardzewski (\cref{fac:KRN}, see \cite{Kuratowski+1965} and 
	\cite[Theorem 5.2.1]{Srivastava1998}) states that for $(\YY,\YYY)$ standard
	Borel, every measurable, closed-valued multifunction $M \from \XX \toto \YY$ 
	has a $(\XXX, \YYY)$-measurable selection.

	\subsubsection{(Discrete-Time) Stochastic Processes}\label{sec:stochproc}
	A \emph{stochastic process in discrete time} is a sequence of random
	variables in some \emph{state space} $(\XX,\XXX)$. Intuitively, a
	(discrete-time) \emph{Markov process} is a stochastic process where
	the distribution in the $(i+1)$th step only depends on the
	distribution of the previous step $i$. By a theorem of Kolmogorov
	(\cref{fac:kolmogorov}), Markov processes in discrete time are
	guaranteed to exist for any initial distribution and any sequence of
	stochastic kernels $\kappa_i \from \XX \times \XXX \to [0,1]$,
	describing the probabilistic transition on the state space from the
	$i$th to the $(i+1)$th step. If $(\XX,\XXX)$ is the state
	space of the process, then $(\XX^{\omega},\XXX^{\otimes\omega})$ is
	its \emph{path space}.

	\subsection{Parameterized Distributions}\label{ssec:paramdist}

	Let $\Pi$ be a non-empty set (of parameters) and let $(\WW, \WWW, \mu)$ be a
	measure space.

	\begin{definition}\label{def:paramdist}
	A \emph{parameterized distribution} with parameter space $\Pi$ and underlying
	space $(\WW, \WWW, \mu)$ is a function $\psi \from \Pi \times \WW
	\to \RR_{ \geq 0 }$ such that for all $p \in \Pi$ it holds that $\psi(p,
	\placeholder)$ is $(\WWW, \Borel( \RR_{\geq 0} ))$-measurable and that
	\begin{equation}\label{eq:paramdist_prob}
		\int_{\WW} \psi(p,\placeholder) \:\d\mu = 1\text.
	\end{equation}
	\end{definition}
	If $\psi$ is a parameterized distribution, we use $\Pi_{\psi}$ to refer to its
	parameter space and $(\WW_\psi, \WWW_\psi, \mu_\psi)$ to refer to its 
	underlying measure space. Moreover, we usually make the parameter in the 
	argument of $\psi$ explicit by writing $\psi\params{ p }( w )$ instead of
	$\psi( p, w )$, and $\psi\params{ p }$ for the function $\psi( p, \placeholder 
	)$.

	The requirement from \labelcref{eq:paramdist_prob} demands that for a
	parameterized distribution $\psi$ and any fixed parameter $p$, the function
	\[
		P_{ \psi\params{ p } } 
		\from \WWW_{\psi} \to [0, 1] 
		\with \WWWW \mapsto \int_{ \WWWW } \psi\params{ p }\:\d{\mu_\psi}
	\]
	is a probability measure. We will always assume that $(\WW, \WWW, \mu)$ is
	either
	\begin{itemize}
		\item a discrete measure space, with $\WWW = \powerset( \WW )$ being the
			powerset $\sigma$-algebra and $\mu$ being the counting measure on
			$(\WW,\WWW)$; or
		\item the Euclidean space $\RR^n$ for some $n \in \NN_{ > 0 }$, equipped 
			with its Lebesgue-measurable sets $\WWW$ and the $n$-dimensional Lebesgue
			measure $\mu$.
	\end{itemize}
	Note that in the first case, the integral from \labelcref{eq:paramdist_prob}
	collapses to the sum $\sum_{ w \in \WW } \psi(p,w)$ and $\psi
	\params{p}$ plays the role of a \emph{probability mass function}.  Accordingly,
	in the second case, $\psi\params{p}$ is a \emph{probability density function}.
	If $\Pi_{\psi}$ is a space of $m$-tuples, then $m = \pardim( \psi )$ is called
	the \emph{parameter dimension} of $\psi$. Typically, if $\pardim( \psi ) = m > 
	1$, then $\Pi_\psi$ is the full Cartesian product of $m$ spaces.
	We refer to parameterized distributions by symbolic names such as $\Binomial$, 
	$\Poisson$ or $\Gaussian$ if they describe the corresponding well-known 
	distributions. Such examples are shown in \cref{tab:dists}.

	\begin{table}[t]
		\centering%
		\caption{Prominent examples of parameterized distributions.}
		\label{tab:dists}
		\begin{tabular}{ c c c c }\toprule
			\begin{tabular}{ @{}c@{} }
				\textbf{Parameterized}\\
				\textbf{Distribution} $\psi$
			\end{tabular}
			&
			\begin{tabular}{ @{}c@{} }
				\textbf{Parameter}\\
				\textbf{Space} $\Pi_\psi$
			\end{tabular}
			&
			\begin{tabular}{ @{}c@{} }
				\textbf{Underlying}\\
				\textbf{Space} $\WW_\psi$
			\end{tabular}
			&
			\begin{tabular}{ @{}c@{} }
				\textbf{pmf / pdf} $\psi\params{\tup p}( w )$
			\end{tabular}
			\\\midrule
			$\Flip$
			& $[0,1]$
			& $\set{ 0, 1 }$
			& $\Flip\params{ p }( w ) = 
			\begin{cases}
				p & \text{for } w = 1\text,\\
				1-p & \text{for } w = 0
			\end{cases}$
			\\\addlinespace[1ex]
			$\Binomial$
			& $\NN_{ > 0 } \times [0, 1]$
			& $\NN_{ \geq 0 }$
			& $\displaystyle\Binomial\params{ n, p }( k ) = 
			{\textstyle\binom{n}{k}} p^k (1-p)^{n-k}$
			\\\addlinespace[1.5ex]
			$\Poisson$
			& $\RR_{ > 0 }$
			& $\NN_{ \geq 0 }$
			& $\displaystyle\Poisson\params{ \lambda }( k ) = 
				\frac{ \lambda^k }{ k! } e^{ - \lambda }$
			\\\addlinespace[1.5ex]
			$\Gaussian$
			& $\RR \times \RR_{ > 0 }$
			& $\RR$
			& $\displaystyle\Gaussian\params{\mu,\sigma^2}( x ) = 
				\frac{1}{\sqrt{2 \pi \sigma^2 }} 
				e^{ -\frac{(x-\mu)^2}{\sigma^2}}$
			\\\bottomrule
		\end{tabular}
	\end{table}

	For our work, we need to discuss situations where the parameters themselves are
	random variables. Thus, the following result on measurability with respect to
	parametrizations is central for our work. It is a special case of~\cite[Theorem
	3.2]{GaudardHadwin1989}, tailored to our definition of parameterized
	distributions.  It states that, under suitable technical conditions, the
	probability of a fixed event under a parameterized distribution is a measurable
	function of the parameters.

	\begin{fact}[Gaudard \&{} Hadwin {\cite[Theorem~3.2]{GaudardHadwin1989}}]
		\label{fac:gaudardhadwin}%
		Let $\psi$ be a parameterized distribution such that $\Pi_\psi$ is a Borel
		subset of a Polish space and the following hold.
		\begin{enumerate}
			\item For all $w \in \WW_\psi$, the function $\Pi_\psi \to [0,1]
				\colon \tup p \mapsto \psi\params{ \tup p }( w )$ is continuous.
			\item Every $\tup p_0 \in \Pi_\psi$ has a neighborhood $N( \tup
				p_0 )$ with
				\[
					\int_{\WW_{\psi}}
					\bigg( 
						\sup_{ \tup p \in N( \tup p_0 ) } \psi\params{ \tup p } 
					\bigg)
					\d{\mu_\psi} < \infty\text.
				\]
			\item If $\tup p, \tup q \in \Pi_\psi$ with $\tup p \neq \tup q$, then 
				$P_{ \psi\params{ \tup p } }$ and $P_{ \psi\params{ \tup q } }$ are
				different probability measures.
		\end{enumerate}

		Then for every $\WWWW \in \WWW_\psi$ and $\BBBB \in \Borel[0,1]$, it holds
		that
		\[
			\set[\big]{
				\tup p \in \Pi_{\psi} \with 
				P_{\psi\params{\tup p}}( \WWWW ) \in \BBBB 
			} \in \Borel( \Pi_\psi )\text.
			\qedhere
		\]
	\end{fact}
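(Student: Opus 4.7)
The plan is to show that for each fixed $\WWWW \in \WWW_\psi$, the function
\[
	F_\WWWW \from \Pi_\psi \to [0,1] \with \tup p \mapsto P_{\psi\params{\tup p}}(\WWWW) = \int_\WWWW \psi\params{\tup p}(w) \:\d{\mu_\psi(w)}
\]
is in fact \emph{continuous}. Since $\Pi_\psi$ carries the Borel $\sigma$-algebra of (a subspace of) a Polish space, continuity immediately yields Borel measurability, and hence $F_\WWWW^{-1}(\BBBB) \in \Borel(\Pi_\psi)$ for every $\BBBB \in \Borel[0,1]$, which is exactly what the statement claims.

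To establish continuity at an arbitrary $\tup p_0 \in \Pi_\psi$, I would take a sequence $\tup p_n \to \tup p_0$ in $\Pi_\psi$. By condition~(1), $\psi\params{\tup p_n}(w) \to \psi\params{\tup p_0}(w)$ pointwise in $w \in \WW_\psi$. By condition~(2), the neighborhood $N(\tup p_0)$ provides a $\mu_\psi$-integrable envelope $g(w) \coloneqq \sup_{\tup p \in N(\tup p_0)} \psi\params{\tup p}(w)$. Discarding finitely many initial terms, we may assume $\tup p_n \in N(\tup p_0)$ for all $n$, so that $0 \leq \mathbbm 1_\WWWW(w)\cdot \psi\params{\tup p_n}(w) \leq g(w)$. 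Lebesgue's dominated convergence theorem then yields $F_\WWWW(\tup p_n) \to F_\WWWW(\tup p_0)$, i.e.\ sequential continuity; since $\Pi_\psi$ is metrizable, this is the same as continuity.

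I expect the main technical subtlety to lie in making sense of condition~(2) itself, namely in verifying that the envelope $g$ is $\WWW_\psi$-measurable so that its integral is well-defined and so that dominated convergence actually applies. A priori, $g$ is a supremum over the \emph{uncountable} family $N(\tup p_0)$. To sidestep this, I would use the separability of the Polish ambient space of $\Pi_\psi$: pick a countable dense subset $D \subseteq N(\tup p_0)$, and note that by the continuity from condition~(1) (in the parameter) one has $g(w) = \sup_{\tup p \in D} \psi\params{\tup p}(w)$, which is a countable supremum of measurable functions and therefore measurable.

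Condition~(3) (identifiability) plays no role in this argument; it is needed for other conclusions in \cite{GaudardHadwin1989} about the parametrization $\tup p \mapsto P_{\psi\params{\tup p}}$ as a map into a space of probability measures, but not for the pointwise-in-$\WWWW$ measurability statement used here.
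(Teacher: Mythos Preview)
The paper does not give its own proof of this statement: it is recorded as a \emph{Fact} and attributed to \cite[Theorem~3.2]{GaudardHadwin1989}, so there is no in-paper argument to compare against. Your proposal is a correct and essentially standard proof of the measurability conclusion via continuity of $\tup p \mapsto P_{\psi\params{\tup p}}(\WWWW)$, obtained from conditions~(1) and~(2) by dominated convergence; your handling of the measurability of the envelope via a countable dense subset is the right fix, and your remark that condition~(3) is not used for this particular conclusion is accurate.
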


	Let us first describe the three conditions in words. Condition 1 states that
	for every fixed argument $w$ the density $\psi\params{ \tup p }( w)$ in $w$
	is continuous with respect to $\tup p$. Condition 2 means that for every
	parameter $\tup p$, the supremum of densities parameterized from within a
	neighborhood of $\tup p$ is integrable. Finally, condition 3 states that
	distinct parameters produce different distributions through $\psi$.
	Together, the conditions enforce that the parameterized distribution is
	well-behaved with respect to certain topological properties in the parameter
	space.

	We emphasize the crucial role of \cref{fac:gaudardhadwin} for the results of
	this paper. \emph{We only allow such parameterized distributions in generative
	Datalog programs, that adhere to the technical preconditions of this theorem.}
	The reason for this is that we need its conclusion at a central point in our
	constructions.

	If the underlying space of a parameterized distribution is countable at most
	(and hence, $\mu_\psi$ is the counting measure), then condition 2 is trivially
	fulfilled. If additionally, the parameter space is discrete, then condition 1
	is always satisfied as well. For uncountable parameter spaces, this need not be
	the case. However, one may easily verify that, for example, the Binomial and
	the Poisson distribution meet condition 1 (and 2 and 3, for that matter). The
	Gaussian distribution is a continuous distribution for which
	\cref{fac:gaudardhadwin} is applicable \cite[p.~173]{GaudardHadwin1989}. Thus,
	all the distributions from \cref{tab:dists} are suitable for our later
	application. Moreover, according to \cite[p.~173]{GaudardHadwin1989}, the
	conditions generally apply
	\enquote{to the most common [parameterized] families}. Therefore, they should
	not be considered as too harsh a restriction for our purposes. We want to point
	out a particular caveat though. Notably, the theorem is not applicable for the
	Dirac distribution. This is, because the Dirac distribution is not even a
	parametrized distribution in the sense of \cref{def:paramdist} to begin with.  
	Related to this, we note that it has recently been pointed out
	\cite{AlvianoZamayla2021}, that the class of \enquote{allowed} parameterized
	distributions should also be treated with care in the setting of
	\cite{Barany+2017} (where the sample space is always discrete, but the
	parameter space may be uncountable).

\subsection{Relational Databases}\label{sec:reldb}

We fix a countably infinite set $\bm{Rel}$, and a non-empty set $\UU$. The
elements of $\bm{Rel}$ are called \emph{relation symbols} and $\UU$ is called
the \emph{universe}. We also fix a function $\ar \from \bm{Rel} \to \NN$ and
call $\ar( R )$ the \emph{arity} of $R$. The \emph{attributes} of $R$ are then
the numbers $1, \dots, \ar( R )$. Additionally, we fix a function $\dom$ that
maps every pair $(R,i)$ with $R \in \bm{Rel}$ and $1 \leq i \leq \ar(R)$ to a
non-empty subset of $\UU$, and we write $\dom_i(R)$ instead of $\dom(R,i)$.
Then $\dom_i(R)$ is called the \emph{domain} of the $i$th attribute in $R$. We
define the \emph{domain} of $R$ as
\[
	\dom(R) \coloneqq \prod_{ i = 1 }^{ \ar( R ) } \dom_i( R ) 
	\subseteq \UU^{ \ar(R) }\text.
\]
\emph{Throughout the rest of the paper, we assume that $\bm{Rel}$, $\UU$, $\ar$
and $\dom$ are fixed.}

A \emph{database schema} $\Schema$ is a non-empty, finite subset of $\bm{Rel}$.
A \emph{fact} (or, $\Schema$-fact) is an expression of the shape $R( u_1,
\dots, u_n )$ where $u_i \in \dom_i( R )$ for all $i = 1, \dots, \ar( R )$. The
set of facts with relation symbol $R$ (or, \emph{$R$-facts}) is denoted by
$\FF_R$. The set of all $\Schema$-facts is denoted by $\FF_{\Schema}$. A
\emph{database instance} over $\Schema$ and $\UU$ (or, $\Schema$-instance) is a
finite bag (multiset) of facts from $\FF_{ \Schema }$.

\bigskip

The following example is loosely based upon an example from \cite{Jampani+2011}
and serves as a running example throughout the paper.

\begin{example}[name=Corporate Data,label=exa:running]
	We consider a database that stores data of various companies, with database
	schema $\Schema = \set{ \PartnerOf, \Employee, \PayScale }$. The tuples of
	the relations capture the following information:
	\begin{itemize}
		\item $\PartnerOf(c_1,c_2)$ means that the companies $c_1$ and $c_2$ are 
			contract partners.
		\item $\Employee(s,c,d)$ means that the social security number (SSN) $s$ is associated with an
			employee at the department $d$ of company $c$.
		\item $\PayScale(c,d,\mu)$ means that employees of department $d$ at 
			company $c$ achieve an average annual income of $\mu$ dollars.
	\end{itemize}
	The database is shown in \cref{fig:running} below. 

	\begin{figure}[H]	\centering
		\hfill%
		\begin{tabular}{ c c c } 
			\multicolumn{3}{c}{\textbf{Employee}}					\\ \toprule
			SSN				& Company			& Department		\\ \midrule
			962-00-3472		& F-Corp				& HR					\\
			981-00-8876		& E-Corp				& IT					\\ \bottomrule
		\end{tabular}
		\hfill%
		\begin{tabular}{ c c }
			\multicolumn{2}{c}{\textbf{PartnerOf}}					\\ \toprule
			Company\textunderscore1	& Company\textunderscore2	\\ \midrule
			A-Corp						& F-Corp							\\
			A-Corp						& D-Corp							\\ \bottomrule
		\end{tabular}
		\hfill\mbox{}\\[4ex]%
		\hfill%
		\begin{tabular}{ c c c }
			\multicolumn{3}{c}{\textbf{PayScale}}					\\ \toprule
			Company			& Department		& Average\textunderscore Salary
																				\\ \midrule
			A-Corp			& IT					& \${} 55\,000		\\
			E-Corp			& IT					& \${} 63\,000		\\
			F-Corp			& HR					& \${} 56\,000		\\ \bottomrule
		\end{tabular}
		\hfill\mbox{}%
		\caption{A database with three relations, containing corporate data.}
		\label{fig:running}
	\end{figure}

	For example, we might assume that $\dom_1( \PayScale )$ is the set of
	strings over some alphabet, whereas typically $\dom_3( \PayScale ) = \NN$. 
	For technical reasons, it is convenient to formally let $\dom_3( \PayScale )
	= \RR$ though, in order for it to coincide with the respective parameter
	domain of the parameterized Gaussian distribution.
\end{example}

\subsection{Probabilistic Databases}\label{sec:pdb}

In a nutshell, a probabilistic database (PDB) is a collection of database
instances (in the sense of \cref{sec:reldb}) that is equipped with a
probability measure. Throughout this paper, we use the framework of
\emph{standard probabilistic databases} \cite{GroheLindner2020} (to be
consulted for details). Recall that for a database schema $\Schema$, we let
$\FF_{\Schema}$ denote the set of facts that can be built from $\cal S$.  A
basic assumption for standard PDBs is that all attribute domains are standard
Borel. Then $\FF_{\Schema}$ is standard Borel as well and we denote its (Borel)
$\sigma$-algebra by $\FFF_{\Schema}$. The sample space $\DD$ of a standard PDB
is the set of all database instances over ${\Schema}$, that is, finite bags of
facts from $\FF_{\Schema}$. We drop the subscript $\Schema$, if the schema is
clear. By a generic construction, $\DD$ is equipped with a
$\sigma$-algebra $\DDD$, turning it into a measurable space. The
$\sigma$-algebra~$\DDD$ is generated by the family of \emph{counting events}
$\CCCC(\FFFF,n)$ consisting of those instances that contain exactly $n$ facts from
$\FFFF$, where $\FFFF$ is a measurable set of facts. 

\begin{definition}
	A \emph{standard probabilistic database} is a probability space $\pdb = 
	(\DD, \DDD, P)$ where $(\DD,\DDD)$ is the measure space from the
	construction above.
\end{definition}

As we only work with standard PDBs, we omit the term \enquote{standard}
henceforth.

\begin{fact}[Measurability of Queries \cite{GroheLindner2020}]
  \label{fac:measurableviews}
	Relational algebra and aggregate queries are measurable functions on PDBs.
\end{fact}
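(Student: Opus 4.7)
The plan is to proceed by structural induction on the relational algebra or aggregate expression, leveraging that the $\sigma$-algebra $\DDD$ is generated by the counting events $\CCCC(\FFFF, n)$ with $\FFFF \in \FFF$ and $n \in \NN$. It therefore suffices, for a relational algebra query $q \from \DD \to \DD$, to show that $q^{-1}(\CCCC(\FFFF, n)) \in \DDD$ for every such generator; for aggregate queries $q \from \DD \to \RR$ the analogous reduction to preimages of a generator of $\Borel(\RR)$ applies. The base case of accessing a single relation $R$ is immediate, since the preimage of a counting event over $R$-facts is again a counting event (on $\FFFF \cap \FF_R$) and the preimage over a different relation is $\DD$ or $\emptyset$.

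For the inductive step, I would dispatch the relational algebra operators individually. Selection $\sigma_\phi$ is handled by verifying once, for each admissible predicate family, that $\phi$ carves out a measurable subset $\FFFF_\phi \subseteq \FF$; after that the preimage of $\CCCC(\FFFF, n)$ is a joint counting event over $\FFFF \cap \FFFF_\phi$ and its complement. Union, intersection, and bag difference decompose into countable disjoint unions of joint counting events indexed by the possible pairs $(n_1, n_2)$ of multiplicities in the two inputs (e.g. $n_1 + n_2 = n$ for union, $\min(n_1, n_2) = n$ for intersection, $n_1 - n_2 = n$ for bag difference). For projection $\pi_\alpha$ and Cartesian product or join, the subtle point is that several input facts may share a single image, so the output multiplicity is a sum (respectively a product) over fibers; since fibers of a measurable map between standard Borel spaces are measurable, the preimage of $\CCCC(\FFFF, n)$ splits into a countable disjoint union, indexed by integer compositions of $n$, of intersections of counting events on these measurable fibers. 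Composition of measurable maps then closes the induction.

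For aggregate queries, I would first route the input through a relational algebra subquery (measurable by the above) whose output is a bag of numeric-valued tuples, and then apply the aggregation. Partitioning $\DD$ according to total multiplicity yields a countable measurable decomposition on which each slice identifies, up to a measurable relabelling, with some $\RR^k$; on each slice the standard aggregation functions (sum, count, min, max, average) are continuous and hence Borel-measurable, and this lifts to measurability on all of $\DD$.

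The main obstacle is the projection and join case: bag semantics forces one to express the preimage of a single counting event as a countable disjoint union of joint counting events indexed by integer compositions of $n$ and spread over fibers. The hard part will be confirming, in the standard Borel setting, that each such joint event is itself in $\DDD$ and that the indexing union remains countable, rather than checking the combinatorial identities themselves.
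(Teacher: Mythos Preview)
The paper does not prove this statement at all: it is imported as a \emph{Fact} from \cite{GroheLindner2020} and used as a black box throughout. There is therefore nothing in the present paper to compare your proposal against; you are sketching a proof the authors deliberately outsourced.

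That said, your outline is broadly the right shape (and close to how the cited reference proceeds), but the projection case contains a gap as written. You say the preimage of $\CCCC(\FFFF,n)$ under projection ``splits into a countable disjoint union, indexed by integer compositions of $n$, of intersections of counting events on these measurable fibers.'' The fibers, however, are indexed by points of the target space, which is typically uncountable; fixing an integer composition $(k_1,\dots,k_m)$ still leaves you quantifying over an uncountable family of fiber choices, so the decomposition you describe is not countable. In \emph{bag} semantics this whole detour is unnecessary, since the total output multiplicity in $\FFFF$ equals the total input multiplicity in $\pi^{-1}(\FFFF)$, giving $\pi^{-1}(\CCCC(\FFFF,n)) = \CCCC(\pi^{-1}(\FFFF),n)$ directly. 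The genuinely delicate operator is deduplication (equivalently, set-semantics projection), where one must count distinct images; there the argument needs an approximation via a countable base of the Polish topology rather than a direct fiberwise decomposition. Your identification of projection/join as the hard case is correct, but the mechanism you propose for it does not close.
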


The construction of PDBs sketched before inherently uses bag semantics, meaning that the sample space contains instances with duplicates. For the purpose of this paper, we only want to consider set semantics though. This can either be achieved on the side of measures, i.\,e. PDBs with almost surely set-valued instances; or by restricting the sample space to the set $\DD^{\SET}$ of duplicate-free instances from $\DD$. Note that $\DD^{\SET}$ is a measurable subset of $\DD$ and, consequentially, $\DDD^{\SET} \coloneqq \DDD\restriction_{\DD^{\SET}}$ a sub-$\sigma$-algebra of $\DDD$, that is, $\DDD^{\SET} \subseteq \DDD$. Moreover,
$\DDD$ is generated by the family of all set-valued counting events
$\CCCC^{\SET}(\FFFF,n) \coloneqq \CCCC(\FFFF,n)\cap \DD^{\SET}$ (cf.  \cite[p.
83]{Srivastava1998}).%

\begin{proposition}\label{pro:pdbstandardborel}
    For every standard PDB $(\DD,\DDD,P)$, the measurable space
    $(\DD,\DDD)$ is standard Borel, as is its restriction to set instances.
\end{proposition}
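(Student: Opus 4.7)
The plan is to exhibit $(\DD,\DDD)$ as a countable disjoint union of quotients of powers of the fact space, each of which is standard Borel, and then verify that the resulting Borel structure coincides with the one generated by the counting events.

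First, I would decompose $\DD = \biguplus_{n \in \NN} \DD_n$ where $\DD_n$ is the set of bags of cardinality exactly~$n$ (so $\DD_0 = \{\emptyset\}$). Since $\FF_{\Schema}$ is standard Borel (as noted in the discussion of the sample space), so is $\FF_{\Schema}^n$ for each finite $n$, being a countable product of standard Borel spaces. The space $\DD_n$ is naturally the quotient of $\FF_{\Schema}^n$ by the permutation action of the symmetric group $S_n$. To turn this into a concrete standard Borel space, I would fix a Borel isomorphism of $\FF_{\Schema}$ with a Borel subset of $\RR$ (available because $\FF_{\Schema}$ is standard Borel), transport the order of $\RR$, and identify $\DD_n$ with the Borel set $\{(f_1,\dots,f_n) \in \FF_{\Schema}^n : f_1 \leq \dots \leq f_n\}$, which inherits a standard Borel structure as a Borel subset of $\FF_{\Schema}^n$. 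The countable disjoint union of the $\DD_n$ is then standard Borel by the standard closure properties.

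Next I would check that the $\sigma$-algebra on $\DD$ arising from this Polish structure coincides with $\DDD$. One direction: the counting events $\CCCC(\FFFF,n)$ for $\FFFF \in \FFF_{\Schema}$ are Borel in the above structure, because under the canonical-ordering identification, $\CCCC(\FFFF,n) \cap \DD_m$ is obtained by finitely many selections of coordinates in $\FFFF$ versus its complement in $\FF_{\Schema}^m$ and then summing indicators, hence is Borel. The other direction: I would show that the generators of the disjoint-union-plus-product Borel structure — namely sets of the form $\{D \in \DD_n : D = \{\mkern-3mu\{f_1,\dots,f_n\}\mkern-3mu\},\ f_i \in \FFFF_i\}$ for $\FFFF_i \in \FFF_{\Schema}$ — can be expressed via inclusion–exclusion in terms of counting events (this is essentially the standard argument used to build the Borel structure on bag spaces, and matches the construction in Grohe–Lindner).

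Finally, for the restriction to set instances: $\DD^{\SET}$ is already noted in the excerpt to be a measurable subset of $\DD$, and $\DDD^{\SET}$ is its trace $\sigma$-algebra. Since Borel subsets of standard Borel spaces are themselves standard Borel in the trace $\sigma$-algebra (a basic fact about standard Borel spaces), $(\DD^{\SET},\DDD^{\SET})$ is standard Borel. The main obstacle I anticipate is the bookkeeping for the coincidence of the two $\sigma$-algebras on $\DD$ — checking measurability of counting events under the chosen canonical-ordering identification and, conversely, that measurable rectangles in $\FF_{\Schema}^n$ descend via inclusion–exclusion to sets in $\sigma(\{\CCCC(\FFFF,n) : \FFFF \in \FFF_{\Schema},\ n \in \NN\})$; the rest is a routine application of standard closure properties of standard Borel spaces.
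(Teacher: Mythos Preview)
Your proposal is correct and provides an elementary, self-contained construction via symmetric powers of the fact space. The paper takes a different route: it invokes results from point process and random measure theory, observing that $(\DD,\DDD)$ is Borel isomorphic to the space $\cal N_{\FF}$ of $\NN$-valued measures on $(\FF,\FFF)$ with finite total mass, which is a measurable subset of a Polish space whose Borel $\sigma$-algebra is generated by the evaluation maps $\mathsf{eval}_{\FFFF}\colon \mu \mapsto \mu(\FFFF)$ for $\FFFF\in\FFF$ (citing Daley--Vere-Jones and Fremlin). In that framework the coincidence of $\sigma$-algebras that you flag as the main obstacle becomes immediate, because the counting events $\CCCC(\FFFF,n)$ are precisely the preimages $\mathsf{eval}_{\FFFF}^{-1}(\set{n})$. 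Your approach trades the external citation for the inclusion--exclusion bookkeeping you anticipate; the paper's approach trades that bookkeeping for a reference to known structure theory. Both reach the set-instance case the same way, via the trace $\sigma$-algebra on a measurable subset.
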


\begin{proof}
	This is an instantiation of a known result from point process theory and the
	theory of random measures. We use the notation from \cite{Daley+2008}. For 
	any standard Borel space $(\XX,\XXX)$, the set $\cal N_{\XX}^{\#}$ of
	$\NN\cup\set{\infty}$-valued measures $\mu$ on $(\XX, \XXX)$ with the
	property that $\mu(\XXXX) < \infty$ for all bounded $\XXXX \in \XXX$ is a
	Polish space and its Borel $\sigma$-algebra is generated by the evaluation
	maps
	\begin{equation*}
		\mathsf{eval}_{\XXXX} \from \cal N_{\XX}^{\#} \to \RR \with \mu \mapsto \mu(\XXXX)
	\end{equation*}
	where $\XXXX\in\XXX$ \cite[Proposition 9.1.IV]{Daley+2008}. The subspace
	$\cal N_{\XX} = \mathsf{eval}_{\XX}^{-1}(\NN)$ of measures of $\cal N_{\XX}^
	{\#}$ of finite total mass is a measurable subset of $\cal N_{\XX}^{\#}$
	and thus, a standard Borel space when equipped with the corresponding trace
	$\sigma$-algebra \cite[424G]{Fremlin2013}. It is easy to see that there is
	a Borel isomorphism between our space $(\DD,\DDD)$ and the space $(\cal 
	N_{\XX}, \Borel(\cal N_{\XX}))$. Since $\DD^{\SET}$ is a measurable subset of 
	$\DD$, $(\DD^{\SET},\DDD\restriction_{\DD^{\SET}})$ is standard Borel as
	well.
\end{proof}

With the single exception of the proposition above, all facts we use about
standard PDBs in this paper are shown in \cite{GroheLindner2020}.

\emph{Throughout this paper we will exclusively use set instances and set 
	semantics. To simplify notation, we write $(\DD,\DDD)$ instead of 
	$(\DD^{\SET}, \DDD^{\SET})$ for the measurable space of set instances.}

\begin{definition}[Sub-Probabilistic Databases]
	If $(\DD, \DDD)$ is the measurable space of a standard PDB, and $P$ is a
	sub-probability measure on $(\DD, \DDD)$, then $\D = ( \DD, \DDD, P )$ is
	called a \emph{sub-probabilistic} database.
\end{definition}

Let $\DD_{ \Error } \coloneqq \DD \uplus \set{ \Error }$ and equip this space
with the $\sigma$-algebra $\DDD_{ \Error } \coloneqq \DDD \cup 
\set{ \DDDD \cup \set{ \Error } \with \DDDD \in \DDD }$. There is a natural
one-to-one correspondence between probability measures on $( \DD_{ \Error },
\DDD_{ \Error } )$ and sub-probabilistic databases on the instance space
$(\DD, \DDD)$. A natural interpretation of the \enquote{missing} probability
mass of a sub-probabilistic database is that it describes the probability of an
error event (or the outcome of a draw from the PDB to be undefined). The space
$\DD_\Error$ makes this error event (\enquote{$\Error$})
explicit. Note that with this transformation the results of
\cite{GroheLindner2020} concerning query measurability directly also apply to
sub-probabilistic databases.

\subsection{Logic and Datalog}

We briefly introduce the background from logic that we need throughout the
paper. For details, we refer, for example, to \cite{Abiteboul+1995,Libkin2012}.
Let $\Schema$ be a relational schema and $\UU$ the universe as before. Let
$\Var \neq \emptyset$ be a fixed, countably infinite set of variables. As is
common in database theory, we do not distinguish between constant symbols and
constants from $\UU$.  An \emph{atom} is an expression of the shape $R( \tup u
)$ where $R$ is a relation symbol from $\Schema$ and $\tup u \in ( \Var \cup
\UU)^k$ where $k$ is the arity of $R$. First-order formulas are built from
atoms using $\neg$, $\wedge$, $\vee$, $\forall$ and $\exists$. The \emph{free
variables} of $\phi$ are the variables appearing among its atoms that are not
bound by a quantifier. A formula without free variables is called a
\emph{sentence}, or \emph{Boolean}.  We write $\phi( \tup x )$ to indicate that
$\phi$ has free variables exactly $\tup x$. 

Suppose $\tup x = (x_1, \dots, x_n )$ is a tuple of variables. A
\emph{valuation} of $\tup x$ is a function $\alpha$ mapping every variable
$x_i$ in $\tup x$ to a constant $\alpha( x_i ) = a_i \in \UU$. If $\tup u$ is a
tuple of variables and constants, and $\alpha$ a valuation of the variables in 
$\tup u$, then $\alpha( \tup u )$ denotes the tuple obtained by replacing every
variable $x$ in $\tup u$ with $\alpha( x )$. It is often convenient to identify
a valuation $\alpha$ of variables $x_1, \dots, x_n$ with the tuple $\tup a = (
a_1, \dots, a_n )$. If the free variables of $\phi$ are all contained in the
tuple $\tup x$, we write $\phi(\tup a)$ for the formula that emerges from
$\phi$ by replacing every occurrence of $x_i$ with $a_i$. 

\begin{remark}
	Formally, we consider sorted first-order languages. The set of valuations of
	a single variable $x$ is then given as the intersection of all the attribute
	domains of the positions where $x$ occurs. For simplicity, we assume that
	all positions where a variable $x$ occurs are typed equally.
\end{remark}

The semantics $\models$ of first-order logic are defined in the standard way.

For the introduction of Datalog, we follow \cite[Chapter~12]{Abiteboul+1995},
to which we refer the reader for further details. A \emph{Datalog rule} is a
logical expression of the shape
\begin{equation}\label{eq:dlrule}
	R( \tup x ) \whenever S_1( \tup x_1 ), \dots, S_m( \tup x_m )
\end{equation}
where $R$ and $S_i$ are relation symbols and $\tup x$, $\tup x_i$ are tuples of
variables or constants of the appropriate lengths such that every variable in
the tuple $\tup x$ appears among the variables of some tuple $\tup x_i$. The
head of the rule \labelcref{eq:dlrule} is $R( \tup x )$, and the body is $S_1(
\tup x_1 ), \dots, S_m( \tup x_m )$. A \emph{Datalog program} is a finite set
of Datalog rules.

The relation symbols that only occur in the rule bodies of a program are called
\emph{extensional}. The remaining ones (those appearing at least once in a rule
head) are called \emph{intensional}. The extensional (or intensional)
relation symbols form the \emph{extensional} (\emph{intensional}, resp.)
\emph{schema} of the program. The combined schema consists of both the
extensional and intensional relation symbols.

Under the model-theoretic view, a Datalog program $\mathcal P$ is a conjunction
$\phi$ of first-order sentences, where all variables in every rule are
universally quantified. A \emph{model} of $\mathcal P$ is a database instance
over the combined schema satisfying $\phi$. The input to a Datalog program is a
database instance $D$ over the extensional schema. The \emph{outcome} of
$\mathcal P$ on $D$ is the minimal model of $\mathcal P$ that contains $D$.
Such a minimal model always exists, and it contains no constants beyond those
present in $D$ and $\cal P$. It is a superset of $D$ that only contains
additional facts from the intensional schema.

Foreshadowing a \enquote{generative} point of view, we highlight the equivalent
approach to Datalog semantics through fixpoints. Let again $\mathcal P$ be a
Datalog program and $D$ be a database instance over the combined schema. A fact
$f$ is an \emph{immediate consequence} of $D$ subject to $\mathcal P$, if there
exists a valuation of the variables of a rule such that its body is satisfied
in $D$, and such that $f$ is the valuation of the head of the rule. The map
that sends any database instance to the set of its immediate consequences is a
monotone operator on database instances (with respect to $\subseteq$). Then the
outcome of $\mathcal P$ on $D$ is the minimal fixpoint of the operator. This is
unique, and equivalent to the model-theoretic notion of outcome described
above. This second definition can be interpreted algorithmically: Given an
input instance $D$, at every step in time, we add all immediate consequences to
the current database instance, until no further changes occur. This produces
exactly the outcome of $\mathcal P$ on $D$.

\section{General Generative Datalog}\label{sec:generalgdl}

In this section, we introduce the Generative Datalog language in a version that
allows the use of continuous distributions. After specifying the syntax, the 
main goal of this section is to provide the groundwork for a well-defined
semantics of Generative Datalog programs. 

\subsubsection*{Structure of this section}
First of all, in \cref{ss:syntax}, we establish the syntax of Generative
Datalog. For this, we build upon the original syntax of \cite{Barany+2017} but
implement some slight alterations that on the one hand are tailored to the
technical developments later on, and on the other hand allow us to overcome the
issues discussed in \cref{s:contrib}. In \cref{ss:informal}, we set the scene
for the later introduction of the semantics, by describing the desired workings
of a Generative Datalog program in an abstract, but informal way. The rest of
the section prepares key ingredients for the semantics: the translation of
Generative Datalog programs into \emph{existential} Datalog programs
(\cref{sec:assocedl}), the notion of rule applicability
(\cref{ss:rule-applicability}), the definition and properties of the possible
new instances emerging after letting a rule fire (\cref{ss:followup}), and a
brief discussion of functional dependencies that are induced by the rules of
the translated existential Datalog program (\cref{ss:fd}). We remark that all
of these developments mirror equivalent developments in the original
introduction of Generative Datalog in \cite{Barany+2017}, but come with the
additional need to discuss various measurability properties due to considering
continuous distributions.  Based on this, the concrete semantics will be
introduced and treated in the sections thereafter.

\subsection{Syntax of Generative Datalog}\label{ss:syntax}

We start by introducing the syntax of Generative Datalog programs. We note 
that, already here, there are slight differences to the version of Bárány et 
al.\ that stem from the updated semantics we are going to introduce. At a later
point (\cref{ssec:origsim}), we will come back to the differences to
\cite{Barany+2017}.

We fix two disjoint database schemas $\Schema[E]$, and $\Schema[I]$. The schema
$\Schema[E]$ is called the \emph{extensional} database schema, and $\Schema[I]$
is called the \emph{intensional} database schema. 

Additionally, we fix a set $\Psi$ of (symbolic names of) parameterized
distributions. In order to not worry about which combinations of parameters are
\enquote{legal}, we require that if $\psi$ is a parameterized distribution,
then the parameter space $\Params_\psi$ is a full Cartesian product of $m > 0$ 
spaces where $m$ is the number of parameters in $\psi$. All parameterized
distributions we would typically want to support (including, in particular,
those of \cref{tab:dists}) satisfy this requirement anyway.

\medskip

The \emph{terms} of the \GDL{}-language (over $\Schema[E]$,
$\Schema[I]$ and $\Psi$) are defined as follows:
\begin{enumerate}
\item All variables and all constants from $\Schema[E] \cup \Schema[I]$ are 
	terms. Such terms are called \emph{deterministic terms}.
\item Let $\psi \in \Psi$ be a parameterized distribution with $\pardim(\psi) =
	m$, say with parameter space $\Pi_\psi = \Pi_{\psi,1} \times \dots \times
	\Pi_{\psi,m}$. Then 
	\begin{equation}\label{eq:psiterm}
		\psi\params{ p_1, \dots, p_m }
	\end{equation}
	is a term where $p_i$ is either a variable, or a constant in $\Pi_{\psi,i}$.
	Terms of the shape \labelcref{eq:psiterm} are called \emph{probabilistic
	terms} or \emph{$\Psi$-terms}.
\end{enumerate}

An \emph{atom} is an expression of the form $R( t_1, \dots, t_n )$ where $n$
is the arity of $R \in \Schema[E] \cup \Schema[I]$ and $t_1, \dots, t_n$ are
terms subject to the following restrictions for all $i = 1, \dots, n$:
\begin{itemize}
	\item If $t_i = c$ is a constant, then $c \in \dom_i(R)$.
	\item If $t_i = \psi\params{ \tup p }$, then $R \in \Schema[I]$ and 
		$\WW_\psi \subseteq \dom_i( R )$.
\end{itemize}
If some term $t_i$ is probabilistic, $R( t_1, \dots, t_n )$ is called a
\emph{probabilistic atom}, and a \emph{deterministic atom} otherwise. If $R \in
\Schema[E]$, then $R( t_1, \dots, t_n )$ is called an \emph{$\Schema[E]$-atom}
and otherwise, if $R \in \Schema[I]$, then $R$ is called an
\emph{$\Schema[I]$-atom}. We emphasize that probabilistic terms are only
allowed to occur in $\Schema[I]$-atoms.

\begin{definition}\label{def:gdlrule}
	A \emph{\GDL{}$[\Schema[E],\Schema[I],\Psi]$ rule} $\phi$ is an expression 
	\begin{equation}\label{eq:gdlrule}
		R( t_1, \dots, t_n ) \whenever S_1( t_{11}, \dots, t_{1n_1} ), \dots, S_k( t_{k1}, \dots, t_{kn_k} ) 
	\end{equation}
	such that
	\begin{itemize}
		\item $R$ is intensional with $n = \ar( R )$ and $R( t_1, \dots, t_n )$ 
			is an $\Schema[I]$-atom, possibly with $\Psi$-terms;
		\item $S_1, \dots, S_k$ are relation symbols (extensional or intensional) 
			with $n_i = \ar( S_i )$ and for all $i = 1, \dots, k$, $S_i( t_{i1},
			\dots, t_{ in_i })$ is a deterministic atom; and,
		\item all variables appearing among $t_1, \dots, t_n$ appear in 
			$\set{ t_{ij_i} \with 1 \leq i \leq k \text{ and } 1 \leq j_i \leq n_i }$.
	\end{itemize}
	Moreover, we require that if $t_i = \psi\params{ p_1, \dots, p_m }$, and
	$p_j$ is a variable for $j = 1, \dots, m$, then all attribute positions 
	where the variable $p_j$ reappears on the right-hand side of 
	\labelcref{eq:gdlrule} have the same attribute domain, coinciding with 
	$\Pi_{\psi,j}$.
\end{definition} 

The last requirement in \cref{def:gdlrule} ensures that parameterized
distributions cannot be used with malformed parameters. We denote the rule
$\phi$ from \labelcref{eq:gdlrule} as
\[
	\phi_h( \tup x_h ) \whenever \phi_b( \tup x )
\]
so that $\phi_h$ is the formula (atom) $R( t_1, \dots, t_n )$ and $\tup x_h$ is
the tuple of variables appearing therein (i.\,e. $\phi_h$'s \emph{free 
variables}), and, similarly, $\phi_b$ is the conjunction of atoms on the 
right-hand side of \labelcref{eq:gdlrule}, with free variables $\tup x$. By
definition, all variables of $\tup x_h$ reappear in $\tup x$. As in standard
Datalog terminology, the formula $\phi_h$ is called the \emph{head}, and 
$\phi_b$ is called the \emph{body} of the rule. A rule $\phi$ is called
\emph{deterministic} if its head is a deterministic atom, and 
\emph{probabilistic} otherwise.

\begin{definition}[{$\GDL$ Programs}]
	A \emph{$\GDL[\Schema[E],\Schema[I],\Psi]$-program} is a finite bag
	$\bag{\phi_1, \dots, \phi_k}$ of $\GDL[\Schema[E],\Schema[I],\Psi]$ rules.
\end{definition}

The significance of letting a program be a \emph{bag} rather than a set of rules
is that it is our mechanism of sampling multiple times for the same parameters.
Every copy of a rule is interpreted as a separate sampling instruction. We
expand on this in \cref{ss:informal,ssec:origsim}.

\begin{example}[continues=exa:running]
	We use the database instance shown in \cref{fig:running} (see
	\cref{exa:running}) as an input instance for a \GDL{} program.  Apart from
	the given extensional schema $\Schema[E] \coloneqq \Schema = \set{
	\PartnerOf, \Employee, \PayScale }$, we let $\Schema[I] \coloneqq \set{
	\AffilEmployee, \Res }$. The intended purpose of $\Res$ is to store the
	query result we are interested in, whereas $\AffilEmployee$ is just an
	auxiliary relation. Let $\Psi \coloneqq \set{ \Gaussian }$. Then, for
	example,
	\begin{equation}\label{eq:running_rule}
		\Res( s, c, \Gaussian\params{ \mu, 10\,000 } )
		\whenever \Employee( s, c, d ), \PayScale( c, d, \mu )
	\end{equation}
	is a $\GDLOf{\Schema[E],\Schema[I],\Psi}$ rule. Intuitively,
	\labelcref{eq:running_rule} is an instruction that defines circumstances
	under which we should generate new $\Schema[I]$-facts. In this case, we want
	to sample an income amount for employees, based on the average income at 
	their workplace. The $\Psi$-term $\Gaussian\params{\mu, 10\,000}$ indicates
	that the income amount is Normal distributed, parameterized with the average
	value $\mu$ from the $\PayScale$ table and a constant variance $10\,000$.

	\begin{figure}[H]
		\centering%
		\makebox[0pt]{\begin{tabular}{ l l }
			\toprule
			$\G_{\sal}\colon$ 
			& $\AffilEmployee(s,c_0,d) \whenever \Employee(s,c_0,d)$\\
			& $\AffilEmployee(s,c,d) \whenever \Employee(s,c,d), 
					\AffilEmployee(s',c',d'), \PartnerOf(c,c')$\\
			& $\AffilEmployee(s,c,d) \whenever \Employee(s,c,d), 
					\AffilEmployee(s',c',d'), \PartnerOf(c',c)$\\
			& $\Res(s,c,\Gaussian\params{\mu, 10\,000}) \whenever 
				\AffilEmployee(s,c,d), \PayScale(c,d,\mu)$\\
			\bottomrule
		\end{tabular}}
		\caption{A $\GDL$ program for our running example.}
		\label{fig:running_program}
	\end{figure}

	\Cref{fig:running_program} shows the $\GDLOf{\Schema[E],\Schema[I],
	\Psi}$ program $\G_{\sal}$ with three deterministic rules and one
	probabilistic rule. This program computes tuples $(s,c,i)$, 
	such that $s$ is an employee at company $c$ with an annual income of $i$
	dollars, where $i$ is sampled from a Gaussian distribution like described 
	before. 
\end{example}

\begin{example}
	\Cref{fig:barany_program} shows the running
	example\footnote{The example itself is based on a famous example from
	\cite{Pearl1988}.} of \cite{Barany+2017} in our syntax (cf. 
	\cite[Figure~3, p.~22:8]{Barany+2017}).
	\begin{figure}[H]
		\centering%
		\begin{tabular}{ l l }
		\toprule
		$\G_{\text{burglary}}\colon$ 
		& $\Earthquake(c,\Flip\params{0.1}) \whenever \City(c,r)$\\
		& $\Unit(h,c) \whenever \House(h,c)$\\
		& $\Unit(b,c) \whenever \Business(b,c)$\\
		& $\Burglary(x,c,\Flip\params{r}) \whenever \Unit(x,c), \City(c,r)$\\
		& $\Trig(x,\Flip\params{0.6}) \whenever \Unit(x,c), \Earthquake(c,1)$\\
		& $\Trig(x,\Flip\params{0.9}) \whenever \Burglary(x,c,1)$\\
		& $\Alarm(x) \whenever \Trig(x,1)$\\
		\bottomrule
		\end{tabular}
		\caption{The $\GDL$ program from the running example of
		\cite{Barany+2017} in our syntax.}
		\label{fig:barany_program}
	\end{figure}
	There, $\Schema[E] = \set{ \City, \House, \Business }$ and $\Schema[I] =
	\set{ \Earthquake, \Unit, \Burglary, \Trig, \Alarm }$ with $\Psi = \set{
	 \Flip }$. Initially, we have a database instance containing assignments of
	 cities to regions (in $\City$), and of houses and businesses to cities 
	 (in $\House$ and $\Business$). The program does not further distinguish
	 between houses and businesses, and collects them together as units. With
	 the first rule, we flip a coin, whether city $c$ is struck by an
	 earthquake. Similarly, with the fourth rule, we flip a coin determining 
	 whether unit $x$ in city $c$ is burglarized. We assume that every unit is
	 equipped with an alarm system that triggers when someone is trespassing
	 (but may fail to do so). This is captured by rule number six. Yet, an
	 earthquake may also trigger the security system, but with a lower
	 probability, as modeled by rule number five. Finally, when the system is
	 triggered, it sounds the alarm.
\end{example}

From now on, to simplify notation, we assume that our $\GDL{}$ programs contain 
at most one parameterized distribution per rule. Furthermore, we assume that
the parameterized distribution (if existing) is invoked in the last attribute of 
the relation in the rule head. \emph{That is, we assume rule heads of
probabilistic rules to be of the form 	$R( \tup u , \psi\params{ \tup
  p })$. }

Our proofs generalize to the unrestricted setting. The measurability
discussions that follow are not affected by permutations of the attribute
positions, and, moreover, for the simultaneous usage of two or more
parameterized distributions, the resulting tuples are distributed with their
respective product distribution (cf.~\cref{fac:fubini}).

\subsection{An Informal Semantics}\label{ss:informal}

Before delving into the intricacies of a formal semantics for GDatalog
programs, let us explain an informal operational semantics for GDatalog rules
and programs. We have already given the intuition of applying a rule in
\cref{exa:running}, considering the rule \labelcref{eq:running_rule}:
\begin{equation}
		\Res( s, c, \Gaussian\params{ \mu, 10\,000 } )
		\whenever \Employee( s, c, d ), \PayScale( c, d, \mu )
	\tag{\ref{eq:running_rule}}
\end{equation}
Let $\alpha$ be a valuation of $s, c, d, \mu$ such that $\Employee(s,c,d)$ and
$\PayScale(c,d,\mu)$ exist in the input database. This makes the rule 
applicable for the valuation $\alpha$. Just as in plain Datalog, we 
generate a fact to add to the current database instance. For this, we sample
a random variable $X \sim \Gaussian\params{ \alpha( \mu ), 10\,000 }$, and
generate $\Res\big( \alpha(s), \alpha(c), X \big)$.

We run a GDatalog program ${\mathcal G}$ on a database instance over the
extensional schema similarly to a normal Datalog program. All
intensional relations are initialized to be empty. By repeatedly
applying the rules as described above, the program generates (random)
facts. All rule applications are stochastically independent. We
stipulate that each rule of the program (or more precisely, each
occurrence of each rule---remember a program is a bag of rules where a
rule may occur several times) can only be applied once for every
instantiation of the variables appearing in the head of the rule. 
The computation terminates if no rule is applicable anymore, and the output
consists of the original input, extended by the set of facts generated in the
execution, that is, a database instance over the combined (extensional plus
intensional) schema.\footnote{Defining the output this way enables us to treat
all input, output, and intermediate instances in the same space in our proofs.
Alternatively, one could consider only the generated facts as the output but
this is in the end just a matter of taste.}
Because of the sampling of values in the rule applications, the output is
probabilistic. We interpret it as a probabilistic database. Thus, given a
database instance over the extensional schema, a GDatalog program generates a
probabilistic database over the combined schema. 

However, our informal description of the semantics raises several crucial
questions:
\begin{enumerate}
\item Does the program always terminate?
\item How can we be sure that the output is indeed a well-defined
  probabilistic database?
\item In which order do we apply the rules, and does this make a
  difference? 
\end{enumerate}
The answer to Question (1) is simply \enquote*{no} (in general), as the
following easy example demonstrates.

\begin{example}\label{exa:infcomp}
	Suppose a $\GDL$ program $\G$ contains the rule
	\[
		R( \Gaussian\params{ \mu, 1 } ) \whenever R( \mu )\text,
	\]
	where $R \in \Schema[I]$. Intuitively the program stops, when a value is
	sampled that we have already seen. For this concrete example, this will
	happen with probability 0 though. That is, the program almost surely
	\emph{diverges}. A similar behavior can already be enforced with
	deterministic distributions alone, for example with the rule 
	\[
		R( \Dist[Incr]\params{i} ) \whenever R( i )\text,
	\]
	where $\Dist[Incr]\params{i}$, for parameter $i \in \NN$, is a probability
	mass function on $\NN$ with probability 1 on the outcome $i+1$. This program
	always diverges (provided that the rule fires at all).
\end{example}

The first program of \cref{exa:infcomp} points at another complicating
issue. It may well happen that a program terminates for certain random choices,
but does not for others. We resolve this issue by conditioning the output
probability distribution on termination, or by saying that a $\GDL$ program
only defines a \emph{sub-probabilistic database}, where the probability mass of
the whole space may be smaller than $1$. The \enquote{missing} probability mass
is then the probability of divergence.

It is an open research question to understand termination criteria for GDatalog
programs. The other two questions are main guiding questions of our paper. The
answer to (2) is given by a thorough investigation of the stochastic process
sketched above, and regarding (3), we will indeed see that we don't have to
worry too much about the order of rule applications in the end.

\subsection{Translation to Existential Datalog Programs} \label{sec:assocedl}

As in \cite{Barany+2017}, we first introduce a nondeterministic semantics for
our programs by translating a given $\GDL{}$ program into an \emph{existential
Datalog program} (\emph{$\EDL{}$ program}). This is basically the same
procedure as in \cite[Section~3.2]{Barany+2017} modulo slight changes that are
motivated by the discussions of the previous section.

Intuitively, the probabilistic rules of a $\GDL{}$ program introduce attribute
values that are the result of some random sampling. In contrast, the rules of a
\EDL{} program may introduce attribute values that are subject to
nondeterminism.

A \EDL{} program is a Datalog program that additionally allows rules of the
shape
\[
	\exists y \colon \phi_h( \tup x_h, y ) \whenever \phi_b( \tup x )\text,
\] 
where $\tup x$ again contains all variables of $\tup x_h$. Such rules are
called \emph{existential rules}. 

\begin{construction}[Associated $\EDL{}$ Program]
	Let $\G = \bag{ \phi_1,\dots,\phi_k }$ be a $\GDLOf{ \Schema[E],
	\Schema[I], \Psi }$ program. We construct the \EDL{} program
	$\G^{\exists}$ as follows. For all $i = 1, \dots, k$ do the following:
	\begin{enumerate}
		\item If $\phi_i$ is a deterministic rule of $\G$, then $\phi_i$ is a
			rule of $\G^{\exists}$.
		\item If $\phi_i$ is a probabilistic rule of $\G$, say
			\[
			\phi_i( \tup x ) = 
			\big(
				R( \tup u , \psi\params{ \tup p })
				\whenever \phi_{i,b}( \tup x )
			\big)
			\]
			with $\tup u = ( u_1, \dots, u_{ \ar( R ) - 1 } )$ and $\tup p = (p_1,
			\dots, p_{\pardim(\psi)})$ being tuples of variables or constants
			(such that the variables therein all appear in $\tup x$), then we add
			the following two rules to $\G^{ \exists }$, where $R_i$ is a new,
			distinguished relation symbol of arity $\ar( R ) + \pardim( \psi )$:
			\begin{equation}\label{eq:new_rules}
				\begin{aligned}
				& \exists z \with R_i( \tup u, \tup p, z) \whenever 
					\phi_{i,b}(\tup x)\\
				& R( \tup u , z ) \whenever R_i( \tup u, \tup p, z )
				\end{aligned}
          \end{equation}
	\end{enumerate}
	We call $\G^{\exists}$ the \emph{existential}, or \emph{$\EDL$ version} of
	$\G$. It inherits the extensional schema $\Schema[E]^{ \exists } \coloneqq
	\Schema[E]$ from $\G$. The intensional schema $\Schema[I]^{ \exists }$ of
	$\G^{\exists}$ is obtained from $\Schema[I]$ by adding the new relations
	$R_i$.
\end{construction}

Intuitively, probabilistic rules in the original program $\G$ introduce \emph{two} 
new rules (an existential, and a standard one) in $\G^{\exists}$ in order to 
\enquote{decouple} sampling values using parameterized distributions from
adding facts to the database. The first rule of \eqref{eq:new_rules} carries
the information which valuation, resp. parametrization, is used for the
sampling, and introduces a variable ($z$) storing the sample outcome. With
facts produced by the first rule, the second rule specifies how to assemble the
tuple with the given \enquote{sample outcome}. In particular, note the
parametrization $\tup p$ being projected away. This enables us to sample more
than once with different parametrizations $\vec p$, without altering the rule
applicability in the case where $\vec p$ is not contained in $\tup u$.

\begin{example}[continues=exa:running]
	Reconsider our running example, and the $\GDL{}$ program
	$\G_{\sal}$ from \cref{fig:running_program}. Its
	$\EDL$ version $\G_{\sal}^{\exists}$ is shown in 
	\cref{fig:running_edl} below. 

	\begin{figure}[H]
		\centering%
		\makebox[0pt]{\begin{tabular}{ l l }
			\toprule
			$\G_{\sal}^{\exists}\colon$ 
			& \color{gray} 
			$\AffilEmployee(s,c_0,d) \whenever \Employee(s,c_0,d)$\\
			& \color{gray}
			$\AffilEmployee(s,c,d) \whenever \Employee(s,c,d), 
				\AffilEmployee(s',c',d'), \PartnerOf(c,c')$\\
			& \color{gray}
			$\AffilEmployee(s,c,d) \whenever \Employee(s,c,d), 
				\AffilEmployee(s',c',d'), \PartnerOf(c',c)$\\
			& $\exists z \colon \Res'( s, c, \mu, 10\,000, z ) 
				\whenever \AffilEmployee(s,c,d), \PayScale(c,d,\mu)$\\
			& $\Res(s,c,z) \whenever \Res'(s,c,\mu,10\,000,z)$\\
			\bottomrule
		\end{tabular}}
		\caption{The $\EDL$ program associated with the $\GDL$ program
		$\protect\G_{\sal}$.}
		\label{fig:running_edl}
	\end{figure}

	The rules that are typeset in a lighter shade remain unchanged over the
	original $\GDL{}$ program $\G_{\sal}$. The fourth and fifth rules are the
	new rules introduced by our construction. Therein, $\Res'$
	is the new relation symbol introduced in the translation of the fourth rule
	of the original program. 
\end{example}

In the following, we prepare the introduction of a chase procedure
similar to \cite{Barany+2017}. With the presence of parameterized distributions
with uncountable measure spaces, deriving the probabilistic semantics from
$\G^{\exists}$ is more involved than in \cite{Barany+2017}. In fact, most of
their approach towards the construction of a probability space immediately
breaks down.  There, the authors construct the probability space based on
defining the probability of cylinder sets (to be thought of as an initial
sequence of generated facts) by multiplying their probabilities. This is only
possible, since \cite{Barany+2017} is restricted to discrete distributions.  In
particular, after every finite number of steps, their programs have only
countably many possible program configurations.  In our setting, this might be
a continuum, even after just a single step. Therefore, we already need to
proceed with additional care regarding the applicability of rules, and the
probability distribution induced by a single program step. In particular, we
need more advanced tools from measure theory in order to ensure
well-definedness of these concepts. Nevertheless, the whole approach can be
thought of as a generalization of ideas already present in \cite{Barany+2017}.

In the next section, we start with a rigorous treatment of the applicability of
rules, and the set of result instances after firing a rule.

\begin{remark}
	In the remainder of this paper, we predominantly need the existential 
	version $\G^{\exists}$ of our $\GDL$ program $\G$. Unless explicitly 
	mentioned otherwise, $\phi$ will denote a rule of $\G^{\exists}$ in the 
	following. Note that unlike the original $\GDL$ program which is a bag of
	rules, we can always assume that the constructed program $\G^{\exists}$ is a
	\emph{set} of rules $\set{ \phi_1, \dots, \phi_k }$. The existential rules
	we created are pairwise different anyway, and for every deterministic rule,
	we just retain a single copy (for the semantics of existential Datalog,
	multiple copies have no additional effect).

	For simplicity, we always let $\G^{\exists} = \set{ \phi_1, \dots, \phi_k }$
	from now on. In particular, $k$ is to be understood as the number of rules
	in $\G^{\exists}$. 
\end{remark}

\subsection{Rule Applicability}\label{ss:rule-applicability}

In this subsection, we formalize the notion of rules being enabled to fire in
the execution of the program. So far, we have associated an existential Datalog
program $\G^{\exists}$ with our original program $\G$. Existential Datalog is
already well-established, and has a well-defined semantics \cite{Cali+2013}. So
why do we need to worry about matters of rule applicability in the first place?
There are two issues we need to pay attention to:
\begin{enumerate}
	\item It is not immediately clear, how rule applications adhere to the
		measurable structure of our underlying spaces of database instances.
		Quite naturally, multiple rules might be applicable at once, leaving us
		with the burden to come up with a policy for rule execution.  In order to
		transform the probability measure in a well-defined way, this policy has
		to be measurable (in a sense that will become clear below).
	\item Usually, the choice of policy is not unique. Even if the outcomes of
		the existential program do not depend on the chosen policy, we still need
		to argue that different policies do not produce different probability 
		distributions.
\end{enumerate}
We start by formalizing the relevant notions to tackle the first problem, and
defer the discussion of the second problem to \cref{sec:semantics}.

\medskip

If $\phi = \phi( \tup x )$ is a formula with free variables $\tup x$, then we
let $\VV_{\phi} \subseteq \UU^{ \size{ \tup x } }$ denote the domain of the 
valuations of $\tup x$, which is the Cartesian product of the attribute 
domains. The space $\VV_{\phi}$ is naturally equipped with the corresponding
product $\sigma$-algebra $\VVV_{\phi}$ obtained from the attribute spaces.

\begin{definition}
	 Let $\phi = \phi( \tup x ) \in \G^{\exists}$ and let $\tup u \in 
	 \VV_{\phi}$. Let $D \in \DD$ be a database instance. Then $\phi$ is
	 \emph{applicable for valuation} $\tup u$ if $D \not\models \phi_h(\tup u)$ 
	 and $D \models \phi_b( \tup u )$. The rule $\phi$ is called 
	 \emph{applicable} if such $\tup u$ exists.
\end{definition}

Suppose $\phi$ is a rule of $\G^\exists$ with $\phi_h = \phi_h( \tup x_h )$ and 
$\phi_b = \phi_b( \tup x )$. Recall that $\VV_\phi$ is the space of valuations
of the free variables $\tup x$ in the rule. The \emph{ground space} of $\phi$
(or, space of \emph{head groundings} of $\phi$) is defined as follows.
\begin{itemize}
	\item If $\phi$ is existential, say $\phi_h( \tup x_h ) = \exists z\colon R(
		t_1, \dots, t_{m-1}, z)$ for an $m$-ary relation symbol $R$, then 
		$\VV_\phi^{(h)} \coloneqq \dom_1( R ) \times \dots \times \dom_{m-1}( R 
		)$.
	\item Otherwise, say if $\phi_h( \tup x_h ) = R( t_1, \dots, t_m )$, then
	$\VV_\phi^{(h)} \coloneqq \dom_1( R ) \times \dots \times \dom_{m}( R )$.
\end{itemize}
Again, $\VV_\phi^{(h)}$ is equipped with the corresponding product
$\sigma$-algebra $\VVV_\phi^{(h)}$, which makes $( \VV_\phi^{(h)},
\VVV_\phi^{(h)} )$ standard Borel. We consider the function $\pi_\phi
\from \VV_\phi \to \VV_\phi^{(h)}$ that maps valuations of $\tup x$ to valuations of the
full tuple in the head atom. For example, for the rule $\phi(x,y) = \big(
\exists z\colon R( x, x, 1, z ) \whenever S( x, y ), T( y, x )\big)$, we have 
\begin{align*}
	\VV_\phi &= \dom_1(S) \times \dom_2(S) = \dom_2(T) \times \dom_1(T)\text,\\
	\VV_\phi^{(h)} &= \set[\big]{ (u_1,u_2,u_3) \with u_i \in \dom_i(R)}
	\text{ and}\\
	\pi_\phi( x, y ) &= ( x, x, 1 )\text{ for all }(x,y) \in \VV_\phi\text.
\end{align*}

\begin{lemma}
	For all rules $\phi$ it holds that $\pi_\phi$ is $(\VVV_\phi,
	\VVV_\phi^{(h)})$-measurable.
\end{lemma}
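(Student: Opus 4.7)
The plan is to reduce measurability of $\pi_\phi$ to the defining universal property of the product $\sigma$-algebra on the codomain. Since $\VV_\phi^{(h)}$ is, by construction, a finite product of attribute spaces and $\VVV_\phi^{(h)}$ is the corresponding product $\sigma$-algebra, a map into $\VV_\phi^{(h)}$ is measurable if and only if each of its coordinate functions (obtained by composing with the canonical projections onto the attribute spaces) is measurable. So it suffices to check the measurability of each coordinate of $\pi_\phi$ separately.

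Now I would unpack what each coordinate of $\pi_\phi$ actually looks like. Write $\phi_h(\tup x_h) = R(t_1,\dots,t_m)$ (if $\phi$ is not existential) or $\phi_h(\tup x_h) = \exists z\colon R(t_1,\dots,t_{m-1},z)$ (if $\phi$ is existential); in the latter case the existential position is excluded from $\VV_\phi^{(h)}$, so only the terms $t_1,\dots,t_{m-1}$ contribute coordinates. Each surviving $t_i$ is, by the syntactic requirements on $\GDL$/\EDL{} rules (and the fact that in $\G^\exists$ all head terms are deterministic), either a constant $c \in \dom_i(R)$ or a variable appearing in $\tup x$, say $x_{j(i)}$. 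Consequently, the $i$-th coordinate of $\pi_\phi$ is either the constant function $\tup u \mapsto c$, or the projection $\tup u \mapsto u_{j(i)}$.

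The former is trivially $(\VVV_\phi,\dom_i(R))$-measurable, since the preimage of any measurable set is either $\VV_\phi$ or $\emptyset$. The latter is measurable because $\VVV_\phi$ was defined precisely as the product $\sigma$-algebra on $\VV_\phi = \prod \dom_{\bullet}(\cdot)$, and by definition this is the coarsest $\sigma$-algebra making all canonical projections measurable. Combining the coordinates via the product-$\sigma$-algebra criterion, $\pi_\phi$ is $(\VVV_\phi,\VVV_\phi^{(h)})$-measurable.

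I do not anticipate a real obstacle here: the statement is essentially a bookkeeping exercise that verifies the product-space construction on $\VV_\phi$ and $\VV_\phi^{(h)}$ is compatible with the purely syntactic description of $\pi_\phi$. The only subtlety worth spelling out is the handling of the existential variable $z$ in existential rules — one must be explicit that $z$ is dropped from $\VV_\phi^{(h)}$, so it never needs to be produced as a coordinate of $\pi_\phi(\tup u)$ and thus no extra machinery (e.g. measurable selections) is required at this stage.
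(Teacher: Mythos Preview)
Your proof is correct and essentially equivalent to the paper's. The paper phrases it as ``$\pi_\phi$ is a composition of projections, transpositions, repetitions, and appending constants, each of which is measurable,'' whereas you invoke the universal property of the product $\sigma$-algebra and check each coordinate separately (a constant or a projection). These are two standard packagings of the same observation; your coordinate-wise formulation is arguably the more direct one, while the paper's compositional phrasing makes the elementary building blocks explicit.
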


\begin{proof}
	For all $\phi$, $\pi_\phi$ is a composition of the following kinds of 
	functions: projections $(x,y) \mapsto x$, transpositions $(x,y) \mapsto 
	(y,x)$, repetitions $x \mapsto (x,x)$, and appending constants $x \mapsto 
	(x,c)$. All of these are measurable with respect to the corresponding 
	standard Borel product $\sigma$-algebras, and so are their compositions.
\end{proof}

\begin{definition}
	The set of \emph{applicable pairs} for an instance $D \in \DD$ is given by
	\[
		\App( D )
		= \set[\big]{ (\phi,\tup a) \with \phi \text{ applicable for some }
		\tup u \text{ with } \pi_\phi( \tup u ) = \tup a }\text.\qedhere
	\]
\end{definition}

For any database instance $D$, the set $\App( D )$ tells us which rules are
applicable, and under which valuations of their free variables. Later, when we
talk about executions of a program, reaching an instance $D$ with $\App( D ) =
\emptyset$, intuitively means that the program terminates. We are going
to analyze the properties of $\App$ in order to show that we can 
select one particular pair from each $\App( D )$ in a 
measurable way.

\begin{example}[continues=exa:running]
	We consider two artificial examples of database instances $D$ of schema
	$(\Schema[E]^{\exists} \cup \Schema[I]^{\exists})$. Suppose that the rules
	of $\G_{\sal}^{\exists}$ (see \cref{fig:running_edl}) are enumerated as
	$\phi_1, \dots, \phi_5$, from top to bottom. Thus,
	\begin{align*}
		\phi_4 &= \phi_4(s,c,d,\mu) \\
				 &= \big( \exists z \colon \Res'( s, c, \mu, 10\,000, z ) 
				 \whenever \AffilEmployee(s,c,d), \PayScale(c,d,\mu) \big)
		\shortintertext{and}
		\phi_5 &= \phi_5(s,c,\mu,z)\\
				 &= \big( \Res(s,c,z) \whenever \Res'(s,c,\mu,10\,000,z) \big)
				 \text.
	\end{align*}
	\begin{enumerate}
		\item Suppose that $D$ contains exactly the facts\footnote{Units are only 
			displayed for illustration, and not part of the tuples.}
			\begin{align*}
				&\AffilEmployee( \text{981-00-8876}, \text{E-Corp}, \text{IT} )
				\text,\\
				&\AffilEmployee( \text{935-00-3912}, \text{E-Corp}, \text{IT} )
				\text,\\
				&\PayScale( \text{E-Corp}, \text{IT}, \text{\$~63\,000} )\text.
			\end{align*}
			Then $\App(D)$ contains exactly the two tuples
			\begin{align*}
				&( \phi_4, \text{981-00-8876}, \text{E-Corp}, \text{\$~63\,000}, 
				\text{$\$^{2}$~10\,000} ) \text{ and}\\
				&( \phi_4, \text{935-00-3912}, \text{E-Corp},
				\text{\$~63\,000}, \text{$\$^{2}$~10\,000} ) \text. 
			\end{align*}	
			Note that the department (for both tuples taking the value
			\enquote{IT}) is already projected away, and that the constant
			\enquote{[\textdollar\textsuperscript{2}] 10\,000} has been added).
		\item Now suppose that $D$ additionally contains the fact
			\[
				\Res'( \text{981-00-8876}, \text{E-Corp}, \text{\$~63\,000},
				\text{$\$^{2}$~10\,000}, \text{\$~62,271} )\text.
			\]
			With the presence of the new tuple, $\phi_4$ is no longer applicable
			for any valuation with
			\[
				(s,c,\mu) \mapsto 
				(\text{981-00-8876}, \text{E-Corp}, \text{\$~63\,000})\text.
			\]

			Yet, with the new tuple, $\phi_5$ is now applicable. In this situation
			$ \App( D ) $ contains exactly tuples
			\begin{align*}
				& ( \phi_4, \text{935-00-3912}, \text{E-Corp}, \text{\$~63\,000},
					\text{$\$^{2}$~10\,000})\text{ and}\\
				& ( \phi_5, \text{981-00-8876}, \text{E-Corp}, \text{\$~62,271} )
					\text.\qedhere
			\end{align*}
	\end{enumerate}
\end{example}

We let $( \DD, \DDD )$ be the measurable space of instances associated with the
schema $\Schema[E]^{\exists} \cup \Schema[I]^{\exists}$ (see \cref{sec:pdb}).
Note that the set $\DD_{\App} \coloneqq \set{ D \in \DD \with \App( D ) \neq 
\emptyset }$ is measurable in $( \DD, \DDD )$ using \cref{fac:measurableviews},
because the condition $\App( D ) \neq \emptyset$ is expressible as a Boolean
relational calculus query.
We let $\DDD_{ \App } \subseteq \DDD$ denote the trace $\sigma$-algebra of
$\DD_{ \App }$.  Formally, $\App$ is a multifunction $\App \from \DD_{ \App }
\toto \smash{\AA = \bigcup_{ \phi\in\G^{\exists} }\big( \set{ \phi }\times
\VV_{\phi}^{(h)}\big)}$. Therein, $\AA$ is naturally equipped with the
$\sigma$-algebra $\AAA \coloneqq \bigoplus_{ \phi \in \G^{ \exists } } \big(
\set{ \phi } \otimes \VVV_{\phi}^{(h)} \big)$.

Our goal is to show that $\App$ is a \emph{measurable} multifunction, and apply the
Theorem of Kuratowski and Ryll-Nardzewski (see \cref{fac:KRN}) to obtain a
measurable selection. Such a measurable selection is the kind of measurable
\enquote{policy} we sought to obtain.

\begin{lemma}\label{lem:App-meas}
	Let $\AAAA \in \AAA$. Then $\App^{-1}( \AAAA ) = \set{ D \in \DD_{\App} 
	\with \App(D) \cap \AAAA \neq \emptyset } \in \DDD_{\App} \subseteq \DDD$.
\end{lemma}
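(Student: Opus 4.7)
The plan is to reduce the measurability of $\App^{-1}(\AAAA)$ to the measurability of relational-calculus queries on standard PDBs, as provided by \cref{fac:measurableviews}. By the construction of the disjoint-union $\sigma$-algebra $\AAA$, any $\AAAA \in \AAA$ decomposes as
\begin{equation*}
\AAAA \;=\; \biguplus_{i=1}^{k} \bigl(\set{\phi_i} \times \AAAA_i\bigr) \quad\text{with } \AAAA_i \in \VVV_{\phi_i}^{(h)}.
\end{equation*}
Since $\App^{-1}$ commutes with unions and $\G^{\exists}$ has finitely many rules, it suffices to show $\App^{-1}\bigl(\set{\phi} \times \AAAA_\phi\bigr) \in \DDD$ for each single rule $\phi \in \G^{\exists}$ and each $\AAAA_\phi \in \VVV_\phi^{(h)}$.

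Next, I would fix such a $\phi$ and $\AAAA_\phi$ and introduce a fresh relation symbol $R_\phi$ whose arity and attribute domains match the Cartesian factors of $\VV_\phi^{(h)}$. Consider the relational-calculus query $Q_\phi$ that, on input $D$, produces the set of facts $\set{R_\phi(\pi_\phi(\tup u)) \with \tup u \in \VV_\phi,\; D \models \phi_b(\tup u),\; D \not\models \phi_h(\tup u)}$. This query is expressible as a first-order view because $\phi_b$ and $\phi_h$ are themselves first-order formulas (with $\phi_h$ possibly carrying a single additional bound variable in the existential case) and the coordinate map $\pi_\phi$ only copies variables and inserts constants. By \cref{fac:measurableviews}, $Q_\phi$ is measurable as a map from $(\DD,\DDD)$ to the measurable space $(\DD_{R_\phi}, \DDD_{R_\phi})$ of standard PDB instances over the singleton schema $\set{R_\phi}$.

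Finally, the event $\EEEE \coloneqq \set{D' \in \DD_{R_\phi} \with D' \cap \AAAA_\phi \neq \emptyset}$ is the complement of the counting event $\CCCC(\AAAA_\phi, 0)$ and hence lies in $\DDD_{R_\phi}$. Unwinding the definitions,
\begin{equation*}
\App^{-1}\bigl(\set{\phi} \times \AAAA_\phi\bigr) \;=\; Q_\phi^{-1}(\EEEE) \;\in\; \DDD,
\end{equation*}
and since the preimage is contained in $\DD_{\App}$ by construction, it also lies in $\DDD_{\App}$. The main obstacle in this plan is justifying that the condition \emph{\textquotedblleft$\phi$ is applicable at some $\tup u$ with $\pi_\phi(\tup u) \in \AAAA_\phi$\textquotedblright} can be encoded as a single first-order view $Q_\phi$ at all; once this encoding is in hand, measurability follows mechanically from the measurability of relational-calculus queries and the structure of the counting events generating $\DDD_{R_\phi}$.
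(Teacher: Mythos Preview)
Your proposal is correct and follows essentially the same route as the paper's proof: both express rule applicability as a relational-algebra/calculus view, invoke \cref{fac:measurableviews} for its measurability, and then pull back the complement of the counting event $\CCCC(\AAAA,0)$. The only cosmetic difference is that you decompose $\AAAA$ rule by rule via the disjoint-union $\sigma$-algebra, whereas the paper bundles all rules into a single view $V$ and works with $\CCCC(\AAAA,0)^{\c}$ directly; neither choice changes the substance of the argument.
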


That is, for every measurable set $\AAAA$ of potentially applicable pairs, the
set of instances where a pair from $\AAAA$ really \emph{is} applicable, is
measurable.

\begin{proof}
	The function $D \mapsto \App(D)$ can be expressed as a relational algebra 
	view $V$ as follows. For every rule $\phi( \tup x )$ there exists a 
	relational algebra query $Q_\phi$ that on input $D$ returns all tuples $\tup
	u$ where $\tup u$ is a valuation of the free variables of $\phi$ making
	$\phi$ applicable. Then $\pi_\phi \after Q_\phi$ (with $\pi_\phi$ being 
	applied pointwise) is a measurable query as well. Our view $V$ finally is
	the deduplication of $\bigcup_{\phi \in \G^{\exists}} \set{ \phi } \times
	(\pi_\phi\after Q_\phi)(D)$. Then for all $\AAAA \in \AAA$ it holds that
	\[ \App^{-1}( \AAAA ) = V^{-1}\big( \CCCC( \AAAA, 0 )^{\c} \big) \]
	where $\CCCC(\AAAA,0)^{\c}$ is the set of instances in the
	associated instance measurable space that contain at least one fact from
	$\AAAA$. Since $V$ is measurable by \cref{fac:measurableviews}, the claim
	follows.
\end{proof}

It easily follows that $\App$ is a measurable multifunction on $\DD_{\App}$.

\begin{corollary}\label{cor:app}
	There exists a measurable function $\app \from \DD_{\App} \to \AA$ such that
	for all $D \in \DD_{\App}$ it holds that $\app(D) \in \App(D)$.
\end{corollary}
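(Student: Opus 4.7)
The plan is to deduce this as an immediate application of the Kuratowski--Ryll-Nardzewski measurable selection theorem (\cref{fac:KRN}) to the multifunction $\App\from\DD_{\App}\toto\AA$. Recall that this theorem requires three things: the target space must be standard Borel, the multifunction must be closed-valued, and it must be measurable (preimages of open sets being measurable in the source). I verify each of these and then invoke the theorem to extract the desired selection $\app$.

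First I would observe that $\AA=\bigcup_{\phi\in\G^{\exists}}\bigl(\set{\phi}\times\VV_\phi^{(h)}\bigr)$ is a finite disjoint union of standard Borel spaces (each $\VV_\phi^{(h)}$ being a finite product of standard Borel attribute domains), and therefore is itself standard Borel when equipped with $\AAA=\bigoplus_\phi\bigl(\set{\phi}\otimes\VVV_\phi^{(h)}\bigr)$. Equip it with a compatible Polish topology (e.g.\ the disjoint-union topology of Polish topologies on the components).

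Next I would verify that $\App$ is closed-valued. Fix any $D\in\DD_{\App}$. Since $D$ is a finite multiset of facts, the relational-algebra query $Q_\phi$ used in the proof of \cref{lem:App-meas} returns a finite set of valuations for each rule $\phi$, and hence $\pi_\phi(Q_\phi(D))$ is a finite subset of $\VV_\phi^{(h)}$. As there are only finitely many rules $\phi\in\G^{\exists}$, $\App(D)$ is a finite subset of $\AA$, and every finite subset of a Hausdorff (in particular, Polish) space is closed. Non-emptiness holds by the very definition of $\DD_{\App}$.

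Finally, measurability of the multifunction is already essentially proved: \cref{lem:App-meas} establishes that $\App^{-1}(\AAAA)\in\DDD_{\App}$ for \emph{every} measurable $\AAAA\in\AAA$. Since open sets are in particular Borel-measurable, the multifunction-measurability condition $\App^{-1}(U)\in\DDD_{\App}$ for open $U\subseteq\AA$ follows a fortiori. With all three hypotheses in place, \cref{fac:KRN} yields a $(\DDD_{\App},\AAA)$-measurable function $\app\from\DD_{\App}\to\AA$ with $\app(D)\in\App(D)$ for every $D$, which is exactly the claim. The only point requiring real attention is the closed-valued condition, and it resolves immediately via finiteness of $D$; everything else is a direct appeal to \cref{lem:App-meas} and standard facts about standard Borel spaces.
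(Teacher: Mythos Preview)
Your proposal is correct and follows exactly the approach the paper intends: the paper remarks just before the corollary that ``it easily follows that $\App$ is a measurable multifunction on $\DD_{\App}$'' and then invokes \cref{fac:KRN}, leaving the closed-valuedness and the verification that $\AA$ is standard Borel implicit. You have simply spelled out those details (finiteness of $\App(D)$ from finiteness of $D$, and the standard Borel structure of the disjoint union), which is precisely what the paper omits.
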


Subsequently, we use a measurable selection $\app$ to resolve the case of
multiple rules being applicable in a deterministic way. If multiple rules are
applicable (i.\,e. multiple tuples could be produced, possibly via sampling),
the function $\app$ selects the rule that is allowed to fire together with the
relevant part of the valuation.

\begin{remark}\label{rem:Appproj}
	With the introduction of $\pi_\phi$ into $\App$, we rectify an inaccuracy of
	the conference version \cite{Grohe+2020}. There, for the sake of simplicity,
	we did not distinguish valuations $\tup u$ that make a rule $\phi$
	applicable, and the resulting tuple $\pi_\phi(\tup u)$. With the above, we
	have made this distinction explicit. Technically, the simplification raises
	no problems due to the measurability of $\pi_\phi$.  Yet, possible
	projections have to be accounted for in the parallel chase procedure. We
	elaborate on this in the section on the parallel chase
	(\cref{sec:parchase}). 
\end{remark}

\subsection{Follow-Up Instances}\label{ss:followup}

For upgrading the semantics of $\EDL$ to a probabilistic one (according to the
original $\GDL$ program $\G$), we need a measurable correspondence between
\enquote{intermediate instances} that occur during the execution of the program
and all the \enquote{follow-up instances} or \enquote{extensions} that emerge
from such instances by a single rule application.  Intuitively, whenever a rule
is applicable (that is, its body is satisfied but its head is not), it may
fire.  If the rule is deterministic, then the ground fact from the head of the
rule gets added to the current database instance. If the rule is probabilistic,
then the ground fact from the head of the rule gets added with some valuation
of the existentially quantified variable and we get a distribution over the
follow-up instances according to the parameterized distribution from the
original rule. The present section is devoted to formalizing this set-up.  

\medskip

Let $\phi$ be a rule of $\G^{\exists}$ with ground space $( \VV_{\phi}^{(h)},
\VVV_{\phi}^{(h)} )$. If $\phi$ is existential, then its corresponding original
rule in $\G$ contains a $\Psi$-term, say using the parameterized distribution
$\psi$. We let $( \WW_{ \phi }, \WWW_{ \phi }, \mu_{ \phi } ) \coloneqq (
\WW_\psi, \WWW_\psi, \mu_\psi )$ be the underlying space of $\psi$. The
elements of $\WW_\phi$ are called the \emph{sample outcomes} of $\phi$ and
$(\WW_\phi, \WWW_\phi)$ the \emph{sample space} of $\phi$.

\begin{remark}\label{rem:determmeas}
	For every deterministic rule $\phi$, we introduce a dummy measure space $(
	\WW_\phi, \WWW_\phi, \mu_\phi )$ with $\WW_\phi = \set{ * }$ some fixed
	singleton set, $\WWW_\phi$ its powerset, and $\mu_\phi$ the function with
	$\set{ * } \mapsto 1$.
\end{remark}

We let $( \FF_{\phi}, \FFF_\phi ) = ( \FF_R, \FFF_R )$ if $R$ is the relation
symbol in the head of $\phi$. For all existential rules $\phi$, all $\tup a \in
\VV_\phi^{(h)}$, and all $b \in \WW_{\phi}$, we let $f_{ \phi }( \tup a, b )
\in \FF_\phi$ denote the fact that is obtained by substituting $\tup a$ and $b$
into the atom in the head of $\phi$. We define $f_{ \phi }( \tup a, b )$
similarly for deterministic rules, but in this case, the value $b = *$ is just
discarded. In particular, for all $\phi$, $f_{\phi}$ is a function $f_\phi
\from \VV_\phi^{(h)} \times \WW_\phi \to \FF_\phi$.

\begin{example}[continues=exa:running]
	Consider rule $\phi_4$ of $\G_{\sal}^{\exists}$ (see 
	\cref{fig:running_edl}). Let 
	\[
		\tup a = 
		( \text{981-00-8876}, \text{E-Corp}, \text{\$~63\,000}, 
		\text{$\$^{2}$~10\,000}) \in \VV_{\phi_4}^{(h)}
	\] 
	and let $b = \text{\$~62,271} \in \WW_{\phi_4}$ be a sample outcome. Then
	\[
		f_{ \phi_4 }( \tup a, b ) =
		\Res'( \text{981-00-8876}, \text{E-Corp}, \text{\$~63\,000},
		\text{\$~10\,000}, \text{\$~62,271} )\text.\qedhere
	\]
\end{example}

When $(\phi, \tup a)$ is an applicable pair in an instance $D$, then $\phi$ is
applicable for some $\tup u \in \pi_\phi^{-1}(\tup a)$. In that situation, 
$\phi$ may fire, which amounts to sampling a value $b \in \WW_{\phi}$, and
adding the fact $f_{ \phi }( \tup a, b )$ to $D$. We first describe this
process of adding facts formally, and regardless of rule applicability.
Note that clearly, $f_\phi$ is $(\VVV_\phi^{(h)}\otimes\WWW_\phi, 
\FFF_\phi)$-measurable, as it just prepends the right relation symbol to its
argument.

We consider two kinds of \emph{extension functions}, the \emph{sequential
extension function} and the \emph{parallel extension function}. The sequential
extension function captures the effect of firing a \emph{single} rule in a
database instance. In essence, if $(\phi, \tup a)$ is applicable in $D$ and $b$
a possible sample outcome for the parameterized distribution in  $\phi$, the
sequential extension function maps $D$ to the instance that is obtained by
adding $f_{\phi}( \tup a, b )$ to $D$. The parallel extension function does the
same thing, but for the case where \emph{multiple} rules may fire at once. We
will need that both functions obey certain measurability properties. The
remainder of this section formally introduces these functions, and establishes
various kinds of measurability results that are needed later.

The sequential extension function is defined as follows. Recall that
$\G^{\exists} = \set{\phi_1, \dots, \phi_k}$. First, let 
\[
	\GG 
	\coloneqq{}
		\DD \times 
		\bigcup_{ i = 1 }^{ k } \bigg(
			\set{ \phi_i } \times \VV_{ \phi }^{(h)} \times \WW_{ \phi } 
		\bigg)
	={}
	\set[\big]{
		( D, \phi, \tup a, b ) \with
		D \in \DD\text,~
		\phi \in \G^{ \exists }\text,~
		\tup a \in \VV_{ \phi }^{(h)}\text{, and }
		b \in \WW_{ \phi }
	}
      \]
and let $\GGG$ denote the $\sigma$-algebra on $\GG$ constructed in the
straight-forward way using the disjoint union and product constructions.
Then the function $\ext \from \GG \to \DD$ with
\[
	\ext( D, \phi, \tup a, b ) 
	= D \cup \set[\big]{ f_{ \phi} ( \tup a, b ) }
\]
is the \emph{sequential extension function} where $f_\phi$ is defined as
indicated above.

\smallskip

For every tuple $\tup \ell \coloneqq ( \ell_1, \dots, \ell_k ) \in \NN^{ k }$,
the parallel extension function with \emph{firing configuration} $\vec\ell$ is
defined as follows. Let
\begin{align*}
	\GG_{ \tup\ell } 
	\coloneqq{} &
	\DD \times \prod_{ i = 1 }^k \bigg( 
		\set{ \phi_i } \times 
		\big( \VV_{\phi_i}^{(h)} \times \WW_{\phi_i} \big)^{ \ell_i }
	\bigg)\\
	={} &
	\big\lbrace
		\big(D,(\phi_i,\tup a_{ij_i}, b_{ij})_{ij_i}\big) \with
		 D \in \DD\text,~ \phi_i \in \G^{\exists}\text,~ \tup a_{ij} \in
		 \VV_{\phi_i}^{(h)}\text{, and } b_{ij} \in \WW_{\phi_i})
		 \big\rbrace\text.
\end{align*}
As above, we equip $\GG_{\tup\ell}$ with its canonical $\sigma$-algebra 
$\GGG_{ \tup \ell }$. The function $\Ext_{ \tup \ell } \from \GG_{
\tup \ell } \to \DD$ with
\[
	\Ext_{\tup\ell}
	\big(D, (\phi_i,\tup a_{ij_i}, b_{ij_i})_{i,j_i} \big)
	= D \cup \bigcup_{ i = 1 }^{ k } \bigcup_{ j_i = 1 }^{ \ell_i }
\set[\big]{ f_{ \phi_i }( \tup a_{ij_i}, b_{ij_i} ) }
\]
is called a \emph{parallel extension function}.

\begin{definition}[Follow-Up Instances]
	Let $D \in \DD$.
	\begin{enumerate}
		\item If $(\phi,\tup a)$ is applicable in $D$, then every instance $\ext(
			D, \phi, \tup a, b )$ with $b \in \WW_{ \phi }$ is called a
			\emph{follow-up instance} of $D$ with respect to $(\phi, \tup a)$
			under sequential rule execution.
		\item If $\App(D) \neq \emptyset$, then every instance 
			$\Ext_{ \tup\ell }( D, (\phi_i, \tup a_{ij_i}, b_{ij_i}) )$ is called
			a \emph{follow-up instance} of $D$ with respect to $\App(D)$ under
			parallel rule execution, with firing configuration $\tup\ell = 
			( \ell_1, \dots, \ell_k )$, $\App(D) = \set{ (\phi_i,\tup a_{ij_i})
			\with 1 \leq i \leq k\text,~ 1\leq j_i \leq \ell_i }$, and  
			$b_{ij_i} \in \WW_{\phi_i}$.\qedhere
	\end{enumerate}
\end{definition}
Note that when discussing follow-up instances $\Ext_{\tup \ell}( D, ( \phi_i,
\tup a_{ij_i}, b_{ij_i} ))$, an entry $\ell_i = 0$ in the configuration $\vec
\ell$ corresponds to the rule $\phi_i$ not being able to fire in $D$.

For a given instance and fixed pairs $(\phi,\tup a)$, the various sets of
follow-up instances are measurable.

\begin{lemma}\label{lem:followupsmeas}
	Let $D \in \DD$.
	\begin{enumerate}
		\item For all $\phi\in\G^{\exists}$ and all $\tup a \in \VV_\phi^{(h)}$
			it holds that
			\[
				\bigcup_{ b \in \WW_{ \phi } } \ext( D, \phi, \tup a, b )
				\eqqcolon \ext( D, \phi, \tup a, \WW_{ \phi } )
				\in \DDD\text.
			\]
		\item For all $\vec \ell = ( \ell_1, \dots, \ell_k ) \in \NN^k$ and all
			$\phi_i \in \G^{\exists}$ and $\tup a_{ij_i} \in \VV_{\phi_i}^{(h)}$
			where $i = 1, \dots, k$ and $j_i = 1, \dots, \ell_i$ for all $i$, it
			holds that
			\[
				\bigcup_{ 
					\substack{ b_{ij_i} \in \WW_{\phi_i}\\
					\text{for all }i,j_i}
					} \Ext_{\tup\ell} \big(D, (\phi_i,\tup a_{ij_i}, b_{i{j_i}})_{i,j_i} \big)
				\eqqcolon 
				\Ext_{\tup\ell}\big( D,
					(\phi_i,\tup a_{ij_i},\WW_{\phi_i})_{i,j_i}\big)
					\in\DDD\text.\qedhere
			\]
	\end{enumerate}
\end{lemma}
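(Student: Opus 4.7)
My plan is to express each of the two sets as the image of a measurable map into $\DD$ and then argue image-measurability in the standard Borel setting guaranteed by \cref{pro:pdbstandardborel}. The central building block is, for each rule $\phi$ and head valuation $\tup a$, the set of producible facts $\FFFF_{\phi,\tup a} \coloneqq f_\phi(\tup a,\WW_\phi) \subseteq \FF_\phi$. Under the canonical identification $\FF_R \cong \dom(R)$ (with $R$ the head relation symbol), $\FFFF_{\phi,\tup a}$ is the measurable rectangle $\set{a_1}\times\dots\times\set{a_{m-1}}\times\WW_\phi$ in the existential case (singletons are Borel in the standard Borel factor spaces), and a singleton in the deterministic case by \cref{rem:determmeas}. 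Hence $\FFFF_{\phi,\tup a}\in\FFF_\phi$ in both cases.

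For part (1), $D'\in\ext(D,\phi,\tup a,\WW_\phi)$ iff there exists $b\in\WW_\phi$ with $D'=D\cup\set{f_\phi(\tup a,b)}$, which splits into two cases. Either $D'=D$ (which is possible precisely when $\FFFF_{\phi,\tup a}\cap D\neq\emptyset$), contributing at most the measurable singleton $\set{D}$; or $D\subseteq D'$, $\size{D'\setminus D}=1$, and the unique new fact lies in $\FFFF_{\phi,\tup a}$. The latter case is the intersection of three measurable events: (i) the finite conjunction over $f\in D$ of $\CCCC(\set{f},0)^{\c}$ expressing $D\subseteq D'$ (finite since $D$ is finite and singletons are Borel); (ii) $\CCCC(\FF_\Schema\setminus D,1)$ expressing $\size{D'\setminus D}=1$; and (iii) $\CCCC((\FF_\Schema\setminus D)\setminus\FFFF_{\phi,\tup a},0)$ expressing $D'\setminus D\subseteq\FFFF_{\phi,\tup a}$. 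All three lie in $\DDD$, so their union with the possible singleton $\set{D}$ does as well.

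For part (2), an analogous event-based characterization fails: coincidences between several firings impose a Hall-type matching constraint on realizable follow-up instances, meaning a $D'$ can meet every pairwise condition yet still not arise from any joint choice of $b_{ij_i}$. I therefore consider the map
\begin{equation*}
  \Sigma\colon \prod_{i,j_i}\FFFF_{\phi_i,\tup a_{ij_i}}\to\DD,\qquad
  (f_{ij_i})_{i,j_i}\mapsto D\cup\bigcup_{i,j_i}\set{f_{ij_i}}\text,
\end{equation*}
whose image is exactly the set in question. The domain is a finite product of measurable subsets of standard Borel fact spaces, hence standard Borel, and $\Sigma$ is measurable as a composition of union operations with finitely many facts. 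Crucially, every fiber $\Sigma^{-1}(D')$ is finite since each coordinate $f_{ij_i}$ must lie in the finite set $D'\cap\FFFF_{\phi_i,\tup a_{ij_i}}$. The classical image theorem for Borel maps between standard Borel spaces with countable fibers (a consequence of Lusin--Suslin applied to a finite measurable partition of the domain into injective pieces, cf.~\cite{Srivastava1998}) then yields that $\Sigma\bigl(\prod_{i,j_i}\FFFF_{\phi_i,\tup a_{ij_i}}\bigr)\in\DDD$.

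The principal obstacle is the non-injectivity of $\Sigma$ in part (2), arising from coinciding sampled facts across different firings; sidestepping the combinatorial matching analysis via the finite-fiber image theorem (rather than trying to exhaust all coincidence patterns by hand with counting events) is what keeps the argument manageable.
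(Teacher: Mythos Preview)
Your argument is correct. For part~(1) it coincides with the paper's: both express the set as a finite Boolean combination of counting events, and you are simply a bit more careful than the paper about the edge case $D\cap\FFFF_{\phi,\tup a}\neq\emptyset$.

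For part~(2) you take a genuinely different route. The paper writes only that ``the second part of the lemma can be shown analogously,'' i.e.\ again via counting events. Your Hall-type concern is legitimate for \emph{arbitrary} overlapping families of fact sets, but in the concrete setting of $\G^{\exists}$ it does not materialize: distinct existential rules receive distinct auxiliary head relations $R_i$ by construction, and deterministic rules contribute singletons, so any two sets $\FFFF_{\phi_i,\tup a_{ij_i}}$ are either identical or disjoint. Under this structure the image is precisely the set of $D'\supseteq D$ with $D'\setminus D\subseteq\bigcup_{i,j_i}\FFFF_{\phi_i,\tup a_{ij_i}}$, with $D'\cap\FFFF\neq\emptyset$ for each distinct class $\FFFF$, and with $\lvert(D'\setminus D)\cap\FFFF\rvert$ bounded by the number of firings landing in that class---all of which are counting events. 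So the paper's ``analogously'' is justified, though it conceals this structural observation. Your finite-fiber image approach is cleaner and would survive a more general overlap pattern, at the price of invoking a result (Borel images under countable-to-one Borel maps, essentially Lusin--Novikov) strictly stronger than the injective-image theorem the paper records as \cref{fac:measinjectionimages}. Either trade-off is reasonable; just be aware that your claim that an event-based characterization \emph{fails} is slightly too strong---it only fails in its most naive form.
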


\begin{proof}
	Consider the first part of the lemma and fix $D \in \DD$ and $(\phi,\tup a)
	\in \App(D)$. Then $f_{\phi}( \tup a, \WW_{\phi} ) = \set{ f_{\phi}(\tup a,
	b) \with b \in \WW_{\phi} } \in \FFF_\phi$, and it holds that
	\[
		\ext( D, \phi, \tup a, \WW_{\phi} ) = 
		\bigg(\bigcap_{ f \in D } \C( f, 1 ) \bigg) \cap \C( f_{\phi}( \tup a,
		\WW_\phi), 1 ) \cap \C\big( (D \cup f_{\phi}( \tup a,\WW_\phi ))^{\c}, 0
		\big)\text.
	\]
	This is a finite intersection of counting events, so the claim follows.

	The second part of the lemma can be shown analogously.
\end{proof}

In the remainder of this section, we show that the sequential and parallel
extension functions are measurable. Recall that $\FFF$ is the $\sigma$-algebra
on the space $\FF$ of all facts. 

\begin{lemma}\label{lem:containment}
	It holds that $\set[\big]{ (D,f) \in \DD\times\FF \with f \in D } \in \DDD
	\otimes \FFF$.
\end{lemma}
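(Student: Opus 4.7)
The plan is to exploit the fact that $(\FF,\FFF)$ is standard Borel (as noted in \cref{sec:pdb}) and that instances $D \in \DD$ are \emph{finite} sets of facts. Standard Borel spaces admit a countable separating family of measurable sets, so we can choose a sequence of countable measurable partitions $\mathcal{P}_n = \set{A_{n,1}, A_{n,2}, \dots} \subseteq \FFF$ of $\FF$ such that $\mathcal{P}_{n+1}$ refines $\mathcal{P}_n$, and for every $f \in \FF$, $\bigcap_{n \in \NN} A_n(f) = \set{f}$, where $A_n(f)$ denotes the unique block of $\mathcal{P}_n$ containing $f$. Such partitions can be constructed by taking the atoms of the finite algebra generated by the first $n$ sets of any countable point-separating family.

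Next, for each $n \in \NN$, I would write
\[
	M_n \coloneqq \set[\big]{(D,f) \in \DD \times \FF \with D \cap A_n(f) \neq \emptyset}
	= \bigcup_{j \in \NN} \bigl( \C(A_{n,j}, 0)^{\c} \times A_{n,j}\bigr).
\]
The right-hand equality holds because $f$ lies in exactly one block $A_{n,j}$, so $A_n(f) = A_{n,j}$ for that $j$. Each factor $\C(A_{n,j}, 0)^{\c}$ lies in $\DDD$ (it is a generator of the counting $\sigma$-algebra), and $A_{n,j} \in \FFF$, so $M_n \in \DDD \otimes \FFF$ as a countable union of measurable rectangles.

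The key claim is that $\set{(D,f) \with f \in D} = \bigcap_n M_n$. One direction is clear: if $f \in D$ then $f \in D \cap A_n(f)$ for every $n$. For the converse, suppose $f \notin D$. Since $D$ is a \emph{finite} set, enumerate $D = \set{f_1,\dots,f_m}$. For each $i$, $f_i \neq f$ together with $\bigcap_n A_n(f) = \set{f}$ gives some $n_i$ with $f_i \notin A_{n_i}(f)$. Taking $N \coloneqq \max_i n_i$ and using refinement, $A_N(f) \subseteq A_{n_i}(f)$ for every $i$, so $f_i \notin A_N(f)$ for all $i$, i.\,e. $D \cap A_N(f) = \emptyset$ and $(D,f) \notin M_N$. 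Therefore the containment relation equals $\bigcap_{n} M_n \in \DDD \otimes \FFF$.

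The only potentially subtle step is justifying the separating sequence of partitions, but this is a standard property of standard Borel spaces and requires no new measure-theoretic work; the rest reduces to bookkeeping with finite instances and the generating counting events of $\DDD$.
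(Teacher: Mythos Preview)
Your proof is correct. Both your argument and the paper's rely on the finiteness of instances $D$ and the standard Borel structure of $\FF$, but they exploit that structure differently. The paper works with the Polish topology directly: it fixes a compatible metric and a countable dense set $\FF_0$, and expresses $f\in D$ via an $\exists\epsilon_0\,\forall\epsilon\,\exists f_0$ scheme using products of the form $\C(B_\epsilon(f_0),\mathord{>}0)\times B_\epsilon(f_0)$. You instead use only the Borel structure, via a refining sequence of countable (in fact finite) measurable partitions that separates points, and write the containment relation as the countable intersection $\bigcap_n M_n$ of unions of rectangles $\C(A_{n,j},0)^{\c}\times A_{n,j}$. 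Your route is arguably a bit cleaner: it avoids choosing a metric, the refinement property makes the ``find a common level $N$'' step transparent, and the resulting countable description is simpler than the nested $\epsilon_0/\epsilon$ quantification. The paper's approach has the mild advantage that it reuses the same metric-ball machinery that appears elsewhere (e.g.\ in the proof of \cref{lem:stepE}). Either argument establishes the lemma without difficulty.
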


\begin{proof}
	Recall that $\FF$ is a Polish space. We fix a compatible Polish metric on
	$\FF$, as well as a countable dense set $\FF_0 \subseteq \FF$. Then for all
	$D \in \DD$ and $f \in \FF$ it holds that
	\[
		f \in D 
		\iff
		\forall\: \epsilon > 0\:
		\exists\: f_\epsilon \in \FF_0 \:
		\with
		(D,f) \in \C\big( B_{\epsilon}(f_\epsilon), \mathord>0 \big) \times 
		B_{\epsilon}( f_\epsilon )
		\text,
	\]
	where $B_{\epsilon}(f_\epsilon)$ denotes the open ball of radius $\epsilon$
	around $f_\epsilon$. The above equivalence easily translates to a countable
	combination of products of counting events and open balls. Thus, it follows
	that $\set{ (D,f) \with f \in D } \in \DDD\otimes\FFF$.
\end{proof}

It follows from \cref{lem:containment} and the measurability of $f_\phi$ that 
\begin{equation}\label{eq:Dphiab}
	\set{ (D, \phi, \tup a, b) \in \GG \with f_{\phi}( \tup a, b) \in D } \in 
	\GGG\text,
\end{equation}
as it is the intersection of the set from \cref{lem:containment} with
\[
	\bigcup_{ i = 1 }^k \big( \DD \times f_{\phi_i}^{-1}( \FF_{\phi_i} ) \big)
	\text.
\]

\begin{proposition}[Measurability of the Extension Functions]
	\label{pro:extExtmeas}\leavevmode
	\begin{enumerate}
		\item\label{itm:extmeas} The function $\ext \from \GG \to \DD$ is $(\GGG,
			\DDD)$-measurable.
		\item\label{itm:Extmeas} For all $\tup\ell \in \NN^k$, the function 
			$\Ext_{\vec\ell} \from \GG_{\tup\ell} \to \DD$ is $(\GGG_{\tup\ell},
			\DDD)$-measurable.\qedhere
	\end{enumerate}
\end{proposition}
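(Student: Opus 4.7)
The plan is to reduce both measurability claims to checking preimages of generators of $\DDD$. Recall that $\DDD$ is generated by the counting events $\C(\FFFF,n)$ for $\FFFF \in \FFF$ and $n \in \NN$, so it suffices to verify that $\ext^{-1}(\C(\FFFF,n)) \in \GGG$ and $\Ext_{\vec\ell}^{-1}(\C(\FFFF,n)) \in \GGG_{\vec\ell}$ for all such generators.

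For part \labelcref{itm:extmeas}, I fix $\FFFF \in \FFF$ and $n \in \NN$ and analyze when $\ext(D,\phi,\tup a,b) = D \cup \set{f}$ lies in $\C(\FFFF,n)$, where $f \coloneqq f_\phi(\tup a,b)$. I would partition $\GG$ into the three measurable pieces
\begin{equation*}
  E_1 \coloneqq \set{(D,\phi,\tup a,b) : f \in \FFFF,\, f \in D},\quad
  E_2 \coloneqq \set{(D,\phi,\tup a,b) : f \in \FFFF,\, f \notin D},\quad
  E_3 \coloneqq \set{(D,\phi,\tup a,b) : f \notin \FFFF}.
\end{equation*}
Measurability of these pieces follows from measurability of $f_\phi$ (yielding the condition $f \in \FFFF$, via $(\phi,\tup a,b) \in f_\phi^{-1}(\FFFF \cap \FF_\phi)$) combined with the measurability of $\set{(D,\phi,\tup a,b) : f_\phi(\tup a,b) \in D}$ established in \labelcref{eq:Dphiab}. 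On $E_1$ the condition $D \cup \set{f} \in \C(\FFFF,n)$ collapses to $D \in \C(\FFFF,n)$; on $E_2$ it becomes $D \in \C(\FFFF,n-1)$ (empty for $n=0$); on $E_3$ it is again $D \in \C(\FFFF,n)$. Since the projection $\pi_D \from \GG \to \DD$ is measurable as a coordinate projection, each of these conditions pulls back to a set in $\GGG$, and $\ext^{-1}(\C(\FFFF,n))$ is a finite union of intersections of measurable sets.

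For part \labelcref{itm:Extmeas}, I would bootstrap from part \labelcref{itm:extmeas} by iteration rather than redoing the case analysis. Fix any enumeration of the $N = \sum_i \ell_i$ tuples $(\phi_i,\tup a_{ij_i},b_{ij_i})$ contained in an input $x \in \GG_{\vec\ell}$, and define $h_0 \from \GG_{\vec\ell} \to \DD$ as the projection $x \mapsto D$, which is measurable. Inductively, if $(\phi^{(k)},\tup a^{(k)},b^{(k)})$ denotes the $k$-th tuple in this enumeration (itself a measurable function of $x$ via projections), set
\begin{equation*}
  h_k(x) \coloneqq \ext\bigl(h_{k-1}(x),\,\phi^{(k)},\,\tup a^{(k)},\,b^{(k)}\bigr).
\end{equation*}
Each $h_k$ is the composition of measurable functions — a tupling of measurable coordinate projections with $h_{k-1}$, followed by $\ext$ — and is therefore measurable by part \labelcref{itm:extmeas}. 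Since set union is commutative and idempotent, $h_N = \Ext_{\vec\ell}$ regardless of the chosen enumeration, giving the desired measurability.

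The main obstacle is really the case analysis in part \labelcref{itm:extmeas}: one has to be careful that adding a fact $f$ to $D$ can either increase the $\FFFF$-count by one or leave it unchanged (depending on whether $f \in D$ and $f \in \FFFF$), and this distinction must be tracked in a measurable fashion. The measurability of the relevant sets follows from \cref{lem:containment} and \labelcref{eq:Dphiab}, so once the partition is in place the argument is routine. The parallel statement then comes essentially for free by the iteration scheme above, avoiding any combinatorial analysis of coincidences among the added facts.
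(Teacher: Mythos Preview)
Your proof is correct, but you reverse the paper's direction of implication: the paper derives \labelcref{itm:extmeas} from \labelcref{itm:Extmeas} (observing that $\ext^{-1}(\DDDD) = \bigcup_i \Ext_{\tup e_i}^{-1}(\DDDD)$ for unit vectors $\tup e_i$) and proves \labelcref{itm:Extmeas} directly by a case analysis on preimages of counting events, worked out explicitly for $k=1$, $\tup\ell=(2)$ and then indicated for general $\tup\ell$. That direct analysis must track, for all $m = \sum_i \ell_i$ new facts simultaneously, which of them lie in $\FFFF$, which already lie in $D$, and which coincide with each other (using measurability of the diagonal in $\FF_\phi \times \FF_\phi$). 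Your route---a clean three-case partition for the single-fact extension, then iterating $\ext$ to obtain $\Ext_{\vec\ell}$---sidesteps this combinatorial blow-up entirely; the idempotence of set union absorbs all coincidence bookkeeping. Both approaches ultimately rest on the same ingredients (measurability of $f_\phi$ and of the containment relation from \cref{lem:containment}/\labelcref{eq:Dphiab}), but yours scales to arbitrary $\tup\ell$ without any additional work, whereas the paper's leaves the general case as a sketch.
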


\begin{proof}
	Note that \labelcref{itm:extmeas} is a consequence of 
	\labelcref{itm:Extmeas}, since
	\[
		\ext^{-1}(\D) = \Ext_{\tup e_1}^{-1}(\D) \cup \dots \cup 
		\Ext_{\tup e_k}^{-1}(\D)
	\] 
	where $\tup e_1,\dots,\vec e_k$ are the $k$ unit vectors. We will only show
	\labelcref{itm:Extmeas} for the special case of $k=1$ and $\tup\ell = (2)$
	and indicate in the end how the proof can be generalized to	arbitrary $k$
	and $\tup\ell$. 

	Let $\FFFF \in \FFF$ and $n \in \NN$. It holds that $( D, \phi, \tup a, b,
	\phi, \tup a', b' ) \in \Ext_{(2)}^{-1}\big( \C( \FFFF, n ) \big)$ if and
	only if one of the following holds (with $f \coloneqq f_\phi( \tup a, b )$
	and $f' \coloneqq f_\phi( \tup a', b' )$):
	\begin{enumerate}[label=(\roman*)]
		\item $D \in \C( \FFFF, n-2 )$ and $f,f' \in \FFFF$ with $f \neq f'$
			and $f,f' \notin D$.
		\item $D \in \C( \FFFF, n-1 )$ and 
			\begin{itemize}
				\item $f \in \FFFF$ but $f' \notin \FFFF$ and $f \notin D$, or
				\item $f = f' \in \FFFF$ and $f=f' \notin D$.
			\end{itemize}
		\item $D \in \C( \FFFF, n )$ and
			\begin{itemize}
				\item $f, f' \notin \FFFF$, or
				\item $f \in \FFFF$ but $f' \notin \FFFF$ and $f \in D$, or
				\item $f, f' \in \FFFF$ and $f, f' \in D$.
			\end{itemize}
	\end{enumerate}
	Thus, $\Ext_{(2)}^{ -1 }\big( \C( \FFFF, n ) \big)$ is the union of the sets
	described in the items above. The individual sets are measurable using the
	measurability of $f_\phi$, by the measurability of \eqref{eq:Dphiab}, and by
	the measurability of the diagonal $\set{ (f,f) \with f \in \FF_\phi }$ in
	$\FF_\phi \times \FF_\phi$.

	\medskip

	With the same ideas, the proof can be generalized to any firing 
	configuration $\tup \ell$ and any number of rules. All that is needed is a
	similar case distinction for facts $f_1, \dots, f_m$ (with $m = \ell_1 + 
	\dots + \ell_k$) over the number of facts of $\FFFF$ that are 
	contained in $D$, over the number of facts among $f_1, \dots, f_m$ that
	belong to $\FFFF$, and over whether some of the $f_1,\dots,f_m$ are equal.
\end{proof}

We let $\xi$ and $\Xi_{\tup\ell}$ denote the characteristic functions of the
graphs of $\ext$ and $\Ext_{\tup\ell}$, respectively. That is, $\xi \from \GG
\to \set{0,1}$ is defined by 
\begin{equation}\label{eq:xiXidef}
	\xi(D,\phi,\tup a,b,D') =
	\begin{dcases}
		1 & \text{if }\ext(D, \phi,\tup a,b) = D'\text{ and}\\
		0 & \text{otherwise,}
	\end{dcases}
\end{equation}
and $\Xi_{\tup \ell}$ is defined similarly.

\begin{corollary}\label{cor:xiXimeas}\leavevmode
	\begin{enumerate}
		\item The function $\xi$ is $(\GGG \otimes \DDD, \Borel(\RR))$-measurable.
		\item The function $\Xi_{\tup\ell}$ is $(\GGG \otimes \DDD,
			\Borel(\RR))$-measurable for all $\tup \ell\in\NN^{k}$.\qedhere
	\end{enumerate}
\end{corollary}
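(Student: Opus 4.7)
The plan is to reduce the measurability of $\xi$ and $\Xi_{\tup\ell}$ to the measurability of their respective graphs, and then invoke the standard fact (stated in the excerpt, just after the definition of the product $\sigma$-algebra) that the graph of a measurable function between standard Borel spaces is measurable in the product $\sigma$-algebra.

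First, I would note that by definition $\xi$ is the indicator function of $\graph(\ext) \subseteq \GG \times \DD$, and similarly $\Xi_{\tup\ell}$ is the indicator of $\graph(\Ext_{\tup\ell})$. Indicator functions of measurable sets are $\Borel(\RR)$-measurable, so the only nontrivial point is to show that these graphs belong to $\GGG\otimes\DDD$ and $\GGG_{\tup\ell}\otimes\DDD$, respectively. The previous proposition already gives us that $\ext$ and $\Ext_{\tup\ell}$ are measurable functions into $(\DD,\DDD)$, so we are in the setting of the graph-measurability fact, provided that all spaces involved are standard Borel.

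The standard-Borelness of $(\DD,\DDD)$ is exactly \cref{pro:pdbstandardborel}. For $\GG$, observe that
\[
\GG = \DD \times \bigcup_{i=1}^k \bigl(\set{\phi_i} \times \VV_{\phi_i}^{(h)} \times \WW_{\phi_i}\bigr)
\]
is a finite disjoint union of finite products of standard Borel spaces: $\VV_{\phi_i}^{(h)}$ was noted to be standard Borel when it was introduced, $\WW_{\phi_i}$ is standard Borel by our assumptions on parameterized distributions (with the dummy convention from \cref{rem:determmeas} for deterministic rules), and finite disjoint unions and countable products of standard Borel spaces are again standard Borel. Exactly the same argument works for $\GG_{\tup\ell}$, which is a finite product of such building blocks indexed by the firing configuration $\tup\ell$.

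With both sides standard Borel and the underlying functions measurable, the cited graph-measurability result yields $\graph(\ext) \in \GGG \otimes \DDD$ and $\graph(\Ext_{\tup\ell}) \in \GGG_{\tup\ell}\otimes \DDD$, which establishes both parts of the corollary. I do not expect any serious obstacle here; the only subtlety is checking that our ambient spaces really are standard Borel, which is essentially bookkeeping on the constructions made in the preceding subsections.
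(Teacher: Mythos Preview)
Your proposal is correct and follows essentially the same approach as the paper: reduce to the measurability of $\graph(\ext)$ and $\graph(\Ext_{\tup\ell})$ via \cref{pro:extExtmeas}, invoke the graph-measurability fact for measurable maps between standard Borel spaces, and conclude since indicators of measurable sets are measurable. Your added bookkeeping that $\GG$ and $\GG_{\tup\ell}$ are standard Borel is a welcome elaboration of what the paper leaves implicit.
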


\begin{proof}
	By \cref{pro:extExtmeas}, the functions $\ext$ and $\Ext_{\tup \ell}$ are 
	measurable. Thus, their graphs are measurable sets in the corresponding 
	product space. Since characteristic functions of measurable sets are 
	measurable, the claim follows.
\end{proof}

\subsection{Induced Functional Dependencies}\label{ss:fd}

Following \cite{Barany+2017}, with every existential rule $\phi$ of 
$\G^{\exists}$, we associate a \emph{functional dependency} $\FD(\phi)$ in the
following way.  Recall that we assumed that all existential rules have the 
atom in their head in the format $R( \tup u, \tup p, z )$ where $z$ is the 
existentially quantified variable. Suppose $R$ is the relation symbol in the
head of $\phi$ with attributes $A_1, \dots, A_k$. Then $\FD(\phi)$ is the
functional dependency $R \with A_1, \dots, A_{k-1} \to A_k$. Then this
functional dependency intuitively expresses that there is at most one value of
the random (resp.\ existential) attribute when all other attribute values are
fixed, cf. \cite[p.~22:8]{Barany+2017}. 

Recall that $(\DD,\DDD)$ is the measurable space of database instances
of schema $\Schema[E]^{ \exists } \cup \Schema[I]^{\exists }$. Input instances
are restricted to the schema $\Schema[E]^{\exists}$. We denote by $(\DD_{\IN},
\DDD_{\IN})$ the measurable space of instances over $\Schema[E]^{\exists}$.
Note that $\DD_{\IN} \subseteq \DD$ and $\DDD_{\IN} \subseteq \DDD$.

The following is easy to check using the definitions of $\App$, $f_{\phi}$, 
$\ext$ and $\Ext$.

\begin{lemma}[cf. {\cite[Proposition 4.2]{Barany+2017}}]\label{lem:fd}
	Let $\phi$ be an existential rule of $\G^{\exists}$. Then the following
	holds:
	\begin{enumerate}
		\item Every database instance $D \in \DD_{ \IN }$ satisfies 
			$\FD(\phi)$.
		\item If $D \in \DD$ and $\App(D) \neq \emptyset$, then $D$ satisfies 
			$\FD(\phi)$. Moreover, for all $(\phi,\tup a) \in \App(D)$, and
			all $b \in \WW_\phi$, the follow-up instance $\ext(D,\phi,\tup a,b)$
			satisfies $\FD(\phi)$ as well. 
			Likewise, it holds that $\Ext_{\vec\ell}(D, (\phi_i,\tup a_{ij_i}, 
			b_{ij_i})_{i,j_i})$ satisfies $\FD(\phi)$ where $\App( D ) = \set{
			(\phi_i,\tup a_{ij_i}) \with 1\leq i\leq k \text{, } 1\leq j_i\leq
			\ell_i}$ and $b_{ij_i} \in \WW_{\phi_i}$ for all $i, j_i$.\qedhere
	\end{enumerate}
\end{lemma}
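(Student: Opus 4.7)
The plan is to observe that $\FD(\phi)$ concerns a \emph{fresh} intensional relation $R_i$ introduced by the translation to $\G^{\exists}$, and then to argue by induction along the chase that this FD is preserved. For part~(1), by the construction of $\G^{\exists}$ the relation $R_i$ in the head of any existential rule $\phi$ lives in $\Schema[I]^{\exists} \setminus \Schema[I]$, and in particular $R_i \notin \Schema[E]^{\exists}$. Since instances in $\DD_{\IN}$ contain only facts over $\Schema[E]^{\exists}$, they contain no $R_i$-facts at all, and $\FD(\phi)$ is vacuously satisfied.

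For part~(2), I would proceed by induction on the number of chase steps leading to $D$, with part~(1) supplying the base case. For the inductive step in the sequential setting, suppose $D = \ext(D', \phi', \tup a', b')$ with $D'$ already satisfying $\FD(\phi)$. If $\phi' \neq \phi$, or if $\phi'$ is a deterministic rule whose head atom uses a relation different from $R_i$, then the $R_i$-extent of $D$ equals that of $D'$ and the FD is preserved. Otherwise the newly added fact is $f_\phi(\tup a', b') = R_i(\tup a', b')$, and applicability of $(\phi, \tup a')$ means $D' \not\models \phi_h(\tup u')$ for some $\tup u'$ with $\pi_\phi(\tup u') = \tup a'$. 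Unfolding $\phi_h(\tup u') = \exists z \colon R_i(\tup a', z)$, this says that no pre-existing $R_i$-fact in $D'$ has $\tup a'$ as its first $\ar(R_i)-1$ attributes, so the newly added fact cannot create an FD violation.

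The parallel case uses exactly the same idea simultaneously for each pair in $\App(D)$, but requires one extra observation: because $\App(D)$ is a \emph{set}, any two enumerated pairs $(\phi, \tup a_{ij_i})$ and $(\phi, \tup a_{ij'_i})$ with $j_i \neq j'_i$ satisfy $\tup a_{ij_i} \neq \tup a_{ij'_i}$, so the newly added $R_i$-facts also pairwise differ in their first $\ar(R_i)-1$ attributes. I do not foresee a real obstacle: the argument is entirely a matter of unpacking the applicability condition together with the shape $\phi_h = \exists z \colon R_i(\tup u, \tup p, z)$ of the existential head. The only mildly subtle point is this last observation about parallel firings of the \emph{same} rule at distinct valuations, which however follows directly from $\App(D)$ being a set together with the fact that the head grounding $\tup a$ already contains all of $\tup u$ and $\tup p$.
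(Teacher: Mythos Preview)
The paper does not actually give a proof of this lemma; it only remarks that it is ``easy to check using the definitions of $\App$, $f_{\phi}$, $\ext$ and $\Ext$.'' Your inductive argument is precisely the natural way to fill in those details, and it is correct: part~(1) holds because the head relation $R_i$ of an existential rule is a fresh symbol outside $\Schema[E]^{\exists}$, and the preservation step works because applicability of $(\phi,\tup a)$ forces $D \not\models \exists z\, R_i(\tup a,z)$, so the new fact $R_i(\tup a,b)$ cannot collide with an existing one. Your observation for the parallel case---that $\App(D)$ is a \emph{set}, so distinct occurrences of the same existential rule carry distinct head groundings $\tup a$---is exactly the extra ingredient needed there.

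One small point worth flagging: the first sentence of part~(2) as literally stated (``If $D \in \DD$ and $\App(D) \neq \emptyset$, then $D$ satisfies $\FD(\phi)$'') does not hold for an \emph{arbitrary} $D \in \DD$; one can cook up an instance containing both $R_i(\tup a,b_1)$ and $R_i(\tup a,b_2)$ while still having some unrelated rule applicable. You are right to read the lemma as an invariant along the chase starting from an input instance, and this is exactly how it is used downstream (in the injectivity lemmas for the sequential and parallel chase trees). So your inductive formulation is the intended content, and your proof matches what the paper leaves implicit.
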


This result intuitively means that in the execution of our programs, we will
only ever have a single sample outcome for a given instantiation of head
variables. This becomes crucial at a later point, as it allows us to show that
the computation steps needed to obtain an intermediate instance $D$ from an
input instance $D_{\IN}$ to the program are unique.

\section{Sequential Probabilistic Chase}\label{sec:seqchase}

The chase of a \GDL{} program $\G$ corresponds
to chasing its \EDL{} version $\G^{\exists}$. We will construct a
\emph{chase tree} for the \EDL{} program $\G^{\exists}$, the nodes of
which are labeled with database instances, and the edges of which
capture applications of rules. Thus, existential rules lead to nodes
with multiple (in our case possibly uncountably many!) children. In
the countable case, one can label the edges to these children 
with probabilities according to the probabilistic rule that the
existential $\EDL$ rule was constructed from
(cf. \cref{sec:assocedl}). This is the approach \cite{Barany+2017} took.

We follow the general spirit of this approach. However, the edge
labeling outlined above is only sufficient for domains that are
countably infinite at most. Instead of labeling edges, we label nodes with
the probability distribution over their
children. Yet, making the distribution explicit in the chase tree is
not necessary, as it is implicit from the current instance $D$ and the
applicable pair we use. In this section, we formalize this procedure
and demonstrate how such \emph{chase trees induce a stochastic process on
database instances}.

\begin{remark}\label{rem:independenceass}
In the construction of said stochastic process, implicit independence 
assumptions are made. Intuitively, we want that if multiple samplings occur 
along a path, then they are stochastically independent, as long as there is no 
\emph{logical} dependence between them. This means that random samplings should
only depend on the current state of the database where the corresponding rule
and instantiation of variables get applicable, and ultimately comes down to
the stochastic process being Markov (cf. \cref{sec:stochproc}).
\end{remark}

Note that from a measure-theoretic point of view, there is no need to
associate the execution of a Datalog program to a tree as we
are going to do.  We believe though, that doing so is beneficial
for exposing the intuition behind the underlying stochastic process
and for emphasizing the connections to the original approach in
\cite{Barany+2017}.

\subsubsection*{Structure of this section}
In \cref{ssec:seqstep} we introduce the central notion of \emph{chase steps},
capturing the effect of a single rule application from the existential Datalog
program, and, in the case of existential rules providing it with a
probabilistic structure based upon the parameterized distribution that induced
the existential rule. This can be thought of as the continuous generalization
of the notion of chase steps from \cite{Barany+2017}. Chase steps naturally
compose into a \emph{chase tree}, that, in turn, captures the stepwise
execution of the whole program. Along the way, we already make some technical
observations concerning these notions, before focusing, in
\cref{ssec:seqpaths}, on paths in the chase tree. These paths correspond to
individual runs of the program, including fixed sampling outcomes for the
existential rules. Every chase path either leads to a finite output instance
(where no rules are applicable anymore), or is infinite. In \cref{ssec:limit}
we formalize this in terms of a function that maps terminating paths to their
output instance, and non-terminating paths to some error event. Our main
concern lies in proving that the probability distribution over chase paths can
be described in terms of \emph{stochastic kernels} comprising the individual
chase steps. We show this in \cref{ssec:seqmarkov}. This establishes that the
chase paths are the paths of a Markov process over the space of database
instances. Combining this with the tools from \cref{ssec:limit} allows us to
associate a well-defined output (sub-)probability distribution for our
programs.

\subsection{Chase Steps and Chase Trees}\label{ssec:seqstep}

A \emph{chase step} captures the semantics of applying a single (applicable)
rule to an input instance.

\begin{definition}[Chase Step]\label{def:seqstep}%
  A \emph{(sequential) chase step} for $\G$ is a tuple
  $(D, \phi, \tup a, \EEEE, \mu)$, where
  \begin{itemize}
  \item $D$ is a database instance in $\DD_{\App}$ (i.\,e. $\App(D) \neq 
    \emptyset$)
  \item $( \phi, \tup a ) \in \App( D )$ is an applicable pair
  \item $\EEEE = \ext( D, \phi, \tup a, \WW_{\phi } )$ is the set of
    follow-up instances of $D$ w.\,r.\,t. $(\phi, \tup a)$
    under sequential rule execution, and
  \item $\mu$ is the probability measure on $\DDD \restriction_{\EEEE}$ that 
    is defined as follows:
    \begin{itemize}
    \item If $\phi$ is an existential rule of $\G^{\exists}$, then
      for all measurable $\DDDD \subseteq \EEEE$,
      \begin{equation}\label{eq:seqstepmeas}
        \mu( \DDDD ) =
        \int_{ \WW_\phi } \xi( D, \phi, \tup a, \placeholder, \DDDD )
        \cdot \psi_{\phi}\params{ \tup a }( \placeholder ) 
        \d{ \mu_\phi }\text,
      \end{equation}
      where $\psi_\phi$ is the parameterized distribution in the rule
		of $\G$ that $\phi$ originated from.\footnote{Note that $\tup a$ contains
		the full tuple of parameters (cf. \labelcref{ss:rule-applicability}). To
		be precise, we should write $\psi_\phi\params{ \tup p }$ instead of
		$\psi_\phi\params{ \tup a }$, where $\tup p$ is the projection of $\tup
		a$ to the parameter part. Again, all we need is that this transformation 
		is measurable, but this is clearly the case, because $\tup a \mapsto \tup
		p$ is simply a projection between product measurable spaces.
		Alternatively, we could just let $\psi_\phi$ be a version of the
		parameterized distribution that ignores those components of $\tup a$ that
		do not belong to the parameter. Either way, we just write
		$\psi_\phi\params{\tup a}$ and treat $\tup a$ as if it were the parameter 
		tuple itself.}
    \item Otherwise,
      \begin{equation}\label{eq:detseqchasestep}
        \mu( \DDDD ) =
        \begin{cases}
          1 & \text{if }\DDDD = \set{ \ext( D, \phi, \tup a, * ) }
          \text{, and}\\
          0 & \text{if }\DDDD = \emptyset\text.
        \end{cases}\qedhere
      \end{equation}
    \end{itemize}
  \end{itemize}
\end{definition}

Recall that $\xi( D, \phi, \tup a, b, \DDDD )$ is an indicator telling us 
whether in $D$ an application of the rule $\phi$ with head grounding $\tup a$
and resulting sample $b$ leads to an instance in $\DDDD$. Thus, the integral in
\labelcref{eq:seqstepmeas} is the total probability of all samples $b$ that
lead from $D$ to an instance in $\DDDD$ when rule $\phi$ is fired with head
grounding $\tup a$.

We denote a chase step $(D, \phi, \tup a, \EEEE, \mu)$ as
\[
	\chasestep{ D }{ (\phi,\tup a) }{ (\EEEE,\mu) }
\]
and say that the chase step \emph{starts in $D$}, \emph{uses} $(\phi,\tup a)$,
and \emph{goes into $\EEEE$ with distribution $\mu$}. Note that in such a chase
step, $\EEEE$ and $\mu$ are determined by $D$, $\phi$ and $\tup a$.

\begin{figure}[t]
	\mbox{}\hfill%
	\begin{subfigure}{.45\textwidth}\centering%
			\begin{tikzpicture}[sibling distance=.3cm,level distance=2.25cm]
				\node[inner,label={above:$D$}] (D) {}
					child { node[inner] (A1) {} edge from parent[draw=none] }
					child { node[inner] (A2) {} edge from parent[draw=none] }
					child { node[inner] (A3) {} edge from parent[draw=none] }
					child { node[inner] (A4) {} edge from parent[draw=none] }
					child { node[inner] (A5) {} edge from parent[draw=none] }
					child { node[inner] (A6) {} edge from parent[draw=none] }
					child { node[inner] (A7) {} edge from parent[draw=none] }
					child { node[inner] (A8) {} edge from parent[draw=none] }
					child { node[inner] (A9) {} edge from parent[draw=none] }
					child { node[inner] (A10) {} edge from parent[draw=none] }
					child { node[inner] (A11) {} edge from parent[draw=none] }
					child { node[inner] (A12) {} edge from parent[draw=none] }
					child { node[inner] (A13) {} edge from parent[draw=none] }
					child { node[inner] (A14) {} edge from parent[draw=none] }
					child { node[inner] (A15) {} edge from parent[draw=none] }
					child { node[inner] (A16) {} edge from parent[draw=none] };
				\begin{scope}[on background layer]
					\path[fill=white!98!black,draw,thick] 
						(D.center) to (A1.center) to (A16.center) to (D.center);
				\end{scope}
				\draw[-stealth',thick,shorten >=8pt] 
					(D.center) to node[xshift=.2cm] {$b$} (A10);
				\node[left=.1cm of A1] {$\EEEE\colon$};
				\node[right=.1cm of A16] {$\phantom{\EEEE\colon}$};

				\node[text=b20] at (0.9,-1.55) {$\mu$};
				\node[text=b50] at (-1,-1.5) {$\mu(\DDDD)$};

				\tikzset{bar/.style={minimum width=6pt,rectangle,very thick,
					inner sep=0pt,draw=b20,fill=b10,anchor=south}};
				
				\begin{scope}[on background layer]
					\node[bar,minimum height=1pt] at (A1) {};
					\node[bar,minimum height=3pt] at (A2) {};
					\node[bar,minimum height=5pt,fill=b20,draw=b50] at (A3) {};
					\node[bar,minimum height=7pt,fill=b20,draw=b50]	at (A4) {};
					\node[bar,minimum height=9pt,fill=b20,draw=b50]	at (A5) {};
					\node[bar,minimum height=12pt,fill=b20,draw=b50] at (A6) {};
					\node[bar,minimum height=14pt,fill=b20,draw=b50] at (A7) {};
					\node[bar,minimum height=11pt,fill=b20,draw=b50] at (A8) {};
					\node[bar,minimum height=8pt] at (A9) {};
					\node[bar,minimum height=8pt,fill=b50,draw=black] at (A10) {};
					\node[bar,minimum height=10pt] at (A11) {};
					\node[bar,minimum height=9pt] at (A12) {};
					\node[bar,minimum height=7pt] at (A13) {};
					\node[bar,minimum height=5pt] at (A14) {};
					\node[bar,minimum height=3pt] at (A15) {};
					\node[bar,minimum height=1pt] at (A16) {};
				\end{scope}

				\path[fill=none,draw,thick]
					(D.center) to (A1.center) to (A16.center) to (D.center);
				\node[below=2ex of A10] (A10L) 
					{$\mathllap{\ext(D, \phi, \tup a, b) ={}} 
					D' \mathrlap{{}\in\EEEE}$};
				\draw[b50,<-,shorten >=2pt,inner sep=1pt] (A10L) to (A10);
			\end{tikzpicture}
			\caption{Discrete case.}
	\end{subfigure}\hfill%
	\begin{subfigure}{.45\textwidth}\centering%
		\begin{tikzpicture}[level distance=2.25cm]
			\node[inner,label={above:$D$}] (D) {}
				child { node (A1) {} edge from parent[draw=none] }
				child { node (A2) {} edge from parent[draw=none] }
				child { node[inner] (A3) {} edge from parent[draw=none] }
				child { node (A4) {} edge from parent[draw=none] };

			\begin{scope}[on background layer]
				\path[fill=white!98!black,draw,thick] 
					(D.center) to (A1.center) to (A4.center) to (D.center);
			\end{scope}
	
			\node[below=2ex of A3] (A3L) 
					{$\mathllap{\ext(D, \phi, \tup a, b) ={}} 
					D' \mathrlap{{}\in\EEEE}$};
				\draw[b50,<-,shorten >=2pt,inner sep=1pt] (A3L) to (A3);

			\draw (D.center) to (A1.center);
			\draw (D.center) to (A4.center);
			\draw[-stealth',thick] (D.center) to 
				node[xshift=.15cm,yshift=.2cm] {$b$} (A3);
			\draw (A1.center) to (A4.center);

			\node[left=0cm of A1] {$\EEEE\colon$};
			\node[right=0cm of A4] {$\phantom{\EEEE\colon}$};
			\begin{scope}[on background layer]
				\draw[fill=b10,draw=b20,very thick] plot[smooth] coordinates 
					{(A1.center) (-.6,-1.5) (.3,-1.7) (.9,-1.6) (A4.center)};
				\clip ([xshift=-1cm]A2.center) rectangle ([shift={(1,1)}]A2.center);
				\draw[fill=b20,draw=b50,very thick] 
					plot[smooth] coordinates
					{(A1.center) (-.6,-1.5) (.3,-1.7) (.9,-1.6) (A4.center)};
				\path[clip] plot[smooth] coordinates 
					{(A1.center) (-.6,-1.5) (.3,-1.7) (.9,-1.6) (A4.center)};
				
				\draw[b50,very thick] 
					([xshift=-.98cm]A2.center) to ([shift={(-.98,1)}]A2.center);
				\draw[b50,very thick] 
					([xshift=.98cm]A2.center) to ([shift={(.98,1)}]A2.center);
			\end{scope}

			\node[text=b20] at (0.9,-1.35) {$\mu$};
			\node[text=b50] at (-.6,-1.2) {$\mu(\DDDD)$};
			\draw[thick] (0,0) +(-165:.3) arc (-165:-15:.3);
		\end{tikzpicture}
		\caption{Continuous case.}
	\end{subfigure}
	\hfill%
	\caption{Illustration of a (sequential) chase step.}\label{fig:seqchasestep}
\end{figure}

\Cref{fig:seqchasestep} illustrates a sequential chase
step $\chasestep{ D }{ (\phi,\tup a) }{ (\EEEE,\mu) }$ as a directed
tree of depth 1 with root $D$. The children of $D$ are the follow-up
instances of $D$ using $(\phi,\tup a)$, and the edge from $D$ to a
follow-up instance $D'$ corresponds to a sample outcome
$b \in \WW_\phi$. The illustration also insinuates that the transition
from $D$ to $D'$ is probabilistic. In particular, note that $D$ has
uncountably many children if the random variable that is sampled has
an uncountable support.

\medskip

\begin{remark}\label{rem:determpdist}
Recall from \cref{rem:determmeas} that for deterministic rules $\phi$, we
defined $\WW_\phi = \set{ * }$ for some dummy singleton $\set{ * }$. Letting
$\psi_\phi\params{\tup a}( * ) = \mu_\phi(*) = 1$ in this case, we can regard
\labelcref{eq:detseqchasestep} as a special case of \labelcref{eq:seqstepmeas}.
This allows us without loss of generality to uniformly treat all the chase step
measures that appear later as if they were of shape \labelcref{eq:seqstepmeas}.
\end{remark}

In \cref{def:seqstep}, we covertly claimed that $\mu$ as in
\labelcref{eq:detseqchasestep,eq:seqstepmeas} is a well-defined
probability measure. We repay the debt by showing that this is indeed the case.

\begin{lemma}\label{lem:seqstepmeas}
  The function $\mu$ from \cref{def:seqstep} is well-defined, and a 
  probability measure on $\DDD\restriction_{\EEEE}$.
\end{lemma}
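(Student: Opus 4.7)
My plan is to handle the deterministic and existential cases separately and then verify the three defining properties of a probability measure in each case. For the deterministic case I would note that \cref{rem:determpdist} reduces it to the existential case (with $\WW_\phi = \set{*}$, $\mu_\phi$ the unit mass at $*$, and $\psi_\phi\params{\tup a}(*)=1$), so \labelcref{eq:detseqchasestep} defines the Dirac measure concentrated on the single follow-up instance $\ext(D,\phi,\tup a,*)$, which is trivially a probability measure on $\DDD\restriction_{\EEEE}$. The remainder of the argument therefore focuses on the integral definition in \labelcref{eq:seqstepmeas}.

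For well-definedness of the integral, I first need the integrand $b \mapsto \xi(D,\phi,\tup a,b,\DDDD)\cdot\psi_\phi\params{\tup a}(b)$ to be $(\WWW_\phi,\Borel(\RR))$-measurable for each fixed $D$, $\phi$, $\tup a$, and $\DDDD\in\DDD\restriction_{\EEEE}$. By \cref{cor:xiXimeas}, $\xi$ is $(\GGG\otimes\DDD,\Borel(\RR))$-measurable, so its section obtained by fixing $D$, $\phi$, $\tup a$, and $\DDDD$ is measurable in $b$ (sections of measurable functions along a measurable embedding are measurable). Since $\psi_\phi\params{\tup a}(\placeholder)$ is $(\WWW_\phi,\Borel(\RR_{\geq 0}))$-measurable by the definition of a parameterized distribution, the product is measurable, and the integral against $\mu_\phi$ exists in $[0,\infty]$.

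Next I verify the measure axioms. For $\mu(\emptyset)$, $\xi(D,\phi,\tup a,b,\emptyset)=0$ identically, so the integral is $0$. For countable additivity, if $\DDDD=\biguplus_{n\in\NN}\DDDD_n$ is a disjoint union in $\DDD\restriction_{\EEEE}$, then $\ext(D,\phi,\tup a,b)$ lies in at most one $\DDDD_n$, so $\xi(D,\phi,\tup a,b,\DDDD)=\sum_{n\in\NN}\xi(D,\phi,\tup a,b,\DDDD_n)$ pointwise in $b$; interchanging sum and integral (by monotone convergence, since the summands are nonnegative) gives $\mu(\DDDD)=\sum_n\mu(\DDDD_n)$. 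Finally, for the total mass, $\EEEE=\ext(D,\phi,\tup a,\WW_\phi)$ implies $\xi(D,\phi,\tup a,b,\EEEE)=1$ for every $b\in\WW_\phi$, whence
\begin{equation*}
\mu(\EEEE)=\int_{\WW_\phi}\psi_\phi\params{\tup a}\:\d\mu_\phi=1
\end{equation*}
by \labelcref{eq:paramdist_prob} in the definition of a parameterized distribution.

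The main obstacle I anticipate is the measurability of the integrand, which hinges on the careful bookkeeping in \cref{cor:xiXimeas} and, implicitly, on treating $\tup a$ as if it were the parameter tuple for $\psi_\phi$ (as addressed in the footnote to \labelcref{eq:seqstepmeas}). Once this is in place, the remaining verifications are routine applications of monotone convergence and the normalization condition \labelcref{eq:paramdist_prob}.
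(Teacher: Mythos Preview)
Your proposal is correct and follows essentially the same route as the paper: both establish measurability of the integrand via the measurability of $\ext$ and then observe that the total mass is $1$ by the normalization of $\psi_\phi\params{\tup a}$. The paper's argument is terser---it notes directly that $b \mapsto \xi(D,\phi,\tup a,b,\DDDD)$ is the characteristic function of the $(D,\phi,\tup a)$-section of $\ext^{-1}(\DDDD)$ (appealing to \cref{pro:extExtmeas} rather than \cref{cor:xiXimeas}) and does not spell out countable additivity---but the substance is the same.
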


\begin{proof}
	Let $\chasestep{ D }{ (\phi,\tup a) }{ (\EEEE,\mu) }$ be a sequential chase
  step. Then by \cref{lem:followupsmeas}, it holds that $\EEEE = \ext( D, 
  \phi, \tup a, \WW_\phi ) \in \DDD$. For fixed, measurable $\DDDD \subseteq 
  \EEEE$, consider the function
  \begin{equation}\label{eq:integrandxi}
    \xi( D, \phi, \tup a, \placeholder, \DDDD ) \with
    \WW_\phi \to \set{0,1} \with
    b \mapsto \begin{cases}
      1 & \text{if } D \cup \set{ f_{\phi}(\tup a,b) } \in \DDDD\text, \\
      0 & \text{otherwise.}
    \end{cases}
  \end{equation}
  Then $\xi(D, \phi, \tup a, \placeholder, \DDDD )$ maps $b$ to
  $1$ if and only if $b$ is in the $(D, \phi, \tup a)$-section
  of $\ext^{-1}( \DDDD )$. In particular,
  $\xi( D, \phi, \tup a, \placeholder, \DDDD )$ is
  $\big( \WWW_{\phi}, \Borel( \RR_{\geq 0} ) \big)$-measurable,
  and it follows that
  \[
    \xi( D, \phi, \tup a, \placeholder, \DDDD ) \cdot 
    \psi_{\phi}\params{ \tup a }
  \]
  (as a product of real-valued, measurable functions) is
  $( \WWW_{\phi}, \Borel( \RR_{\geq 0} ) )$-measurable as
  well. Thus, the function $\mu$, as defined in
  \labelcref{eq:seqstepmeas} is a well-defined measure. Note
  that for $\DDDD = \EEEE$, the function from
  \labelcref{eq:integrandxi} is the constant $1$-function. Since
  $\psi_\phi$ is a parameterized distribution, it follows that
  $\mu$ is a probability measure.
\end{proof}

Given a database instance $D$, we can now argue about sequences of
follow-up instances using sequences of chase steps.
\begin{equation}
	\cdots
	\qquad
	\chasestep{ D }{ (\phi,\tup a) }{ (\tikzmark{mark1}\EEEE,\mu) }
	\qquad\vphantom{\rule{0pt}{6ex}}
	\tikzmark{mark2}\chasestep{ D' }{ (\phi',\tup a') }{ (\EEEE',\mu') }
	\qquad
	\cdots
\begin{tikzpicture}[overlay,remember picture] 
	\node[anchor=south west,inner sep=0pt] (node1) at (pic cs:mark1) 
		{\phantom{$\EEEE$}};
	\node[anchor=south west,inner sep=0pt] (node2) at (pic cs:mark2) 
		{\phantom{$D'$}};
	\draw[-{Straight Barb[angle'=90,scale=.8]},semithick] 
		([yshift=1ex]node1.60) 
		to[out=60,in=120] node[pos=.5,circle,fill=white,inner sep=2pt] {$\ni$} 
		([yshift=1ex]node2.120);
\end{tikzpicture}
\end{equation}

Here sequences can branch, when the rules that are applied are existential
rules of the \EDL{} version of $\cal G$ (cf.  \cref{fig:seqchasestep}). What we
just described is formalized in the notion of \emph{chase trees}.

Recall that $\DD_{\App}$ denotes the set of instances $D$ with $\App( D ) \neq
\emptyset$, and that $(\AA, \AAA)$ is the space of pairs $(\phi, \tup a)$ with
$\phi \in \G^{ \exists }$ and $\tup a \in \VV_{\phi}^{(h)}$. 

\begin{definition}
	A \emph{measurable chase policy} is a measurable function $\app \from \set{
	D \in \DD \with \App( D ) \neq \emptyset } \to \AA$ with the property that
	$\app(D) \in \App( D )$.
\end{definition}

By \cref{cor:app}, a measurable chase policy always exists (as long as any rule
is applicable in some instance at all). Intuitively, if multiple pairs $(\phi,
\tup a)$ are applicable in an instance, $\app(D)$ stipulates which of these is
used for the next chase step.

\begin{definition}[Chase Tree]\label{def:seqchasetree}
	Let $D_{ \IN } \in \DD_{ \IN }$ and let $\app \from \DD_{\App} \to \AA$ be a
	measurable chase policy. The \emph{(sequential) chase tree}
	$T_{\app,D_{\IN}}$ for input instance $D_{ \IN }$ with respect to the
	$\GDL$ program $\G$ and $\app$ is a labeled countable-depth tree 
	$T_{\app,D_{ \IN }} = ( V, E, \Lambda )$ with labeling function $\Lambda$, 
	root node $r \in V$, and the following properties.
	\begin{enumerate}
		\item The root node $r$ is labeled with $D_{ \IN }$.
		\item If $v \in V$ is a leaf node, then it is labeled with an instance
			$D_v$ such that $\App( D_v ) = \emptyset$.
		\item If $v \in V$ is an inner node, then it is labeled with an instance
			$D_v$ such that $\App( D_v ) \neq \emptyset$ and
			\begin{enumerate}
				\item $\chasestep{ D_v }{ \app(D_v) }{ (\EEEE_v, \mu_v) }$ is a
					chase step with $( \EEEE_v, \mu_v )$ as in 
					\cref{def:seqstep}.
				\item the function $v' \mapsto D_{ v' }$ is a bijection between
					the children of $v$ and $\EEEE_v$.
				\qedhere
			\end{enumerate}
	\end{enumerate}
\end{definition}

In essence, a chase tree captures all possible computations of a $\GDL$ 
program according to a (measurable) chase policy. See \cref{fig:seqchasetree}
for an illustration. Observe that for all $D_{\IN} \in \DD_{\IN}$, the chase
tree $T_{\app,D_{\IN}}$ is uniquely determined by $\app$. The tree may contain
paths of (countably) infinite length, and it may contain nodes with uncountably
many children.

\begin{figure}[t]\centering%
	\begin{tikzpicture}[sibling distance=.4cm,level distance=1.25cm]
		\node[inner, label={above:$D_{\IN}$}] (root) {}
		child { node[hidden] (a1) {} edge from parent[draw=none] }
		child { node[hidden] (a2) {} edge from parent[draw=none] }
		child { node[hidden] (a3) {} edge from parent[draw=none] }
		child { node[hidden] (a4) {} edge from parent[draw=none] 
			child { node[hidden] (b1) {} edge from parent[draw=none] }
			child { node[hidden] (b2) {} edge from parent[draw=none] }
			child { node[hidden] (b3) {} edge from parent[draw=none] }
			child { node[hidden] (b4) {} edge from parent[draw=none] }
			child { node[hidden] (b5) {} edge from parent[draw=none] }
			child { node[hidden] (b6) {} edge from parent[draw=none] }
			child { node[hidden] (b7) {} edge from parent[draw=none] }
			child { node[hidden] (b8) {} edge from parent[draw=none] }
		}
		child { node[hidden] (a5) {} edge from parent[draw=none] }
		child { node[hidden] (a6) {} edge from parent[draw=none] }
		child { node[hidden] (a7) {} edge from parent[draw=none] }
		child { node[hidden] (a8) {} edge from parent[draw=none] }
		child { node[hidden] (a9) {} edge from parent[draw=none] }
		child { node[hidden] (a10) {} edge from parent[draw=none] };

		\path[draw=b20,thick,fill=b10] 
			(a1.center)  .. controls ++(1.3,.7) ..  (a10.center);
		\draw[thick] (root.center) to (a1.center) to (a10.center) to (root.center);
		\node[inner] at (a4) {};
		\draw[-stealth',b75,thick] (root.center) to (a4);

		\path[draw=b20,thick,fill=b10] 
			(b1.center) .. controls ++(1.6,.7) .. (b8.center);
		\draw[thick] (a4.center) to (b1.center) to (b8.center) to (a4.center);
		\node[inner] at (b4) {};
		\draw[-stealth',thick,b75] (a4.center) to (b4);
		
		\node[font=\small, below right=.1cm and .1cm of b4,align=left] 
			{intermediate\\instance};

		\node[below left=4.5cm and 2cm of a1.center,anchor=center] (c1) {};
		\node[below right=4.5cm and 2cm of a10.center,anchor=center] (c3) {};
		\path (a1.center) to node[pos=.65,anchor=center] (d) {} (c1.center);
		\draw[dotted,thick] (a1.center) to (d.center);
		\draw[dotted,thick] (a10.center) to (c3.center);
		\node[right=1.3cm of d.center,anchor=center] (e) {};
		\node[right=2cm of c1.center,anchor=center] (f) {};
		\node[right=.3cm of d.center,anchor=center] (d1) {};
		\node[below=.5cm of d1.center,anchor=center] (d2) {};
		\node[right=1.7cm of d2.center,anchor=center] (d3) {};
		\node[below=.6cm of d3.center,anchor=center] (d4) {};
		\node[right=1.2cm of d4.center,anchor=center] (d5) {};
		\node[below=.475cm of d5.center,anchor=center] (d6) {};

		\draw[thick,b50,dotted] (d.center) to 
			node[black,pos=.6,leaf,solid]{}
			(d1.center) to (d2.center) to 
			node[black,pos=.2,leaf,solid]{} 
			node[black,pos=.5,leaf,solid]{} 
			node[black,pos=.7,name=myleaf,leaf,solid]{}
			(d3.center) to (d4.center) to
			node[black,pos=.3,leaf,solid]{}
			node[black,pos=.6,leaf,solid]{}
			(d5.center) to (d6.center);

		\node[font=\small,below=.1cm of myleaf,align=right,anchor=north east]
			{instances without\\applicable rule};

		\path (d6) to 
			node[shift={(-.2,.2)},rotate=-70,anchor=center] {$\cdots$}
			(c3);

		\path[draw=b75,thick,-stealth',decorate,
				decoration={snake,amplitude=.05cm,post length=.15cm}] 
				(b4) to (myleaf);
	\end{tikzpicture}
	\caption{Illustration of a sequential chase tree, cf. \cref{fig:seqchasestep}.}
	\label{fig:seqchasetree}
\end{figure}

\begin{lemma}\label{lem:seqinj}
	Let $D_{ \IN } \in \DD_{ \IN }$ be an input instance and let $\app$ be a
	measurable chase policy. Then $v \neq w$ implies $D_v \neq D_w$ for all 
	nodes $v \neq w$ in the corresponding chase tree $T_{ \app, D_{ \IN } }$.
\end{lemma}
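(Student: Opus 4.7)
The plan is to argue by contradiction: assume $v \neq w$ with $D_v = D_w$ and exhibit a violation of some functional dependency $\FD(\phi)$, contradicting \cref{lem:fd}. The two ingredients I need are that instance labels grow strictly along any root-to-node path, and that distinct siblings in the chase tree carry different labels.

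The first ingredient is immediate: if $(\phi,\tup a) = \app(D)$ is applicable at $D$, then by definition $\phi$'s head $\phi_h$ is not satisfied in $D$, so no fact $f_\phi(\tup a, b)$ with $b \in \WW_\phi$ lies in $D$ (for existential $\phi$ any $R_i(\tup a, b)$ would witness the existential, and for deterministic $\phi$ the candidate $R(\tup a)$ is $\phi_h$ itself). Consequently $\ext(D,\phi,\tup a,b) = D \cup \set{f_\phi(\tup a, b)}$ genuinely adds a fresh fact, and any fact that appears at some node persists in the labels of all of its descendants.

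Now let $u$ be the lowest common ancestor of $v$ and $w$ in $T_{\app,D_{\IN}}$. If $u \in \set{v,w}$, say $u = v$, then $w$ is a proper descendant of $v$ and strict monotonicity forces $D_v \subsetneq D_w$, a contradiction. Otherwise $v$ and $w$ lie in distinct subtrees of $u$; let $v_1 \neq w_1$ be their respective ancestors that are children of $u$. By item (c) of \cref{def:seqchasetree}, the children of $u$ biject with $\EEEE_u$, so $D_{v_1} \neq D_{w_1}$. Writing $\app(D_u) = (\phi,\tup a)$, the rule $\phi$ cannot be deterministic (otherwise $\WW_\phi$ is a singleton and $u$ has a single child), so $\phi$ is existential and the two children correspond to sample outcomes $b_{v_1} \neq b_{w_1}$, yielding facts $R_i(\tup a, b_{v_1}) \neq R_i(\tup a, b_{w_1})$ that share the first $\ar(R_i) - 1$ coordinates but differ on the last. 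By monotonicity both facts persist down to $D_v$ and $D_w$; under the assumption $D_v = D_w$ this common instance then carries two $R_i$-tuples agreeing on the key of $\FD(\phi)$ but disagreeing on its right-hand side, a direct $\FD(\phi)$-violation. A straightforward induction along root-to-node paths using \cref{lem:fd} (base case via (1), inductive step via the "moreover" clause of (2)) shows that every $D_v$ satisfies all such $\FD(\phi)$, delivering the desired contradiction. The only mildly delicate point is the LCA case split; everything else is bookkeeping on top of \cref{def:seqchasetree} and \cref{lem:fd}.
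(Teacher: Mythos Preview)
Your proof is correct and follows essentially the same route as the paper: argue by contradiction, take the least common ancestor $u$, observe that the rule applied at $u$ must be existential, and derive an $\FD(\phi)$-violation from the two distinct $R_i$-facts that persist down to $D_v = D_w$. Your treatment is in fact slightly more careful than the paper's, which jumps directly to ``$u$ has multiple children'' without explicitly dispatching the case $u \in \{v,w\}$ via strict monotonicity as you do.
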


This is shown using the functional dependencies introduced by the
$\EDL{}$ program. The short proof below is directly transferred from 
\cite{Barany+2017} to our setting.

\begin{proof}
	Let $D_{ \IN }$ and $\app$ be fixed and suppose there exist $v, w \in 
	V\big( T_{ \app, D_{\IN} } \big)$ with $v \neq w$ such that $D_v = D_w$. Let
	$u$ be the least common ancestor of $v$ and $w$ in $T_{\app,D_{\IN}}$. In
	particular, the node $u$ has multiple child nodes.  Thus, the rule $\phi_u$
	from $\app( D_u ) = ( \phi_u, \tup a_u )$ is existential. Let $v'$ and $w'$
	be the children of $u$ on the path to $v$ respectively $w$. Then $D_{v'} =
	D_u \cup \set{ f_v }$ and $D_{w'} = D_u \cup \set{ f_w }$ for some $f_v, f_w
	\in \FF_{ \phi_u }$ with $f_v \neq f_w$. By the setup of $T_{\app,D_{\IN}}$, 
	$D_u \subseteq D_v = D_w$, so $f_v, f_w \in D_v = D_w$. This, however,
	contradicts \cref{lem:fd}.
\end{proof}

Note that if the $\GDL$ program $\G$ only contains discrete distributions, then
any chase policy $\app$ is trivially measurable. In this case, and modulo our 
changes to the existential version $\G^{\exists}$, we obtain the same chase
trees as \cite{Barany+2017} (omitting any kind of probability labels).

\subsection{Chase Paths}\label{ssec:seqpaths}

For the remainder of the section, we fix an arbitrary input instance $D_{ \IN 
} \in \DD_{ \IN }$, and an arbitrary measurable chase sequence $\app$. Our goal
is to construct a Markov process on the space of database instances $(\DD, 
\DDD)$ by embedding the sequential chase tree $T_{ \app, D_{\IN} } \eqqcolon
T = (V, E, \Lambda)$ into the \emph{path space} $( \DD^{\omega},
\DDD^{\otimes\omega} )$. 

\bigskip

We define a binary relation $\mathord{\stepE[\app]} \subseteq \DD \times \DD$
(denoted in infix notation) on $\DD$ as follows. Let $D \in \DD$.
\begin{itemize}
	\item If $\App( D ) = \emptyset$, then $D \stepE[\app] D'$ if and only
		if $D' = D$.
	\item If $\App( D ) \neq \emptyset$, then there exists a unique chase step
		$\smash{\chasestep{ D }{ \app( D ) }{ (\EEEE, \mu) }}$. Then $D
		\stepE[\app] D'$ if and only if $D' \in \EEEE$.
\end{itemize}
In this section, we drop the subscript and just write $\stepE$ instead of
$\stepE[\app]$. 

\begin{lemma}\label{lem:stepE}
	The relation $\mathord{\stepE} \subseteq \DD \times \DD$ is measurable,
	i.\,e. $\mathord{\stepE} \in \DDD \otimes \DDD$.  
\end{lemma}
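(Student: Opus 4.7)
The plan is to decompose $\stepE$ according to which rule $\app$ selects and then verify measurability piece by piece. Set $\DD_{\App}^{\phi} \coloneqq \app^{-1}\bigl(\set{\phi}\times \VV_\phi^{(h)}\bigr)$ for each $\phi \in \G^{\exists}$, and let $\tup a_{\phi} \from \DD_{\App}^{\phi} \to \VV_\phi^{(h)}$ denote the measurable second-coordinate component of $\app$ restricted to $\DD_{\App}^{\phi}$; both are measurable by \cref{cor:app}. I would write the disjoint decomposition
\begin{equation*}
	\mathord{\stepE} \;=\; \stepE_0 \;\cup\; \bigcup_{\phi\in\G^{\exists}} \stepE_\phi\text,
\end{equation*}
where $\stepE_0 \coloneqq \set{(D,D) \with D\notin \DD_{\App}}$ and $\stepE_\phi \coloneqq \set{(D,D') \with D\in \DD_{\App}^{\phi},\, D' \in \ext(D,\phi,\tup a_\phi(D),\WW_\phi)}$. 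Measurability of $\stepE_0$ is immediate: the diagonal of $\DD$ is measurable because $(\DD,\DDD)$ is standard Borel (\cref{pro:pdbstandardborel}), and $\complement{\DD_{\App}} \times \DD \in \DDD\otimes\DDD$, so their intersection is measurable.

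For each rule $\phi$, I would consider the map
\begin{equation*}
	V_\phi \from \DD_{\App}^{\phi}\times \WW_\phi \to \DD\times \DD
	\with (D,b) \mapsto \bigl(D,\, \ext(D,\phi,\tup a_\phi(D),b)\bigr)\text.
\end{equation*}
This is measurable, since its first coordinate is a projection and its second factors through the measurable assignment $(D,b) \mapsto (D,\phi,\tup a_\phi(D),b) \in \GG$ followed by $\ext$, which is measurable by \cref{pro:extExtmeas}. By construction, the image of $V_\phi$ is exactly $\stepE_\phi$. Once I verify that $V_\phi$ is injective, I would invoke the classical Lusin--Souslin theorem (injective Borel images of Borel sets between standard Borel spaces are Borel; see, e.g., \cite[\S 4.5]{Srivastava1998}), which yields $\stepE_\phi \in \DDD\otimes \DDD$ and finishes the proof.

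For injectivity, suppose $V_\phi(D_1,b_1) = V_\phi(D_2,b_2)$. Then $D_1 = D_2 \eqqcolon D$, and $D\cup \set{f_\phi(\tup a_\phi(D),b_1)} = D \cup \set{f_\phi(\tup a_\phi(D),b_2)}$. Because $(\phi,\tup a_\phi(D)) \in \App(D)$, the head of $\phi$ is unsatisfied in $D$, so neither of the two new facts lies in $D$; hence $f_\phi(\tup a_\phi(D),b_1) = f_\phi(\tup a_\phi(D),b_2)$. For a deterministic $\phi$ we have $b_1 = b_2 = *$ trivially, while for existential $\phi$ the function $f_\phi(\tup a,\placeholder)$ simply places $b$ into a dedicated attribute of the fresh relation $R_i$, and is therefore injective in $b$. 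Either way $b_1 = b_2$.

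The main obstacle I anticipate is precisely that $\stepE_\phi$ is naturally described via an existential quantifier over the uncountable sample space $\WW_\phi$, and without further structure one only obtains an analytic (not necessarily Borel) set. The injectivity of $V_\phi$---together with the standard Borel-ness of all spaces involved---is what promotes this to genuine Borel measurability via Lusin--Souslin; everything else is straightforward bookkeeping on top of the measurability results already established for $\app$, $\ext$, and $f_\phi$.
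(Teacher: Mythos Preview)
Your proof is correct and takes a genuinely different route from the paper. Both arguments begin with the same decomposition into $\stepE_0$ and the pieces $\stepE_\phi$, and handle $\stepE_0$ identically. For $\stepE_\phi$, however, the paper proceeds by an explicit countable approximation: it fixes Polish metrics and countable dense subsets $\VV_0 \subseteq \VV_\phi^{(h)}$ and $\FF_0 \subseteq \FF$, and expresses membership of $(D,D')$ in $\stepE_\phi$ as a countable combination (over $n = \size{D}$, over rational radii $\epsilon$, and over approximating points $\tup a_\epsilon \in \VV_0$, $f_{i,\epsilon} \in \FF_0$) of counting events and preimages under $\app$. Your argument instead packages the existential quantifier over $b \in \WW_\phi$ into the injective measurable map $V_\phi$ and invokes the Lusin--Souslin theorem (which the paper already has available as \cref{fac:measinjectionimages} and in fact uses later in the proof of \cref{lem:seqlimbim}). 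Your approach is shorter and more conceptual; the paper's approach is more hands-on and avoids appealing to a nontrivial descriptive-set-theoretic result at this particular spot. The injectivity check you give---that applicability forces $f_\phi(\tup a_\phi(D),b) \notin D$, and that $f_\phi(\tup a,\cdot)$ is injective in $b$---is exactly the observation that makes Lusin--Souslin applicable and would otherwise only yield an analytic set.
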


\begin{proof}
  We decompose $\stepE$ depending on the rule prescribed by the chase 
  policy. First observe 
  \[
    \set{ ( D, D' ) \in \DD^2 \with \App( D ) = \emptyset \text{ and } D \stepE D' }
    = 
    \diag( \DD^2 ) \cap \big( \DD \setminus ( \DD_{ \App } ) \times \DD \big)
    \in \DDD \otimes \DDD\text.
  \]
  Now fix a rule $\phi \in \G^{\exists}$. We conclude the proof by showing
  \begin{equation}\label{eq:vdashphi}
    \set{ (D, D' ) \in \DD^2 \with 
      \App( D ) \ni (\phi,\tup a) = \app( D )\text{ for some } \tup a \in 
		\VV_\phi^{(h)} \text{ and } D \stepE D' } 
    \in \DDD \otimes \DDD\text.
  \end{equation}
  Recall that $( \VV_{\phi}^{(h)}, \VVV_{\phi}^{(h)} )$, the space of head
  groundings of $\phi$, is standard Borel. Moreover, recall that the space of
  all facts over $\Schema[E] \cup \Schema[I]$, $( \FF, \FFF )$ is standard 
  Borel.

  For each of these spaces, we fix compatible Polish metrics, and countable
  dense sets $\VV_0 \in \VVV_\phi^{(h)}$, respectively $\FF_0 \in \FFF$. We let
  $B_\epsilon(x)$ denote the open ball of radius $< \epsilon$ around an element 
  $x$ in either space.

  Then a pair of instances $( D, D' )$ is contained in the set from 
  \cref{eq:vdashphi} if and only if $D$ has the shape $\set{ f_1, \dots, f_n 
  }$ and it holds that $\app( D ) = ( \phi, \tup a )$ for some $\tup a$ such 
  that $D' = \set{ f_1, \dots, f_n } \cup \set{ f_\phi( \tup a, b ) }$ for some
  $b \in \WW_\phi$. This is the case if and only if there exists some
  $\epsilon_0 > 0$ such that for all $\epsilon \in (0, \epsilon_0)$ there
  exists some $\tup a_{\epsilon} \in \VV_0$, and facts $f_{1,\epsilon} \dots,
  f_{n,\epsilon} \in \FF_0$ of distance at least $\epsilon_0/3$ to each other
  such that
  \begin{enumerate}
	   \item $\app( D ) = ( \phi, \tup a )$ for some $\tup a
			\in B_{\epsilon}( \tup a_{\epsilon} )$, i.\,e. $D \in \app^{-1}\big( 
			\phi, B_{\epsilon}( \tup a_{\epsilon} ) )$,
		\item for all $i = 1, \dots, n$, both $D$ and $D'$ contain 
			exactly one fact from $B_{\epsilon}( f_{i,\epsilon} )$,
		\item $D$ contains no fact outside of $\bigcup_{ i = 1 }^{
			n } B_{\epsilon}( f_{i,\epsilon} )$, and
		\item $D'$ contains no fact outside of $\bigcup_{ i = 1 }^{ 
			n } B_{\epsilon}( f_{i,\epsilon} ) \cup f_\phi\big( B_\epsilon( 
			\tup a_\epsilon ), \WW_\phi \big)$ and contains exactly one fact in
			$f_\phi\big( B_\epsilon( \tup a_{\epsilon} ), \WW_\phi \big)$.
	\end{enumerate}

	Note that in the condition above, we have that $\bigcap_{\epsilon \in
	(0,\epsilon_0)} f_\phi\big( B_{\epsilon}( \tup a_\epsilon), \WW_\phi \big) =
	f_\phi( \tup a, \WW_\phi )$, and that $f_\phi\big( B_\epsilon( \tup
	a_\epsilon ), \WW_\phi \big) \in \FFF_\phi \subseteq \FFF$. 

	Our condition then describes a measurable set in $\DDD \otimes \DDD$, as it
	can be written as a countable intersection over $\epsilon = 1 / k$, small 
	enough, of counting events and preimages of $\app$.
\end{proof}

The relation $\stepE$ is our vehicle for embedding the chase tree $T$ into $(
\DD^{ \omega }, \DDD^{ \otimes\omega } )$.  Note that every edge $(v,w) \in E$
corresponds to $D_v \stepE D_w$. The relation $\vdash$, however contains
\enquote{loops} $D_v \stepE D_v$ for every leaf $v$ of $T$. Moreover, as the
chase tree starts with the fixed root label $D_{\IN}$, there may be pairs of
instances $(D, D')$ with $D \stepE D'$ that do not appear as instance labels in
the tree $T$ at all. We only achieve a direct correspondence between $\stepE$
and the edge relation $E$ of $T$, once we restrict the first component of
$\stepE$ to the set of inner node labels, i.\,e. to $\set{ D_v \with v \in V
\with \App( D_v ) \neq \emptyset }$. Let
\begin{align*}
	\paths( \app ) &\coloneqq 
	\set[\big]{ ( D_0, D_1, \dots ) \in \DD^{ \omega } \with 
	D_i \stepE D_{ i+1 } \text{ for all } i \in \NN }\text{ and}\\
	\paths( \app, D_{ \IN } ) &\coloneqq
	\set[\big]{ ( D_0, D_1, \dots ) \in \paths( \app ) \with
		D_0 = D_{ \IN } }\text.
\end{align*}
The elements of $\paths( \app )$ are called the \emph{paths of $\app$} and
the elements of $\paths( \app, D_{ \IN } )$ are called the \emph{paths of\/ 
$\app$ starting in $D_{ \IN }$}. Note that $\paths( \app )$ is the set of paths
in $T$ where all finite maximal paths have been extended to infinite sequences
by repeating the label of the leaf node infinitely often.  

While the full path space $(\DD^{\omega}, \DDD^{\otimes\omega})$ contains paths
completely unrelated to $\app$ (and $D_{ \IN }$), the sets $\paths( \app )$ and
$\paths( \app, D_{ \IN } )$ only contain relevant paths for the given $\GDL$
program and chase policy.

\begin{remark}
We introduce two separate notions, because we consider two scenarios: the 
single input instance scenario, where we evaluate a $\GDL$ program for a given
input instance $D_{ \IN } \in \DD_{ \IN }$; and the PDB input scenario, where
the input is already a probability distribution over database instances. 
Technically, the former can be cast as a special case of the latter. The single
input instance scenario however, is the one originally described by Bárány et 
al. \cite{Barany+2017}, and we thus prefer to give it an explicit treatment.
\end{remark} 

Using a pairwise intersection of $\stepE$-pairs in $\DD^{ \omega }$, we
immediately obtain the following from \cref{lem:stepE}.

\begin{corollary}\label{cor:seqpaths}
	The sets\/ $\paths( \app )$ and\/ $\paths( \app, D_{ \IN } )$ are
	measurable in $( \DD^{ \omega }, \DDD^{ \otimes\omega } )$.
\end{corollary}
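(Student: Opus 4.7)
The plan is to express $\paths(\app)$ as a countable intersection of preimages of the measurable relation $\stepE$ under coordinate projections, which places it directly in $\DDD^{\otimes\omega}$; then $\paths(\app, D_{\IN})$ drops out by intersecting with the fiber of the first coordinate over the singleton $\{D_{\IN}\}$.

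Concretely, for each $i \in \NN$ let $\pi_{i,i+1}\from \DD^{\omega} \to \DD^2$ denote the projection $(D_0, D_1, \ldots) \mapsto (D_i, D_{i+1})$. By the defining property of the product $\sigma$-algebra (every canonical projection to a coordinate is measurable, and finite tuples of projections are measurable into the product), $\pi_{i,i+1}$ is $(\DDD^{\otimes\omega}, \DDD\otimes\DDD)$-measurable. Combining this with \cref{lem:stepE}, which tells us that $\mathord{\stepE} \in \DDD\otimes\DDD$, each preimage $\pi_{i,i+1}^{-1}(\mathord{\stepE})$ lies in $\DDD^{\otimes\omega}$. Now observe that by definition
\[
    \paths(\app) = \bigcap_{i \in \NN} \pi_{i,i+1}^{-1}(\mathord{\stepE}),
\]
which is a countable intersection of measurable sets and hence measurable.

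For the second set, note that singletons in the standard Borel space $(\DD,\DDD)$ are measurable (cf.\ \cref{pro:pdbstandardborel}), so $\{D_{\IN}\} \in \DDD$. The set of paths starting in $D_{\IN}$ can then be written as
\[
    \paths(\app, D_{\IN}) = \paths(\app) \cap \pi_0^{-1}\bigl(\{D_{\IN}\}\bigr),
\]
where $\pi_0\from \DD^{\omega} \to \DD$ is the projection onto the first coordinate, which is again measurable. Being an intersection of two measurable sets, $\paths(\app, D_{\IN})$ is measurable in $(\DD^{\omega}, \DDD^{\otimes\omega})$.

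There is no real obstacle here: the work has already been done in \cref{lem:stepE}, which handled the only genuinely non-routine measurability question (a condition involving the chase policy and an uncountable set of possible sample outcomes). Everything in this corollary is a formal consequence of the definition of the product $\sigma$-algebra and the fact that measurability of sets is preserved under countable intersections and preimages under measurable maps.
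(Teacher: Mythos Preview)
Your proof is correct and follows precisely the approach the paper indicates: the paper's proof is the one-line remark ``using a pairwise intersection of $\stepE$-pairs in $\DD^{\omega}$, we immediately obtain the following from \cref{lem:stepE},'' and you have simply spelled out that intersection via the coordinate projections $\pi_{i,i+1}$.
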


This concludes the embedding of chase trees into the path space 
$( \DD^{ \omega }, \DDD^{ \otimes \omega } )$. We are not interested in the
paths themselves though. Intuitively, finite paths in the chase trees need to
be mapped back to database instances.

\subsection{Limit Instances}\label{ssec:limit}

Recall \cref{fig:seqchasetree}. By now, we have described how a $\GDL{}$ 
program spans a \enquote{computation tree} of database instances. Yet (as 
discussed in \cref{exa:infcomp}, some of these computations (i.\,e. paths in
the tree) are of infinite length. 

The definitions of Bárány et al. \cite{Barany+2017} also cover infinite paths 
with their (per path) probability. We choose a different approach and ignore
infinite paths altogether, which we justify by the following two reasons.

Infinite paths correspond to infinite results of the computation and such
\enquote{infinite instances} are not captured by our framework of standard
PDBs.

For practical means, arguably any interest lies on finite results.  In fact,
Bárány et al. put a strong focus on a restriction to their programs that
guarantees all computation paths to be finite. Yet, we include in our
discussion programs that terminate with a positive probability $\leq 1$. In
probabilistic programming, it is common practice to consider programs with
infinite computation paths and analyze the termination behavior of such
\cite{BournezGarnier2005,Chatterjee+2020}. For example, a $\GDL$ program with
infinite paths is as good as a completely finite one if the probability of all
infinite computation is $0$. 

\bigskip

With the above, we motivate how we wrap up our semantics of $\GDL$ with 
continuous distributions. Further discussion on semantic properties and a 
glimpse into termination behavior is given in \cref{sec:semantics}.

\bigskip

A path $\vec D = ( D_0, D_1, D_2, \dots ) \in \DD^{ \omega }$ is
called \emph{terminating after $i$ steps} if $D_{i} = D_{j}$ for all
$j > i$, but $D_i \neq D_{i-1}$ (in the case $i > 0$). We call a path $\vec D$
\emph{terminating} if such an $i$ exists. The sets of such paths are easily
seen to be measurable in $( \DD^{ \omega }, \DDD^{ \otimes\omega } )$.
\Cref{fig:paths} shows our intended mapping from paths to instances. A special
error event $\Error$ is used as a sink for non-terminating paths.

\begin{figure}[t]\centering%
	\begin{tikzpicture}[sibling distance=.4cm,level distance=1.25cm]
		\node[inner, label={above:$D_{\IN}$}] (root) {}
		child { node[hidden] (a1) {} edge from parent[draw=none] }
		child { node[hidden] (a2) {} edge from parent[draw=none] }
		child { node[hidden] (a3) {} edge from parent[draw=none] }
		child { node[hidden] (a4) {} edge from parent[draw=none] 
			child { node[hidden] (b1) {} edge from parent[draw=none] }
			child { node[hidden] (b2) {} edge from parent[draw=none] }
			child { node[hidden] (b3) {} edge from parent[draw=none] }
			child { node[hidden] (b4) {} edge from parent[draw=none] }
			child { node[hidden] (b5) {} edge from parent[draw=none] }
			child { node[hidden] (b6) {} edge from parent[draw=none] }
			child { node[hidden] (b7) {} edge from parent[draw=none] }
			child { node[hidden] (b8) {} edge from parent[draw=none] }
		}
		child { node[hidden] (a5) {} edge from parent[draw=none] }
		child { node[hidden] (a6) {} edge from parent[draw=none] }
		child { node[hidden] (a7) {} edge from parent[draw=none] }
		child { node[hidden] (a8) {} edge from parent[draw=none] }
		child { node[hidden] (a9) {} edge from parent[draw=none] }
		child { node[hidden] (a10) {} edge from parent[draw=none] };

		\path[draw=b20,thick,fill=b10] 
			(a1.center)  .. controls ++(1.3,.7) ..  (a10.center);
		\draw[thick] (root.center) to (a1.center) to (a10.center) to (root.center);
		\node[inner] at (a4) {};
		\draw[-stealth',b75,thick] (root.center) to (a4);

		\path[draw=b20,thick,fill=b10] 
			(b1.center) .. controls ++(1.6,.7) .. (b8.center);
		\draw[thick] (a4.center) to (b1.center) to (b8.center) to (a4.center);
		\node[inner] at (b4) {};
		\draw[-stealth',thick,b75] (a4.center) to (b4);
		
		\node[font=\small, below right=.1cm and .1cm of b4,align=left] 
			{intermediate\\instance};

		\node[below left=4.5cm and 2cm of a1.center,anchor=center] (c1) {};
		\node[below right=4.5cm and 2cm of a10.center,anchor=center] (c3) {};
		\path (a1.center) to node[pos=.65,anchor=center] (d) {} (c1.center);
		\draw[dotted,thick] (a1.center) to (d.center);
		\draw[dotted,thick] (a10.center) to (c3.center);
		\node[right=1.3cm of d.center,anchor=center] (e) {};
		\node[right=2cm of c1.center,anchor=center] (f) {};
		\node[right=.3cm of d.center,anchor=center] (d1) {};
		\node[below=.5cm of d1.center,anchor=center] (d2) {};
		\node[right=1.7cm of d2.center,anchor=center] (d3) {};
		\node[below=.6cm of d3.center,anchor=center] (d4) {};
		\node[right=1.2cm of d4.center,anchor=center] (d5) {};
		\node[below=.475cm of d5.center,anchor=center] (d6) {};

		\draw[thick,b50,dotted] (d.center) to 
			(d1.center) to (d2.center) to 
			node[black,pos=.7,name=myleaf,leaf,solid]{}
			(d3.center) to (d4.center) to
			(d5.center) to (d6.center);

		\begin{scope}[on background layer]
			\fill[pattern=north east lines,pattern color=b20] 
				(root.center) to (d5.center) to (d6.center) to (c3.center) to 
				(a10.center);
		\end{scope}

		\node[font=\small,below=.1cm of myleaf,align=right,anchor=north east]
			{limit instance};

		\path (d6) to 
			node[shift={(-.2,.2)},rotate=-70,anchor=center] {$\cdots$}
			(c3);

		\path[draw=b75,thick,-stealth',decorate,
				decoration={snake,amplitude=.05cm,post length=.15cm}] 
				(b4) to (myleaf);

		\draw[decorate,decoration={calligraphic brace,amplitude=6pt},thick] 
			([yshift=-1ex]d6.center) 
			to node[below=2ex,font=\small,align=center] 
			{
				terminating paths\\
				$\mapsto$ limit instance in $\DD$
			}
			([yshift=-1ex]c1.center);

		\draw[decorate,decoration={calligraphic brace,amplitude=6pt},thick] 
			([yshift=-1ex]c3.center) 
			to node[below=2ex,font=\small,align=center] 
			{
				non-terminating paths\\
				$\mapsto$ error event $\Error$
			}
			([yshift=-1ex]d6.center);

		\begin{scope}[overlay]	
			\coordinate[right=5cm of root] (D1);	
			\coordinate[below=1.25cm of D1] (D2);
			\coordinate[below=1.25cm of D2] (D3);
			\coordinate[below=1.25cm of D3] (D4);
			\coordinate[below=2cm of D4] (Dinfty);
			\node[inner sep=4pt] (DD1) at (D1) {$(\DD,\DDD)$};
			\node[inner sep=4pt] (DD2) at (D2) {$(\DD,\DDD)$};
			\node[inner sep=4pt] (DD3) at (D3) {$(\DD,\DDD)$};
			\node[inner sep=4pt] (DD4) at (D4) {$\vdots$};

			\draw[->]
				(DD1.south) 
				to [bend left=15] node[right] {$\stepE$}
				(DD2.north);
			\draw[->]
				(DD2.south) 
				to [bend left=15] node[right] {$\stepE$} 
				(DD3.north);
			\draw[->]
				(DD3.south) 
				to [bend left=15] node[right] {$\stepE$}
				(DD4.north);

			\node[minimum width=3em] (DDinfty) at (Dinfty) {};
			\draw[decoration={calligraphic brace,amplitude=5pt}, decorate, thick]
				([yshift=-1ex]DDinfty.east) 
				to node [below=2ex,align=center,font=\small]
				{path space\\$(\DD^\omega, \DDD^{\otimes\omega})$} 
				([yshift=-1ex]DDinfty.west);
		\end{scope}

	\end{tikzpicture}
	\caption{Terminating and non-terminating paths in the sequential chase 
	tree.}
	\label{fig:paths}
\end{figure}

We let $\DD_{ \Error } \coloneqq \DD \cup \set{ \Error }$ denote the augmented 
instance space with the additional error event, and $\DDD_{ \Error } = \DDD
\oplus \set{ \emptyset, \set{\Error} }$ its $\sigma$-algebra. Our mapping 
between $\DD^{\omega}$ and $\DD_{\Error}$ is defined as follows
\begin{equation}\label{eq:liminst}
	\liminst_{\app}(\vec D) \coloneqq
	\begin{cases}
		D_i		& 
		\text{if } \vec D \in \paths( \app ) \text{ and } \vec D \text{ is
		terminating at position } i \text,\\
		\Error	& \text{otherwise.}
	\end{cases}
\end{equation}
If $\liminst_{\app}( \vec D ) \neq \Error$, then $\liminst_{\app}( \vec D )$ is
called the \emph{limit instance} of $\vec D$. Just like $\liminst_{\app}$, we
define $\liminst_{ \app, D_{\IN} }$ by using $\paths( \app, D_{\IN} )$ instead of
$\paths( \app )$ in \labelcref{eq:liminst}. If $\liminst_{ \app, D_{\IN} } (
\vec D ) \neq \Error$, then $\liminst_{ \app, D_{ \IN } } ( \vec D )$ is called
the \emph{limit instance of\/ $\vec D$ in\/ $T_{ \app, D_{ \IN } }$}. We
emphasize that the effect of $\lim_{\app}$ is indeed that terminating chase
paths are mapped to the instance they terminate in, whereas all infinite chase
paths are collectively mapped onto the error event $\bot$.

\begin{lemma}\label{lem:seqlimbim}
	Both\/ $\liminst_{ \app }$ and\/ $\liminst_{ \app, D_{ \IN } }$ are 
	bimeasurable.
\end{lemma}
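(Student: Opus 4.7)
The plan is to prove measurability and the image condition (the \enquote{bi-} part) separately, handling $\liminst_{\app, D_{\IN}}$ by a direct injectivity argument and $\liminst_{\app}$ by a refinement that exploits the finiteness of database instances.

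First, for the forward direction, I will introduce the set of terminating paths $T \coloneqq \bigcup_{i \in \NN} T_i$ in $\DD^{\omega}$, where $T_i \coloneqq \bigcap_{j > i} \set{ \vec D \in \DD^\omega \with D_i = D_j }$ is measurable because the diagonal of $\DD \times \DD$ is measurable by \cref{pro:pdbstandardborel}. Combined with the measurability of $\paths(\app)$ from \cref{cor:seqpaths}, this shows $\liminst_{\app}^{-1}(\set{\Error}) = \DD^{\omega} \setminus (\paths(\app) \cap T) \in \DDD^{\otimes \omega}$. For $\DDDD \in \DDD$, using that $\liminst_{\app}$ coincides with the $i$-th coordinate projection $\pi_i$ on $T_i$, the preimage decomposes as $\liminst_{\app}^{-1}(\DDDD) = \bigcup_{i \in \NN} \paths(\app) \cap T_i \cap \pi_i^{-1}(\DDDD)$, a countable union of measurable sets. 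The analogous argument with $\paths(\app, D_{\IN})$ replacing $\paths(\app)$ gives measurability of $\liminst_{\app, D_{\IN}}$.

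Next, for the image direction on $\liminst_{\app, D_{\IN}}$, I will argue that the restriction $\liminst_{\app, D_{\IN}}\restriction_{\paths(\app, D_{\IN}) \cap T}$ is injective: two distinct terminating paths in $\paths(\app, D_{\IN})$ with a common limit instance would correspond to two distinct nodes of $T_{\app, D_{\IN}}$ bearing the same label, contradicting \cref{lem:seqinj}. Since $\paths(\app, D_{\IN}) \cap T$ is a standard Borel subspace of $\DD^{\omega}$ and $\DD$ is standard Borel, the Lusin--Souslin theorem sends Borel subsets of the restriction to Borel subsets of $\DD$. Splitting an arbitrary measurable $\DDDD \subseteq \DD^\omega$ into the part inside $\paths(\app, D_{\IN}) \cap T$ and the part outside (which collapses to $\set{\Error}$) then yields $\liminst_{\app, D_{\IN}}(\DDDD) \in \DDD_{\Error}$.

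The main obstacle is the analogous image condition for $\liminst_{\app}$, which is \emph{not} globally injective, because distinct initial instances may produce the same terminal instance. The key observation is that $\liminst_{\app}\restriction_{\paths(\app) \cap T}$ is nevertheless \emph{finite-to-one}. Indeed, for any $D^{*} \in \DD$ (a finite set of facts), a terminating path with limit $D^{*}$ must start from a subset of $D^{*}$, because every non-trivial chase step strictly enlarges the current instance; this leaves at most $2^{\size{D^{*}}}$ candidate starting instances, and for each such $D_0 \subseteq D^{*}$, \cref{lem:seqinj} applied to $T_{\app, D_0}$ admits at most one path reaching the (necessarily leaf) node labelled $D^{*}$. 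Invoking the generalization of the Lusin--Souslin theorem to countable-to-one Borel maps between standard Borel spaces \cite{Srivastava1998} will then deliver that Borel subsets of $\paths(\app) \cap T$ are sent to Borel subsets of $\DD$; splitting off the $\Error$-fibre exactly as in the previous paragraph concludes the bimeasurability of $\liminst_{\app}$.
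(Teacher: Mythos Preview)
Your forward-direction (measurability) argument is essentially identical to the paper's. The interesting divergence is in the image direction for $\liminst_{\app}$. The paper asserts that $\liminst_{\app}$ is \emph{injective} on $\liminst_{\app}^{-1}(\DD)$, reasoning that \enquote{the input instance remains part of any subsequent instance in the evaluation of a Datalog program, [so] it directly extends to all input instances}. That reasoning recovers $D_0$ from the limit $D^*$ as the $\Schema[E]$-restriction of $D^*$ whenever $D_0 \in \DD_{\IN}$, but $\paths(\app)$ as defined allows arbitrary $D_0 \in \DD$, and then---as you correctly observe---distinct initial instances (e.g.\ a proper $D_0 \subsetneq D^*$ versus $D_0' = D^*$ itself, when $\App(D^*)=\emptyset$) may terminate at the same $D^*$. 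Your finite-to-one route via the countable-to-one form of the Lusin--Souslin theorem is therefore the more robust argument for the lemma as literally stated; the paper's injectivity claim is really only justified on the subspace of paths with first coordinate in $\DD_{\IN}$, which is admittedly all that is used downstream.

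One small point: you apply \cref{lem:seqinj} to $T_{\app, D_0}$ for arbitrary $D_0 \in \DD$, whereas that lemma is stated only for $D_{\IN} \in \DD_{\IN}$. This is harmless---the proof of \cref{lem:seqinj} does not actually use that the root is purely extensional, only that each auxiliary relation $R_i$ occurs in the head of exactly one existential rule, so that once some $R_i(\tup a, b)$ has been produced no further $R_i(\tup a,\cdot)$-fact can be generated along the same branch---but it would be worth saying so explicitly.
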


\begin{proof}
  We only show the assertions for $\liminst_{\app}$. The corresponding
  results for $\liminst_{\app,D_{\IN}}$ are obtained the same way.
  
  \smallskip
  
  First note that for all $\vec D \in \DD^{\omega}$ it holds that
  $\liminst_{ \app }( \vec D ) = \Error$ if and only if either
  $\vec D \notin \paths( \app )$, or $\vec D$ is not
  terminating. Thus,
  $\liminst_{ \app }^{ -1 }( \set{ \Error } ) \in \DDD^{ \otimes
    \omega }$.
  
  For $\DDDD \in \DDD$, it holds that
  $\vec D \in \liminst_{ \app }^{ -1 }( \DDDD )$ if and only if there
  exists $i \in \NN$ such that $\vec D$ is a path of $\app$
  terminating at position $i$ with the property that
  $\vec D \in \pi_i^{-1}(  \DDDD )$ (i.\,e. $\pi_i( \vec D ) \in \DDDD$), where
  $\pi_i$ is the (measurable) projection to the $i$th coordinate. Thus,
  $\liminst_{ \app }^{ -1 } ( \DDDD ) \in \DDD^{ \otimes \omega }$,
  and together with the above, $\liminst_{\app}$ is
  $( \DDD^{ \otimes\omega }, \DDD_{ \Error } )$-measurable.
  
  \smallskip
  
 It remains to prove that $\liminst_{\app}$ maps measurable sets to
 measurable sets. \cref{lem:seqinj} implies that $\liminst_{\app}$ is
  injective on $\liminst_{\app}^{-1}( \DD )$. The lemma only applies
  to the chase tree for a fixed input instance $D_{\IN}$, but as the
  input instance remains part of any subsequent instance in the
  evaluation of a Datalog program, it directly extends to all input
  instances. As $\liminst_{\app}$ is measurable, so is its restriction to
  $\liminst_{\app}^{-1}( \DD ) \in \DDD^{\otimes\omega}$ (with respect to
  $\DDD^{ \otimes\omega }
  \restriction_{ \liminst_{ \app }^{-1}( \DD )}$). Since $(\DD,\DDD)$ and $(
  \DD^{\omega}, \DDD^{\otimes\omega} )$ are standard Borel
  (\cref{pro:pdbstandardborel}), the assertion follows from
  \cref{fac:measinjectionimages}. \qedhere
\end{proof}

\subsection{Chase Trees as Markov Processes}\label{ssec:seqmarkov}

In this subsection, we establish a correspondence between a chase tree
for a given \GDL{} program and a discrete-time Markov process whose
state space is the (in general not countable) space of database
instances. We have seen in the previous subsection how paths in a
chase tree naturally correspond to a set of \emph{paths} of such a
process (cf. \cref{sec:stochproc}) in the countably infinite product space
$(\DD^\omega, \DDD^{\otimes \omega})$.

To obtain the correspondence to a Markov process, we need to show that the 
probabilistic transitions that are encoded within the nodes of any level of the 
chase tree, or, to be more precise, by its measurable chase policy, describe 
a stochastic kernel from $(\DD, \DDD)$ to itself.

The interpretation of the \GDL{} semantics as a database-valued Markov process
(which, by itself, was already recognized in \cite[p. 22:14]{Barany+2017}) 
makes also apparent that the natural generalization of the \GDL{} language is 
to allow the input to be a (sub\nobreakdash-)prob\-a\-bilis\-tic database
rather than a single instance. A \GDL{} program then induces a mapping from a
(sub-)probabilistic database to a sub-prob\-a\-bilis\-tic database 
(\enquote{losing} the mass of \enquote{non-terminating} paths). We will come
back to this at the end of the subsection.

\bigskip

We extend $\stepE[\app]$ to a function $\stepP[\app] \from \DD \times \DDD \to
[0,1]$ where again, if the reference is clear, we just write $\stepP$. For $D
\in \DD$ and $\DDDD \in \DDD$ we distinguish two cases.
\begin{itemize}
	\item If $\App( D ) \neq \emptyset$ with $\app( D ) = ( \phi, \tup a )$ such 
		that $\smash{\chasestep{ D }{ (\phi,\tup a) }{ (\EEEE, \mu) }}$ is the 
		corresponding chase step, then $\stepP( D, \DDDD ) \coloneqq \mu( \DDDD
		\cap \EEEE ) = \int \xi( D, \phi, \tup a, \placeholder, \DDDD \cap \EEEE 
		) \cdot \psi_{\phi}\params{\tup a}\:\d{\mu_{\phi}}$.
	\item If $\App( D ) = \emptyset$, we let $\stepP( D, \DDDD ) = \iota( D, 
		\DDDD )$ where $\iota$ is the identity kernel.
\end{itemize}
Intuitively, $\stepP( D, \DDDD ) = \Pr( D \stepE \DDDD )$ with the latter
referring to the probability space for the chase step starting in $D$. The
following proposition resolves the main technical obstacle for turning
measurable chase policies and sequential chase trees into Markov processes.

\begin{proposition}\label{pro:seqstepkernel}
	$\stepP$ is a stochastic kernel.
\end{proposition}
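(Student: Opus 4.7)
My plan is to verify the two defining properties of a stochastic kernel separately.

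For the first property, I need to show that $\stepP(D, \placeholder)$ is a probability measure on $(\DD, \DDD)$ for each fixed $D$. When $\App(D) = \emptyset$, $\stepP(D, \placeholder)$ coincides with the identity kernel, which is the Dirac measure at $D$. When $\App(D) \neq \emptyset$, with $\app(D) = (\phi, \tup a)$ giving rise to $\chasestep{D}{(\phi,\tup a)}{(\EEEE, \mu)}$, we have $\stepP(D, \DDDD) = \mu(\DDDD \cap \EEEE)$. This is the canonical extension of the probability measure $\mu$ on $\DDD \restriction_{\EEEE}$ (well-definedness and probability mass $1$ are \cref{lem:seqstepmeas}) to all of $\DDD$, so it is a probability measure on $(\DD, \DDD)$.

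For the second property, fix $\DDDD \in \DDD$. I decompose $\DD$ into the measurable pieces $\DD \setminus \DD_{\App}$ and $A_\phi \coloneqq \app^{-1}(\set{\phi} \times \VV_\phi^{(h)})$ for $\phi \in \G^{\exists}$; measurability of these pieces follows from measurability of $\DD_{\App}$ (used before \cref{lem:App-meas}) and of $\app$ (\cref{cor:app}). On $\DD \setminus \DD_{\App}$ the function $D \mapsto \stepP(D, \DDDD)$ equals $\mathbbm{1}_{\DDDD}$, which is measurable. On each $A_\phi$, writing $a \from A_\phi \to \VV_\phi^{(h)}$ for the (measurable) second projection of $\app \restriction_{A_\phi}$, we obtain
\[
	\stepP(D, \DDDD)
	= \int_{\WW_\phi} \xi(D, \phi, a(D), b, \DDDD) \cdot \psi_\phi\params{a(D)}(b)\:\d{\mu_\phi(b)}\text.
\]
I will show that the integrand is \emph{jointly} measurable in $(D, b) \in A_\phi \times \WW_\phi$, after which Tonelli's theorem (\cref{fac:fubini}) yields measurability of the resulting function of $D$. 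For the indicator factor, joint measurability follows from \cref{cor:xiXimeas} composed with the measurable map $(D, b) \mapsto (D, \phi, a(D), b)$ (the composition being measurable since $\ext$ and $a$ are). For the density factor $(D, b) \mapsto \psi_\phi\params{a(D)}(b)$, joint measurability follows from joint measurability of $\psi_\phi \from \Pi_{\psi_\phi} \times \WW_\phi \to \RR_{\geq 0}$. For deterministic $\phi$ the integral collapses to a single indicator by the convention of \cref{rem:determmeas,rem:determpdist}, and measurability is immediate.

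The principal obstacle is justifying joint measurability of $\psi_\phi$ itself, since the bare definition of parameterized distribution only requires separate measurability in the outcome for each fixed parameter. I will invoke the additional regularity furnished by the standing assumptions on parameterized distributions of interest (in particular, the continuity in the parameter postulated by \cref{fac:gaudardhadwin}), which together with measurability in the outcome amounts to the Carathéodory condition and therefore implies joint measurability in the product $\sigma$-algebra. Everything else is bookkeeping over the measurable partition of $\DD$.
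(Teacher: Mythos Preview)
Your proof is correct, and it follows a genuinely different route from the paper's. Both arguments partition $\DD$ into $\DD\setminus\DD_{\App}$ and the pieces $A_\phi$ and both ultimately rely on the regularity hypotheses of \cref{fac:gaudardhadwin}; the difference lies in which part of that fact is used. The paper invokes the \emph{conclusion} of \cref{fac:gaudardhadwin}---that $\tup a \mapsto P_{\psi_\phi\params{\tup a}}(\BBBB)$ is measurable---to recognize $\kappa_1(\tup a,\BBBB)=\int_{\BBBB}\psi_\phi\params{\tup a}\,\d{\mu_\phi}$ as a stochastic kernel, and then applies the kernel-and-function lemma (\cref{fac:kernelandfunction}) to integrate $\xi$ against it. You instead use only the continuity \emph{hypothesis}~(1) of \cref{fac:gaudardhadwin}, combine it with measurability in the outcome to obtain joint measurability of $\psi_\phi$ via the Carath\'eodory criterion, and then appeal to Tonelli. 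Your route is arguably more elementary (Carath\'eodory is lighter machinery than Gaudard--Hadwin), whereas the paper's route packages the measurability check into an off-the-shelf kernel lemma and avoids discussing joint measurability of the density at all.

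Two minor citation remarks: your reference to \cref{cor:xiXimeas} for the indicator factor is slightly off---that corollary treats $\xi$ with an instance $D'$ in the last slot, whereas here you need $\xi(\,\cdot\,,\phi,\,\cdot\,,\,\cdot\,,\DDDD)$ for a fixed set $\DDDD$, which is simply $\mathbbm 1_{\ext^{-1}(\DDDD)}$ on the $\phi$-section; the right reference is \cref{pro:extExtmeas} (as your parenthetical ``since $\ext$ and $a$ are'' already suggests). Likewise, the measurability of the partial integral you need is stated in the paper as \cref{fac:measint} rather than \cref{fac:fubini}.
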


\begin{proof}
	Clearly, $\stepP( D, \placeholder )$ is a probability measure for all $D \in
	\DD$. The complicated part of the proof is establishing that $\stepP( 
	\placeholder, \DDDD )$ is $( \DDD_{ \phi }, \Borel[0,1] )$-measurable for
	all $\DDDD \in \DDD$.

	Let $\DDDD \in \DDD$ be fixed. It suffices to show for all rules $\phi$ that 
	$\stepP( \placeholder, \DDDD )$ is $(\DDD_\phi, \Borel[0,1])$-measurable
	where $( \DD_\phi, \DDD_\phi )$ is the restriction of $( \DD, \DDD )$ to
	the instances $D$ with $\App( D ) \neq \emptyset$ and $\app( D ) = ( \phi,
	\tup a )$ for some $\tup a \in \VV_\phi^{(h)}$. Note that the corresponding 
	statement for the restriction to the set of instances $D$ with $\App(D) = 
	\emptyset$ clearly holds, as $\stepP$ is the identity kernel in this case.

	Thus, let $\phi$ be a fixed rule of $\G^{ \exists }$. Without restriction
	(see the discussion below \cref{def:seqstep}), we always treat $\phi$
	as an existential rule. By \cref{fac:gaudardhadwin}, the function $\kappa_1
	\from \VV_\phi^{(h)} \times \WWW_\phi \to [0, 1]$ with
	\[
		\kappa_1 ( \tup a, \BBBB ) \coloneqq 
		\int_{ \BBBB } \psi_{\phi}\params{ \tup a } \:\d{\mu_\phi}
	\]
	is a stochastic kernel from $\VV_\phi^{(h)}$ to $\WW_\phi$. Thus, the function
	$\kappa_2 \from ( \DD_\phi \times \VV_\phi^{(h)} ) \times \WWW_\phi \to [0
	,1 ]$ with
	\[
		\kappa_2( D, \tup a, \BBBB ) \coloneqq \kappa_1( \tup a, \BBBB )
	\]
	is a stochastic kernel from $\DD_\phi \times \VV_\phi^{(h)}$ to $\WW_\phi$.
	The function $\xi( \placeholder, \phi, \placeholder, \placeholder, \DDDD )
	\from \DD_\phi \times \VV_\phi^{(h)} \times \WW_\phi \to \set{ 0 ,1 }$ with
	\[
		\xi( \placeholder, \phi, \placeholder, \placeholder, \DDDD ) \with
		( D, \tup a, b ) \mapsto \xi( D, \phi, \tup a, b, \DDDD )
	\]
	is $(\DDD_\phi \otimes \VVV_\phi^{(h)} \otimes \WWW_\phi, 
	\Borel[0,1])$-measurable, as it is the characteristic function of the
	$\phi$-section of $\ext^{-1}( \DDDD )$, and $\ext^{-1}( \DDDD ) \in 
	\DDD_\phi \otimes \VVV_\phi^{(h)} \otimes \WWW_\phi$ by \cref{pro:extExtmeas}.
	Using \cref{fac:kernelandfunction} for $\xi( \placeholder, \phi,
	\placeholder, \placeholder, \DDDD )$ and $\kappa_2$, the function
	\[
		g \from \DD_\phi \times \VV_\phi^{(h)} \to [0, 1] \with
		 ( D, \tup a ) \mapsto
		 \int_{ \WW_\phi } \xi( D, \phi, \tup a, \placeholder, \DDDD ) \cdot
		 \psi_\phi\params{\tup a }\:\d{\mu_\phi}
	\]
	is $( \DDD_\phi \otimes \VVV_\phi^{(h)}, \Borel[0,1] )$-measurable. Observe that
	the function $h \from \DD_\phi \to \DD_\phi \times \VV_\phi^{(h)}$ with
	\[
		h( D ) \coloneqq ( D, \tup a )
	\]
	is $( \DDD_\phi, \DDD_\phi \otimes \VVV_\phi^{(h)})$-measurable by the 
	measurability of $\app$, and using \cref{fac:families}. Thus, for all $x \in
	[0, 1]$, it follows that
	\[
		\stepP( \placeholder, \DDDD )^{ -1 }[ 0, x )
		= \set{ D \in \DD \with \stepP( D , \DDDD ) < x }
		= \big( h^{-1} \after g^{-1} \big) [ 0, x ) \in \DDD\text,
	\]
	entailing the claim.\qedhere
\end{proof}

\Cref{pro:seqstepkernel} directly implies the following by Kolmogorov's
existence theorem (\cref{fac:kolmogorov}).

\begin{corollary}\label{cor:seqmarkov}
	Let $\app$ be a measurable chase policy and let $(\DD, \DDD, P)$ be a
	sub-probabilistic database. Then there exists a Markov process with state
	space $( \DD, \DDD )$, with initial distribution $P$ and with transition
	kernels $\stepP$.
\end{corollary}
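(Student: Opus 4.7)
The proof is essentially a direct appeal to Kolmogorov's existence theorem (\cref{fac:kolmogorov}), with all the genuine work having been carried out in \cref{pro:seqstepkernel}. My plan is therefore to assemble the ingredients that Kolmogorov's theorem requires and verify that each one is already in place.

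First, I would note that $(\DD,\DDD)$ is standard Borel by \cref{pro:pdbstandardborel}, so the state space is of the kind required. Next, I would invoke \cref{pro:seqstepkernel} to obtain that $\stepP \from \DD \times \DDD \to [0,1]$ is a stochastic kernel on $(\DD,\DDD)$. Setting $\kappa_i \coloneqq \stepP$ for every $i \in \NN$ yields the required sequence of transition kernels (all equal, since our semantics is time-homogeneous).

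If $P$ is already a probability measure on $(\DD,\DDD)$, then \cref{fac:kolmogorov} directly produces a Markov process on $(\DD^{\omega},\DDD^{\otimes\omega})$ with initial distribution $P$ and transition kernels $\stepP$, which is what is claimed. The one small wrinkle is the sub-probabilistic case: Kolmogorov's theorem is stated for probability measures. I would handle this exactly as in the preliminaries (cf.\ the discussion around $\DD_{\Error}$ at the end of \cref{sec:pdb}): extend $P$ to a probability measure $\hat P$ on $(\DD_\Error,\DDD_\Error)$ by placing the missing mass on $\Error$, and extend $\stepP$ to a stochastic kernel $\hat\kappa$ on $(\DD_\Error,\DDD_\Error)$ by declaring $\Error$ absorbing, i.e.\ $\hat\kappa(\Error,\placeholder) = \delta_\Error$. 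Both extensions are trivially measurable, and applying Kolmogorov on the extended state space yields the desired Markov process; restricting to $\DD$ recovers the sub-probabilistic semantics.

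I do not anticipate any real obstacle, since the one nontrivial measurability statement (that $\stepP(\placeholder,\DDDD)$ is measurable for every $\DDDD\in\DDD$) has already been proved in \cref{pro:seqstepkernel}, and the standard Borel property of the state space is available from \cref{pro:pdbstandardborel}. The corollary is essentially a bookkeeping step that packages those two facts into the language of Markov processes via \cref{fac:kolmogorov}.
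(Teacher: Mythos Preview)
Your proposal is correct and matches the paper's approach exactly: the paper simply states that \cref{pro:seqstepkernel} directly implies the corollary via Kolmogorov's existence theorem (\cref{fac:kolmogorov}), and your argument spells out precisely those ingredients. Your additional remark on handling the sub-probabilistic case via the $\Error$-extension is a sensible elaboration that the paper leaves implicit.
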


Note that every sub-probability measure $\vec P$ on the path space
$( \DD^{ \omega }, \DDD^{ \otimes \omega } )$ defines a
push-forward sub-probability measure
$\smash{\vec P \after \liminst_{\app}^{ -1 }}$ (or
$\smash{\vec P \after \liminst_{\app,D_{\IN}}^{-1}}$, respectively) on
$( \DD_{ \Error }, \DDD_{ \Error } )$. Thus, every Markov process like
in \cref{cor:seqmarkov} defines an output sub-probabilistic
database. The semantics of our $\GDL$ program $\G$, finally, is said
output.

\begin{theorem}\label{thm:seqspdb}
	Let\/ $\app$ be a measurable chase policy.
	\begin{enumerate}
		\item For all $D_{ \IN } \in \DD_{ \IN }$, the program $\G$ on input 
			$D_{\IN}$ defines a sub-probabilistic database $\G_{\app} (D_{\IN})$
			with respect to $\app$.
		\item For all sub-probabilistic databases $\D = ( \DD_{\IN}, \DDD_{\IN}, 
			P )$, the program $\G$ on input $\D$ defines a sub-probabilistic
			database $\G_{\app}( \D )$ with respect to $\app$.\qedhere
	\end{enumerate}
\end{theorem}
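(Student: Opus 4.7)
The proof is essentially the formal assembly of the discussion immediately preceding the theorem; the hard technical work has already been carried out in \cref{pro:seqstepkernel} (that $\stepP$ is a stochastic kernel) and \cref{lem:seqlimbim} (bimeasurability of $\liminst$).

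For (1), the plan is to apply \cref{cor:seqmarkov} with initial distribution $\delta_{D_{\IN}}$ on $(\DD, \DDD)$ -- a probability measure because singletons are measurable in the standard Borel space $(\DD, \DDD)$. This yields a Markov-process law $\vec P_{D_{\IN}}$ on the path space $(\DD^{\omega}, \DDD^{\otimes \omega})$. Because each $\stepP(D, \placeholder)$ is supported on the $\stepE$-successors of $D$, a routine induction on finite-dimensional marginals shows $\vec P_{D_{\IN}}(\paths(\app, D_{\IN})) = 1$, where measurability of $\paths(\app, D_{\IN})$ comes from \cref{cor:seqpaths}. Pushing $\vec P_{D_{\IN}}$ forward along the measurable map $\liminst_{\app, D_{\IN}}$ from \cref{lem:seqlimbim} yields a probability measure on $(\DD_{\Error}, \DDD_{\Error})$, which, by the bijective correspondence between probability measures on $(\DD_{\Error}, \DDD_{\Error})$ and sub-probabilistic databases on $(\DD, \DDD)$ recalled in \cref{sec:pdb}, encodes the desired sub-probabilistic database $\G_{\app}(D_{\IN})$.

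For (2), I proceed analogously: invoke \cref{cor:seqmarkov} directly with the sub-probability measure $P$ as initial distribution, obtaining a Markov-process law $\vec P$ on the path space that is itself a sub-probability measure. Pushing $\vec P$ forward along the general map $\liminst_{\app}$ -- again measurable by \cref{lem:seqlimbim} -- gives a sub-probability measure on $(\DD_{\Error}, \DDD_{\Error})$ and hence the required $\G_{\app}(\D)$. The only minor obstacle here is that Kolmogorov's existence theorem strictly speaking wants a probability measure as its initial distribution. This is handled by the standard cemetery-state extension, turning $P$ into a probability measure on the augmented space $(\DD_{\Error}, \DDD_{\Error})$ by assigning mass $1 - P(\DD_{\IN})$ to $\Error$ and extending $\stepP$ so that $\Error$ is absorbing; all measurability requirements are preserved under this extension, and the resulting push-forward agrees with $\G_{\app}(\D)$ under the same probability/sub-probability correspondence used in (1).
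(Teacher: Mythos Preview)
Your proposal is correct and follows essentially the same route as the paper: apply \cref{cor:seqmarkov} with the Dirac measure at $D_{\IN}$ (for part~1) or with the given sub-probability measure $P$ (for part~2), and then push the resulting path-space law forward along $\liminst_{\app}$ via \cref{lem:seqlimbim}. Your remark about the cemetery-state extension to reconcile sub-probability initial distributions with Kolmogorov's existence theorem is a welcome bit of extra care that the paper leaves implicit in the statement of \cref{cor:seqmarkov}.
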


For the first part of the above theorem, we let the initial distribution of
\cref{cor:seqmarkov} be the Dirac one on the instance $D_0$.%
\footnote{For the initial distributions, just a measure is needed. In
particular, the problems of the Dirac distribution with respect to
\cref{fac:gaudardhadwin} pointed out in \cref{ssec:paramdist} are irrelevant
here.}
For the second part, the initial distribution is the sub-probability
distribution of the input sub-probabilistic database. Note that even if the
input $\D$ is a probabilistic database (i.\,e. $P$ has total mass $1$), the
output $D_{\app}$ may be a proper sub-probabilistic database.

\begin{remark}\label{rem:endproj}
	In the end, we might want to get rid of the auxiliary relations that
	were created in the translation to the \EDL{} program. This can be done in a
	measurable way by a relational algebra view (cf.
	\cref{fac:measurableviews}), yielding again a sub-probabilistic database.
\end{remark}

\section{Parallel Probabilistic Chase}\label{sec:parchase}

We obtain another variant of the chase procedure if we allow all applicable
rules to fire simultaneously. This notion of parallel chase is of interest as
it is not depending on having a measurable chase policy at hand. 

\begin{remark}[continues=rem:independenceass]
Recall our discussion of independence assumptions and the Markov property from
the beginning of \cref{sec:seqchase}. As there, the parallel chase has
independent sampling in the absence of logical dependencies. This is, in
particular, the case for multiple rules firing in parallel in a single parallel
step.
\end{remark}

In this section, we again fix a $\GDL$ program $\G$, with its $\EDL$ version
$\G^{\exists}$, and assume that $\G^{\exists} = \set{ \phi_1, \dots, \phi_k }$.

\subsubsection*{Structure of this section}
In the following, we introduce the parallel chase for the \GDL{} language
(\cref{ssec:parstep}) and construct a Markov process (\cref{ssec:parmarkov}) as
in the sequential case. Most of the definitions and results are modest
extensions of their counterparts in \cref{sec:seqchase}, thus allowing a
briefer presentation.

\subsection{Parallel Chase Steps and the Parallel Chase Tree}
\label{ssec:parstep}

If $D$ is a database instance, the \emph{firing configuration} of $D$ is the
tuple $\vec\ell(D) = (\ell_1,\dots,\ell_k)\in\NN^k$ where $\ell_i = \lvert
\set{ \vec a \with (\hat\phi_i,\tup a)\in\App(D) } \rvert$ for all $1 \leq i
\leq k$. Note that the set $\DD_{\vec\ell}$ of database instances having a 
fixed firing configuration $\vec\ell$ is measurable in $(\DD,\DDD)$, since we
know that $\App$ corresponds to an Relational Calculus view and the 
cardinalities in question can be obtained by a counting aggregation. This 
yields a measurable mapping by \cref{fac:measurableviews}.

\begin{definition}[Parallel Chase Step]\label{def:parstep}
	A \emph{parallel chase step} for $\G$ is a tuple $(D, \App( D ), \EEEE, \mu 
	)$ where
	\begin{itemize}
		\item $D$ is a database instance in $\DD_{\App}$, say with firing 
			configuration $\vec \ell = (\ell_1, \dots, \ell_k )$ such that
			\[
				\App( D ) = 
				\set[\big]{ 
					( \phi_i, \vec a_{ij_i} ) \with
					1 \leq i \leq k \text{ and } 1 \leq j_i \leq \ell_i
				}\text.
			\]
		\item $\EEEE$ is the event
			\[
				\EEEE = 
				\Ext_{ \vec \ell }
				\big( 
					D, \XXXX_{11}, \dots, \XXXX_{ 1\ell_1 }, \dots,
					\XXXX_{k1}, \dots, \XXXX_{ k\ell_k }
				\big)
			\]
			with $\XXXX_{ij_i} = \set[\big]{ ( \phi_i, \vec a_{ ij_i }, b ) \with
			b \in \WW_{ \phi_i } }$, and
		\item $\mu$ is the probability measure on $\DDD \restriction_{ \EEEE }$
			that is defined by
			\begin{equation}\label{eq:parstepmeas}
				\mu( \DDDD )
				= \int_{ \WW } \Xi_{ \vec\ell }
				\big( 
					D, \phi_1, \vec a_{ 11 }, b_{ 11 }, \dots, 
					\phi_k, \vec a_{k\ell_k}, b_{k\ell_k}, \DDDD 
				\big) \cdot \prod_{ i = 1 }^{ k } \prod_{ j_i = 1 }^{ \ell_i }
				\psi_i\params{ \vec a_{ij_i} } (b_{ij_i})
				\:\d{\mu^{\otimes}}
			\end{equation}
			for all measurable $\DDDD \subseteq \EEEE$, where $\psi_i$ is the
			parameterized distribution in the rule of $\G$ that $\phi_i$ originated 
			from, $\WW = \prod_{ i = 1 }^{ k } \WW_{ \phi_i }^{ \ell_i }$ and
			$\mu^{\otimes}$ is the product measure $\mu^{\otimes} = \bigotimes_{ 
			i = 1 }^{ k }\mu_{ \psi_i }^{ \otimes\ell_i }$ on $\WW$ with 
			$\mu_{\psi_i}$ being the measure underlying the parameterized
			distribution $\psi_i$ (cf. \cref{ssec:paramdist}).\qedhere
	\end{itemize}
      \end{definition}

Note that as before (see \cref{ssec:seqstep}), we use a dummy integration 
for non-existential rules to enable the unified expression of
\labelcref{eq:parstepmeas}. Then, we again concentrate on the existential rules
and interpret the deterministic ones as special cases (cf.\
\cref{rem:determpdist}).

\begin{remark}
	Note that by definition of $\Ext_{\vec\ell}$, the case where multiple
	deterministic rules with the same left-hand sides are applicable, is nicely
	resolved, and the corresponding fact is only added once in the extension, 
	i.e., the results from applications of deterministic \mbox{rules may collapse}.

	Recall that it may well happen that $\App(D)$ contains multiple pairs with
	first component an existential rule $\phi$, for example $(\phi,\tup a)$ and
	$(\phi,\tup a')$. However, for any related sample outcomes $b$ and $b'$ it
	holds that $f_{\phi}(\tup a,b) \neq f_{\phi}(\tup a',b')$ because $\tup a
	\neq \tup a'$. That is, if $\phi_i$ is existential, any follow-up instance
	will contain $\ell_i$ new facts with the relation symbol from the head of 
	$\phi$. In particular, the results from applications of existential rules
	do not collapse.
\end{remark}

\begin{lemma}\label{lem:parstepmeas}
	The function $\mu$ from \cref{def:parstep} is well-defined, and a
	probability measure on $\DDD\restriction_{\EEEE}$.
\end{lemma}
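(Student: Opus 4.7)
The plan is to adapt the proof of \cref{lem:seqstepmeas} to the parallel setting, with the integration now taking place over the product sample space $\WW = \prod_{i=1}^k \WW_{\phi_i}^{\ell_i}$. First I would verify that the ambient set $\EEEE$ is indeed measurable in $\DDD$; this follows directly from the second part of \cref{lem:followupsmeas} applied with the firing configuration $\vec\ell$ and the pairs $(\phi_i,\vec a_{ij_i})$ supplied by $\App(D)$. Thus $\DDD \restriction_\EEEE$ is a well-defined $\sigma$-algebra on which $\mu$ can be considered.

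Next I would establish that the integrand in \labelcref{eq:parstepmeas} is measurable with respect to the product $\sigma$-algebra on $\WW$ induced by $\mu^{\otimes}$. For a fixed $\DDDD$, the function $(b_{ij_i})_{i,j_i} \mapsto \Xi_{\vec\ell}(D, \phi_1, \vec a_{11}, b_{11}, \dots, \DDDD)$ is (the restriction to the appropriate section of) a measurable function by \cref{cor:xiXimeas}, while each factor $\psi_i\params{\vec a_{ij_i}}(b_{ij_i})$ is measurable in $b_{ij_i}$ by the definition of a parameterized distribution (and trivially measurable in the remaining coordinates). The product of real-valued measurable functions is measurable, so the integrand is measurable on $(\WW, \bigotimes_i \WWW_{\phi_i}^{\otimes\ell_i})$, and hence $\mu(\DDDD)$ is well-defined as a Lebesgue integral of a nonnegative function.

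To check that $\mu$ is a measure, the key properties are $\mu(\emptyset)=0$ (which holds since $\Xi_{\vec\ell}(\dots,\emptyset)\equiv 0$) and countable additivity. For pairwise disjoint $\DDDD_1,\DDDD_2,\dots \subseteq \EEEE$, I would use that $\Xi_{\vec\ell}(\dots,\bigcup_n \DDDD_n) = \sum_n \Xi_{\vec\ell}(\dots,\DDDD_n)$ as an identity of $\set{0,1}$-valued functions on disjoint sets, and then interchange sum and integral by the monotone convergence theorem. To see that $\mu$ is a \emph{probability} measure, I would evaluate $\mu(\EEEE)$: here $\Xi_{\vec\ell}(\dots,\EEEE)$ is identically $1$ on $\WW$ by construction of $\EEEE$, so the integral reduces to $\int_{\WW} \prod_{i,j_i} \psi_i\params{\vec a_{ij_i}}(b_{ij_i}) \:\d{\mu^{\otimes}}$. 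An application of Fubini's theorem (\cref{fac:fubini}), which is permitted because each $\mu_{\psi_i}$ is $\sigma$-finite, factors this into $\prod_{i,j_i}\int_{\WW_{\phi_i}} \psi_i\params{\vec a_{ij_i}}\:\d{\mu_{\psi_i}}$, and each factor equals $1$ by \labelcref{eq:paramdist_prob}.

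The only substantive obstacle, compared to the sequential case, is the product-space measurability of the integrand; once this is dispatched via \cref{cor:xiXimeas} and the product structure of $\WWW$, the Fubini factorization makes normalization essentially automatic. Deterministic rules are absorbed uniformly into this computation using the dummy measure spaces from \cref{rem:determmeas}, since those factors contribute $1\cdot 1 = 1$ to the product and do not affect the measurability argument.
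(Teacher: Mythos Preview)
Your proposal is correct and follows essentially the same approach as the paper's proof, which is a brief sketch pointing to the analogy with \cref{lem:seqstepmeas}: well-definedness from the measurability of $\Xi_{\vec\ell}$ and the $\psi_i$, and total mass $1$ from the product-density factorization in the case $\DDDD=\EEEE$. You have simply made explicit the details the paper leaves implicit (the appeal to \cref{lem:followupsmeas} for $\EEEE\in\DDD$, countable additivity via monotone convergence, and the Fubini step).
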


\begin{proof}
	This can be shown analogously to \cref{lem:seqstepmeas}. That the integral 
	is well-defined follows from the measurability of $\Xi_{ \vec \ell }$ and
	the measurability of the $\psi_i$. For the integration, we use a product
	density of our parameterized distributions. Thus, in the case $\EEEE = 
	\DDDD$, \labelcref{eq:parstepmeas} collapses to $\int \prod_i \prod_{j_i}
	\psi_i\params{\vec a_{ij_j}} \:\d{\mu^{\otimes}} = 1$ and it follows that
	$\mu$ is a probability measure.
\end{proof}

The use of the product density in \labelcref{eq:parstepmeas} stipulates the 
independence assumption we discussed before---all probabilistic rules that fire
together in a parallel chase step do so independently. Note that by
\cref{fac:fubini} and the definition of $\Xi_{\vec\ell}$, the concrete order of
the tuples $(\phi_i, \vec a_{ij}, b_{ij} )$ has no impact on $\mu$ whatsoever.
Also note that for firing configurations $\vec \ell$ with $\ell_i = 0$ for all
but one $i_0$ and $\ell_{i_0} = 1$, \labelcref{eq:parstepmeas} coincides with
\labelcref{eq:seqstepmeas}. \Cref{fig:parstep} contains an illustration of a
parallel chase step.

\begin{figure}[H]\centering%
	\begin{tikzpicture}[level distance=3.5cm, sibling distance=2cm]
		\node[inner,label={above:$D$}] (D) {}
			child { node (A1) {} edge from parent[draw=none] }
			child { node (A2) {} edge from parent[draw=none] }
			child { node[inner] (A3) {} edge from parent[draw=none] }
			child { node (A4) {} edge from parent[draw=none] };

		\begin{scope}[on background layer]
			\path[fill=white!98!black,draw,thick] 
				(D.center) to (A1.center) to (A4.center) to (D.center);
		\end{scope}

		\node[below=2ex of A3] (A3L) 
			{$\mathllap{\Ext(D, \phi_1, \tup a_{11}, b_{11}, \dots, \phi_k, \tup a_{k\ell_k}, b_{k\ell_k}) ={}} 
				D' \mathrlap{{}\in\EEEE}$};
			\draw[b50,<-,shorten >=2pt,inner sep=1pt] (A3L) to (A3);

		\draw (D.center) to (A1.center);
		\draw (D.center) to (A4.center);
		\draw[-{Stealth[round,width=8pt,length=6pt]},thick,double] (D.center) to 
			node[xshift=-.85cm,yshift=0cm] {$b_{11}, \dots b_{k\ell_k}$} (A3);
		\draw (A1.center) to (A4.center);
		\node[inner] at (D) {};

		\node[left=0cm of A1] {$\EEEE\colon$};
		\node[right=0cm of A4] {$\phantom{\EEEE\colon}$};

		\begin{scope}[on background layer]
			\draw[fill=b10,draw=b20,very thick] plot[smooth] coordinates 
				{(A1.center) (-.6,-2.5) (.3,-2.7) (.9,-2.6) (A4.center)};
			\clip ([xshift=-1cm]A2.center) rectangle ([shift={(1,1.5)}]A2.center);
			\draw[fill=b20,draw=b50,very thick] 
				plot[smooth] coordinates
				{(A1.center) (-.6,-2.5) (.3,-2.7) (.9,-2.6) (A4.center)};
			\path[clip] plot[smooth] coordinates 
				{(A1.center) (-.6,-2.5) (.3,-2.7) (.9,-2.6) (A4.center)};
			
			\draw[b50,very thick] 
				([xshift=-.98cm]A2.center) to ([shift={(-.98,1)}]A2.center);
			\draw[b50,very thick] 
				([xshift=.98cm]A2.center) to ([shift={(.98,1)}]A2.center);
		\end{scope}

		\node[text=b20] at (1.3,-2.45) {$\mu$};
		\node[text=b50] at (-1.4,-2.4) {$\mu(\DDDD)$};
		\draw[thick] (0,0) +(-165:.3) arc (-165:-15:.3);
	\end{tikzpicture}
	\caption{Illustration of a parallel chase step, continuous case.}
	\label{fig:parstep}
\end{figure}

We denote a parallel chase step $(D, \App(D), \EEEE, \mu )$ as
\[
	\pchasestep{ D }{ \App(D) }{ ( \EEEE, \mu ) }\text.
\]
Note that a parallel chase step is already determined by $D$ (and $\App$)
alone. That is, if there exists a parallel chase step starting in $D$, it is 
unique. This is in contrast to the sequential chase step that additionally 
depends on a measurable chase policy $\app$ of $\App$.  In the following, we
adapt the definitions from \cref{ssec:seqstep,ssec:seqpaths,ssec:limit} to the
parallel setting. For the definition of a chase tree, all we have to do is take
the definition of the sequential chase tree, and replace the sequential chase
steps with parallel ones.

\begin{definition}[Parallel Chase Tree]
	Let $D_{ \IN } \in \DD_{ \IN }$. The \emph{parallel chase tree} $T_{ \App,
	D_{ \IN } }$ for input instance $D_{ \IN }$  with respect to the $\GDL$
	program $\G$ is a labeled countable-depth tree $T_{ \App, D_{ \IN } } = ( V,
	E, \Lambda )$ with labeling function $\Lambda$, root node $r \in V$, and the
	following properties.
	\begin{enumerate}
		\item The root node $r$ is labeled with $D_{ \IN }$.
		\item If $v \in V$ is a leaf node, then it is labeled with an instance
			$D_v$ such that $\App( D_v ) = \emptyset$.
		\item If $v \in V$ is an inner node, then it is labeled with an instance
			$D_v$ such that $\App( D_v ) \neq \emptyset$ and
			\begin{enumerate}
				\item $\pchasestep{ D_v }{ \App(D_v) }{ (\EEEE_v, \mu_v) }$ is a
					chase step with $( \EEEE_v, \mu_v )$ as in \cref{def:parstep},
					and
				\item the function $v' \to D_{v'}$ is a bijection between the
					children of $v$ and $\EEEE_v$.
			\end{enumerate}
	\end{enumerate}
\end{definition}

For all $D_{ \IN } \in \DD_{ \IN }$, the parallel chase tree $T_{ \App, D_{\IN}
}$ is unique. As for the sequential chase tree, $T_{ \App, D_{ \IN } }$ has 
paths of at most countable length but may contain nodes with uncountably many
children. Again, the tree is labeled injectively. This can be shown just like
\cref{lem:seqinj}.

\begin{lemma}\label{lem:parinj}
	Let $D_{ \IN } \in \DD_{ \IN }$. Then $v \neq w$ implies $D_v \neq D_w$ for
	all $v \neq w$ in $T_{ \App, D_{\IN} }$.
\end{lemma}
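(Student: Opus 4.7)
The plan is to mirror the proof of \cref{lem:seqinj} almost verbatim, with only one extra argument to extract a single rule responsible for the discrepancy out of a potentially simultaneous firing of many rules.

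I would proceed by contradiction: assume there exist $v \neq w$ in $T_{\App,D_{\IN}}$ with $D_v = D_w$, and let $u$ be their least common ancestor. Then $u$ has two distinct children $v'$ and $w'$ on the paths from $u$ to $v$ and $w$, respectively. Since $v' \neq w'$ and the children of $u$ are in bijection with $\EEEE_u$, we have $D_{v'} \neq D_{w'}$. By the definition of a parallel chase step with firing configuration $\tup\ell = \tup\ell(D_u)$ and $\App(D_u) = \set{(\phi_i,\tup a_{ij_i}) \with 1\le i\le k,\,1\le j_i \le \ell_i}$, both $D_{v'}$ and $D_{w'}$ are of the form $D_u \cup F$, where $F$ is the union of the facts $f_{\phi_i}(\tup a_{ij_i},b_{ij_i})$ produced by all applicable pairs. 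Since deterministic rules contribute the same fact independently of the (dummy) sample outcome, the symmetric difference of the two fact sets added at $v'$ and at $w'$ can only involve facts coming from existential rules.

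Hence there must exist indices $i,j_i$ with $\phi_i$ existential and corresponding sample outcomes $b^{v'}_{ij_i} \neq b^{w'}_{ij_i}$. Writing $f_{v'} \coloneqq f_{\phi_i}(\tup a_{ij_i},b^{v'}_{ij_i})$ and $f_{w'} \coloneqq f_{\phi_i}(\tup a_{ij_i},b^{w'}_{ij_i})$, both are $R$-facts (where $R$ is the head relation of $\phi_i$) that agree on the first $\ar(R)-1$ attributes (namely on $\tup a_{ij_i}$) but disagree on the last one. Because $F_{v'} \subseteq D_{v'} \subseteq D_v$ and $F_{w'} \subseteq D_{w'} \subseteq D_w$ and $D_v = D_w$, we find $\set{f_{v'},f_{w'}} \subseteq D_v$. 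This contradicts $\FD(\phi_i)$, which by \cref{lem:fd} is satisfied by every instance labeling a node of the chase tree.

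The only genuinely new step over the sequential case is the extraction of a single existential rule $\phi_i$ responsible for the disagreement between the two sibling instances; I expect this to be the main (and essentially only) obstacle, and it is handled by observing that deterministic contributions coincide across siblings. Once that rule is identified, the functional-dependency argument from \cref{lem:seqinj} applies verbatim.
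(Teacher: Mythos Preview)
Your proposal is correct and follows exactly the approach the paper intends: the paper does not spell out a proof for this lemma but simply remarks that it ``can be shown just like \cref{lem:seqinj}''. Your adaptation—identifying a single existential rule responsible for the discrepancy between the two sibling instances and then invoking \cref{lem:fd}—is precisely the argument needed, and the observation that deterministic rules cannot contribute to the symmetric difference is the only genuinely new ingredient over the sequential case.
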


We proceed with the introduction of parallel versions of the various relations
and functions encountered in \cref{ssec:seqpaths,ssec:limit}. First, we define a
parallel version of the relation $\stepE$ from \cref{ssec:seqpaths}. Our new
relation $\mathord{\StepE} \subseteq \DD^2$, again denoted in infix notation,
is defined just like $\mathord{\stepE}$, except that the sequential chase step
is once more replaced by a parallel one:
\begin{itemize}
	\item If $\App( D ) = \emptyset$, then $D \StepE D'$ if and only if $D =
		D'$.
	\item If $\App( D ) \neq \emptyset$, there exists a unique parallel chase
		step $\smash{\pchasestep{D}{ \App(D) }{ ( \EEEE, \mu ) }}$ starting in 
		$D$. In this case, $D \StepE D'$ if and only if $D' \in \EEEE$.
\end{itemize}
We let
\begin{align*}
	\paths( \App ) 
	& \coloneqq \set[\big]{ 
		( D_0, D_1, \dots ) \in \DD^{ \omega } \with
		D_i \StepE D_{i+1} \text{ for all } i \in \NN 
	} \text{ and}\\
	\paths( \App, D_{\IN} )
	& \coloneqq \set[\big]{
		( D_0, D_1, \dots ) \in \paths( \App ) \with D_0 = \D_{ \IN }
	}\text.
\end{align*}

Again, $\paths( \App, D_{ \IN })$ corresponds to the paths in $T_{\App,D_0}$ 
where finite paths are continued by repeating the last instance label
infinitely often. We keep using the notions of \emph{terminating} paths and
paths \emph{terminating at position $i$} from \cref{ssec:seqpaths}.

\bigskip

To map paths to instances in the parallel setting, we define $\liminst_{\App}
\from \DD^{ \omega } \to \DD_{ \Error }$ by
\begin{equation}\label{eq:limApp}
	\vec D \mapsto
	\begin{cases}
		D_i
		& \text{if } \vec D \in \paths( \App ) \text{ and } \vec D \text{ is
			terminating at position } i\text,\\
		\Error
		& \text{otherwise.}
	\end{cases}
\end{equation}
Again, we define a version $\liminst_{\App,D_{\IN}}$ of $\liminst_{\App}$ where
using $\paths( \App, D_{ \IN } )$ instead of $\paths(\App)$ in 
\labelcref{eq:limApp}.

The relations, sets and functions we just defined enjoy properties analogous to
their sequential counterparts (cf. 
\cref{lem:stepE,cor:seqpaths,lem:seqlimbim}).

\begin{lemma}\label{lem:StepE}\leavevmode%
  \begin{enumerate}
  \item \label{itm:StepE1}
    The relation $\mathord{\StepE} \subseteq \DD^2$ is measurable in 
    $\DDD \times \DDD$.
  \item The sets\/ $\paths( \App )$ and\/ $\paths( \App, D_{\IN} )$ are 
    measurable in $( \DD^{ \omega }, \DDD^{ \otimes\omega } )$.
  \item Both $\liminst_{ \App }$ and $\liminst_{ \App, D_{\IN} }$ are
    bimeasurable.\qedhere
  \end{enumerate}
\end{lemma}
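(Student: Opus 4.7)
The plan is to prove the three parts in order, each by adapting the template of the sequential counterparts in \cref{lem:stepE}, \cref{cor:seqpaths}, and \cref{lem:seqlimbim}.

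For part~\labelcref{itm:StepE1}, I would stratify $\StepE$ by the firing configuration. The contribution from the terminated case, $\diag(\DD^2)\cap\big((\DD\setminus\DD_{\App})\times\DD\big)$, is clearly in $\DDD\otimes\DDD$. On $\DD_{\App}$, I decompose along the countable family of sets $\DD_{\vec\ell}$ for $\vec\ell\in\NN^k$. Each $\DD_{\vec\ell}$ is measurable because $\App$ is expressible as a relational algebra view and the cardinalities defining $\vec\ell$ arise from counting aggregates, so \cref{fac:measurableviews} applies. For $D\in\DD_{\vec\ell}$, the condition $D\StepE D'$ holds iff there exist an enumeration of $\App(D)$ and sample outcomes $b_{ij_i}\in\WW_{\phi_i}$ with $D'=\Ext_{\vec\ell}(D,\phi_1,\tup a_{11},b_{11},\dots,\phi_k,\tup a_{k\ell_k},b_{k\ell_k})$. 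Measurability of this condition follows by the same Polish-metric approximation as in the proof of \cref{lem:stepE}: fix compatible Polish metrics and countable dense subsets of the $\VV_{\phi_i}^{(h)}$ and of $\FF$, and express the existence of a realizing tuple as a countable intersection over shrinking radii $\epsilon=1/n$ of counting events, invoking the measurability of $\Ext_{\vec\ell}$ from \cref{pro:extExtmeas}. Taking the countable union over all $\vec\ell\in\NN^k$ then yields $\mathord\StepE\in\DDD\otimes\DDD$.

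For part~\labelcref{itm:StepE1} (the paths), I would reuse the argument of \cref{cor:seqpaths}: writing $\pi_{i,i+1}\from\DD^\omega\to\DD^2$ for the (measurable) projection to the $i$th and $(i+1)$th coordinates, part~\labelcref{itm:StepE1} gives $\paths(\App)=\bigcap_{i\in\NN}\pi_{i,i+1}^{-1}(\mathord\StepE)\in\DDD^{\otimes\omega}$, and then $\paths(\App,D_{\IN})=\paths(\App)\cap\pi_0^{-1}(\{D_{\IN}\})$, which is measurable since singletons are measurable in the standard Borel space $(\DD,\DDD)$. For bimeasurability of the limit maps, I would imitate \cref{lem:seqlimbim}. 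Measurability is routine: $\liminst_\App^{-1}(\{\Error\})$ is the complement of $\paths(\App)\cap\bigcup_{i\in\NN}T_i$ where $T_i=\{\vec D\in\DD^\omega\with D_j=D_i\text{ for all }j\geq i\}$ is measurable, and for $\DDDD\in\DDD$ one writes $\liminst_\App^{-1}(\DDDD)=\bigcup_i T_i\cap\paths(\App)\cap\pi_i^{-1}(\DDDD)$. For the image direction, \cref{lem:parinj} plays the role that \cref{lem:seqinj} played in the sequential case, giving injectivity of $\liminst_\App$ on $\liminst_\App^{-1}(\DD)$; as in the sequential argument, injectivity extends from a fixed input instance to all of $\DD^\omega$ because any label along a chase path contains its root as a subset. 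Combining this with the standard Borelness of $(\DD,\DDD)$ and $(\DD^\omega,\DDD^{\otimes\omega})$ (\cref{pro:pdbstandardborel}) and \cref{fac:measinjectionimages} then yields bimeasurability.

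The main obstacle is part~\labelcref{itm:StepE1}. In the sequential proof one can case-split on the single rule selected by the measurable policy $\app$, so $\stepE$ depends on only one pair $(\phi,\tup a)$ at a time. In the parallel setting the condition $D\StepE D'$ simultaneously involves \emph{every} pair in $\App(D)$, so the stratification by firing configuration together with the uniform treatment of all fires via $\Ext_{\vec\ell}$ is essential; once \cref{pro:extExtmeas} and the measurability of each $\DD_{\vec\ell}$ are in hand, the remaining metric-approximation bookkeeping goes through exactly as before.
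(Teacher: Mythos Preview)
Your treatment of parts (2) and (3) matches the paper essentially verbatim, and your stratification of part~(1) by the firing configuration $\vec\ell$ is also what the paper does. The difference, and the one step where your sketch is underspecified, is how the pairs in $\App(D)$ are handled measurably once $\vec\ell$ is fixed.

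You propose to approximate the tuples $\tup a_{ij_i}$ directly via dense subsets of the $\VV_{\phi_i}^{(h)}$ and counting events, ``invoking the measurability of $\Ext_{\vec\ell}$.'' But in the sequential proof the single tuple $\tup a$ came from the measurable function $\app$, so ``$\tup a\in B_\epsilon(\tup a_\epsilon)$'' translated to the measurable condition $D\in\app^{-1}(\phi,B_\epsilon(\tup a_\epsilon))$. Here you have no such function enumerating the pairs, and you do not say how ``$\App(D)$ consists precisely of pairs near these fixed approximants'' becomes a measurable condition on $D$; the measurability of $\Ext_{\vec\ell}$ alone does not supply this. The paper fills this gap by iteratively applying Kuratowski--Ryll-Nardzewski on $\DD_{\vec\ell}$: peel off a measurable selection $\app_1$ of $\App$, set $\App_1(D)=\App(D)\setminus\{\app_1(D)\}$, check that $\App_1$ is again a measurable multifunction on its (measurable) domain, and repeat $\sum_i\ell_i$ times. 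This yields an honest measurable enumeration $D\mapsto(\app_1(D),\dots,\app_s(D))$ of $\App(D)$; one then observes that $\Ext_{\vec\ell}$ factors as iterated $\ext$ along this enumeration, so the metric-approximation argument of \cref{lem:stepE} goes through with $s$ balls in place of one. Your direct route could plausibly be completed via \cref{lem:App-meas} and counting events in the output of the $\App$-view, but the measurable-selection trick is what makes the paper's proof a genuine ``easy extension'' of the sequential case rather than a new argument.
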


\begin{proof}
  The proofs are easy extensions of the proofs of 
  \cref{lem:stepE,cor:seqpaths,lem:seqlimbim}.
  \begin{enumerate}
  \item Let $\vec \ell$ be a fixed firing configuration. For the set of
    instances having firing configuration $\vec \ell$, we can decompose
    $\App$ into a sequence of $\sum_{i} \ell_i$ measurable selections as
    follows. We start with $\App_0 \coloneqq \App( D)$ and note that
    $\size{ \App_0( D ) } = \sum_i \ell_i$. As $\App$ is a measurable
    multifunction, it has a measurable selection $\app_1$. We set $\App_1(
    D ) \coloneqq \App( D ) \setminus \set{ \app_1( D ) }$. One can easily show
    that $\set{ D \in \DD \with \App_1( D) \neq \emptyset }$ is
    measurable, and that $\App_1$ is a measurable multifunction on this set
    (as long as it is non-empty). Repeating this process for $j = 1,
    \dots, \sum_i \ell_i$, we obtain functions $\app_1, \dots,
    \app_{\sum_i\ell_i}$. 
    If $(\phi^{(i)},\tup a^{(i)}) = \app_i( D )$, then for all $b^{(i)}
    \in \WW_{\phi^{(i)}}$ it holds that
    \begin{equation}\label{eq:extExt}
      \Ext_{\vec\ell}\big( D, (\phi^{(i)},\tup a^{(i)},b^{(i)}) \big)
      = \ext\big( \dots ( 
      \ext( D, \phi^{(1)}, \tup a^{(1)}, b^{(1)} ) \dots ),
      \phi^{(\sum_i\ell_i)}, \tup a^{(\sum_i\ell_i)} \big)
    \end{equation}
    To see this note that on any database instance $D$ with
    $\size{\App_0(D)} = \sum_i \ell_i$, our sequence of $\app$-functions
    yields a sequence of pairs $(\phi,\tup a)$. The only situation in
    which several of these pairs yield the same resulting tuple
    $f_{\phi}(\tup a,b) = f_{\phi'}(\tup a',b')$ is when $\phi$ and
	 $\phi'$ are two different deterministic rules (with the same head), with 
	 $\tup a = \tup a'$ and $\tup b = \tup b' = \ast$. Thus, there are two pairs 
	 $(\phi,\tup a)$ and $(\phi',\tup a)$ among the sequence $( \app_i(D) )_i$. 
	 Even though the one of these two that appears \enquote{later} is not an 
	 applicable pair anymore in the intermediate instance constructed in
	 \labelcref{eq:extExt}, this is no problem, for if $f_{\phi}(\tup a,*)$
    is already contained in an intermediate instance $D'$, then by definition 
	 $\ext(D',\phi',\tup a,*) = D'$.
	 
	 Now with our sequence of $\app$-functions (resp. of pairs $(\phi^{(i)},
	 \tup a^{(i)})$), we can proceed as in
	 \cref{lem:stepE}, where we showed the measurability of the set 
	 \labelcref{eq:vdashphi}. There, we described the membership of a pair
	 $(D,D')$ in the relation $\vdash$ by countably approximating $D$ and $D'
	 = D \cup \set{ f_\phi( \tup a, b ) }$ (for some $b$). The only thing that
	 changes is that now $D' = D \cup \set{ f_{\phi_1}( \tup a^{(1)},b^{(1)} ), \dots,
	 f_{\phi_k}( \tup a^{(\sum_i\ell_i)}, b^{(\sum_i \ell_i )}) }$ (for some $b^{(1)},
	 \dots, b^{(\sum_i\ell_i)}$).
  \item This follows immediately from \labelcref{itm:StepE1}.
  \item This can be shown just like \cref{lem:seqlimbim}, with the only
    change being the use of $\App$ instead of $\app$, and using
    \cref{lem:parinj} instead of \cref{lem:seqinj}.\qedhere
  \end{enumerate}
\end{proof}

\subsection{The Markov Process for Parallel Chasing}\label{ssec:parmarkov}

In analogy to \cref{ssec:seqmarkov}, we show in this section how the
parallel chase defines a Markov process of database instances.

We define $\StepP \from \DD \times \DDD \to [0, 1]$ as follows. Let 
$D \in \DD$ and $\DDDD \in \DDD$. 
\begin{itemize}
	\item If $\App( D ) \neq \emptyset$ and $\pchasestep{ D }{ \App(D) }
		{ ( \EEEE, \mu ) }$ is the corresponding parallel chase step, then
		$\mathop{\StepP}( D, \DDDD ) = \mu( \DDDD \cap \EEEE )$ with $\mu$ as in
		\cref{def:parstep}.
	\item If $\App( D ) = \emptyset$, then $\mathop{\StepP}( D, \DDDD ) = 
		\iota( D, \DDDD )$ where $\iota$ is the identity kernel.
\end{itemize}
Intuitively, $\StepP( D, \DDDD ) = \Pr( D \StepE \DDDD )$ with the latter
referring to the probability space defined for the chase step starting in $D$.

\begin{proposition}\label{pro:parstepkernel}
	$\StepP$ is a stochastic kernel. 
\end{proposition}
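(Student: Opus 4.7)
My plan is to mirror the proof of \cref{pro:seqstepkernel}, but now account for the fact that several rules may fire simultaneously and the firing configuration $\vec\ell$ itself varies measurably with $D$. The probability-measure part is immediate: for each $D$ with $\App(D)\neq\emptyset$, $\StepP(D,\placeholder)=\mu(\placeholder\cap\EEEE)$ with $\mu$ a probability measure by \cref{lem:parstepmeas}; for $D\in\DD\setminus\DD_{\App}$, $\StepP(D,\placeholder)=\iota(D,\placeholder)$, which is a probability measure by definition. So everything reduces to showing measurability of $\StepP(\placeholder,\DDDD)$ for a fixed $\DDDD\in\DDD$.

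The first step is to decompose the state space according to firing configurations. I would write $\DD = (\DD\setminus\DD_{\App}) \uplus \biguplus_{\vec\ell\in\NN^k} \DD_{\vec\ell}$, where $\DD_{\vec\ell}$ is measurable (as pointed out just before \cref{def:parstep}, using \cref{fac:measurableviews}), and verify measurability of $\StepP(\placeholder,\DDDD)$ on each piece separately. On $\DD\setminus\DD_{\App}$, $\StepP(\placeholder,\DDDD)=\mathbbm 1_{\DDDD}$ which is measurable. So fix $\vec\ell=(\ell_1,\dots,\ell_k)$ and work on $\DD_{\vec\ell}$.

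The next step is to obtain, on $\DD_{\vec\ell}$, measurable \emph{enumerations} of all applicable pairs. This is exactly the iterated-selection construction already used in the proof of \cref{lem:StepE}: starting from $\App_0=\App$ on $\DD_{\vec\ell}$, repeatedly invoke \cref{cor:app} to extract a measurable selection $\app_j$, and remove $\app_j(D)$ from the multifunction; since the removed pair is a Borel subset of $\AA$, the residual multifunction remains measurable (the relevant preimages differ from $\App_{j-1}^{-1}(\AAAA)$ by graph-of-$\app_j$ considerations, which lie in the product $\sigma$-algebra by \cref{sssec:mf}). After $\sum_i \ell_i$ steps we obtain measurable maps $D\mapsto (\phi^{(j)}(D),\tup a^{(j)}(D))$ that jointly enumerate $\App(D)$. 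Reordering these into the indexing used in \labelcref{eq:parstepmeas} gives measurable functions $D\mapsto \vec a_{ij_i}(D)$ for each $(i,j_i)$.

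With these measurable selections in hand, I would rewrite \labelcref{eq:parstepmeas} as
\[
   \StepP(D,\DDDD) = \int_{\WW} \Xi_{\vec\ell}\bigl(D,(\phi_i,\vec a_{ij_i}(D),b_{ij_i}))_{i,j_i},\DDDD\bigr) \cdot \prod_{i,j_i} \psi_i\params{\vec a_{ij_i}(D)}(b_{ij_i}) \,\d\mu^{\otimes}.
\]
The integrand is a product of (i) the characteristic function of the measurable set $\Ext_{\vec\ell}^{-1}(\DDDD)$ restricted by the graph of $D\mapsto(\vec a_{ij_i}(D))_{i,j_i}$ (measurable by \cref{cor:xiXimeas} and measurability of the selections) and (ii) a finite product of Gaudard--Hadwin densities, each of which is jointly measurable in parameter and sample outcome under the hypotheses of \cref{fac:gaudardhadwin}. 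Thus the full integrand is $(\DDD_{\vec\ell}\otimes \WWW, \Borel[0,1])$-measurable. Measurability of the $D$-fiber integral follows by applying \cref{fac:kernelandfunction} iteratively (or equivalently, Fubini combined with the measurability lemma used in \cref{pro:seqstepkernel}), exactly as in the sequential case. Summing the at-most-countably many contributions over $\vec\ell$, plus the trivial contribution on $\DD\setminus\DD_{\App}$, gives measurability of $\StepP(\placeholder,\DDDD)$ on all of $\DD$.

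The main obstacle I expect is the bookkeeping around iterated measurable selection and making sure the product-density integrand is jointly measurable as a function of $D$, the selected parameter tuples, and the samples. Both points are essentially mechanical given what has been set up: the selection step reuses the exact argument from \cref{lem:StepE}, and the joint measurability of the product density is an $k$-fold application of \cref{fac:gaudardhadwin} combined with the measurability of each $\vec a_{ij_i}$ as a function of $D$. Once those are in place the kernel property drops out of \cref{fac:kernelandfunction,fac:fubini} just as in \cref{pro:seqstepkernel}.
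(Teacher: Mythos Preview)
Your proposal is correct and follows essentially the same approach as the paper: decompose $\DD$ by firing configuration $\vec\ell$, use the iterated measurable selections from the proof of \cref{lem:StepE} to enumerate $\App(D)$ measurably, rewrite \labelcref{eq:parstepmeas} as an iterated integral via Fubini, and then propagate the measurability argument from \cref{pro:seqstepkernel} outward through the nested integrals. One small imprecision: \cref{fac:gaudardhadwin} gives you that $(\tup a,\BBBB)\mapsto \int_{\BBBB}\psi_i\params{\tup a}\,\d\mu_{\psi_i}$ is a stochastic kernel, not that the density itself is jointly measurable; but this is exactly what \cref{fac:kernelandfunction} needs, so your argument goes through as written.
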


The proof is similar to that of \cref{pro:seqstepkernel}.

\begin{proof}
	By \cref{lem:parstepmeas} and the definition of $\StepP$, the function
	$\StepP(D,\placeholder)$ is a probability measure for all $D \in \DD$.
	So again, the harder part is to show that $\StepP(\placeholder,\DDDD)$ is
	$(\DDD,\Borel[0,1])$-measurable. For this, it suffices to show that the 
	restriction of $\StepP( \placeholder, \DDDD )^{-1}[0, \alpha)$ to any
	fixed firing configuration $\vec\ell$ is in $\DDD$ for all $\alpha \in [0,
	1]$. The claim then follows since there are only countably many $\vec\ell$,
	since the set $\DD_{\vec\ell} = \set{ D \in \DD \with \vec\ell( D ) =
	\vec\ell }$ is measurable, and since the sets $[0,\alpha)$ generate
	$\Borel[0,1]$.

	We use the trick of the proof of \cref{lem:StepE}\labelcref{itm:StepE1} and
	decompose $\App$ into multiple measurable selections. With these selections
	we construct a measurable function $D \mapsto (D, \phi_1, \tup a_{11}, 
	\dots, \phi_k, \tup a_{k\ell_k} )$. With a repeated application of Fubini's
	theorem to the definition of $\mu$ (see \cref{def:parstep}), we obtain
	that for all $\DDDD$ it holds that
	\begin{align*}
		\mu( \DDDD ) &=
		\begin{multlined}[t]
		\sum_{ \vec\ell } 
		\int_{ \WW } \Xi_{\vec\ell}\big(
			D, \phi_1, \tup a_{11}, b_{11}, \dots, 
			\phi_k, \tup a_{k\ell_k}, b_{k\ell_k},
			\DDDD \cap \DD_{\vec\ell} 
		\big)\\
		\cdot \prod_{ i = 1 }^{ k }\prod_{ j_i = 1 }^{ \ell_i } 
		\psi_{i}\params{\tup a_{ij_i}}( b_{ij_i} )
		\d{\Big( \mu_{\psi_1}^{\otimes\ell_1} \otimes \dots \otimes 
		\mu_{\psi_{k}}^{\otimes\ell_k}\Big)}
		\end{multlined}\\
						 &{}={}
						 \begin{multlined}[t]
		\sum_{ \vec\ell }
		\int_{ \WW_1 } \psi_1\params{\tup a_{11}}(b_{11}) \dotsm
		\int_{ \WW_k } \psi_k\params{\tup a_{k\ell_k}}(b_{k\ell_k})\\
		\cdot \Xi_{\vec\ell}\big( \phi_1, \tup a_{11}, b_{11}, \dots, \phi_k, 
		\tup a_{k\ell_k}, b_{k\ell_k}, \DDDD\cap \DD_{\vec\ell} \big)
		\d{\mu_{\psi_k}} \dots \d{\mu_{\psi_1}}
		\end{multlined}
		\end{align*}
	(Note that there are $\ell_1 + \dots + \ell_k$ integrals in every summand
	of the last expression above.) In this situation, proceeding exactly like
	in the proof of \cref{pro:seqstepkernel}, one can show that the innermost
	integral is 
	\begin{equation}\label{eq:longmeasurablefn}
		\Big( \DDD \otimes \UUU^{\otimes\ell_1}_1 \otimes \dots \otimes 
		\UUU_{k-1}^{\otimes{\ell_{k}-1}} \otimes \powerset\big( \set{ \phi_k }
			\big) \otimes \VVV_{\phi_k}^{(h)}, \Borel[0,1]\Big)\text{-measurable,}
	\end{equation}
	where $\UUU_i$ is the $\sigma$-algebra of $\set{\phi_i}\times \VV_{\phi_i}^{(h) }\times 
	\WW_{\phi_i}$. Using the sequence of measurable functions we introduced
	before, we can get rid of the trailing $\powerset\big( \set{\phi_k} \big)
	\otimes \VVV_{\phi_k}^{(h)}$ in \labelcref{eq:longmeasurablefn} just like in
	the proof of \cref{pro:seqstepkernel}. Propagating the above procedure
	outwards yields that (the mentioned restriction of) $\StepP$ is
	$(\DDD,\Borel[0,1])$-measurable.
\end{proof}

By the merit of \cref{pro:parstepkernel}, we can construct a Markov process
analogously to \cref{ssec:seqmarkov}.

\begin{corollary}\label{cor:parmarkov}
	Let $( \DD, \DDD, P )$ be a sub-probabilistic database. Then there exists
	a Markov process with state space $(\DD, \DDD)$, with initial distribution
	$P$ and with transition kernels $\StepP$.
\end{corollary}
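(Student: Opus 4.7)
The plan is to obtain the Markov process as an immediate consequence of Kolmogorov's existence theorem (\cref{fac:kolmogorov}), exactly parallel to how \cref{cor:seqmarkov} was derived from \cref{pro:seqstepkernel}. All the nontrivial measure-theoretic work has already been done in \cref{pro:parstepkernel}, which shows that $\StepP \colon \DD \times \DDD \to [0,1]$ is a stochastic kernel from $(\DD,\DDD)$ to itself. Since $(\DD,\DDD)$ is standard Borel by \cref{pro:pdbstandardborel}, it is an admissible state space for Kolmogorov's theorem.

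Concretely, I would first verify that the premise $P$ is a (sub-)probability measure on the standard Borel space $(\DD,\DDD)$, which holds by assumption that $(\DD,\DDD,P)$ is a sub-probabilistic database. If $P$ is a genuine sub-probability measure rather than a probability measure, I would remark that this is unproblematic either by the standard correspondence (described just before the \enquote{Logic and Datalog} subsection) between sub-probability measures on $(\DD,\DDD)$ and probability measures on the augmented space $(\DD_\Error, \DDD_\Error)$, or simply by extending Kolmogorov's construction in the obvious way to accommodate a defective initial distribution. Next, I would take the constant sequence of transition kernels $\kappa_i \coloneqq \StepP$ for all $i \in \NN$, which by \cref{pro:parstepkernel} are all stochastic kernels from $(\DD,\DDD)$ to itself.

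Applying \cref{fac:kolmogorov} to this initial distribution and this sequence of transition kernels yields a probability measure $\vec P$ on the path space $(\DD^\omega, \DDD^{\otimes \omega})$ such that the induced coordinate process $(X_i)_{i \in \NN}$ is a Markov process with state space $(\DD, \DDD)$, with $X_0$ distributed according to $P$, and with $X_{i+1}$ conditioned on $X_i = D$ distributed according to $\StepP(D, \placeholder)$. This is precisely the claimed Markov process.

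There is essentially no obstacle: the hard work was pushed into \cref{pro:parstepkernel}, whose proof in turn rested on measurability of $\App$, measurability of the parallel extension function $\Ext_{\vec\ell}$, the decomposition of $\App$ into finitely many measurable selections used in \cref{lem:StepE}\labelcref{itm:StepE1}, and Fubini's theorem applied to the product density. The only presentational question is whether to write the proof out at all or to simply remark, as is done after \cref{pro:seqstepkernel}, that the corollary \enquote{directly implies} via Kolmogorov's existence theorem; I would follow the latter pattern for economy, noting the parallel to \cref{cor:seqmarkov}.
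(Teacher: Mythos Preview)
Your proposal is correct and matches the paper's approach exactly: the paper does not give a separate proof but simply states that by the merit of \cref{pro:parstepkernel} one constructs a Markov process analogously to \cref{ssec:seqmarkov}, i.e., via Kolmogorov's existence theorem (\cref{fac:kolmogorov}). Your observation that one should follow the pattern after \cref{pro:seqstepkernel} and merely note the direct implication is precisely what the paper does.
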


Mapping the paths of the process back to instances using $\liminst_{\App}$, the
parallel chase also generates an output sub-probabilistic database.

\begin{theorem}\label{thm:parspdb}\leavevmode
	\begin{enumerate}
		\item For all $D_{ \IN } \in \DD_{ \IN }$, the program $\G$ on input
			$D_{\IN}$ defines a sub-probabilistic database $\G_{ \App }( D_{\IN} 
			)$.
		\item For all sub-probabilistic databases $\D = (\DD_{\IN}, \DDD_{\IN},
			P)$, the program $\G$ on input $\D$ defines a sub-probabilistic 
			database $\G_{ \App }( \D )$.\qedhere
	\end{enumerate}
\end{theorem}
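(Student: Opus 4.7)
The plan is to directly mirror the proof of \cref{thm:seqspdb}, which was essentially just an invocation of \cref{cor:seqmarkov} combined with the push-forward along $\liminst_{\app}$. Everything needed for the parallel analog has already been established: \cref{cor:parmarkov} supplies the Markov process, and \cref{lem:StepE} supplies bimeasurability of $\liminst_{\App}$ and $\liminst_{\App, D_{\IN}}$. So the proof should not require any new technical ingredients but merely an assembly of these pieces.

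For part~(1), I will take the initial distribution on $(\DD, \DDD)$ to be the Dirac measure $\delta_{D_{\IN}}$ concentrated on the input instance. Applying \cref{cor:parmarkov} with this initial distribution and transition kernel $\StepP$ yields a probability measure $\vec P_{D_{\IN}}$ on the path space $(\DD^{\omega}, \DDD^{\otimes\omega})$. Using the bimeasurability of $\liminst_{\App, D_{\IN}}$ from \cref{lem:StepE}, the push-forward $\vec P_{D_{\IN}} \after \liminst_{\App, D_{\IN}}^{-1}$ is a well-defined probability measure on $(\DD_{\Error}, \DDD_{\Error})$. By the correspondence between probability measures on $(\DD_{\Error}, \DDD_{\Error})$ and sub-probability measures on $(\DD, \DDD)$ recalled in \cref{sec:pdb}, this yields the claimed sub-probabilistic database $\G_{\App}(D_{\IN})$, where the missing probability mass captures exactly the probability of non-terminating paths (mapped to $\Error$ by \labelcref{eq:limApp}).

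For part~(2), the argument is identical except that the initial distribution of the Markov process of \cref{cor:parmarkov} is taken to be $P$ itself (viewed as a sub-probability measure on $(\DD,\DDD)$ after the natural embedding $\DD_{\IN} \hookrightarrow \DD$, which is measurable because $\DDD_{\IN} \subseteq \DDD$). The resulting measure on the path space may already have mass strictly less than one, and pushing forward along $\liminst_{\App}$ again yields a sub-probability measure on $(\DD_{\Error}, \DDD_{\Error})$, hence a sub-probabilistic database $\G_{\App}(\D)$.

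I do not expect any real obstacle here, since the measure-theoretic heavy lifting (measurability of $\StepP$, existence of the Markov process, bimeasurability of $\liminst_{\App}$) has all been done in \cref{pro:parstepkernel,cor:parmarkov,lem:StepE}. The only subtlety worth flagging explicitly is that, as in the sequential case, the output can genuinely be sub-probabilistic even when the input is a proper probabilistic database, due to the probability mass absorbed by non-terminating paths; this is why the theorem states \enquote{sub-probabilistic} and not \enquote{probabilistic} in the conclusion, and it is the same phenomenon already discussed in \cref{ssec:limit}. As in \cref{rem:endproj}, a relational-algebra projection can subsequently be applied to remove auxiliary relations without affecting any of the measurability arguments.
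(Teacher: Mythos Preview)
Your proposal is correct and takes essentially the same approach as the paper: the paper treats \cref{thm:parspdb} as an immediate consequence of \cref{cor:parmarkov} together with the measurability of $\liminst_{\App}$ from \cref{lem:StepE}, obtaining the output via the push-forward onto $(\DD_{\Error},\DDD_{\Error})$, with the Dirac measure at $D_{\IN}$ (part~1) or $P$ (part~2) as initial distribution. One harmless over-citation: only measurability, not bimeasurability, of $\liminst_{\App}$ is needed for the push-forward to be well-defined.
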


\section{Semantic Properties of Generative Datalog}\label{sec:semantics}

\subsection{Chase Independence}\label{ssec:chaseindependence}

Let $\G$ be a $\GDL$ program. We know from \cref{thm:seqspdb,thm:parspdb} that 
for every input sub-probabilistic database $\D$, $\G$ defines outputs 
$\G_{\App}( \D )$ for parallel steps and $\G_{\app}( \D )$ for sequential steps
where $\app$ is a measurable chase policy. In this section we show, that the
output is independent of the chase procedure and, in particular, independent of
the choice of policy in the sequential chase.

\begin{theorem}\label{thm:samedist}
	For all input instances $D_{ \IN } \in \DD_{\IN}$ and all measurable chase
	policies $\app$ we have $\G_{\app}( D_{\IN} ) = \G_{\App}( D_{\IN} )$.
\end{theorem}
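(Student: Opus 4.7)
The plan is to show that both chase procedures induce the same pushforward sub-probability measure on $(\DD_\Error, \DDD_\Error)$. I would carry this out in three steps, the technical heart being a one-step local-confluence identity at the kernel level.

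First, I would establish a \emph{diamond lemma} for kernels: for any $D \in \DD_{\App}$ and any two distinct applicable pairs $(\phi_1, \vec a_1), (\phi_2, \vec a_2) \in \App(D)$, the composition of the two sequential one-step kernels (in either order) agrees with the restriction of the corresponding two-rule parallel step-kernel to firing exactly those two pairs. Order-independence of the composition follows from Fubini (\cref{fac:fubini}) together with the pointwise commutativity $\ext(\ext(D,\phi_1,\vec a_1,b_1),\phi_2,\vec a_2,b_2) = \ext(\ext(D,\phi_2,\vec a_2,b_2),\phi_1,\vec a_1,b_1)$, which just says that taking unions of facts is commutative. Agreement with the parallel-step kernel is then forced by the product-measure structure in \labelcref{eq:parstepmeas}, which factorizes exactly as the iterated integral arising in the sequential composition.

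Second, I would use the measurable decomposition of $\App$ into a sequence of single-pair selections $\app^{(1)}, \dots, \app^{(N)}$ (the technique from the proof of \cref{lem:StepE}\labelcref{itm:StepE1}) to promote the two-rule diamond to an $N$-rule statement: for every firing configuration $\vec\ell$ with $N = \sum_i \ell_i$, the restriction of $\StepP(D,\placeholder)$ to $\DD_{\vec\ell}$ equals the $N$-fold composition of the associated sequential kernels. The functional-dependency structure of \cref{lem:fd} is essential here: once an existential pair $(\phi_i,\vec a_{ij_i})$ from $\App(D)$ has fired, it cannot re-fire, so orderings within a parallel batch are non-interfering, and by the diamond lemma plus Fubini any measurable choice of orderings yields the same kernel. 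This simultaneously shows that any two measurable sequential policies $\app,\app'$ produce the same one-step kernel and that one sequential policy in particular simulates the parallel chase exactly.

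Third, I would lift this one-step equivalence to the full Markov processes. By the Markov property (\cref{cor:seqmarkov,cor:parmarkov}) and induction on depth, agreement of one-step kernels implies agreement of all finite-horizon kernels under the appropriate time-reparameterization, restricted to the event that the parallel process has completed a given number of batches. Since both $\liminst_{\app}$ and $\liminst_{\App}$ are bimeasurable (\cref{lem:seqlimbim,lem:StepE}) and route non-terminating paths to the same sink $\Error$, finite-horizon agreement transfers to agreement of the pushforwards by monotone convergence over the events "terminates within $n$ steps", which exhaust all terminating paths. The main obstacle I foresee is the measurability bookkeeping: converting the pointwise commutativity of $\ext$ into a kernel-level identity requires repeated appeals to measurable selections from \cref{cor:app}, and handling the uncountable branching demands the countable decomposition of $\DD_{\App}$ into the measurable pieces $\DD_{\vec\ell}$ so that Fubini can be applied batch-by-batch and the pieces reassembled into a single identity.
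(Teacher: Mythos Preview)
Your confluence-based strategy is a genuine alternative to the paper's approach, and the diamond lemma in step~1 together with the Fubini-based order-independence in step~2 are sound. However, step~3 contains a real gap.

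The gap is the passage from ``one parallel step equals an $N$-fold sequential composition'' to ``the full sequential chase with an \emph{arbitrary} policy $\app$ agrees with the full parallel chase.'' Step~2 only shows that $\StepP(D,\placeholder)$ equals the composition of $\stepP$'s for the \emph{specific} selections $\app^{(1)},\dots,\app^{(N)}$ that enumerate $\App(D)$. But an arbitrary policy $\app$ need not process $\App(D_{\IN})$ as a block: after firing the first pair, \emph{new} pairs may become applicable, and $\app$ is free to fire one of those before returning to the remaining pairs of $\App(D_{\IN})$. So the sequential chase can interleave pairs that, from the parallel point of view, belong to different batches. There is then no ``appropriate time-reparametrization'' that aligns $n$ parallel steps with $m$ sequential steps batch-by-batch. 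Relatedly, your claim that ``any two measurable sequential policies $\app,\app'$ produce the same one-step kernel'' is false as stated: $\stepP[\app](D,\placeholder)$ and $\stepP[\app'](D,\placeholder)$ are generally different measures when $\app(D)\neq\app'(D)$; what you can hope to show is that their \emph{limit} pushforwards agree.

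The fix is to insert an explicit intermediate result: first prove, via an iterated diamond argument across \emph{arbitrary} interleavings (not just reorderings within one batch), that any two measurable sequential policies yield the same pushforward on $(\DD_\Error,\DDD_\Error)$. This is a Newman-style confluence statement at the kernel level and is where the real work lies; \cref{lem:fd} guarantees that applicable existential pairs remain applicable until fired, but you must also handle the deterministic-rule collapse (two deterministic rules with the same head and grounding), where the second firing becomes a no-op---the paper notes exactly this in the proof of \cref{lem:StepE}\labelcref{itm:StepE1}. Only after this can you specialize to one particular policy that processes batches in order and invoke your step~2 to match the parallel chase.

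For comparison, the paper sidesteps the interleaving problem entirely: it partitions the target event $\DDDD$ into countably many pieces $\DDDD_{\vec\phi,\vec L,\vec C}$ indexed by the full sequential rule sequence $\vec\phi$, the full parallel configuration sequence $\vec L$, and explicit correspondence matrices $\vec C$ recording which sequential firing matches which parallel firing. On each piece, both path-space integrals are unwound via \cref{fac:substitution,fac:chainrule} and Fubini into the same product of densities, and summing over the partition gives the result. This is more bookkeeping but avoids the need for a separate ``all sequential policies agree'' lemma; your route, once repaired, would give that lemma as a pleasant byproduct.
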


\begin{remark}
	In the light of the above result, it is natural to ask why we introduced the
	sequential version (\cref{sec:seqchase}) in the first place, when the
	semantics could just have been introduced using the parallel chase. Apart
	from the sequential version being the more typical approach to the
	semantics, having both the sequential, and the parallel version at hand,
	together with the fact that they yield the same result, renders the
	semantics of $\GDL$ programs quite robust. The original motivation for
	introducing the parallel chase was, in fact, just to establish that the
	sequential chase does not depend on the choice of chase policy. The paper
	\cite{Barany+2017} features a similar statement but does not introduce a
	parallel version of the chase.  Thus, the sequential chase also serves the
	purpose of connecting our work to \cite{Barany+2017}. 
\end{remark}

We outline the proof of \cref{thm:samedist} and ask the reader to already have
a peek at \cref{fig:chasecorrespondence}. We will fix an event $\DDDD$ in the
output and show that it has the same measure in both processes. For this, it
suffices to come up with an easier to handle countable partition of $\DDDD$
into measurable sets and show that the measures coincide on each set of the
partition. This partition roughly brings paths in the existential and the
parallel chase trees into a one-to-one correspondence with respect to the
effect of rule applications.

\begin{proof}[Proof of {\cref{thm:samedist}}]
  Let $\mu^{\otimes}_{\app,D_{\IN}}$ denote the probability measure on
  the path space $( \DD^{\omega}, \DDD^{ \otimes \omega } )$ of the
  Markov process associated with the sequential chase using
  $\app$. Likewise, we let $\smash{\mu^{\otimes}_{\App,D_{\IN}}}$
  denote the probability measure on
  $( \DD^{\omega}, \DDD^{\otimes\omega} )$ from the parallel
  chase. That is,
  $\smash{\mu^{\otimes}_{\app,D_{\IN}} \after
    \liminst_{\app,D_{\IN}}^{-1}}$ is the sub-probability measure of
  $\G_{\app}( D_{\IN} )$, and
  $\smash{\mu^{\otimes}_{\App,D_{\IN}} \after
    \liminst_{\App,D_{\IN}}^{-1}}$ is the sub-probability measure of
  $\G_{\App}( D_{\IN} )$. Let $\phi_1, \dots, \phi_k$ be the rules of
  $\G^{\exists}$.
  
  First note that if $D$ is the label of a leaf in
  $T_{ \app, D_{\IN} }$, then it is also the label of a leaf in
  $T_{ \App, D_{\IN} }$, and vice versa.\footnote{This is not complicated to
  verify, but tedious. A proof of this can be found in \cite[Lemma 6.4.14]{Lindner2021}.}
  We call the (unique) path
  from $D_{\IN}$ to $D$ (in either tree) the $(D_{\IN},D)$-path. If
  $D$ is a leaf label, then the set of facts that are produced on the
  $(D_{\IN},D)$-path in $T_{ \app, D_{\IN} }$ coincides with the set
  of facts that are produced on $(D_{\IN},D)$-path in
  $T_{ \App, D_{\IN} }$.  We say a rule fires on a path if it appears
  in one of the chase steps along the path. If $D$ is a leaf label,
  then for every existential rule $\phi$ of $\G^{\exists}$, the number
  of times it fires on the $(D_{\IN},D)$-path in $T_{\app,D_{\IN}}$ is
  equal to the number of times it fires on the $(D_{\IN},D)$-path in
  $T_{\App,D_{\IN}}$. 

	Consider two paths, one in $T_{ \app, D_{\IN} }$ and one in $T_{ \App,
	D_{\IN} }$, that end in a leaf with the same label. For these two paths,
	we say that the application of $\phi$ in the sequential chase produces the
	same fact as the application of $\phi'$ in the parallel chase, if there 
	exist intermediate instances $D$ (in the sequential chase) and $D'$ (in the
	parallel chase) such that all of the following holds.
	\begin{itemize} 
		\item For all $b$ it holds that $f_{\phi}( \tup a, b ) = f_{\phi'}( \tup
			a', b ) \eqqcolon f_b$ where $( \phi,\tup a ) = \app(D)$ and $( \phi',
			\tup a') \in \App(D)$.
		\item The instance $D \cup \set{ f_b }$ is the successor instance of $D$
			on the path in $T_{\app, D_{\IN}}$.
		\item The instance $D' \cup \set{ f_b }$ is contained in the successor
			instance of $D'$ on the path in $T_{ \App, D_{\IN} }$.
	\end{itemize}

  We say that an instance $D$, the rule application of
  $\phi$ in the sequential chase produces the same fact as a rule application
  of $\phi'$ in the parallel chase, if $f_\phi( \tup a, b ) = f_{\phi'}( \tup
  a', b )$
  
  \bigskip
  
  We now fix an arbitrary measurable set $\DDDD \in \DDD$ and show that
  \[
    \mu^{\otimes}_{\app,D_{\IN}} \big( \liminst_{\app,D_{\IN}}^{-1}( \DDDD
    )\big) = \mu^{\otimes}_{\App,D_{\IN}} \big(
    \liminst_{\App,D_{\IN}}^{-1}( \DDDD )\big)\text.
  \] 
  Let $m,n \in \NN$. For $i = 1,\dots,m$ let $C^{(i)}$ be a $\set{0,1}$-valued
  $n\times k$-matrix
  \[
    C^{(i)} =
    \begin{pmatrix} 
      c^{(i)}_{11} & \dots & c^{(i)}_{1k}\\
      \vdots	& \ddots & \vdots\\
      c^{(i)}_{n1} & \dots & c^{(i)}_{nk}
	\end{pmatrix}\text.
  \]
  We call such a matrix a \emph{correspondence matrix}. We use such matrices to
  fix correspondences in paths of our both chase trees (cf.
  \cref{fig:chasecorrespondence}). The gist of our approach is to partition
  the discussed event $\DDDD$ according to correspondence matrices for the 
  paths in our chase trees.

  Let $\phi^{(1)}, \dots, \phi^{(m)}$ be a sequence of $m$ rules and let
  $\vec\ell_1, \dots, \vec\ell_n$ be a sequence of $n$ firing configurations.
  Then we define a set
  \[
    \DDDD_{ \vec\phi, \vec L, \vec C }= \DDDD_{ \phi^{(1)}, \dots, \phi^{(m)}, \vec\ell_1, \dots, \vec\ell_n,
      C^{(1)}, \dots, C^{(m)}}
  \]
  by letting $ D \in \DDDD_{ \vec\phi, \vec L, \vec C } $ if $D \in \DDDD$ and
  \begin{enumerate}
  \item $D$ is a leaf label on level $m$ in $T_{\app,D_{\IN}}$, such that
    the sequence of rules firing on the $(D_{\IN},D)$-path in 
    $T_{\app,D_{\IN}}$ is exactly $\phi^{(1)},\dots,\phi^{(m)}$; and
  \item $D$ is a leaf label on level $n$ in $T_{\app,D_{\IN}}$, such that
    the sequence of firing configurations on the $(D_{\IN},D)$-path in
	 $T_{\App,D_{\IN}}$ is exactly $\vec\ell_1, \dots, \vec\ell_n$ (where
	 $\vec\ell_j = ( \ell_{j1}, \dots, \ell_{jr} )$); and
 \item it holds that $c_{jr}^{(i)} = 1$ if and only if the application of
	 $\phi^{(i)}$ in the $i$th step in the sequential chase tree
	 $T_{\app,D_{\IN}}$ produces the same fact as one of the $\ell_{jr}$
	 applications of rule $\phi_r$ in the $j$th step in the parallel chase tree
	 $T_{\App,D_{\IN}}$.
  \end{enumerate}
  In essence, the matrix $C^{(i)}$ tells us how the application of $\phi^{(i)}$
  in the sequential chase corresponds to rule applications in the parallel
  chase, subject the firing sequence we fixed.
  For an illustration of the setup of $\DDDD_{\vec \phi, \vec L, \vec C}$, see
  \cref{fig:chasecorrespondence}.
  
  Note that if $\phi^{(i)}$ is an existential rule, then $C^{(i)}$ contains
  exactly one $1$-entry (this is because all existential rules come with
  different head relation symbols and since the same fact cannot be produced by
  the same existential rule under different instantiations).

  If $\phi^{(i)}$ is deterministic, then all but one row of 
  $C^{(i)}$ are all zeroes (if this were not the case, then the fact produced
  by $\phi^{(i)}$ in the sequential chase would get produced in two different
  rounds of the parallel chase, which is impossible by the definition of rule
  applicability). It can happen though that
  $c_{jr}^{(i)} = c_{jr'}^{(i)} = 1$ for some $r \neq r'$. This happens when
  multiple (deterministic) rules produce the same fact as the application of
  $\phi^{(i)}$ in the $i$th step of $T_{\app,D_{\IN}}$.

	\begin{figure}[t]\centering%
		\begin{tikzpicture}[sibling distance=.3cm,level distance=1cm]
			\node[hidden, label={[font=\small,name=Din]above:$D_{\IN}$}] 
				(root) at (0,0){};
			\node[below left=1cm and .8cm of root.center,anchor=center] (al) {};
			\node[below right=1cm and .8cm of root.center,anchor=center] (ar) {};
			\node[below left=3.5cm and 1.5cm of al.center,anchor=center] (bl) {};
			\node[below right=3.5cm and 1.5cm of ar.center,anchor=center] (br) {};

			\draw[thick] (root) to (al.center);
			\draw[thick] (root) to (ar.center);
			\draw[thick,dashed] (al.center) to (bl.center);
			\draw[thick,dashed] (ar.center) to (br.center);

			\node[hidden, label={[font=\small,name=Din']above right:$D_{\IN}$},
				right=6.5cm of root] (root') {};
			\node[below left=1cm and .8cm of root'.center,anchor=center] (al') {};
			\node[below right=1cm and .8cm of root'.center,anchor=center] (ar') {};
			\node[below left=3.5cm and 1.5cm of al'.center,anchor=center] (bl') {};
			\node[below right=3.5cm and 1.5cm of ar'.center,anchor=center] (br') {};

			\draw[thick,double] (root') to (al'.center);
			\draw[thick,double] (root') to (ar'.center);
			\draw[thick,dashed,double] (al'.center) to (bl'.center);
			\draw[thick,dashed,double] (ar'.center) to (br'.center);

			\node[hidden,below left=1cm and .4cm of root.center,anchor=center] (d1) {};
			\node[hidden,below right=1cm and .5cm of d1.center,anchor=center] (d2) {};
			\node[hidden,below left=1cm and .2cm of d2.center,anchor=center] (d3) {};
			\node[hidden,below right=1cm and .6cm of d3.center,anchor=center] (d4) {};

			\node[hidden,below left=.333334cm and .133334cm of root'.center,anchor=center] (d1') {};
			\node[hidden,below left=.333333cm and .133333cm of d1'.center,anchor=center] (d2') {};
			\node[hidden,below left=.333333cm and .133333cm of d2'.center,anchor=center] (d3') {};
			\node[hidden,below right=.5cm and .2cm of d3'.center,anchor=center] (d4') {};
			\node[hidden,below right=.5cm and .2cm of d4'.center,anchor=center] (d5') {};

			\coordinate (e1) at (d1);
			\coordinate (e2) at (d2);
			\coordinate (e3) at (d3);
			\coordinate (e4) at (d4);
			\coordinate (e1') at (d1');
			\coordinate (e2') at (d2');
			\coordinate (e3') at (d3');
			\coordinate (e4') at (d4');
			\coordinate (e5') at (d5');
			
			\begin{scope}[on background layer]
				\fill[b20,opacity=.5] (root.center) -- (e1) -- (e3') -- (e2') -- cycle;
				\fill[b50,opacity=.5] (e1) -- (e2) -- (e4') -- (e3') -- cycle;
				\fill[b20,opacity=.5] (e2) -- (e3) -- (e1') -- (root'.center) -- cycle;
				\fill[b50,opacity=.5] (e3) -- (e4) -- (e2') -- (e1') -- cycle;
			\end{scope}

			\draw[->,thick] (root.center) to (d1);
			\draw[->,thick] (d1.center) to (d2);
			\draw[->,thick] (d2.center) to (d3);
			\draw[->,thick] (d3.center) to (d4);

			\node[inner] at (root) {};
			\node[inner] at (d1.center) {};
			\node[inner] at (d2.center) {};
			\node[inner] at (d3.center) {};
			\node[leaf,label={[font=\small,name=Em]below right:$E_{m}$}] at (d4.center) {};

			\draw[-{Stealth[round,width=8pt,length=6pt]},double,thick] (root'.center) to (d3');
			\draw[-{Stealth[round,width=8pt,length=6pt]},double,thick] (d3'.center) to (d5'); 

			\node[inner] at (root') {};

			\node[inner] at (d3'.center) {};
			\node[leaf,label={[font=\small,name=Dn]below right:$D_{n}$}] at (d5'.center) {};

			\coordinate[below right=-.05cm and 1.4cm of root] (C1);
			\coordinate[below left=.2cm and 1.4cm of root'] (C2);

			\draw[<->,dotted,b75] 
				(C1) to[bend right=3.5]
				node[above,sloped,font=\small,text=b35]
				{correspondence via $\vec C$} (C2);
			\draw[thick,dotted,b50] (Em) to[bend right=10] node[circle,fill=white]{$=$}(Dn);
			\draw[thick,dotted,b50] (Din) to[bend left=15] node[circle,fill=white]{$=$}(Din');

		\coordinate[left=.2cm of al] (phi1) {};
		\coordinate[left=.2cm of bl] (phi2) {};
		\draw[shorten <=-.8cm,shorten >=.4cm,->,dotted,b75] 
			(phi1) to node[midway,above,sloped,pos=.35,font=\small,text=b35]
			{rule sequence $\vec\phi$} (phi2);

		\coordinate[right=.2cm of ar'] (L1) {};
		\coordinate[right=.2cm of br'] (L2) {};
		\draw[shorten <=-.8cm,shorten >=.2cm,->,dotted,b75] 
			(L1) to node[midway,above,sloped,pos=.35,font=\small,text=b35]
			{firing conf. sequence $\vec L$} (L2);

		\end{tikzpicture}
		\caption{Paths obtained from $\DDDD_{\vec\phi,\vec
		L,\vec C}$ in the sequential (left) and the parallel (right) chase tree.}
		\label{fig:chasecorrespondence}
	\end{figure}

	Every instance $D \in \DDDD$ lies in exactly one set 
	$\DDDD_{ \vec\phi, \vec L, \vec C }$. That is, the sets $\DDDD_{ \vec\phi, 
	\vec L, \vec C }$ partition $\DDDD$ into countably many sets. (Moreover, 
	$\DDDD_{ \vec\phi, \vec L, \vec C }$ are measurable. They can be expressed
	using $\app$, sequences of measurable selections $\app_{ij_i}$ (which are
	introduced below), the measurable functions $f_{\phi}$ and the diagonals in
	the fact spaces.)

	\bigskip

	Now let $\vec{ \DDDD } = \vec{ \DDDD }( \vec\phi, \vec L, \vec C ) 
	\coloneqq \liminst_{ \App, D_{\IN} }^{ -1 }\big(
	\DDDD_{ \vec\phi, \vec L, \vec C } \big)$. Then, by definition\footnote{For
	details see \cite[p. 21~\&~Proposition 8.2]{Kallenberg2002}.},
	\begin{equation}
			\mu^{\otimes}_{\App,D_{\IN}}\big( \vec{ \DDDD } \big)
			=
			\begin{aligned}[t]
			\iint\dotsi\int & \mathbbm{1}_{ \vec{ \DDDD } }( D_{\IN}, D_1, \dots, 
				D_{n-1}, D_n, D_n, \dots )\\
								 &	\qquad 
								\StepP( D_{n-1}, \d{D_n} ) \dotsm
								\StepP( D_1, \d{D_2} )
								\StepP( D_{\IN}, \d{D_1} )
		\end{aligned}
		\label{eq:parint}
	\end{equation}
	where $\StepP( D_i, \d{ D_{i+1} } )$ is shorthand for $\d{\big(
	\StepP(D_i,\placeholder) \big)}$.

	\bigskip
	Recall that for any firing configuration $\vec\ell' = (\ell_1', \dots, 
	\ell_k')$ for parallel chasing, we can obtain a sequence of $\sum_i
	\ell'_i$ measurable selections $\app_1, \dots, \app_{ \sum_i \ell'_i
	}$ such that for all $D$ with firing configuration $\vec\ell'$ it holds that
	\[
		\App( D ) = 
		\set[\big]{ \app_j( D ) \with 1 \leq j \leq \textstyle\sum_i \ell'_i }
		\text.
	\]
	(We have already used this in the proof of 
	\cref{lem:StepE}\labelcref{itm:StepE1}, details can be found ibid.)
	For all the firing configurations $\vec\ell_1, \dots, \vec\ell_n$ we fixed
	before, we chose such a sequence of measurable selections $\app_{ij_i}$ such
	that $\app_{ij}$ is the $j_i$th measurable selection belonging to 
	$\vec\ell_i = (\ell_{i1}, \dots, \ell_{ik})$. In the following we let 
	\[
		( \phi_{ij_i}, \tup a_{ij_i} ) \coloneqq \app_{ij_i}( D_i )
	\]
	where $1 \leq j_i \leq \ell_{i1} + \dots + \ell_{ik} \eqqcolon s_i$ for all
	$i = 1, \dots, n$. Moreover, we let $\psi_{ij_i}$ denote the parameterized
	distribution from rule $\phi_{ij_i}$ and $\mu_{ij_i}$ the associated
	underlying measure (cf. \cref{ssec:paramdist}). Note that since we fixed
	$\vec\ell_1, \dots, \vec\ell_n$, the collections $(\phi_{ij_i})_{j_i}$,
	$(\psi_{ij_i})_{j_i}$ and $(\mu_{ij_i})_{j_i}$ do not depend on $D_i$.

	Now for every measurable function $g$ it holds that
	\begin{equation}
		\begin{multlined}
			\int_{ \DD } g( D_{i+1} ) \StepP( D_i, \d{ D_{i+1} } )
			=
			\int_{ \WW } 
				g\big( 
					\Ext_{\vec\ell_i}( 
						D_i, 
						\phi_{i1},\tup a_{i1},b_{i1},
						\dots, 
						\phi_{is_i}, \tup a_{is_i}, b_{is_i}
					)
				\big)\\
				\cdot 
				\psi_{i1}\params{\tup a_{i1}}( b_{i1} ) 
				\cdot\dotsc\cdot
				\psi_{is_i}\params{ \tup a_{is_i} } ( b_{is_i} )
				\:\d{\big( 
					\mu_{i1} \otimes \dots \otimes \mu_{is_i}
				\big)}
		\end{multlined}
		\label{eq:spacetransfi}
   \end{equation}
	where $\WW = \prod_{j = 1}^{ s_i } \WW_{\phi_{ij}}$ and $i \in \set{ 0,
	\dots, n-1 }$ with $D_0 = D_{ \IN }$. The equality in 
	\labelcref{eq:spacetransfi} is obtained as follows. We apply the 
	substitution rule (see \cref{fac:substitution}) for the measurable function
	$\Ext_{\vec\ell_i}$ to transform the domain of integration. With
	applications of the chain rule (see \cref{fac:chainrule}, we extract
	$\psi_{ij_i}$ from the integration measure. 

	Now by Fubini's theorem we can rewrite the right hand side of
	\labelcref{eq:spacetransfi} as an iterated integration with $s_i$ integrals.
	We do so from the inside to the outside for all of the integrals appearing
	in \labelcref{eq:parint}. We obtain that $\mu_{\App,D_{\IN}}^{\otimes}( 
	\vec{\DDDD} )$ is equal to
	\begin{equation}\label{eq:parDphiLC}
		 \begin{multlined}
			 \Gamma_{\vec{\DDDD}} =
			 \int
			\mathbbm{1}_{ \DDDD_{ \vec\phi, \vec L, \vec C } } 
				\Big( 
					\Ext_{\vec\ell_n}\big( 
						\Ext_{\vec\ell_{n-1}}\big( 
							\dotsm 
							\Ext_{\vec\ell_1}\big(
								D_{\IN}, 
								(\phi_{1j_1},\tup a_{1j_1},b_{1j_1})_{j_1}
							\big)
						\dotsm\big),
						(\phi_{nj_n}, \tup a_{nj_n}, b_{nj_n})_{j_n}
					\big)
				\Big)\\
				\cdot\prod_{ i = 1 }^{ n } 
				\prod_{ j_i = 1 }^{ s_i } 
				\psi_{ij_i}\params{ \tup a_{ij_i} }( b_{ij_i} )
				\:\d{\big( \mu_{11} \otimes\dots\otimes \mu_{ns_n}\big)}\text.
		\end{multlined}
	\end{equation}
	With $\vec{\EEEE} = \vec{\EEEE}( \vec\phi, \vec L, \vec C ) \coloneqq 
	\liminst_{\app,D_{\IN}}^{-1}( \DDDD_{\vec\phi, \vec L, \vec C} )$, applying
	the same procedure to the corresponding expression for
	$\mu^{\otimes}_{\app,D_{\IN}}( \vec{\EEEE} )$ yields
	\begin{equation}\label{eq:seqDphiLC}
		\begin{multlined}
			\gamma_{\vec{\EEEE}} = 
			\int \mathbbm{1}_{ \DDDD_{ \vec\phi,\vec L,\vec C } }
		\Big( 
			\ext( \ext( \dotsm \ext( 
				D_{\IN}, \phi^{(1)}, \tup a^{(1)}, b^{(1)} 
			), \dots),
			\phi^{(m)},\tup a^{(m)}, b^{(m)} )
		\Big)\\
		\cdot\prod_{i=1}^{m} \psi^{(i)}\params{\tup a^{(i)}}
		\d{\big(\mu^{(1)}\otimes\dots\otimes\mu^{(m)}\big)}\text.
		\end{multlined}
	\end{equation}
	We have already argued that the rules, parameterized distributions, and
	underlying measures are fixed in these expressions. Yet, we note that the
	parametrizations $\tup a^{(i)}$ and $\tup a_{ij_i}$ in
	\labelcref{eq:seqDphiLC} and \labelcref{eq:parDphiLC} depend on the outcome
	of the previous sample via $\app$ and the measurable selections
	$\app_{ij_i}$ we constructed before. Note that the integrands of
	\labelcref{eq:parDphiLC,eq:seqDphiLC} are functions in $b_{11}, \dots,
	b_{ns_n}$, respectively $b^{(1)}, \dots, b^{(m)}$. Every (suitable) tuple
	$(b_{1s_1}, \dots, b_{ns_n})$ gives rise to a random path
	\[
		D_{\IN} \StepE D_1 \StepE D_2 \StepE \dotsc \StepE D_n
	\]
	in $T_{ \App, D_{\IN} }$, that is almost surely contained in $\vec{\DDDD} = 
	\liminst_{ \App, D_{\IN} }^{-1}( \DDDD_{\vec\phi,\vec L, \vec C})$. 
	Likewise, every (suitable) tuple $b^{(1)}, \dots, b^{(m)}$ in 
	\labelcref{eq:seqDphiLC} describes a random path
	\[
		D_{\IN} \stepE E_1 \stepE E_2 \stepE \dotsc \stepE E_m
	\]
	in $T_{ \app, D_{\IN} }$ that is a.\,s. contained in $\vec{\EEEE} = 
	\liminst_{ \app, D_{\IN}}^{-1}(\DDDD_{\vec\phi,\vec L,\vec C})$. By the
	definition of $\DDDD_{ \vec\phi, \vec L, \vec C }$, we have $D_n = E_m \in
	\DDDD$ where no rule is applicable anymore. Recall that in $\DDDD_{\vec\phi,
	\vec L, \vec C}$, the application of rule $\phi^{(i)}$ in $T_{ \app, D_{\IN}
	}$ has the same resulting fact effect as $\smash{c^{(i)}_{jr}}$ of the
	applications of $\phi_r$ in the $j$th step in $T_{ \App, D_{\IN} }$. The
	confinement to $\smash{\DDDD_{\vec\phi,\vec L,\vec C}}$ establishes a
	one-to-one correspondence between $\vec{\DDDD}$ and $\vec{\EEEE}$, resp.\
	between tuples $(b_{11},\dots,b_{ns_n})$ and $(b^{(1)},\dots,b^{(m)})$ 
	(where the first of these sequences may contain redundant $*$'s as leftovers
	from collapsing deterministic rules). In this case, we call the tuples
	equivalent. It remains to show that if $(b_{11}, \dots, b_{ns_n})$ and
	$(b^{(1)}, \dots, b^{(m)})$ are equivalent, then
	\begin{equation}\label{eq:psis}
		\prod_{ i = 1 }^{ n } \prod_{ j_i = 1 }^{ s_i }
			\psi_{ij_i}\params{\tup a_{ij_i}}(b_{ij_i})
		= \prod_{ i = 1 }^{ m } \psi^{(i)}\params{\tup a^{(i)}}(b^{(i)})\text.
	\end{equation}
	This is the case because of the correspondence fixed by $\vec C$, which can
	be seen as follows. First of all, all factors that belong to non-existential
	rules can be canceled from \labelcref{eq:psis} (even though there might be a
	different number of them on both sides) as they evaluate to $1$ (cf. 
	\cref{rem:determpdist}). The remaining numbers of factors on both sides
	coincide (see our discussion of $\smash{c^{(i)}_{jr}}$ before) and, again,
	are in one-to-one correspondence with each other via $\vec C$. In
	particular, if $\tup a_{ij_i}, b_{ij_i}$ and $\tup a^{(i)},b^{(i)}$ produce
	the same fact, then $\tup a_{ij_i} = \tup a^{(i)}$ and $b_{ij_i} = b^{(i)}$. 

	A similar argument applies to the product measures (of the $\mu^{(i)}$,
	resp.\ the $\mu_{ij_i}$) used for the integration in
   \labelcref{eq:parDphiLC,eq:seqDphiLC}---recalling that the
   measures themselves are already fixed by fixing $\vec\phi$ and
   $\vec L$ and that the measure spaces from the non-existential
	rules are trivial (\cref{rem:determpdist,rem:determmeas}).  
	\bigskip

	We have shown that the expressions \labelcref{eq:parDphiLC,eq:seqDphiLC}
	have the same value (for fixed $\DDDD$), i.\,e.\ for all suitable $\vec\phi$,
	$\vec L$, and $\vec C$ it holds that
	\begin{equation}\label{eq:sameval}
		\mu^{\otimes}_{\app,D_{\IN}}
		\big( \vec{\EEEE}(\vec\phi,\vec L,\vec C) \big)
		= \gamma_{\vec{\EEEE}(\vec\phi,\vec L,\vec C)}
		= \Gamma_{\vec{\DDDD}(\vec\phi,\vec L,\vec C)}
		= \mu^{\otimes}_{\App,D_{\IN}}
		\big( \vec{\DDDD}(\vec\phi,\vec L,\vec C) \big)\text.
	\end{equation}
	As $\DDDD_{\vec\phi, \vec L, \vec C}$ is a countable, measurable partition
	of $\DDDD$, we obtain
	\[
		\mu^{\otimes}_{\app,D_{\IN}}\big( \liminst_{\app,D_{\IN}}^{-1}( \DDDD ) 
		\big)
		= \sum_{ \vec\phi,\vec L,\vec C } 
		\gamma_{ \vec{\EEEE}( \vec\phi,\vec L, \vec C ) }
		\mathrel{\overset{\mathclap{\labelcref{eq:sameval}}}{=}} 
		\sum_{ \vec\phi,\vec L,\vec C } 
		\Gamma_{ \vec{\DDDD}( \vec\phi,\vec L, \vec C ) }
		= \mu^{\otimes}_{\App,D_{\IN}}\big( \liminst_{\App,D_{\IN}}^{-1}( \DDDD ) 
		\big)\text.\qedhere
	\]
\end{proof}

When fed an initial distribution instead of a single instance $D_{\IN}$, we
directly obtain the following corollary.

\begin{corollary}
	For all sub-probabilistic databases $\D = ( \DD_{\IN}, \DDD_{\IN}, P )$, and
	all measurable chase policies $\app$ it holds that $\G_{\app}( \D ) = \G_{
	\App }( \D )$.
\end{corollary}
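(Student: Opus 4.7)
The plan is to reduce the corollary to \cref{thm:samedist} by decomposing the output measures against the initial distribution $P$. Fix an arbitrary measurable event $\DDDD \in \DDD_{\Error}$. Since the Markov process underlying $\G_{\app}(\D)$ has initial distribution $P$ and transition kernel $\stepP$ (\cref{cor:seqmarkov}), Kolmogorov's construction expresses the path-space measure on $(\DD^\omega, \DDD^{\otimes\omega})$ as
\begin{equation*}
  \mu_{\app,\D}^{\otimes}(\vec\DDDD)
  = \int_{\DD_{\IN}} \mu_{\app,D_{\IN}}^{\otimes}(\vec\DDDD) \:\d P(D_{\IN})
\end{equation*}
for every measurable $\vec\DDDD\subseteq\DD^\omega$. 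Applying this identity to $\vec\DDDD = \liminst_{\app}^{-1}(\DDDD)$ and using that the initial Dirac distribution at $D_{\IN}$ picks out precisely the paths starting in $D_{\IN}$, we get
\begin{equation*}
  \G_{\app}(\D)(\DDDD)
  = \int_{\DD_{\IN}} \G_{\app}(D_{\IN})(\DDDD) \:\d P(D_{\IN})\text,
\end{equation*}
and an analogous identity holds for $\G_{\App}(\D)$.

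Next I would invoke \cref{thm:samedist} pointwise: for every $D_{\IN}\in\DD_{\IN}$, the integrands $\G_{\app}(D_{\IN})(\DDDD)$ and $\G_{\App}(D_{\IN})(\DDDD)$ coincide. Hence the two integrals above are equal, giving $\G_{\app}(\D)(\DDDD) = \G_{\App}(\D)(\DDDD)$. Since $\DDDD\in\DDD_{\Error}$ was arbitrary, the two sub-probability measures agree on all of $\DDD_{\Error}$, proving the claim.

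The main obstacle I anticipate is justifying the integral decomposition above, which hinges on the fact that $D_{\IN}\mapsto \G_{\app}(D_{\IN})(\DDDD)$ is a measurable function on $(\DD_{\IN},\DDD_{\IN})$. This measurability is exactly what \cref{pro:seqstepkernel} (for the sequential case) and \cref{pro:parstepkernel} (for the parallel case) provide at the level of a single transition step, and it lifts to the full path-space measure by an iterated-integration/induction-over-depth argument (essentially the same computation that appears in the display \eqref{eq:parint} in the proof of \cref{thm:samedist}, but now with $P$ replacing the Dirac factor at $D_{\IN}$). Together with the bimeasurability of $\liminst_{\app}$ and $\liminst_{\App}$ from \cref{lem:seqlimbim} and \cref{lem:StepE}, this gives the desired measurability of the $D_{\IN}$-section maps and legitimizes the use of Fubini to split off the initial distribution. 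Once this technical step is in place, the remainder of the argument is an immediate appeal to \cref{thm:samedist}.
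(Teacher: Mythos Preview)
Your proposal is correct and matches the paper's approach: the paper simply states that the corollary is obtained ``directly'' from \cref{thm:samedist} by replacing the single input instance with an initial distribution, and your argument spells out exactly this reduction via integration against $P$. Your explicit attention to the measurability of $D_{\IN}\mapsto \G_{\app}(D_{\IN})(\DDDD)$ is more careful than the paper's one-line justification, but the underlying idea is the same.
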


Because of the results in this subsection, we just write $\G(\D)$ for the
sub-probabilistic database that is obtained by running $\G$ on $\D$. Recall 
that we can also eliminate the auxiliary tuples if we want to (see 
\cref{rem:endproj}).

\subsection{Comparison to the Original Semantics}\label{ssec:origsim}

We continue from the discussions surrounding
\cref{exa:behavior1,exa:behavior2}. The key difference between our semantics
and Bárány et al.'s semantics is in the mechanism that prevents generating
infinitely many facts that only differ in a sampled value. We tie this to the
rules: each probabilistic rule is only permitted to fire once for each setting
of the parameters. Bárány et al.\ tie it to the distributions: each distribution is only
permitted to produce one sample for each parameter setting. This difference in
the semantics (in particular, the decoupling of sampling from distribution
names) resolves the issue sketched in \cref{exa:behavior2}.

Both we and Bárány et al.\ have mechanisms to relax the requirement of only sampling
once: we allow it to repeat rules; Bárány et al.\ introduce symbolic
parameters that can be added to a distribution to allow several
applications. Multiple copies of the same rule have a different behavior under
our semantics as they would exhibit in the original one. We treat multiple
copies as separate instructions, while they collapse to the effect of just a
single such rule in the original semantics. We achieve this by associating
different existential rules to the individual copies during the translation to
the existential program, leading to the semantic difference pointed out in
\cref{exa:behavior1}.

\medskip

With our version of the semantics, we can, however, simulate the semantics of
Bárány et al., \emph{as far as only the distribution over finite outcomes is
concerned}. This means that for a program $\G$ using the original semantics, we
can construct a new program $\H$ that, under our semantics, has the property 
that the sup-probabilistic database it produces on (finite) outcome instances
coincides with the distribution over the finite outcomes of $\G$. As database
instances are finite per definition, this is, after all, the interesting part
of the distribution. The semantics of Bárány et al.\ extends towards a
distribution also over the infinite chase paths, that is, over infinite
outcomes. Recall that in our semantics, all infinite chase paths are merged
into a single error event. Therefore, we necessarily lose the information about
the infinite program executions that are present in the semantics of
\cite{Barany+2017}.

The following example exposes the difference in the semantics and illustrates
the simulation mentioned above.

\begin{figure}[H]
	\mbox{}\hfill%
	\begin{subfigure}[b]{.4\textwidth}\centering%
	\begin{tabular}{ l l }
		\toprule
		$\G \colon$
		& $S\big( \Flip\params[\big]{ \frac12 } \big) \whenever R( 0 )$ \\[1ex]
		& $T\big( \Flip\params[\big]{ \frac12 } \big) \whenever R( 0 )$ \\
		\bottomrule
	\end{tabular}
	\subcaption{The program $\G$.}\label{fig:flipflip}
	\end{subfigure}
	\hfill%
	\begin{subfigure}[b]{.4\textwidth}\centering%
	\begin{tabular}{ l l }
		\toprule
		$\H \colon$
		& $A\big( \Flip\params[\big]{ \frac12 } \big) \whenever R( 0 )$\\[1ex]
		& $S( x ) \whenever A( x )$ \\[1ex]
		& $T( x ) \whenever A( x )$ \\
		\bottomrule
	\end{tabular}
	\caption{The program $\H$.}\label{fig:flipflipsim}
	\end{subfigure}
	\hfill\mbox{}%
	\caption{Exemplary simulation of the original semantics.}
\end{figure}

Under Bárány et al.'s semantics, $\G$ has outcomes
$\set{ R( 0 ), S( 0 ), T( 0 ) }$ and $\set{ R( 0 ), S( 1 ), T( 1 ) }$
with probability $1/2$ each, whereas our semantics yield the four
possible outcomes $\set{ R( 0 ), S( 0 ), T( 0 ) }$,
$\set{ R( 0 ), S( 0 ), T( 1 ) }$, $\set{ R( 0 ), S( 1 ), T( 0 ) }$ and
$\set{ R( 0 ), S( 1 ), T( 1 )} $, each with probability $1/4$. Yet, we
can easily simulate the original semantics of $\G$ by pulling out the
sampling to a separate rule, as in \cref{fig:flipflipsim}. This
program has outcomes $\set{ R( 0 ), A( 0 ), S( 0), T( 0 ) }$ and
$\set{ R( 0 ), A( 1 ), S( 1 ), T( 1 ) }$ with probability $1/2$
each. We can ignore the auxiliary predicate $A$ and restrict the
resulting probabilistic database to the schema $\set{ R, S, T }$
without changing the probabilities.

This simple argument can be generalized to arbitrary programs. We note that the
original semantics also featured the use of \enquote{event expressions} that
could be employed to reuse samples for the same parameter configuration. Such
can be simulated by decomposing the rule, and putting the event expression into
a separate relation. We leave the details to the reader.  Let us remark that it
is similarly easy to simulate our semantics with that of Bárány et al. All our
results would also hold starting from Bárány et al.'s semantics for discrete
distributions, so it is really just a matter of taste which version the reader
prefers.

\begin{remark}[First-Order Equivalence]\label{rem:fo-eq}
	While the different semantics can simulate each other, the choice of Bárány
	et al.'s semantics was in particular made to obtain a decidable sufficient
	criterion for the \enquote{semantic equivalence} of two programs.

	Following \cite{Barany+2017}, two programs $\G$ and $\G'$ are 
	\emph{first-order equivalent}, if the first-order theories defined by their
	collection of rules coincide.\footnote{To be precise, this notion is
	introduced for the full PPDL language (see \cref{sec:ppdl}), not only the
	generative part in \cite{Barany+2017}.} For this, they interpret
	parameterized distributions as function symbols. Bárány et al. show that
	their notion of first-order equivalence is decidable for a simple syntactic
	class of programs, and that first-order equivalence of two programs entails
	that they produce the same output distribution under their semantics
	\cite[Theorem 5.6 and 5.5]{Barany+2017} (this latter property is called
	\emph{semantic equivalence}). In general, semantic equivalence is 
	undecidable, even when there are no parameterized distributions at all
	\cite[Theorem 5.6]{Barany+2017}.

	While the above definition of first-order equivalence feels natural and is
	in line with typical notions of first-order equivalence, the decision to
	interpret every occurrence of a parameterized distribution as another
	occurrence of the same function symbol is susceptible to debate. In
	particular, this treats a parameterized distribution just like a function,
	yet sampling repeatedly from a distribution without changing its parameters
	may well yield different outcomes. If the distribution is 
	continuous, this even happens almost surely.\footnote{With this, it is also
	clear that the first-order equivalence of \cite{Barany+2017} is a weaker
	notion than semantic equivalence.}

	The result from \cite{Barany+2017}, that (their notion of) first-order
	equivalence implies semantic equivalence no longer holds with our semantics.
	One may easily verify that with our semantics, for example, that the program
	$\G_0$ from \cref{exa:behavior1} is first-order equivalent to the program
	obtained by removing one of its rules. Yet, under our semantics, they
	produce different outputs.

	We propose to adapt the notion of first-order equivalence to be tailored to 
	our semantics. Recall that one major change we made was that we allowed 
	rules to be present multiple times. When translating $\GDL$ to $\EDL$ 
	programs though, we introduced a distinction between the copies of a
	probabilistic rule, by providing (per rule) a unique new relation for 
	storing the sample outcomes (cf. \labelcref{eq:new_rules} in
	\cref{sec:assocedl}). Thus, we propose the following change in the notion
	of first-order independence: Call two $\GDL$ programs $\G$ and $\G'$ 
	\emph{first-order} equivalent, if the first-order theories defined by their
	collection of rules coincide, \emph{where every occurrence of a
	parameterized distribution $\psi\params{\vec p}$ is treated as an occurrence
	of a new distinguished function symbol.} With this changed notion, copies 
	of probabilistic rules cannot be (first-order equivalently) rewritten into
	single rules anymore.

	With respect to this new notion of first-order equivalence it seems possible
	to transfer the results of \cite[Section 5.1]{Barany+2017} to our semantics.
\end{remark}

\begin{remark}[Fairness]
In \cite{Barany+2017}, the authors restrict their scope to
  \enquote{fair} chase trees. This means that if for an intermediate
  instance $D$ it holds that $D \models \phi_b( \tup u )$ for some
  valuation $\tup u$, then eventually $D' \models \phi_h( \tup u )$
  for some later intermediate instance, on all paths. Note that by
  construction, any finite path of the chase tree that is ending in a
  leaf satisfies this fairness condition. Contrary to
  \cite{Barany+2017}, we do not impose the fairness condition on
  the infinite parts of the tree, as all infinite paths are collectively
  mapped to our error event $\Error$ anyway. The important point is that the
  (sub-)probability space obtained from the finite paths is
  independent of any fairness issue.
\end{remark}

\subsection{Termination Behavior}\label{ssec:termination}

In \cref{sec:seqchase,sec:parchase}, we have constructed Markov processes for 
given $\GDL$ programs. Every point in the $i$th component of the path space
$(\DD^{\omega},\DDD^{\otimes\omega})$ can be seen as corresponding to a program
configuration after $i$ steps and every path in $(\DD^{\omega},
\DDD^{\otimes\omega})$ as a program run. A run is called \emph{terminating} if
it corresponds to a finite path in the respective chase tree. The program $\cal
G$ is called \emph{terminating} if all its runs terminate. The following result
from the original paper trivially extends to our setting, with the notion of
\emph{weak acyclicity} remaining unchanged (see \cite{Barany+2017}).

\begin{theorem}[{cf. \cite[Theorem 3.10]{Barany+2017}}]
	Let $\mathcal G$ be a \GDL{} program. If $\mathcal G$ is weakly acyclic, 
	then $\mathcal G$ is terminating.
\end{theorem}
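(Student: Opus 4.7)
The plan is to reduce the termination claim to the well-known deterministic fact that the chase of a weakly acyclic existential Datalog program always halts, and then transfer this through the constructions from \cref{sec:assocedl,sec:seqchase,sec:parchase}. Concretely, weak acyclicity of the \GDL{} program $\G$ is defined on its associated $\EDL$ program $\G^{\exists}$ via the usual position-dependency graph, and by the classical result of Fagin et al., any weakly acyclic set of TGDs admits only chase sequences of length bounded by a polynomial in the size of the input instance (the bound depends only on $\G^{\exists}$ and on $\size{D_{\IN}}$, not on the specific values introduced). This is exactly the argument used in \cite[Theorem~3.10]{Barany+2017}.

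The key observation is that, although our probabilistic chase may introduce \emph{continuous} sampled values where \cite{Barany+2017} introduced fresh discrete labels, weak acyclicity is a purely syntactic condition: the polynomial bound on the depth of any chase sequence depends only on which positions can receive newly invented values and how those values can propagate, not on what the values actually are. Thus the bound applies uniformly to every branch of both the sequential chase tree $T_{\app,D_{\IN}}$ and the parallel chase tree $T_{\App,D_{\IN}}$, regardless of the outcomes $b \in \WW_\phi$ chosen along the way.

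Concretely, I would proceed as follows. First, recall the definition of the position-dependency graph of $\G^{\exists}$ and of weak acyclicity; by construction, the existential rules in $\G^{\exists}$ arise exactly from the probabilistic rules of $\G$ (cf.\ \labelcref{eq:new_rules}), so weak acyclicity of $\G$ coincides with weak acyclicity of $\G^{\exists}$. Second, invoke the standard termination theorem for weakly acyclic existential Datalog to obtain a uniform bound $N = N(\G^{\exists}, D_{\IN})$ on the length of any chase sequence starting from $D_{\IN}$. Third, observe that every root-to-leaf path in $T_{\app,D_{\IN}}$ corresponds to such a chase sequence, since each sequential chase step adds precisely one fact produced by an applicable rule, and \cref{lem:seqinj} ensures no instance is revisited; hence every path has length $\leq N$, and by König's lemma (nodes may have uncountable out-degree, but we only need finite depth) every path terminates. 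The analogous statement for the parallel chase follows either by the same direct argument applied to $T_{\App,D_{\IN}}$, or more economically by invoking \cref{thm:samedist} to transfer termination from the sequential to the parallel setting.

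The main (and only) subtlety is ensuring that the bound from weak acyclicity is genuinely uniform over sample outcomes: since the bound is stated in terms of the number of distinct values that can appear in each position and since each firing of a probabilistic rule introduces at most one new value in the existential position irrespective of what that value is, the argument of \cite[Theorem~3.10]{Barany+2017} carries over verbatim. No new measure-theoretic machinery is needed for this result, which is why we can call the extension trivial.
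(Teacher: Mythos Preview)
Your proposal is correct and aligns with the paper's treatment: the paper does not give a proof at all, stating only that the result ``trivially extends to our setting, with the notion of weak acyclicity remaining unchanged'' and citing \cite[Theorem~3.10]{Barany+2017}. Your write-up spells out exactly why this extension is trivial (the weak-acyclicity bound is syntactic and uniform over sample outcomes), which is more than the paper provides. One small quibble: the appeal to K\"onig's lemma is unnecessary and slightly misleading---once you have a uniform bound $N$ on path length, every path terminates immediately, with no compactness argument needed (and K\"onig's lemma would not apply here anyway given uncountable branching).
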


That is, whenever there are no circular dependencies involving probabilistic
rules, then all paths in any chase tree are finite. In general, the $\GDL$
program $\G$ terminates on input $D_{\IN}$ if and only if its existential
version is terminating on $D_{\IN}$. Likewise, $\G$ terminates on every input
probabilistic database, if its existential version terminates on all instances.
There exists a lot of research surrounding the termination of existential
Datalog programs, in particular regarding classes of programs (going way beyond
the notion of weakly acyclicity) where termination is guaranteed, or at least
decidable (see, for example, \cite{CuencaGrau+2013,Gogacz+2020}). 

For probabilistic programs, termination is a more subtle notion. A program is said to be \emph{almost surely terminating} if it terminates with probability one. If the program terminates in finitely many steps in expectation, it is called \emph{positively almost-sure terminating} \cite{BournezGarnier2005}. The hardness of these notions is studied in \cite{KaminskiKatoen2015}. Active research in probabilistic programming is concerned with identifying sufficient criteria for determining (positive) almost-sure termination \cite{Chatterjee+2020}.

A more thorough investigation of, in particular, the probabilistic termination
of generative Datalog programs is still open at this point.

\section{The Full Probabilistic Programming Datalog Language}\label{sec:ppdl}

The GDatalog language is extended by constraints to obtain the full
\emph{probabilistic-programming Datalog (PPDL)} language 
\cite[Section 5]{Barany+2017}. Formally, a \emph{PPDL program} is a pair
$( \G, \Phi )$ where $\Phi$ is some constraint specification, for example, a
first order sentence over the database schema of $\G$. The semantics of the
PPDL program is then given by conditioning the output $\G(\D) \coloneqq 
(\DD, \DDD, P)$ of $\G$ on the set of instances $\DD_{\Phi}$ satisfying the
constraint specification $\Phi$ (see \cite[Definitions 5.1 and
5.3]{Barany+2017}). That is, the probability of an event $\DDDD$ is given as
\[
	\frac
	{ P\big( \DDDD \cap \DD_{\Phi} \big) }
	{ P\big( \DD_{ \Phi } \big) }
	\text.
\]

There are multiple pitfalls to be aware of here. Conditioning the 
(sub-)probability space $\G(\D) = ( \DD, \DDD, P)$ in the suggested way 
requires not only that the set of instances $\DD_{\Phi}$ satisfying $\Phi$ is 
measurable, but also that $P( \DD_{\Phi} ) > 0$. If these two requirements are
fulfilled, then the PPDL program returns a well-defined (sub-)probabilistic 
database.

As Bárány et al.\ note, the measurability of $\DD_{\Phi}$ is already an issue in
their setting of discrete distributions unless the program $\G$ is weakly
acyclic \cite[p.~22:18]{Barany+2017}. The good news is that in our framework of
standard PDBs, the measurability of $\DD_{ \Phi }$ is constituted by
\cref{fac:measurableviews} as long as the constraint can be expressed by (for
example) a relational algebra query. 

Thus, we see the requirement of having $P( \DD_{\Phi} ) > 0$ as the more
delicate one. This discussion is bypassed in \cite{Barany+2017} by resorting to
the weakly acyclic case, where the probability space gets discrete in
their setup. For example, it might be reasonable to use constraints involving
equality when designing PPDL programs. Alas, for example equality in $\RR$ 
yields a set of Lebesgue measure $0$ (namely the diagonal) in $\RR^2$. From 
this, one can easily construct examples where $P( \DD_{ \Phi } ) = 0$ even
though $\Phi$ is a very natural and typical kind of constraint. Trying to 
condition on events of measure $0$ usually yields paradoxical results, as in
the Borel-Kolmogorov paradox \cite[p.~50~et~seq.]{Kolmogorov1956}. Still, this
does not rule out sensible definitions of a probabilistic database conditioned
on such a constraint. Work from the area of probabilistic programming however
suggests that resolving the paradox is no trivial task 
\cite{Borgstrom+2013,Jacobs2021}.

\begin{remark}[More Simulations]
	Bárány et al.\ \cite{Barany+2017} showed that their language version of PPDL
	can simulate Markov Logic Networks (MLNs) \cite{RichardsonDomingos2006} and
	Probabilistic Context-Free Grammars (PCFGs) \cite{BoothThompson1973} (see
	\cite{Icard2020}). Because we can simulate their language, we can simulate
	MLNs and PCFGs as well using the methods described in \cite{Barany+2017}.
	Since our language is more powerful, one might wonder whether these
	simulations can be reasonably extended. 

	There exist two extensions of MLNs that come to mind, one with infinitely 
	many variables (\emph{Infinite MLNs} \cite{SinglaDomingos2007}), and one 
	with uncountable variable domains (although then finitely 
	many variables; \emph{Hybrid MLNs} \cite{WangDomingos2008}). Unfortunately, 
	there is no easy correspondence between our language and these models.
	
	\begin{itemize}
		\item Hybrid MLNs describe distributions over a finite and fixed number
			of variables with uncountable domains, contrasting the unbounded size
			of our collections of facts. Moreover, the semantics of Hybrid MLNs
			only takes intervals of $\RR$ into consideration.

		\item Infinite MLNs describe a probability space over countably (possibly
			countably infinitely) many variables all of which have (to our
			understanding) at most countable domains. While the resulting
			probability space is indeed uncountable, this contrasts our setting
			with a probability distribution over finite subsets of uncountable
			spaces. In particular, our continuous distributions seem to add no
			power here and, should a correspondence between infinite MLNs and
			Probabilistic Programming Datalog exist, then it is already exhibited
			by the discrete semantics of \cite{Barany+2017}.
	\end{itemize}

	To our knowledge, there is no sensible generalization of PCFGs that involve
	uncountable spaces. Typically, the underlying domains in PCFGs are finite.
\end{remark}

\section{Conclusion}\label{sec:conclusion}

The original Probabilistic Programming Datalog language of Bárány, ten Cate,
Kimelfeld, Olteanu and Vagena \cite{Barany+2017} is limited to discrete
distributions. Given that continuous distributions appear in a variety of
application scenarios for probabilistic databases
(cf.~\cite{Cheng+2003,Deshpande+2004}), an extension with support for
continuous distributions was noted as an open problem in \cite{Barany+2017}.
In this paper, we developed such an extension.

Our key technical results are as follows:
\begin{enumerate}
	\item The Generative Datalog language of \cite{Barany+2017} can be
		faithfully extended towards the support of continuous distributions,
		adding to its expressive power. Into this extended language, one may also
		incorporate conditioning under events of positive probability just as in
		\cite{Barany+2017}.
	\item We consider the semantics using a sequential and a parallel chase
		procedure, and show that the results of programs coincide under both
		kinds of approaches. In particular, outcomes do not depend on chase 
		policies.
	\item We may also use (sub-)probabilistic databases as inputs, and we obtain
		a well-defined output sub-probabilistic database as long as the
		probability of a finite computation is greater than $0$.
\end{enumerate}

We summarize the technical developments in a high-level view of the (new) 
semantics. The generative part of a PPDL program can sample from probability
distributions in order to generate new attribute values, and it can do so
recursively. The semantics itself is described via chase trees, where different
branches correspond to different samples. This becomes technically challenging
with the introduction of continuous distributions, as nodes of the chase trees
may now have uncountably many children. Exploiting the advanced machinery of
probability and measure theory, we show that such uncountable chase trees are
encodings of a Markov process of database instances. Associated with each 
Markov process is a probability measure on its paths, that is, on the paths of
the chase tree. Each path in the tree that is starting in the root node 
corresponds to the process of building a database instance by adding facts one
by one. Paths that end in leaf nodes then correspond to well-defined database 
instances. In cutting off infinite paths, and projecting the rest back to the
represented database instances, we obtain a sub-probabilistic database
\enquote{generated} by the program, that may afterwards be conditioned on
satisfying a given set of constraints.

In addition to the added expressive power, we embed our semantics of PPDL into
the Standard PDB framework of \cite{GroheLindner2020}. This makes the language
compositional in the sense that it may use (sub-)probabilistic input databases,
and produces (sub-)probabilistic output databases within the framework. We show
the equivalence of sequential chase procedures with a notion of parallel chase,
rendering the semantics quite robust.

\medskip

\subsubsection*{Future Work}

The discrete version of PPDL can simulate every finite probabilistic database,
for example, by using the MLN simulation. Thus, PPDL can be seen as a
\emph{complete} representation system for finite PDBs. From the point of view
of probabilistic databases, the most interesting question is how powerful PPDL
is as a representation system for infinite PDBs. This is even unclear for the
purely discrete version.

Moreover, PPDL raises a lot of natural questions in the overlap of
probabilistic databases and probabilistic programming that remain unanswered.
For the constraint part of a PPDL program, can we determine from the syntax of
the program whether a particular constraint has measure $0$? Are there sensible
languages or fragments that avoid the issue of measure $0$ conditioning? In the
case of measure $0$ conditioning, can we apply techniques from the
probabilistic programming community in order to resolve the emergent problems
in a reasonable way? And can we deal algorithmically with possibly infinite
computation paths without resorting to mechanisms that ensure that all
computations are finite?

That last question touches upon the challenging field of investigating the 
termination behavior of $\GDL$ programs in detail, and without restricting
ourselves to a setting where all computations are finite. In general, whether
the execution of a $\GDL$ program terminates may depend on the random choices
that are made during the computation. That is, some paths in the chase tree may
be terminating, while others are not. This is fine, as long as, for example
the probability of the program terminating is $1$. It is an open research 
question to build a thorough understanding of the probabilistic termination of
$\GDL$ programs.

Finally, it is not known whether (and if so, in which cases) the output
probabilistic database of a generative Datalog program admits a concise
representation allowing for the output to be queried effectively. At the 
moment, a generative Datalog program should be thought of as a representation 
of a probabilistic database itself, in particular with the option to have
infinitely many possible worlds. In case of termination, we can always
approximate the output through Monte Carlo sampling, by letting the program run
multiple times. This can be extended to query answers by querying the samples.
The properties of this option are yet to be explored.

\begin{acks}
	We are grateful to Benny Kimelfeld for bringing our attention to \GDL{} and 
motivating the questions answered therein. Further thanks go to Marcel Hark 
for discussions regarding termination.

This research is partially funded by Deutsche Forschungsgemeinschaft (DFG,
German Research Foundation) under grants GR 1492/16-1 and GRK 2236 (UnRAVeL),
and by the European Research Council (ERC) under Advanced Grant 787914
(\mbox{FRAPPANT}).
\end{acks}

\bibliographystyle{plainurl}

\begin{thebibliography}{10}

\bibitem{Abiteboul+1995}
Serge Abiteboul, Richard Hull, and Richard Vianu.
\newblock {\em Foundations of Databases}.
\newblock Addison-Wesley Publishing Company, Inc., Reading, MA, USA, 1 edition,
  1995.

\bibitem{Aggarwal+2009}
Charu~C. Aggarwal and Philip~S. Yu.
\newblock {A Survey of Uncertain Data Algorithms and Applications}.
\newblock {\em IEEE Transactions on Knowledge and Data Engineering (TKDE)},
  21(5):609--623, 2009.
\newblock \href {https://doi.org/10.1109/TKDE.2008.190}
  {\path{doi:10.1109/TKDE.2008.190}}.

\bibitem{Agrawal+2009}
Parag Agrawal and Jennifer Widom.
\newblock {Continuous Uncertainty in Trio}.
\newblock In {\em Proceedings of the 3rd VLDB workshop on Management of
  Uncertain Data (MUD 2009)}, pages 17--32, Enschede, The Netherlands, 2009.
  Centre for Telematics and Information Technology (CTIT).

\bibitem{AlvianoZamayla2021}
Mario Alviano and Arnel Zamayla.
\newblock {A Speech About Generative Datalog and Non-Measurable Sets}.
\newblock In {\em Proceedings of the International Conference on Logic
  Programming 2021 Workshops co-located with the 37th International Conference
  on Logic Programming (ICLP 2021)}, 2021.
\newblock URL: \url{http://ceur-ws.org/Vol-2970/aspocpinvited2.pdf}.

\bibitem{Barany+2017}
Vince B{\'a}r{\'a}ny, Balder ten Cate, Benny Kimelfeld, Dan Olteanu, and
  Zografoula Vagena.
\newblock {Declarative Probabilistic Programming with Datalog}.
\newblock {\em ACM Transactions on Database Systems (TODS)}, 42(4):22:1--22:35,
  2017.
\newblock \href {https://doi.org/10.1145/3132700} {\path{doi:10.1145/3132700}}.

\bibitem{Bingham+2019}
Eli Bingham, Jonathan~P. Chen, Martin Jankowiak, Fritz Obermeyer, Neeraj
  Pradhan, Theofanis Karaletsos, Rohit Singh, Paul~A. Szerlip, Paul Horsfall,
  and Noah~D. Goodman.
\newblock {Pyro: Deep Universal Probabilistic Programming}.
\newblock {\em J. Mach. Learn. Res.}, 20:28:1--28:6, 2019.

\bibitem{BoothThompson1973}
T.~L. Booth and R.~A. Thompson.
\newblock {Applying Probability Measures to Abstract Languages}.
\newblock {\em IEEE Transactions on Computers}, C–22(5):442–450, 1973.
\newblock \href {https://doi.org/10.1109/T-C.1973.223746}
  {\path{doi:10.1109/T-C.1973.223746}}.

\bibitem{Borgstrom+2013}
Johannes Borgstr{\"o}m, {Andrew D.} Gordon, Michael Greenberg, James Margetson,
  and {Jurgen Van} Van~Gael.
\newblock {Measure Transformer Semantics for Bayesian Machine Learning}.
\newblock {\em Logical Methods in Computer Science}, 9(3), 2013.
\newblock \href {https://doi.org/10.2168/LMCS-9(3:11)2013}
  {\path{doi:10.2168/LMCS-9(3:11)2013}}.

\bibitem{BournezGarnier2005}
Olivier Bournez and Florent Garnier.
\newblock {Proving Positive Almost-Sure Termination}.
\newblock In {\em Proceedings of the 16th International Conference on Term
  Rewriting and Applications (RTA 2005)}, page 323–337, Berlin, Germany,
  2005. Springer-Verlag.
\newblock \href {https://doi.org/10.1007/978-3-540-32033-3_24}
  {\path{doi:10.1007/978-3-540-32033-3_24}}.

\bibitem{Cabibbo1998}
Luca Cabibbo.
\newblock {The Expressive Power of Stratified Logic Programs with Value
  Invention}.
\newblock {\em Information and Computation}, 147(1):22--56, 1998.
\newblock \href {https://doi.org/10.1006/inco.1998.2734}
  {\path{doi:10.1006/inco.1998.2734}}.

\bibitem{Cai+2013}
Zhuhua Cai, Zografoula Vagena, Luis Perez, Subramanian Arumugam, Peter~J. Haas,
  and Christopher Jermaine.
\newblock {Simulation of Database-Valued Markov Chains using SimSQL}.
\newblock In {\em Proceedings of the 2013 ACM SIGMOD International Conference
  on Management of Data (SIGMOD 2013)}, pages 637--648, New York, NY, USA,
  2013. ACM.
\newblock \href {https://doi.org/10.1145/2463676.2465283}
  {\path{doi:10.1145/2463676.2465283}}.

\bibitem{Cali+2013}
Andrea Calì, Georg Gottlob, and Michael Kifer.
\newblock {Taming the Infinite Chase: Query Answering under Expressive
  Relational Constraints}.
\newblock {\em Journal of Artificial Intelligence Research}, 48:115--174, 2013.
\newblock \href {https://doi.org/10.1613/jair.3873}
  {\path{doi:10.1613/jair.3873}}.

\bibitem{Carpenter+2017}
Bob Carpenter, Andrew Gelman, Matthew Hoffman, Daniel Lee, Ben Goodrich,
  Michael Betancourt, Marcus Brubaker, Jiqiang Guo, Peter Li, and Allen
  Riddell.
\newblock {Stan: A Probabilistic Programming Language}.
\newblock {\em Journal of Statistical Software}, 76(1), 2017.

\bibitem{Chatterjee+2020}
Krishnendu Chatterjee, Hongfei Fu, and Petr Novotn\'{y}.
\newblock {\em {Termination Analysis of Probabilistic Programs with
  Martingales}}, chapter~7, pages 221--258.
\newblock Cambridge University Press, Cambridge, UK, 2020.
\newblock \href {https://doi.org/10.1017/9781108770750.008}
  {\path{doi:10.1017/9781108770750.008}}.

\bibitem{Cheng+2003}
Reynold Cheng, Dmitri~V. Kalashnikov, and Sunil Prabhakar.
\newblock Evaluating probabilistic queries over imprecise data.
\newblock In {\em Proceedings of the 2003 ACM SIGMOD International Conference
  on Management of Data (SIGMOD 2003)}, pages 551--562, New York, NY, USA,
  2003. ACM.
\newblock URL: \url{http://doi.acm.org/10.1145/872757.872823}, \href
  {https://doi.org/10.1145/872757.872823} {\path{doi:10.1145/872757.872823}}.

\bibitem{ChengPrabhakar2003}
Reynold Cheng and Sunil Prabhakar.
\newblock {Managing Uncertainty in Sensor Databases}.
\newblock {\em ACM SIGMOD Record}, 32(4):41--46, December 2003.
\newblock \href {https://doi.org/10.1145/959060.959068}
  {\path{doi:10.1145/959060.959068}}.

\bibitem{CuencaGrau+2013}
B.~Cuenca~Grau, I.~Horrocks, M.~Krötzsch, C.~Kupke, D.~Magka, B.~Motik, and
  Z.~Wang.
\newblock {Acyclicity Notions for Existential Rules and Their Application to
  Query Answering in Ontologies}.
\newblock {\em Journal of Artificial Intelligence Research}, 47:741--808, 2013.
\newblock \href {https://doi.org/10.1613/jair.3949}
  {\path{doi:10.1613/jair.3949}}.

\bibitem{Daley+2008}
D.~J. Daley and D.~Vere-Jones.
\newblock {\em {An Introduction to the Theory of Point Processes, Volume II:
  General Theory and Structure}}.
\newblock Probability and its Applications. Springer, New York, NY, USA, 2nd
  edition, 2008.
\newblock \href {https://doi.org/10.1007/978-0-387-49835-5}
  {\path{doi:10.1007/978-0-387-49835-5}}.

\bibitem{DeRaedt+2007}
Luc De~Raedt, Angelika Kimmig, and Hannu Toivonen.
\newblock {ProbLog: A Probabilistic Prolog and Its Application in Link
  Discovery}.
\newblock In {\em Proceedings of the 20th International Joint Conference on
  Artificial Intelligence (IJCAI 2007)}, pages 2468--2473, San Francisco, CA,
  USA, 2007. Morgan Kaufmann Publishers Inc.

\bibitem{DeRaedt+2016}
Luc~De De~Raedt, Kristian Kersting, Sriraam Natarajan, and David Poole.
\newblock {\em Statistical Relational Artificial Intelligence: Logic,
  Probability, and Computation}, volume~10.
\newblock Morgan \&{} Claypool Publishers, 2016.
\newblock \href {https://doi.org/10.2200/S00692ED1V01Y201601AIM032}
  {\path{doi:10.2200/S00692ED1V01Y201601AIM032}}.

\bibitem{Deshpande+2004}
Amol Deshpande, Carlos Guestrin, Samuel~R. Madden, Joseph~M. Hellerstein, and
  Wei Hong.
\newblock {Model-Driven Data Acquisition in Sensor Networks}.
\newblock In {\em Proceedings 2004 VLDB Conference (VLDB 2004)}, pages
  588--599. Morgan Kaufmann, 2004.
\newblock \href {https://doi.org/10.1016/B978-012088469-8.50053-X}
  {\path{doi:10.1016/B978-012088469-8.50053-X}}.

\bibitem{Deutch+2010}
Daniel Deutch, Christoph Koch, and Tova Milo.
\newblock {On Probabilistic Fixpoint and Markov Chain Query Languages}.
\newblock In {\em Proceedings of the 29th ACM SIGMOD-SIGACT-SIGART Symposium on
  Principles of Database Systems (PODS 2010)}, pages 215--226, New York, NY,
  USA, 2010. ACM.
\newblock \href {https://doi.org/10.1145/1807085.1807114}
  {\path{doi:10.1145/1807085.1807114}}.

\bibitem{Dwork+2006}
Cynthia Dwork, Frank McSherry, Kobbi Nissim, and Adam Smith.
\newblock {Calibrating Noise to Sensitivity in Private Data Analysis}.
\newblock In {\em Theory of Cryptography (TCC 2006)}, Lecture Notes in Computer
  Science, pages 265--284, Berlin, Germany, 2006. Springer.
\newblock \href {https://doi.org/10.1007/11681878_14}
  {\path{doi:10.1007/11681878_14}}.

\bibitem{Fierens+2015}
Daan Fierens, Guy Van~den Broeck, Joris Renkens, Dimitar Shterionov, Bernd
  Gutmann, Ingo Thon, Gerda Janssens, and Luc~De Raedt.
\newblock {Inference and Learning in Probabilistic Logic Programs Using
  Weighted Boolean Formulas}.
\newblock {\em Theory and Practice of Logic Programming}, 15(3):358--401, 2015.
\newblock \href {https://doi.org/10.1017/S1471068414000076}
  {\path{doi:10.1017/S1471068414000076}}.

\bibitem{Fremlin2013}
David~H. Fremlin.
\newblock {\em {Measure Theory. Vol. IV: Topological Measure Spaces, Part I}}.
\newblock Torres Fremlin, 2nd edition, 2013.

\bibitem{Fuhr1995}
Norbert Fuhr.
\newblock {Probabilistic Datalog -- A Logic for Powerful Retrieval Methods}.
\newblock In {\em Proceedings of the 18th Annual International ACM SIGIR
  Conference on Research and Development in Information Retrieval (SIGIR
  1995)}, pages 282--290, New York, NY, USA, 1995. ACM.
\newblock \href {https://doi.org/10.1145/215206.215372}
  {\path{doi:10.1145/215206.215372}}.

\bibitem{Fuhr2000}
Norbert Fuhr.
\newblock {Probabilistic Datalog: Implementing Logical Information Retrieval
  for Advanced Applications}.
\newblock {\em Journal of the American Society for Information Science},
  51(2):95--110, 2000.
\newblock \href
  {https://doi.org/https://doi.org/10.1002/(SICI)1097-4571(2000)51:2<95::AID-ASI2>3.0.CO;2-H}
  {\path{doi:https://doi.org/10.1002/(SICI)1097-4571(2000)51:2<95::AID-ASI2>3.0.CO;2-H}}.

\bibitem{GaudardHadwin1989}
Marie Gaudard and Donald Hadwin.
\newblock {Sigma-Algebras on Spaces of Probability Measures}.
\newblock {\em Scandinavian Journal of Statistics}, 16(2):169--175, 1989.

\bibitem{Gogacz+2020}
Tomasz Gogacz, Jerzy Marcinkowski, and Andreas Pieris.
\newblock {All-Instances Restricted Chase Termination}.
\newblock In {\em Proceedings of the 39th ACM SIGMOD-SIGACT-SIGAI Symposium on
  Principles of Database Systems (PODS 2020)}, pages 245--258, New York, NY,
  USA, 2020. Association for Computing Machinery.
\newblock \href {https://doi.org/10.1145/3375395.3387644}
  {\path{doi:10.1145/3375395.3387644}}.

\bibitem{Goodman2013}
Noah~D. Goodman.
\newblock The principles and practice of probabilistic programming.
\newblock In {\em Proceedings of the 40th Annual ACM SIGPLAN-SIGACT Symposium
  on Principles of Programming Languages (POPL 2013)}, pages 399--402, New
  York, NY, USA, 2013. ACM.
\newblock \href {https://doi.org/10.1145/2429069.2429117}
  {\path{doi:10.1145/2429069.2429117}}.

\bibitem{Goodman+2008}
Noah~D. Goodman, Vikash~K. Mansinghka, Daniel Roy, Keith Bonawitz, and
  Joshua~B. Tenenbaum.
\newblock {Church: A Language for Generative Models}.
\newblock In {\em Proceedings of the Twenty-Fourth Conference on Uncertainty in
  Artificial Intelligence (UAI 2008)}, pages 220--229, Arlington, VA, USA,
  2008. AUAI Press.

\bibitem{Gordon+2014}
Andrew~D. Gordon, Thomas~A. Henzinger, Aditya~V. Nori, and Sriram~K. Rajamani.
\newblock Probabilistic programming.
\newblock In {\em Proceedings of the on Future of Software Engineering (FOSE
  2014)}, pages 167--181, New York, NY, USA, 2014. ACM.
\newblock \href {https://doi.org/10.1145/2593882.2593900}
  {\path{doi:10.1145/2593882.2593900}}.

\bibitem{Gottlob+2013}
Georg Gottlob, Thomas Lukasiewicz, Maria~Vanina Martinez, and Gerardo~I.
  Simari.
\newblock {Query Answering under Probabilistic Uncertainty in Datalog +/-
  Ontologies}.
\newblock {\em Annals of Mathematics and Artificial Intelligence},
  69(1):37--72, 2013.
\newblock \href {https://doi.org/10.1007/s10472-013-9342-1}
  {\path{doi:10.1007/s10472-013-9342-1}}.

\bibitem{Gottlob+2014}
Georg Gottlob, Thomas Lukasiewicz, and Andreas Pieris.
\newblock {Datalog+/--: Questions and Answers}.
\newblock In {\em Proceedings of the Fourteenth International Conference on
  Principles of Knowledge Representation and Reasoning (KR 2014)}, page~4. AAAI
  Press, 2014.
\newblock Invited Talk.
\newblock URL:
  \url{https://www.aaai.org/ocs/index.php/KR/KR14/paper/view/7965}.

\bibitem{Grohe+2020}
Martin Grohe, Benjamin~Lucien Kaminski, Joost-Pieter Katoen, and Peter Lindner.
\newblock Generative datalog with continuous distributions.
\newblock In {\em Proceedings of the 39th ACM SIGMOD-SIGACT-SIGAI Symposium on
  Principles of Database Systems (PODS 2020)}, pages 347--360, New York, NY,
  USA, 2020. Association for Computing Machinery.
\newblock \href {https://doi.org/10.1145/3375395.3387659}
  {\path{doi:10.1145/3375395.3387659}}.

\bibitem{GroheLindner2019}
Martin Grohe and Peter Lindner.
\newblock {Probabilistic Databases with an Infinite Open-World Assumption}.
\newblock In {\em Proceedings of the 38th ACM SIGMOD-SIGACT-SIGAI Symposium on
  Principles of Database Systems (PODS 2019)}, pages 17--31, New York, NY, USA,
  2019. ACM.
\newblock \href {https://doi.org/10.1145/3294052.3319681}
  {\path{doi:10.1145/3294052.3319681}}.

\bibitem{GroheLindner2020}
Martin Grohe and Peter Lindner.
\newblock {Infinite Probabilistic Databases}.
\newblock In {\em 23rd International Conference on Database Theory (ICDT
  2020)}, pages 16:1--16:20, Dagstuhl, Germany, 2020. Schloss Dagstuhl --
  Leibniz-Zentrum f\"ur Informatik.
\newblock \href {https://doi.org/10.4230/LIPIcs.ICDT.2020.16}
  {\path{doi:10.4230/LIPIcs.ICDT.2020.16}}.

\bibitem{Gutmann+2011}
Bernd Gutmann, Manfred Jaeger, and Luc De~Raedt.
\newblock {Extending ProbLog with Continuous Distributions}.
\newblock In {\em Inductive Logic Progamming (ILP 2010)}, volume 6489 of {\em
  Lecture Notes in Computer Science}, pages 76--91, Berlin, Germany and
  Heidelberg, Germany, 2011. Springer.
\newblock \href {https://doi.org/10.1007/978-3-642-21295-6_12}
  {\path{doi:10.1007/978-3-642-21295-6_12}}.

\bibitem{Icard2020}
Thomas~F. Icard.
\newblock {Calibrating Generative Models: The Probabilistic
  Chomsky–Schützenberger Hierarchy}.
\newblock {\em Journal of Mathematical Psychology}, 95:102308, 2020.
\newblock \href {https://doi.org/10.1016/j.jmp.2019.102308}
  {\path{doi:10.1016/j.jmp.2019.102308}}.

\bibitem{Jacobs2021}
Jules Jacobs.
\newblock Paradoxes of probabilistic programming.
\newblock In {\em Proceedings of the 48th ACM SIGPLAN Symposium on Principles
  of Programming Languages (POPL 2021)}, volume~5, page~26. ACM, 2021.

\bibitem{Jampani+2011}
Ravi Jampani, Fei Xu, Mingxi Wu, Luis Perez, Chris Jermaine, and Peter~J. Haas.
\newblock {The Monte Carlo Database System: Stochastic Analysis Close to the
  Data}.
\newblock {\em ACM Transactions on Database Systems (TODS)}, 36(3):18:1--18:41,
  2011.
\newblock \href {https://doi.org/10.1145/2000824.2000828}
  {\path{doi:10.1145/2000824.2000828}}.

\bibitem{Kallenberg2002}
Olav Kallenberg.
\newblock {\em {Foundations of Modern Probability}}.
\newblock Springer Series in Statistics. Probability and its Applications.
  Springer, New York, NY, USA, 2nd edition, 2002.
\newblock \href {https://doi.org/10.1007/978-1-4757-4015-8}
  {\path{doi:10.1007/978-1-4757-4015-8}}.

\bibitem{KaminskiKatoen2015}
Benjamin~Lucien Kaminski and Joost-Pieter Katoen.
\newblock {On the Hardness of Almost–Sure Termination}.
\newblock In {\em Mathematical Foundations of Computer Science 2015 (MFCS
  2015)}, Lecture Notes in Computer Science, pages 307--318, Berlin,
  Heidelberg, 2015. Springer.
\newblock \href {https://doi.org/10.1007/978-3-662-48057-1\_24}
  {\path{doi:10.1007/978-3-662-48057-1\_24}}.

\bibitem{Kimmig+2012}
Angelika Kimmig, Stephen Bach, Matthias Broecheler, Bert Huang, and Lise
  Getoor.
\newblock {A Short Introduction to Probabilistic Soft Logic}.
\newblock In {\em Proceedings of the NIPS Workshop on Probabilistic
  Programming: Foundations and Applications}, pages 1--4, 2012.

\bibitem{KollerFriedman2009}
Daphne Koller and Nir Friedman.
\newblock {\em Probabilistic Graphical Models: Principles and Techniques}.
\newblock Adaptive Computation and Machine Learning. The MIT Press, Cambridge,
  MA, USA, 2009.

\bibitem{Kolmogorov1956}
Andrei~Nikolajewitsch Kolmogorov.
\newblock {\em {Foundations of the Theory of Probability}}.
\newblock Chelsea Publishing Company, New York, NY, USA, {2nd English} edition,
  1956.

\bibitem{Kozen1981}
Dexter Kozen.
\newblock {Semantics of Probabilistic Programs}.
\newblock {\em Journal of Computer and System Sciences}, 22(3):328–350, 1981.
\newblock \href {https://doi.org/10.1016/0022-0000(81)90036-2}
  {\path{doi:10.1016/0022-0000(81)90036-2}}.

\bibitem{DBLP:conf/stoc/Kozen83}
Dexter Kozen.
\newblock A probabilistic {PDL}.
\newblock In {\em {STOC}}, pages 291--297. {ACM}, 1983.

\bibitem{Kuratowski+1965}
C.~Kuratowski and C.~Ryll-Nardzewski.
\newblock {A General Theorem on Selectors}.
\newblock {\em Bulletin of the Polish Academy of Sciences}, 13:397--403, 1965.

\bibitem{Libkin2012}
Leonid Libkin.
\newblock {\em {Elements of Finite Model Theory}}.
\newblock Springer Berlin Heidelberg, Berlin, Heidelberg, 2012.
\newblock \href {https://doi.org/10.1007/978-3-662-07003-1}
  {\path{doi:10.1007/978-3-662-07003-1}}.

\bibitem{Limpert+2001}
Eckhard Limpert, Werner~A. Stahel, and Markus Abbt.
\newblock {Log-normal Distributions across the Sciences: Keys and Clues}.
\newblock {\em BioScience}, 51(5):341--352, 2001.
\newblock \href
  {https://doi.org/10.1641/0006-3568(2001)051[0341:LNDATS]2.0.CO;2}
  {\path{doi:10.1641/0006-3568(2001)051[0341:LNDATS]2.0.CO;2}}.

\bibitem{Lindner2021}
Peter Lindner.
\newblock {\em {The Theory of Infinite Probabilistic Databases}}.
\newblock PhD thesis, RWTH Aachen University, 2021.
\newblock \href {https://doi.org/10.18154/RWTH-2021-10669}
  {\path{doi:10.18154/RWTH-2021-10669}}.

\bibitem{Milch2006}
Brian Milch.
\newblock {\em {Probabilistic Models with Unknown Objects}}.
\newblock PhD thesis, University of California at Berkeley, 2006.

\bibitem{Milch+2005}
Brian Milch, Bhaskara Marthi, Stuart Russell, David Sontag, David~L. Ong, and
  Andrey Kolobov.
\newblock {{\scshape{}Blog}: Probabilistic Models with Unknown Objects}.
\newblock In {\em Proceedings of the 19th International Joint Conference on
  Artificial Intelligence (IJCAI 2005)}, pages 1352--1359, San Francisco, CA,
  USA, 2005. Morgan Kaufmann.

\bibitem{Nori+2014}
Aditya~V. Nori, Chung-Kil Hur, Sriram~K. Rajamani, and Selva Samuel.
\newblock {R2: An Efficient {MCMC} Sampler for Probabilistic Programs}.
\newblock In {\em Proceedings of the 28th AAAI Conference on Artificial
  Intelligence}, pages 2476--2482. {AAAI} Press, 2014.

\bibitem{Pearl1988}
Judea Pearl.
\newblock {\em {Probabilistic Reasoning in Intelligent Systems: Networks of
  Plausible Inference}}.
\newblock The Morgan Kaufmann Series in Representation and Reasoning. Elsevier
  Inc., Imprint: Morgan Kaufmann Publishers Inc., San Francisco, CA, USA, 1988.
\newblock \href {https://doi.org/10.1016/C2009-0-27609-4}
  {\path{doi:10.1016/C2009-0-27609-4}}.

\bibitem{Pfeffer2009}
Avi Pfeffer.
\newblock {Figaro: An Object-Oriented Probabilistic Programming Language}.
\newblock Technical report, Charles River Analytics, 2009.

\bibitem{RichardsonDomingos2006}
Matthew Richardson and Pedro Domingos.
\newblock Markov logic networks.
\newblock {\em Machine Learning}, 62(1--2):107--136, 2006.
\newblock \href {https://doi.org/10.1007/s10994-006-5833-1}
  {\path{doi:10.1007/s10994-006-5833-1}}.

\bibitem{Saheb-Djahromi1978}
N.~Saheb-Djahromi.
\newblock {Probabilistic LCF}.
\newblock In {\em Mathematical Foundations of Computer Science 1978 (MFCS
  1979)}, Lecture Notes in Computer Science, pages 442--451, Berlin, Germany,
  1978. Springer.
\newblock \href {https://doi.org/10.1007/3-540-08921-7_92}
  {\path{doi:10.1007/3-540-08921-7_92}}.

\bibitem{Singh+2008}
Sarvjeet Singh, Chris Mayfield, Sagar Mittal, Sunil Prabhakar, Susanne
  Hambrusch, and Rahul Shah.
\newblock {Orion 2.0: Native Support for Uncertain Data}.
\newblock In {\em Proceedings of the 2008 ACM SIGMOD International Conference
  on Management of Data (SIGMOD 2008)}, pages 1239--1242, New York, NY, USA,
  2008. ACM.
\newblock \href {https://doi.org/10.1145/1376616.1376744}
  {\path{doi:10.1145/1376616.1376744}}.

\bibitem{SinglaDomingos2007}
Parag Singla and Pedro Domingos.
\newblock Markov logic in infinite domains.
\newblock In {\em Proceedings of the 23rd Conference on Uncertainty in
  Artificial Intelligence (UAI 2007)}, pages 368--375, Arlington, VA, USA,
  2007. AUAI Press.

\bibitem{Srivastava1998}
Sashi~Mohan Srivastava.
\newblock {\em {A Course on Borel Sets}}, volume 180 of {\em Graduate Texts in
  Mathematics}.
\newblock Springer, New York, NY, USA, 1998.

\bibitem{Suciu+2011}
Dan Suciu, Dan Olteanu, Christopher R\'{e}, and Christoph Koch.
\newblock {\em {Probabilistic Databases}}.
\newblock Synthesis Lectures on Data Management. Morgan \&{} Claypool, San
  Rafael, CA, USA, 1st edition, 2011.
\newblock \href {https://doi.org/10.2200/S00362ED1V01Y201105DTM016}
  {\path{doi:10.2200/S00362ED1V01Y201105DTM016}}.

\bibitem{Tolpin+2015}
David Tolpin, Jan-Willem van~de Meent, and Frank Wood.
\newblock {Probabilistic Programming in Anglican}.
\newblock In {\em Machine Learning and Knowledge Discovery in Databases (ECML
  PKDD 2015)}, volume 9286 of {\em Lecture Notes in Computer Science}, pages
  308--311, Cham, Switzerland, 2015. Springer International Publishing.
\newblock \href {https://doi.org/10.1007/978-3-319-23461-8_36}
  {\path{doi:10.1007/978-3-319-23461-8_36}}.

\bibitem{vandeMeent+2018}
Jan-Willem van~de Meent, Brooks Paige, Hongseok Yang, and Wood Frank.
\newblock {An Introduction to Probabilistic Programming}.
\newblock {\em arXiv e-prints}, 2018.
\newblock URL: \url{https://arxiv.org/abs/1809.10756}.

\bibitem{vandenBroeckSuciu2017}
Guy Van~den Broeck and Dan Suciu.
\newblock {Query Processing on Probabilistic Data: A Survey}.
\newblock {\em Foundations and Trends\textregistered{} in Databases},
  7(3--4):197–341, 2017.
\newblock \href {https://doi.org/10.1561/1900000052}
  {\path{doi:10.1561/1900000052}}.

\bibitem{Wanders+2016}
Brend Wanders, Maurice van Keulen, and Jan Flokstra.
\newblock {JudgeD: A Probabilistic Datalog with Dependencies}.
\newblock In {\em The Workshops of the Thirtieth AAAI Conference on Artificial
  Intelligence: Technical Reports WS-16-01 -- WS-16-15}, Palo Alto, CA, USA,
  2016. AAAI Press.

\bibitem{WangDomingos2008}
Jue Wang and Pedro Domingos.
\newblock {Hybrid Markov Logic Networks}.
\newblock {\em Proceedings of the 23rd AAAI Conference on Artificial
  Intelligence (AAAI 2008)}, pages 1106--1111, 2008.
\newblock URL: \url{https://www.aaai.org/Papers/AAAI/2008/AAAI08-175.pdf}.

\bibitem{Wu+2018}
Yi~Wu, Siddharth Srivastava, Nicholas Hay, Simon Du, and Stuart Russell.
\newblock {Discrete-Continuous Mixtures in Probabilistic Programming:
  Generalized Semantics and Inference Algorithms}.
\newblock In {\em Proceedings of the 35th International Conference on Machine
  Learning (ICML 2018)}, volume~80 of {\em Proceedings of Machine Learning
  Research}, pages 5343--5352. PMLR, 2018.

\end{thebibliography}

\appendix
\section{Background Results from Measure Theory}\label{app:meas}
This section is intended to extend \cref{ssec:mt} by some well-known results.
They can accordingly be found in the literature \cite{Kallenberg2002,Srivastava1998}.

\subsection{Measurability of Functions and Sets}

The following statement says that collections of measurable functions yield a
function that is measurable with respect to the product $\sigma$-algebra.

\begin{fact}[{\cite[Lemma 1.8, p. 5]{Kallenberg2002}}]\label{fac:families}
	If $(\XX,\XXX)$ and $(\XX_i,\XXX_i)$ are measurable spaces (for $i$ in some 
	index set $I$) and $f_i \from \XX \to \XX_i$ is measurable for all $i\in I$, 
	then $f \from \XX \to \prod_{i\in I}\XX_i \with x \mapsto	(f_i(x))_{i\in I}$ 
	is $\big(\XXX,\bigotimes_{i\in I}\XXX_i\big)$-measurable.
\end{fact}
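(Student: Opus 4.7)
The plan is to use the standard fact that measurability can be checked on a generating family: if $\YYY = \sigma(\GGG)$ for some $\GGG \subseteq \powerset(\YY)$, then a function $f\from \XX \to \YY$ is $(\XXX,\YYY)$-measurable if and only if $f^{-1}(\GGGG) \in \XXX$ for every $\GGGG \in \GGG$. This reduces the task to checking measurability of $f$ on a convenient generating set of the product $\sigma$-algebra.

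By definition, $\bigotimes_{i\in I}\XXX_i$ is the coarsest $\sigma$-algebra on $\prod_{i\in I}\XX_i$ making all canonical projections $\pi_i\from \prod_{j\in I}\XX_j \to \XX_i$ measurable. Equivalently, it is generated by the \emph{cylinder sets}
\[
	\GGG \coloneqq \set[\big]{ \pi_i^{-1}(\XXXX_i) \with i \in I,\; \XXXX_i \in \XXX_i }\text.
\]
So it suffices to show that $f^{-1}(\GGGG) \in \XXX$ for every $\GGGG \in \GGG$.

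Fix $i \in I$ and $\XXXX_i \in \XXX_i$, and consider $\GGGG = \pi_i^{-1}(\XXXX_i)$. Then
\[
	f^{-1}(\GGGG)
	= f^{-1}\big(\pi_i^{-1}(\XXXX_i)\big)
	= (\pi_i \after f)^{-1}(\XXXX_i)
	= f_i^{-1}(\XXXX_i)\text,
\]
where the last equality holds because $(\pi_i \after f)(x) = \pi_i((f_j(x))_{j\in I}) = f_i(x)$ by construction of $f$. Since $f_i$ is $(\XXX,\XXX_i)$-measurable and $\XXXX_i \in \XXX_i$, we have $f_i^{-1}(\XXXX_i) \in \XXX$, as required.

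There is no real obstacle here; the argument is entirely standard and the only subtlety worth stating explicitly is the \enquote{measurability on generators} principle, which is itself a routine consequence of the fact that $\set{\YYYY \subseteq \YY \with f^{-1}(\YYYY) \in \XXX}$ is a $\sigma$-algebra on $\YY$ containing $\GGG$, and hence contains $\sigma(\GGG)$. The proof does not require any assumption on the cardinality of $I$, which matches the statement's generality.
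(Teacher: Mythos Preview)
Your proof is correct and is the standard argument. Note, however, that the paper does not actually prove this statement: it is listed in the appendix as a background fact with a citation to Kallenberg, so there is no proof in the paper to compare against. Your argument via the generating family of cylinder sets is exactly the textbook approach (and indeed essentially what one finds in Kallenberg), so nothing further is needed.
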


The next two results are concerned with the measurability of certain kinds of
integration maps.

\begin{fact}[{\cite[Lemma 1.41(i), p. 21]{Kallenberg2002}}]%
	\label{fac:kernelandfunction}%
	Let $\mu$ be a stochastic kernel from $\XX$ to $\YY$ and let $f \from \XX
	\times\YY \to \RR_{\geq 0}$ be measurable. Then
	\begin{equation*}
		\XX\to\RR_{\geq 0} \with x \mapsto \int f(x,\placeholder) 
		\:\d{\mu(x,\placeholder})
	\end{equation*}
	is measurable.
\end{fact}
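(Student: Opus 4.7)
The plan is to prove the statement by the standard measure-theoretic induction on the integrand $f$: first handle indicators of measurable rectangles, extend to all measurable sets in the product $\sigma$-algebra via a Dynkin class argument, then bootstrap to simple nonnegative functions by linearity, and finally to arbitrary nonnegative measurable $f$ by monotone convergence. Throughout, let $I(f) \from \XX \to \RR_{\geq 0}$ denote the map $x \mapsto \int f(x, \placeholder) \d{\mu(x,\placeholder)}$ whose measurability we want to establish.

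First I would handle indicators. For a measurable rectangle $\ZZ = \XXXX \times \YYYY$ with $\XXXX \in \XXX$ and $\YYYY \in \YYY$, a direct computation gives
\begin{equation*}
	I(\mathbbm{1}_{\ZZ})(x) = \mathbbm{1}_{\XXXX}(x) \cdot \mu(x, \YYYY)\text,
\end{equation*}
which is measurable because $\mathbbm{1}_{\XXXX}$ is measurable by assumption and $\mu(\placeholder, \YYYY)$ is measurable by the very definition of a stochastic kernel. To lift this to all $\ZZ \in \XXX \otimes \YYY$, I would invoke the $\pi$--$\lambda$ (Dynkin) theorem: the family of rectangles is $\pi$-stable and generates $\XXX \otimes \YYY$, while the family $\DDDD \coloneqq \set{\ZZ \in \XXX\otimes\YYY \with I(\mathbbm{1}_{\ZZ})\text{ is measurable}}$ is closed under complements (using $\mu(x,\YY)=1$ to write $I(\mathbbm{1}_{\ZZ^{\c}}) = 1 - I(\mathbbm{1}_{\ZZ})$) and under countable disjoint unions (using countable additivity of $\mu(x,\placeholder)$ plus the fact that pointwise sums of countably many nonnegative measurable functions are measurable). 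Hence $\DDDD$ is a Dynkin class containing a $\pi$-system of generators, so $\DDDD = \XXX \otimes \YYY$.

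Next I would pass to simple functions. If $f = \sum_{i=1}^n c_i \mathbbm{1}_{\ZZ_i}$ with $c_i \geq 0$ and $\ZZ_i \in \XXX\otimes\YYY$, then by linearity of the integral, $I(f) = \sum_i c_i I(\mathbbm{1}_{\ZZ_i})$, which is a finite nonnegative linear combination of measurable functions and hence measurable. Finally, for general measurable $f\from\XX\times\YY\to\RR_{\geq 0}$, choose an increasing sequence $(f_n)_{n\in\NN}$ of simple nonnegative measurable functions with $f_n \nearrow f$ pointwise. Monotone convergence applied fiberwise (for each fixed $x$, $\mu(x,\placeholder)$ is a measure on $(\YY,\YYY)$) yields $I(f_n)(x) \nearrow I(f)(x)$ for every $x \in \XX$. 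Pointwise limits of measurable functions being measurable, we conclude $I(f)$ is $(\XXX, \Borel(\RR_{\geq 0}))$-measurable.

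The only nontrivial step is the extension from rectangles to arbitrary sets in $\XXX \otimes \YYY$; the Dynkin argument cleanly disposes of it, provided one remembers that $\mu(x,\placeholder)$ is a probability measure so that complementation behaves well. (If one wants to handle sub-probability or $\sigma$-finite kernels instead, the same argument works after the routine adjustment of restricting to sets of finite measure.) Everything else is standard and requires no further measurability input beyond what was already granted.
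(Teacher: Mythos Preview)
Your proof is correct and follows the standard textbook route. Note, however, that the paper does not give its own proof of this statement: it is recorded in the appendix as a background fact cited verbatim from Kallenberg's \emph{Foundations of Modern Probability} (Lemma~1.41(i)), without argument. Your monotone-class approach (rectangles $\to$ Dynkin/$\pi$--$\lambda$ $\to$ simple functions $\to$ monotone convergence) is precisely the standard proof one finds in Kallenberg and similar sources, so there is nothing to compare against here.
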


\begin{fact}[{\cite[Lemma 1.26, p. 14]{Kallenberg2002}}]\label{fac:measint}
	Let $(\XX, \XXX)$ and $(\YY, \YYY)$ be measurable spaces, $\mu$ a
	$\sigma$-finite measure on $\XX$ and $f \from \XX \times \YY \to \RR_{\geq
	0}$ a measurable function. Then
	\begin{enumerate}
		\item the \emph{$y$-section} $f(\placeholder,y) \colon \XX \to \RR_{\geq 
			0}$ of $f$ is $(\XXX, \Borel(\RR_{\geq 0}))$-measurable for all $y 
			\in \YY$, and
		\item the function $y \mapsto \int f(x,y) \:\mu(\d{x})$ is 
			$(\YYY, \Borel[0,1])$-mea\-sur\-able.\qedhere
	\end{enumerate}
\end{fact}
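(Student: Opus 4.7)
The plan is to prove the two parts in order, with the first being essentially immediate and the second proceeding by the standard ``measure-theoretic induction'' / monotone class argument, where the $\sigma$-finiteness hypothesis enters.

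For part (1), I would observe that $f(\placeholder, y) = f \after \iota_y$ where $\iota_y \from \XX \to \XX \times \YY$ is the \enquote{slice} map $x \mapsto (x,y)$. Since the two coordinate projections of $\iota_y$ are the identity $\XX \to \XX$ and the constant map to $y$ respectively, both of which are trivially measurable, $\iota_y$ is $(\XXX, \XXX \otimes \YYY)$-measurable by \cref{fac:families}. The composition of two measurable functions is measurable, so $f(\placeholder, y)$ is $(\XXX, \Borel(\RR_{\geq 0}))$-measurable.

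For part (2), I would run the standard approximation scheme in three stages. First assume $\mu$ is finite and take $f = \mathbbm{1}_A$ for $A \in \XXX \otimes \YYY$; then $\int f(x,y) \:\d\mu(x) = \mu(A_y)$, where $A_y = \set{x \with (x,y) \in A}$ is the $y$-section. I would let $\LL \subseteq \XXX \otimes \YYY$ denote the family of those $A$ for which $y \mapsto \mu(A_y)$ is $(\YYY, \Borel(\RR_{\geq 0}))$-measurable, and show that $\LL$ is a Dynkin system: closure under complements uses $\mu(\XX) < \infty$ and the identity $\mu((A^{\c})_y) = \mu(\XX) - \mu(A_y)$; closure under countable disjoint unions uses $\sigma$-additivity of $\mu$ together with the fact that pointwise limits of measurable functions are measurable. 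Since $\LL$ contains the $\pi$-system of measurable rectangles $B \times C$ (for which $\mu((B \times C)_y) = \mu(B) \cdot \mathbbm{1}_C(y)$ is obviously measurable in $y$), the $\pi$-$\lambda$ theorem gives $\LL = \XXX \otimes \YYY$. Second, by linearity this extends to nonnegative simple functions, and third, by the monotone convergence theorem and the fact that every nonnegative measurable $f$ is a pointwise increasing limit of simple functions $f_n \nearrow f$, we get that $y \mapsto \int f(x,y) \:\d\mu(x) = \lim_n \int f_n(x,y) \:\d\mu(x)$ is measurable as a pointwise limit of measurable functions.

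The main obstacle — and the reason $\sigma$-finiteness is assumed rather than just finiteness — is dropping the finiteness assumption on $\mu$. I would resolve this by fixing a partition $\XX = \biguplus_{n \in \NN} \XX_n$ with $\mu(\XX_n) < \infty$ and writing $\int f(x,y) \:\d\mu(x) = \sum_{n \in \NN} \int_{\XX_n} f(x,y) \:\d\mu(x)$, where each summand is measurable in $y$ by the finite case applied to $\mu\restriction_{\XX_n}$, so the countable sum remains measurable. Routine sanity checks (that the target $\sigma$-algebra $\Borel(\RR_{\geq 0})$ is the right one, and that the integral may take the value $\infty$ without disturbing measurability, since $\Borel(\RR_{\geq 0} \cup \set{\infty})$ is generated by sets of the form $[0,\alpha)$) complete the argument.
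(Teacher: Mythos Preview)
Your proof is correct and follows the standard textbook route (Dynkin/$\pi$-$\lambda$ argument on indicators, then simple functions, then monotone convergence, then $\sigma$-finite reduction). However, the paper does not actually prove this statement: it is stated as a background \emph{Fact} in the appendix with a citation to \cite[Lemma 1.26]{Kallenberg2002} and no proof is given. So there is no paper proof to compare against; your argument is essentially the one Kallenberg gives at the cited location.

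One minor remark: you correctly flagged at the end that the target $\sigma$-algebra in part~(2) should be $\Borel(\RR_{\geq 0})$ (or $\Borel([0,\infty])$) rather than $\Borel[0,1]$ as printed in the paper --- the integral need not be bounded by $1$, so this is a typo in the statement, and your sanity check is well placed.
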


If we have a measurable function between two standard Borel spaces, then the
image of measurable sets needs not to be measurable in general, the standard
example perhaps being projection functions (see \cite[Proposition 4.1.1 and 
Theorem 4.1.5]{Srivastava1998}). Given certain conditions however, measurable 
sets have measurable images under measurable functions:

\begin{fact}[{\cite[Theorem 4.5.4, p. 153]{Srivastava1998}}]\label{fac:measinjectionimages}
	Let $(\XX,\XXX)$ and $(\YY,\YYY)$ be standard Borel, $\XXXX \in \XXX$ and let
	$f \from \XXXX \to \YY$ be an injective, $(\XXX\restriction_{\XXXX},\YYY)$-measurable 
	function. Then $f(\XXXX) \in \YYY$.
\end{fact}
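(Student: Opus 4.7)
The plan is to reduce the claim to the classical \emph{Lusin--Souslin theorem} of descriptive set theory: an injective Borel-measurable image of a standard Borel space is itself standard Borel. First, I would note that since $\XXXX \in \XXX$ and $(\XX, \XXX)$ is standard Borel, the trace space $(\XXXX, \XXX\restriction_{\XXXX})$ is itself standard Borel. This allows me to discard $\XX$ entirely and work with $f$ as a measurable injection from the standard Borel space $\XXXX$ into $\YY$.

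Next, I would analyse $f$ through its graph. As recalled in Section~2.1.3 of the paper, measurability of $f$ between standard Borel spaces yields $\graph(f) \in (\XXX\restriction_{\XXXX}) \otimes \YYY$. The image can then be written as the projection $f(\XXXX) = \pi_{\YY}(\graph(f))$, and by injectivity of $f$ this projection is one-to-one on $\graph(f)$. So $f(\XXXX)$ is realised as a one-to-one Borel image of a Borel subset of a standard Borel product space, which is again standard Borel when equipped with its trace $\sigma$-algebra.

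The heart of the argument is to upgrade \emph{injective Borel image} to \emph{Borel set}. I would do this by a change-of-topology step: fix compatible Polish topologies on $\XXXX$ and $\YY$, then refine the Polish topology on $\XXXX$ so that $f$ becomes continuous while $\XXXX$ remains Polish. This is the standard result that countably many Borel sets in a Polish space can be jointly promoted to clopen sets without destroying Polishness; applying it to a countable generator of $\YYY$ pulled back along $f$ makes $f$ continuous in the new topology. Since $\YY$ is Hausdorff, $\graph(f)$ is then closed in the refined product, hence Polish; and $f(\XXXX)$ is an injective continuous image of a Polish space in the Polish space $\YY$.

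The remaining, and main, obstacle is the injective-continuous-image theorem: a one-to-one continuous image of a Polish space into a Polish space is Borel. This I would prove by representing $f(\XXXX)$ via a Souslin scheme indexed by a countable basis of $\XXXX$ whose pieces have rapidly shrinking diameter. Exploiting injectivity, the images of disjoint branches can be separated, which lets one write both $f(\XXXX)$ and its complement in $\YY$ as analytic sets. Souslin's separation theorem for disjoint analytic sets then yields that $f(\XXXX)$ is Borel in $\YY$, i.e.\ $f(\XXXX) \in \YYY$, completing the proof.
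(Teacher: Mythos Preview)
The paper does not give its own proof of this statement; it is recorded in the appendix as a background fact with a citation to Srivastava's textbook, and is used as a black box (notably in the proof of \cref{lem:seqlimbim}). So there is nothing in the paper to compare your argument against.

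Your sketch is the standard route to the Lusin--Souslin theorem and is essentially what one finds in the cited reference: pass to the standard Borel trace $(\XXXX,\XXX\restriction_{\XXXX})$, refine the Polish topology on the domain so that $f$ becomes continuous, and then run a Lusin scheme combined with the separation theorem for analytic sets. One minor inaccuracy: your final sentence says you would ``write both $f(\XXXX)$ and its complement in $\YY$ as analytic sets'' and then apply Souslin's theorem. That is not how the argument proceeds---one does not get coanalyticity of $f(\XXXX)$ for free. Rather, injectivity gives that the images of disjoint pieces of the Lusin scheme are pairwise disjoint analytic sets, and repeated application of the separation theorem to \emph{those} produces a Borel scheme in $\YY$ whose kernel is exactly $f(\XXXX)$, exhibiting it directly as Borel. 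With that correction the outline is sound.
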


Finally, we come back to the multifunctions of \cref{sssec:mf} and explicitly
state the theorem of Kuratowski and Ryll-Nardzewski on the existence of
measurable selections:

\begin{fact}[Kuratowski and Ryll-Nardzewski \cite{Kuratowski+1965}, see 
	{\cite[Theorem 5.2.1]{Srivastava1998}}]\label{fac:KRN}
	Let $(\XX, \XXX)$ be a measurable space and let $(\YY, 
	\Borel(\YY))$ be standard Borel. Then every closed-valued
	$\XXX$-measurable multifunction $M \from \XX \toto 
	\YY$ has a $(\XXX,\Borel(\YY))$-measurable selection
	$s \from \XX \to \YY$.
\end{fact}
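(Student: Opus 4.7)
The plan is to construct $s$ as the pointwise limit of a sequence $(s_n)_{n \in \NN}$ of measurable approximate selections, exploiting that $\YY$ is Polish with a fixed compatible complete metric $d$ and a countable dense subset $\set{y_i : i \in \NN}$. At each stage $n$ I would ensure that (i) $s_n \from \XX \to \YY$ is $(\XXX, \Borel(\YY))$-measurable, (ii) $d(s_n(x), M(x)) < 2^{-n}$ for every $x \in \XX$, and (iii) $d(s_n(x), s_{n+1}(x)) < 2^{-n+1}$ for every $x$. The base case $s_0$ is obtained by a measurable partitioning trick: since $\set{B_1(y_i) : i \in \NN}$ covers $\YY$ and each $M(x)$ is non-empty, the sets $A_i \coloneqq M^{-1}(B_1(y_i)) \setminus \bigcup_{j < i} M^{-1}(B_1(y_j))$ are measurable by the $\XXX$-measurability of $M$, form a countable partition of $\XX$, and setting $s_0 \equiv y_i$ on $A_i$ gives a measurable function with $d(s_0(x), M(x)) < 1$ for all $x$.

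For the inductive step, given $s_n$, I would consider the new multifunction $M_n(x) \coloneqq M(x) \cap \overline{B_{2^{-n}}(s_n(x))}$, which is closed-valued and non-empty by the distance condition on $s_n$. The key verification is that $M_n$ remains $\XXX$-measurable: for every open $U \subseteq \YY$, the set $\set{x \in \XX : M_n(x) \cap U \neq \emptyset}$ must be shown to lie in $\XXX$. This can be handled by approximating the closed ball by a countable union of small open balls centered at points in the dense subset, and then writing the condition as a countable Boolean combination of sets of the form $M^{-1}(\placeholder)$ and preimages $s_n^{-1}(\placeholder)$, both of which are measurable. Once $M_n$ is known to be a closed-valued, $\XXX$-measurable multifunction, reapplying the partitioning trick from the base case (now with balls of radius $2^{-n-1}$ around points of the dense set, relative to the new multifunction $M_n$) yields $s_{n+1}$ satisfying properties (i)--(iii).

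Finally, by (iii) the sequence $(s_n(x))_n$ is uniformly Cauchy in the complete metric $d$, so a pointwise limit $s(x) \coloneqq \lim_n s_n(x)$ exists, and $s$ is $(\XXX, \Borel(\YY))$-measurable as the pointwise limit of measurable functions into a separable metric space. Since $M(x)$ is closed in $\YY$ and $d(s_n(x), M(x)) \to 0$, we conclude $s(x) \in M(x)$, so $s$ is the desired selection. The main obstacle is the inductive step: verifying that intersecting the original multifunction with a ball centered at the already-constructed measurable approximate selection preserves both closed-valuedness and $\XXX$-measurability. Once that technical bookkeeping is settled, the rest is a standard Cauchy-limit construction that transfers the measurability of the approximants to their limit.
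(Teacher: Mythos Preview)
The paper does not prove this statement at all: it is stated as a \textbf{Fact} with citations to \cite{Kuratowski+1965} and \cite[Theorem~5.2.1]{Srivastava1998}, and is used as a black box (notably to obtain \cref{cor:app}). So there is no proof in the paper to compare your proposal against.

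That said, your sketch is one of the standard routes to the Kuratowski--Ryll-Nardzewski theorem and is broadly sound. A small caution on the inductive step: verifying that $M_n(x) = M(x) \cap \overline{B_{2^{-n}}(s_n(x))}$ is again $\XXX$-measurable is genuinely delicate, because the closed ball moves with $x$. Your idea of rewriting the intersection condition via the countable dense set and countable radii works, but you should be explicit that for open $U$ one has $M_n^{-1}(U) = \bigcup_{i,k} \big( s_n^{-1}(B_{2^{-n}-1/k}(y_i)) \cap M^{-1}(U \cap B_{1/k}(y_i)) \big)$ or a similar decomposition, so that only preimages of \emph{open} sets under $M$ are needed. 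Many textbook proofs (including the one in Srivastava you would be pointed to) sidestep this by recursively refining a measurable partition of $\XX$ directly, rather than passing through auxiliary multifunctions; that alternative avoids the moving-ball measurability check at the cost of slightly more bookkeeping on the partition side.
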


\subsection{Identities for Integration}
If $\mu$ is a measure and $f$ a measurable function, then $f\cdot\mu \coloneqq
\nu$, defined by $\nu(\XXXX) = \int_{\XXXX} f \d\mu$ is a measure. The following
chain and substitution rules are the main tools to establish statements
regarding the equality of transformed measures.

\begin{fact}[{Chain Rule, cf. \cite[Lemma 1.23, p. 12]{Kallenberg2002}}]\label{fac:chainrule}
	Let $(\XX,\XXX,\mu)$ be a measure space and $f \from \XX \to \RR$ and $g
	\from \XX \to \RR_{\geq 0}$ be measurable functions. Let $\nu\coloneqq f\cdot
	\mu$. Then, if either of the following integrals exists (i.\,e. is finite),
	it holds that 
	\begin{equation*}
		\int_{\XX} f \cdot g \:\d\mu = \int_{\XX} g \:\d\nu\text.\qedhere
	\end{equation*}
\end{fact}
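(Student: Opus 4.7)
The plan is to invoke the standard ``algebraic induction'' machinery of measure theory: establish the identity on indicator functions, extend by linearity to simple functions, and pass to general measurable functions via monotone convergence. Concretely, I would first verify the identity for $g = \mathbbm{1}_{\XXXX}$ with $\XXXX \in \XXX$. In this base case, the left-hand side reads $\int_{\XX} f \cdot \mathbbm{1}_{\XXXX}\:\d\mu = \int_{\XXXX} f\:\d\mu$, which coincides with $\nu(\XXXX)$ directly by definition of $\nu = f \cdot \mu$; the right-hand side reads $\int_{\XX} \mathbbm{1}_{\XXXX}\:\d\nu = \nu(\XXXX)$ by definition of the integral of an indicator. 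Hence both sides agree trivially on indicators.

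Next, because both the $\mu$-integral and the $\nu$-integral are linear in their integrand, the identity extends immediately to every nonnegative simple function $g = \sum_{i=1}^{n} c_i \mathbbm{1}_{\XXXX_i}$ with $c_i \geq 0$ and $\XXXX_i \in \XXX$. For an arbitrary nonnegative measurable $g$, I would approximate $g$ from below by an increasing sequence $(g_n)_{n \in \NN}$ of nonnegative simple functions with $g_n \uparrow g$ pointwise. Assuming first that $f \geq 0$, the sequence $(f \cdot g_n)_n$ is likewise monotonically increasing to $f \cdot g$; applying monotone convergence to $\int f \cdot g_n \:\d\mu$ on one side and to $\int g_n \:\d\nu$ on the other side, and using the identity already established for simple functions, delivers $\int f \cdot g \:\d\mu = \int g \:\d\nu$.

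The main obstacle is the real-valued (possibly signed) $f$: if $f$ can take negative values, then $\nu = f \cdot \mu$ is only a signed measure, and the manipulations above require care. The remedy is the Jordan-style decomposition $f = f^+ - f^-$, where $f^\pm \geq 0$ are measurable, giving rise to two bona fide measures $\nu^{\pm} \coloneqq f^{\pm} \cdot \mu$ with $\nu = \nu^+ - \nu^+$. Applying the nonnegative case separately to $f^+$ and $f^-$ yields $\int f^\pm \cdot g\:\d\mu = \int g\:\d\nu^\pm$, and the finiteness hypothesis on one of the two original integrals guarantees that at least one of $\int f^+\cdot g\:\d\mu$ and $\int f^-\cdot g\:\d\mu$ is finite, so that the difference is unambiguously defined. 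Subtracting the two identities and using linearity of the integral with respect to signed measures then gives the claimed equality in full generality.
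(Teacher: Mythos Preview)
The paper does not actually prove this statement: it is recorded in the appendix as a background \emph{Fact} with a bare citation to Kallenberg~\cite[Lemma~1.23]{Kallenberg2002} and no proof. So there is no ``paper's own proof'' to compare against.

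Your argument is the standard one and is correct. The algebraic-induction scheme (indicators $\to$ simple functions $\to$ monotone convergence, then handle signs by decomposing $f = f^{+} - f^{-}$) is precisely how this lemma is proved in Kallenberg and in essentially every measure-theory text. Two small points: you have a typo, writing $\nu = \nu^{+} - \nu^{+}$ where you mean $\nu = \nu^{+} - \nu^{-}$; and your remark that finiteness of one original integral makes ``at least one'' of $\int f^{+}g\,\d\mu$, $\int f^{-}g\,\d\mu$ finite undersells what you actually get---since $g \geq 0$ we have $(fg)^{\pm} = f^{\pm}g$, so finiteness of $\int fg\,\d\mu$ forces \emph{both} parts finite, which is exactly what you need for the subtraction to make sense. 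Neither of these affects the validity of the argument.
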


\begin{fact}[{Substitution, cf. \cite[Lemma 1.22, p.12]{Kallenberg2002}}]\label{fac:substitution}
	Let $(\XX,\XXX)$ and $(\YY,\YYY)$ be measurable spaces and $\mu$ a measure
	on $(\XX,\XXX)$. Let $f \from \XX \to \YY$ and $g \from \YY \to \RR$ be
	measurable. Then, if either of the following integrals exists (i.\,e. is
	finite), it holds that
	\begin{equation*}
		\int_{\XX} g \after f \:\d\mu = \int_{\YY} g \:\d{(\mu\after f^{-1})}
	\end{equation*}
	where $\mu\after f^{-1}$ is the push-forward measure of $\mu$ along $f$ on
	$(\YY,\YYY)$.
\end{fact}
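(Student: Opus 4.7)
The plan is to follow the standard ``measure-theoretic machine,'' establishing the identity successively for richer classes of integrands. First I would verify the statement for $g = \mathbbm{1}_{\YYYY}$ with $\YYYY \in \YYY$. In this case $g \after f = \mathbbm{1}_{f^{-1}(\YYYY)}$, which is $\XXX$-measurable by measurability of $f$, and both sides of the claimed identity reduce to
\[
\mu\bigl(f^{-1}(\YYYY)\bigr) = (\mu \after f^{-1})(\YYYY),
\]
the first by definition of the integral of an indicator, the second by definition of the push-forward measure.

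Next I would extend the identity to nonnegative simple functions $g = \sum_{i=1}^n c_i \mathbbm{1}_{\YYYY_i}$ (with $c_i \geq 0$ and $\YYYY_i \in \YYY$) by linearity of the integral on both sides, using that $g \after f = \sum_i c_i \mathbbm{1}_{f^{-1}(\YYYY_i)}$. Then for an arbitrary nonnegative measurable $g \from \YY \to \RR_{\geq 0}$, I would pick an increasing sequence of nonnegative simple functions $g_n \uparrow g$ (standard approximation); the composed sequence $g_n \after f$ is then an increasing sequence of nonnegative simple functions on $\XX$ converging pointwise to $g \after f$. Two applications of the monotone convergence theorem---one on $(\XX,\XXX,\mu)$ and one on $(\YY,\YYY,\mu \after f^{-1})$---together with the simple-function case yield
\[
\int_{\XX} g \after f \:\d\mu
= \lim_{n\to\infty} \int_{\XX} g_n \after f \:\d\mu
= \lim_{n\to\infty} \int_{\YY} g_n \:\d(\mu \after f^{-1})
= \int_{\YY} g \:\d(\mu \after f^{-1}).
\]

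Finally, for a general measurable $g \from \YY \to \RR$, I would decompose $g = g^+ - g^-$ into its positive and negative parts, observe that $(g \after f)^{\pm} = g^{\pm} \after f$, and apply the nonnegative case to each piece; the finiteness hypothesis on one of the integrals ensures that the subtraction is well-defined and that the finiteness transfers to the other side. The main conceptual obstacle is essentially just the bookkeeping in the final step---making sure that finiteness of $\int g\after f\,\d\mu$ (respectively $\int g\,\d(\mu\after f^{-1})$) implies finiteness of both $\int g^\pm\after f\,\d\mu$ and hence, by the nonnegative case, of both $\int g^\pm\,\d(\mu\after f^{-1})$---but this is immediate since $\lvert g\rvert = g^+ + g^-$ and the nonnegative case already yields $\int \lvert g\rvert\after f\,\d\mu = \int \lvert g\rvert\,\d(\mu\after f^{-1})$.
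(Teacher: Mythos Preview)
Your proof is correct and follows the standard measure-theoretic machine; there is nothing to fix. Note, however, that the paper does not actually prove this statement: it is recorded in the appendix as a background \emph{Fact} with a citation to Kallenberg's textbook, so there is no ``paper's own proof'' to compare against. Your argument is precisely the textbook proof one would find behind that citation.
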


\subsection{Existence of Markov Processes}\label{app:markov}
\begin{fact}[{Existence of Markov Processes, Kolmogorov, cf. \cite[Theorem~8.4]
	{Kallenberg2002}}]\label{fac:kolmogorov}
	Let $( \XX, \XXX )$ be a standard Borel space, $\mu_0$ a 
	probability measure on $(\XX, \XXX)$ and $(\mu_i)_{i \geq 
	1}$ a family of stochastic kernels $\mu_i \from \XX \times
	\XXX \to [0,1]$ for $i\geq 1$. Then there exists a Markov process 
	$\xi$ (with time scale $\N$ and paths in $\prod_{i=0}^{\infty} 
	\XX$) with initial distribution $\mu_0$ and transition kernels 
	$\mu_i$.
\end{fact}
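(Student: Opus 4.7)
The plan is to build the required probability measure on the path space $(\XX^\omega, \XXX^{\otimes\omega})$ from a consistent family of finite-dimensional marginals obtained by iterated integration against the kernels, and then invoke Kolmogorov's extension theorem. Concretely, for each $n \geq 0$ I would define a probability measure $P_n$ on $(\XX^{n+1}, \XXX^{\otimes(n+1)})$ by
\[
  P_n(A) = \int \mu_0(\d x_0) \int \mu_1(x_0, \d x_1) \dotsm \int \mathbbm{1}_A(x_0, \dots, x_n)\, \mu_n(x_{n-1}, \d x_n).
\]
Well-definedness proceeds by induction on $n$: the kernel property together with \cref{fac:kernelandfunction} ensures that integrating against $\mu_i(x_{i-1}, \placeholder)$ yields a measurable function of $(x_0, \dots, x_{i-1})$, and each $\mu_i(x, \placeholder)$ being a probability measure keeps the total mass at one.

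Next, I would verify Kolmogorov consistency: for every $n$, the projection of $P_{n+1}$ onto the first $n+1$ coordinates equals $P_n$. This is immediate, since the innermost integral $\int \mu_{n+1}(x_n, \d x_{n+1})$ equals $1$. Using the standard Borel hypothesis on $(\XX, \XXX)$, Kolmogorov's extension theorem then yields a unique probability measure $P$ on $(\XX^\omega, \XXX^{\otimes\omega})$ whose $(n{+}1)$-dimensional marginals coincide with the $P_n$'s. An alternative route, valid for any sequence of kernels and not requiring the topological hypothesis, is provided by the Ionescu--Tulcea theorem.

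Finally, taking the process to be the coordinate projections $\xi_i(\omega) = \omega_i$ on $(\XX^\omega, \XXX^{\otimes\omega}, P)$, the identification $P_0 = \mu_0$ gives the prescribed initial distribution, and the iterated-integration structure of $P_n$ together with Fubini's theorem yields that the conditional distribution of $\xi_{n+1}$ given $(\xi_0, \dots, \xi_n) = (x_0, \dots, x_n)$ is $\mu_{n+1}(x_n, \placeholder)$, which is precisely the Markov property with the required transition kernels. The main technical obstacle is the extension step itself: consistent finite-dimensional marginals do not in general extend to a measure on an uncountable-index product, and this is exactly where the standard Borel structure (or, equivalently, the Ionescu--Tulcea route) becomes essential. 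Everything else reduces to bookkeeping about measurability and Fubini-style rearrangements.
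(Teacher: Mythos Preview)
The paper does not prove this statement: it is listed in the appendix as a background \enquote{Fact} with a reference to Kallenberg \cite[Theorem~8.4]{Kallenberg2002}, and no argument is given. Your sketch is the standard construction (finite-dimensional marginals via iterated kernel integration, consistency, extension by Kolmogorov or Ionescu--Tulcea, coordinate process), and it is correct; there is nothing in the paper to compare it against beyond the citation.
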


\end{document}